\renewcommand{\thesection}{\arabic{section}}
\renewcommand{\thesubsection}{\thesection.\arabic{subsection}}
\renewcommand{\thesubsubsection}{\thesubsection.\arabic{subsubsection}}
\renewcommand{\p@subsection}{}
\renewcommand{\p@subsubsection}{}
\providecommand{\ignore}[1]{}
\newif\ifcmnt
    \providecommand{\aucmnt}[1]{#1}
    \providecommand{\aucmnt}[1]{}
\numberwithin{equation}{section}
\newtheorem{theorem}{Theorem}[section]
\newtheorem*{theorem*}{Theorem}
\newtheorem{lemma}[theorem]{Lemma}
\newtheorem*{lemma*}{Lemma}
\newtheorem{corollary}[theorem]{Corollary}
\newtheorem{fact}[theorem]{Fact}
\newtheorem{proposition}[theorem]{Proposition}
\theoremstyle{definition}
\newtheorem{definition}[theorem]{Definition}
\newtheorem{conjecture}[theorem]{Conjecture}
\theoremstyle{remark}
\newcommand{\R}{\mathbb{R}}
\newcommand{\C}{\mathbb{C}}
\let\originalleft\left
\let\originalright\right
\renewcommand{\left}{\mathopen{}\mathclose\bgroup\originalleft}
\renewcommand{\right}{\aftergroup\egroup\originalright}
\begin{document}
\title{Measuring Multiparticle Indistinguishability with the Generalized Bunching Probability}
\begin{abstract}
  The indistinguishability of many bosons undergoing passive linear transformations followed by number basis measurements is fully characterized by the visible state of the bosons.
  However, measuring all the parameters in the visible state is experimentally demanding.
  In this work, we seek to perform partial characterization of the visible state by measuring properties of it that are available after randomization.
  First we study the case where the occupied visible modes are randomly permuted, and second we study the case where Haar random linear optical unitaries are applied.
  In each case, we find that the generalized bunching probability---which is the probability that all the input bosons arrive in a given subset of the output modes---obeys monotonicity with respect to some partial order of distinguishability of the input bosons.
  As an intermediate result, we show that Lieb's permanental-dominance conjecture for immanants is equivalent to the following statement: for states that are invariant under permutations of the occupied visible modes, the generalized bunching probability is maximized when the bosons are perfectly indistinguishable.
  We also prove that a consequence of the monotonicity of the generalized bunching probability after Haar averaging is that this average is maximized when the bosons are perfectly indistinguishable.
  Finally, we discuss applications of our results to thermometry of cold-atom systems.
\end{abstract}
\author{Shawn Geller}
\affiliation{National Institute of Standards and Technology, Boulder, Colorado 80305, USA}
\date{\today}
\author{Emanuel Knill}
\affiliation{National Institute of Standards and Technology, Boulder, Colorado 80305, USA}
\affiliation{Center for Theory of Quantum Matter, University of Colorado, Boulder, Colorado 80309, USA}
\maketitle

\section{Introduction}

Bosons are fundamentally indistinguishable particles, but the presence of uncontrolled degrees of freedom can make them behave as though they are not.
The indistinguishability of two bosons can be quantified with the Hong-Ou-Mandel visibility~\cite{hongMeasurementSubpicosecondTime1987}, which is directly related to the probability that the input bosons arrive in the same output mode.
The indistinguishability of bosons determines the extent to which they interfere in a linear-optics experiment, and this interference is typically the quantum resource that we wish to harness in such experiments.
For example, the computational hardness of boson sampling~\cite{aaronsonComputationalComplexityLinear2011} is lost when the bosons are distinguishable~\cite{renema2019classicalsimulabilitynoisyboson}.
As another example, linear optical quantum computation~\cite{Knill2001} relies on the interference of many bosons to perform quantum computations.
In this work, we are interested in quantifying the $n$-particle indistinguishability.
To illustrate the concept, we discuss the example of many photons interfering in a linear-optics experiment.

Consider an $m$-mode interferometer, $n$ modes of which are populated with one photon each.
The photons can have any frequency, polarization, and temporal mode, but have definite spatial modes.
The photons propagate through an interferometer that acts only on the spatial degree of freedom of the photons, then are measured in the number basis, resolving only which spatial mode the photons arrived in.
Because the interferometer acts only on the spatial degree of freedom (DOF), it is called the visible DOF, while all other DOFs (frequency, time, polarization) are called hidden.
A schematic of this setup is shown in Fig.~\ref{fig:setupfig}.
\begin{figure}[h]
  \centering
  \includegraphics[width=.4\columnwidth]{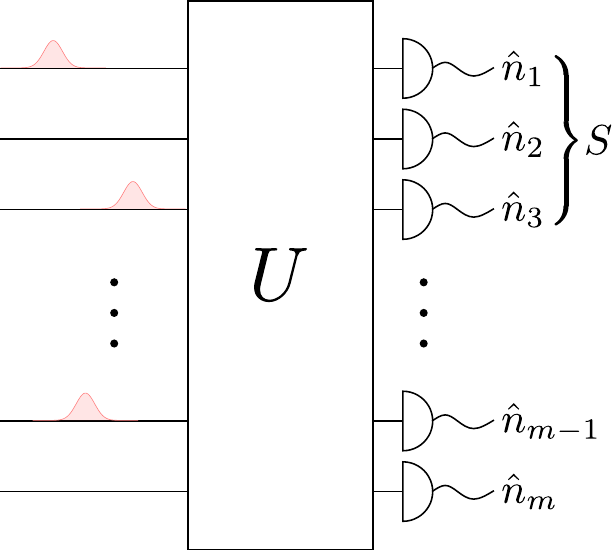}
  \caption{Interferometer with $n$ input bosons, with at most one in each of $m$ spatial modes. The interferometer acts only on the spatial degree of freedom, but the bosons may have some other hidden DOFs, such as time. We perform number-basis measurements at the end, which do not resolve the hidden DOFs. The generalized bunching probability~\cite{shchesnovichUniversalityGeneralizedBunching2016} is the probability that all the input photons arrive in a given subset of the visible output modes.}
  \label{fig:setupfig}
\end{figure}

There have been many efforts to
describe~\cite{englbrecht2024indistinguishability,tichyInterferenceIdenticalParticles2014,dufour2024fourieranalysismanybodytransition,tichySamplingPartiallyDistinguishable2015,PhysRevA.91.013844,PhysRevA.89.022333,Seron:2023yyu,DittelPhdThesis,stanisic2020quantum,dittel2021wave,spivak2020immanants,spivak2022generalized}
and
detect~\cite{Ou2006,Seron2024efficientvalidation,anguita2025experimentalvalidationbosonsampling,PhysRevA.103.023722,PRXQuantum.6.020340,stanisic2018discriminating,PhysRevA.108.053701,PhysRevA.91.013811,PRXQuantum.2.020326,PRXQuantum.2.020326,PhysRevLett.122.063602,doi:10.1073/pnas.1206910110,PhysRevLett.125.123603,PhysRevA.83.062111,Giordani_2020,PhysRevA.100.053829,PhysRevA.83.062307,PhysRevLett.120.240404,PhysRevA.97.062116,Giordani2018,Walschaers_2016,PhysRevLett.104.220405,PhysRevX.12.031033,Crespi2016,youngAtomicBosonSampler2024,brunner2023many,PhysRevA.91.063842,tillmann2015generalized}
the effects of multiparticle indistinguishability on these
interference experiments.  
The complete description of indistinguishability effects on multiboson interference experiments can be expressed in terms of the distinguishability function of the bosons, introduced in Ref.~\citenum{PhysRevA.91.013844} as the $J$-matrix, then later in Ref.~\citenum{shchesnovichUniversalityGeneralizedBunching2016} as the $J$-function.
An equivalent description is given by the visible
state of the bosons (Eq.~\ref{eq:visiblepart}) ~\cite{geller2024characterization,englbrecht2024indistinguishability,stanisic2018discriminating}
(called the ``external state'' in~\cite{englbrecht2024indistinguishability}).  
Multiboson interference experiments are tomographically complete for the visible state~\cite{englbrecht2024indistinguishability}, so in principle one can measure the visible state from such interference experiments to obtain a characterization of the effect of indistinguishability on such experiments.
However, these measurements are experimentally challenging, both because of the exponentially many parameters needed to describe the visible state, and because of the high sensitivity to interferometer calibration errors.
Furthermore, the perfectly indistinguishable component of a state is typically exponentially small~\cite{dittel2024distinguishability}, so it is statistically demanding to measure precisely.  
Here we argue that the generalized bunching probability~\cite{shchesnovichUniversalityGeneralizedBunching2016}, which is the probability that all the input photons arrive in a given subset of visible output modes, provides useful partial information about the multiparticle indistinguishability.  
The generalized bunching probability is easier to measure than the visible state  because it remains large even when the number of particles is large~\cite{shchesnovich_distinguishing_2021}, and in this work we show it obeys monotonicity relative to some partial orders of distinguishability.

Analogous to the two-particle case, one might conjecture that the generalized bunching probability is maximized when all the input photons are simultaneously indistinguishable, meaning that they all have the same polarization, temporal mode, and frequency mode.
This conjecture is equivalent to the permanent-on-top conjecture~\cite{soules1966matrix}, which is false~\cite{shchesnovich2016permanent}.
The conjecture that, among states for which the auxiliary state (Def.~\ref{def:auxiliarystate}) is separable (called ``classically correlated states'' in Ref.~\citenum{shchesnovichUniversalityGeneralizedBunching2016}), the generalized bunching probability is maximized when the bosons are perfectly indistinguishable, is equivalent to a conjecture due to Bapat and Sunder~\cite{bapat1985majorization}.
This conjecture was shown to be false~\cite{drury2016counterexample}, and the connection to the bunching behavior of bosons was observed in Ref.~\citenum{seronBosonBunchingNot2023}.
We show in this work that if one can randomize the input state, then the generalized bunching probability does serve as a witness of maximal indistinguishability of the input bosons.
We study two forms of randomization: one where the occupied input modes are randomly permuted, the other where Haar random linear optical unitaries are applied.

To study the case that random permutations are an available experimental control, we introduce the weak generalized-bunching conjecture (Conj.~\ref{conj:weakgenbunch}): Among states that are invariant under permutations of the occupied visible modes, the generalized bunching probability is maximized when the bosons are perfectly indistinguishable.
We emphasize that our conjecture applies to any choice of applied linear optical and any choice of subset of visible modes.
We prove in Thm.~\ref{thm:weakgenbunchequiv} that the weak generalized-bunching conjecture is equivalent to an existing mathematical conjecture, Lieb's permanental-dominance conjecture for immanants (Conj.~\ref{conj:liebpermanentaldominance})~\cite{liebProofsConjecturesPermanents2002,wanlessLiebPermanentalDominance2022}.
Lieb's conjecture states that the permanent is the largest among the normalized immanants for any positive-semidefinite argument.
Despite significant efforts to resolve it~\cite{Tran_2022,sym13101782,tabata2010sharp,grone1986hadamard,johnson1988inequalities,doi:10.1137/0609016,pate1992descending,RODTES202494,pate1991partitions,shchesnovich2016permanent,merris1985inequalities,pate1999tensor,rodtes2024some,jing2025immanantinequalitiesweightspaces}, Lieb's conjecture has remained open for over fifty years.
Assuming Lieb's conjecture, this establishes that, given access to random permutations of the occupied visible modes, one can use the generalized bunching probability to detect that the input bosons are perfectly indistinguishable.

Analogous to the Hong-Ou-Mandel visibility providing a quantitative notion of the indistinguishability of two bosons, we ask: Does the deviation of the generalized bunching probability from its maximum value give a measure of the extent to which the input bosons are ``more distinguishable''?
To formalize this question, we need a rigorous notion of ``more distinguishable''.
We now describe one such notion, and make use of a few definitions to do so.

The visible DOF $\mathcal{H}_V$ and hidden DOF $\mathcal{H}_H$ are tensor factors of the single-particle Hilbert space, $\mathcal{H}_1 = \mathcal{H}_V \otimes \mathcal{H}_H$.
We assume that there is a preferred basis $\left\{ \ket{i}_V \right\}_{i=1}^m$ of $\mathcal{H}_V$, which we call sites, and a preferred basis $\left\{ \ket{j}_H \right\}_{j=1}^L$ of $\mathcal{H}_H$, which we call labels.
For a partition $\lambda$ of the number of populated modes $n$ and a list of distinct site indices $\bm{i} = (i_1, \ldots, i_n)$, consider the state that (1) has $\lambda_j$ many particles with label $j$, (2) is invariant under permutations of the occupied modes, and (3) has one boson in each of the distinct sites $\bm{i}$.
This state is called the partially labelled state with label pattern $\lambda$ occupying $\bm{i}$ (Def.~\ref{def:partiallylabelledstate}).
We show in Thm.~\ref{thm:refinementmonotonicity} that the generalized bunching probability for partially labelled states is monotonic with respect to the refinement partial order on partitions (Def.~\ref{def:refinementofpartitions}), provided Lieb's permanental-dominance conjecture holds.

We now discuss our results relating the generalized bunching probability to the indistinguishability of the input bosons in the case that Haar random linear optical unitaries are an available experimental control.
Define the mean generalized bunching probability to be the average over the Haar measure on unitaries \(U\) as in Fig.~\ref{fig:setupfig} of the generalized bunching probability.
We show in Thm.~\ref{thm:meangenbunchsuremajorization} that the mean generalized bunching probability is monotonic with respect to a certain preorder (Def.~\ref{def:suremajorization}) on all $n$-boson states.
Because the definition of this preorder is technical, we defer its discussion to App.~\ref{sec:schurconvexity}, and present some consequences of Thm.~\ref{thm:meangenbunchsuremajorization} here.
First we show in Cor.~\ref{cor:avglieb} that as a consequence of Thm.~\ref{thm:meangenbunchsuremajorization}, the mean generalized bunching probability is maximized when the bosons are perfectly indistinguishable.
Define a uniform state to be one where the input bosons all have the same single-particle density matrix on $\mathcal{H}_H$ (Def.~\ref{def:uniformstate}).
Then second, we prove in Cor.~\ref{cor:schurconvexity} that, for uniform states, the mean generalized bunching probability is Schur convex with respect to the eigenvalues of the single-particle density matrix.
Third, when the single-particle density matrix is a Gibbs state, we show in Cor.~\ref{cor:thermcor} that such a mean generalized bunching probability can be used to infer the temperature of this Gibbs state.

The generalized bunching probability can be efficiently measured in many scenarios; see Ref.~\citenum{shchesnovich_distinguishing_2021} for a detailed analysis of the number of measurements required.
Combined with our monotonicity results, this suggests that the generalized bunching probability is a good measure of the joint indistinguishability of the bosons.

The rest of the paper is structured as follows.
In Sec.~\ref{sec:mainprelim} we present preliminaries.
In Sec.~\ref{sec:weakgeneralizedbunching} we introduce the weak generalized-bunching conjecture and state our theorem showing its equivalence to Lieb's conjecture.
In Sec.~\ref{sec:schurconvexityofmeangenbunch} we state corollaries to Thm.~\ref{thm:meangenbunchsuremajorization}, relating the mean generalized bunching probability to the indistinguishability of the input bosons.
In Sec.~\ref{sec:discussion} we conclude and offer remarks on future work, including applications of our results to thermometry of cold-atom systems.

\section{Preliminaries}
\label{sec:mainprelim}

Lists of indices are written in boldface, e.g. $\bm{i} = (i_1, \ldots, i_n)$.
For $N \in \mathbb{N}$, we use the abbreviation $[N] := \left\{ 1, \ldots, N \right\}$.
Let 
\begin{align}
\Delta^{N-1} = \left\{ \alpha \in \R^N \middle| \alpha_i \ge 0~\forall i \in [N], \sum_i \alpha_i = 1 \right\}
\end{align}
be the set of probability distributions on $N$ elements.

Let $W$ be a finite dimensional Hilbert space.
We write $\mathcal{B}_+(W)$ for the set of positive-semidefinite matrices on $W$.
The set of density matrices on $W$ is denoted $\mathcal{D}(W) = \left\{ \rho \in \mathcal{B}_+(W)|\Tr(\rho) = 1 \right\}$.
Density matrices are sometimes referred to as states, and vectors $\ket{\psi}\in W$ are also referred to as states.

\subsection{Linear optics}
Let $W$ be a finite dimensional Hilbert space.
Let $\mathrm{U}(W)$ be the group of unitaries on $W$.
We write $\mathrm{Sym}^n(W)$ for the $n$th symmetric power of $W$.
The Fock space over $W$ is defined to be
  \begin{align}
    \mathcal{F}(W) &= \bigoplus_{n=0}^{\infty}\mathrm{Sym}^n(W).
  \end{align}
The vacuum state is written $\ket{0}$.
The annihilation map $a:W \rightarrow \mathrm{End}(\mathcal{F}(W))$ is a conjugate-linear map that assigns an operator $a_\psi$ to each $\ket{\psi} \in W$.
The operator $a_\psi$ is called an annihilation operator, and its Hermitian conjugate $a^\dagger_\psi$ is called a creation operator.
The annihilation operators satisfy
\begin{align}
  a_\psi \ket{0} = 0,
\end{align}
and if $\ket{\psi'}\in W$, we have the commutation relations
\begin{align}
  [a_\psi, a^\dagger_{\psi'}] &= \braket{\psi|\psi'}.
  \label{eq:wfncommutations}
\end{align}
If $B = \left\{ \ket{x} \right\}_x$ is an orthonormal basis of $W$, an orthonormal basis of $\mathrm{Sym}^n(W)$ can be obtained from
\begin{align}
  \ket{g} = \prod_{\ket{x}\in B}\frac{(a^\dagger_x)^{g(x)}}{\sqrt{g(x)!}}\ket{0},
  \label{eq:ketg}
\end{align}
where $g:[\dim(W)] \rightarrow \left\{ 0, \ldots, n \right\}$ is a function such that $\sum_x g(x) = n$.
Such states $\ket{g}$ are called states of definite occupation.
Accordingly, we define the set of occupations with total particle number $n$ on the set $S$,
\begin{align}
  \omega^n_S &= \left\{ g:S\rightarrow \{0, \ldots, n\}\middle| \sum_x g(x) = n \right\}.
  \label{eq:occupations}
\end{align}

The unitary group $\mathrm{U}(W)$ acts on $\mathrm{End}(\mathcal{F}(W))$ according to the linear-optical action:
\begin{definition}[Linear-optical action]
The linear-optical action on $\mathrm{Sym}^n(W)$ by unitaries $U \in \mathrm{U}(W)$ is given by
\begin{align}
  U \cdot \ket{g} = \prod_{x=1}^{\mathrm{dim}(W)}\frac{(a^\dagger_{Ux})^{g(x)}}{\sqrt{g(x)!}}\ket{0}.
\end{align}
This gives rise to an action on $\mathcal{F}(W)$, and a corresponding adjoint action on $\mathrm{End}(\mathcal{F}(W))$, which are both also called the linear-optical action.
\label{def:linearopticalaction}
\end{definition}

\subsection{Integer partitions}
\label{sec:partitionsmaintext}
We follow conventions in Ref.~\citenum{fultonYoungTableauxApplications1996} for denoting partitions.
A partition of a positive integer $n$ is a nonincreasing list of nonnegative integers $(\lambda_1, \ldots, \lambda_k)$ such that $\sum_{j=1}^k \lambda_j = n$.
If $\lambda$ is a partition of $n$, we write $\lambda \vdash n$.
If two partitions differ only by their trailing zeros, they are considered the same partition.
The number of nonzero elements of $\lambda\vdash n$ is called its length, written $\mathrm{len}(\lambda)$.
Suppose $\lambda\vdash n$ has $m_i$ many elements that are equal to $i$ for $i \in [n]$.
Then $\lambda$ is also denoted by the string $(n^{m_n}\ldots 1^{m_1})$, where $i^m$ denotes the string containing $m$ copies of $i$, $i^0$ is the empty string, and the juxtaposition of strings indicates the concatenation of strings.

\section{Generalized bunching for permutation invariant states}
\label{sec:weakgeneralizedbunching}
In this section we study the problem of characterizing the indistinguishability of bosons given the ability to perform arbitrary permutations of the visible modes and some nontrivial linear optical unitary.
To that end, we present our conjecture that, under certain conditions, the generalized bunching probability is maximized by perfectly indistinguishable particles.
We consider a linear-optics experiment where the bosons have a hidden DOF.
The single-particle Hilbert space $\mathcal{H}_1$ is a tensor product, $\mathcal{H}_1 = \mathcal{H}_V \otimes \mathcal{H}_H$.
Let $m = \dim(\mathcal{H}_V)$ and $L = \dim(\mathcal{H}_H)$.
We assume there is a preferred basis $\left\{ \ket{i}_V \right\}_{i=1}^m$ of the visible DOF $\mathcal{H}_V$ called sites, and the hidden DOF has a preferred basis $\left\{ \ket{l}_H \right\}_{l=1}^L$ called labels.
We start in a state $\rho \in \mathcal{D}(\mathrm{Sym}^n(\mathcal{H}_V \otimes \mathcal{H}_H))$ of $n$ particles, evolve with linear-optical dynamics under $U \otimes \mathds{1}_{\mathcal{H}_H}$ for a unitary $U \in \mathrm{U}(\mathcal{H}_V)$, then measure in the number basis, not resolving the hidden DOF.
To express the resulting probability distribution in terms of $\rho$ and $U$, we introduce the following definition.
Let $v \in \omega_{[m]}^n$ (Eq.~\ref{eq:occupations}).
We define the set of all occupations for which the marginal visible occupation is equal to $v$ by
\begin{align}
  N_v = \left\{ g \in \omega_{[m]\times [L]}^n\middle|\sum_{j \in [L]}g(i, j) = v(i) \, \forall i \in [m] \right\}.
  \label{eq:nvdef}
\end{align}
Then the probability that, for all $j\in [m]$, we observe $v(j)$ many particles in the visible site $j$ is
\begin{align}
  p(v|U, \rho) &= \sum_{g \in N_v}\left( \left((U \otimes \mathds{1}_{\mathcal{H}_H})\cdot \rho \cdot (U^\dagger \otimes \mathds{1}_{\mathcal{H}_H})\right)\ketbra{g}{g}\right),
  \label{eq:visibleoccupationdistributions}
\end{align}
where the state of definite occupation $\ket{g}$ is defined in Eq.~\ref{eq:ketg}.
In the above expression, the dot is the linear-optical action (Def.~\ref{def:linearopticalaction}).
The probability distribution $p(v|U, \rho)$ is called the visible occupation distribution of $\rho$ under $U$.

We are interested in states that have at most one boson per visible site.
If $\bm{i} \in [m]^{n}$ satisfies $i_x = i_y \implies x = y$, we say that $\bm{i}$ is a list of $n$ distinct indices from $[m]$. 
If $\bm{i} \in [m]^{n}$ and $\bm{j} \in [L]^{n}$, we define $a^\dagger(\bm{i}, \bm{j}) := a^\dagger_{i_1, j_1}\cdots a^\dagger_{i_n, j_n}$.
\begin{definition}[Singly-occupied subspace]
  Let $\bm{i}$ be a list of $n$ distinct indices from $[m]$.
The space of states that singly occupies the sites \(\bm{i}\) is defined by 
  \begin{align}
    K_{\bm{i}} = \operatorname{span}(\left\{a^\dagger(\bm{i}, \bm{j})\ket{0}\middle|\,\bm{j} \in [L]^n\right\}) \subseteq \mathrm{Sym}^n(\mathcal{H}_V \otimes \mathcal{H}_H).
  \end{align}
Correspondingly, we say that a state $\rho \in \mathcal{D}(\mathrm{Sym}^n(\mathcal{H}_1))$ singly occupies the sites $\bm{i}$ if it can be written in the form
\begin{align}
  \rho = \sum_{\bm{j}, \bm{j}'\in [L]^{n}}\rho_{\bm{j}, \bm{j}'}a^{\dagger}(\bm{i}, \bm{j})\ket{0}\!\!\bra{0}a(\bm{i}, \bm{j}').
  \label{eq:singlyoccupiesdef}
\end{align}
for some numbers $\rho_{\bm{j},\bm{j}'}\in\mathbb{C}$.
    \label{def:kisubspacedef}
\end{definition}
Our conjecture concerns the subspace of $K_{\bm{i}}$ that is invariant under exchange of the elements of $\bm{i}$. 
To define this subspace, we introduce an action of the symmetric group $\mathcal{S}_n$ on $K_{\bm{i}}$.
\begin{definition}[Permutation matrix on $\bm{i}$]
  Let $\bm{i}$ be a list of $n$ distinct indices from $[m]$.
For $\sigma \in \mathcal{S}_n$, let $F_{\bm{i}}(\sigma) \in \mathrm{U}(\mathcal{H}_V)$ be the permutation matrix whose action is given by $
F_{\bm{i}}(\sigma)\ket{i_x} = \ket{i_{\sigma(x)}}$ for all $x \in [n]$, and $F_{\bm{i}}(\sigma)\ket{j} = \ket{j}$ for any $j$ not in the list $\bm{i}$.
\label{def:frepdef}
\end{definition}
The states we consider are a special class of states that are invariant under the action of $F_{\bm{i}}$, in the sense defined next. 
\begin{definition}[Permutation-invariant state]
  Let $\rho$ be a state that singly occupies the $n$ distinct sites $\bm{i}$ (Def.~\ref{def:kisubspacedef}). We say $\rho$ is permutation invariant if $(F_{\bm{i}}(\sigma)\otimes \mathds{1}_{\mathcal{H}_H})\cdot \rho\cdot (F_{\bm{i}}(\sigma)\otimes \mathds{1}_{\mathcal{H}_H})^{-1} = \rho$ for all $\sigma \in \mathcal{S}_n$.
  \label{def:permutationinvariantstates}
\end{definition}
Permutation-invariant states are commonly considered in the literature~\cite{dittel2024distinguishability,shchesnovich_distinguishing_2021}.
Any state that singly occupies the sites $\bm{i}$ can be made permutation invariant by passing it through a linear-optical device that applies a uniformly random permutation to the sites $\bm{i}$.
We note that some permutation-invariant states are entangled in a certain sense that we define in App.~\ref{sec:commentsonentanglement}; we refer the reader to this appendix for further details.

The perfectly indistinguishable state with single-particle hidden state $\ket{\psi} \in \mathcal{H}_H$ is defined to be
\begin{align}
    \phi_{\bm{i}, \psi} &= a^\dagger_{i_1, \psi}\cdots a^\dagger_{i_n, \psi}\ket{0}\!\!\bra{0}a_{i_1, \psi}\cdots a_{i_n, \psi}\,.
    \label{eq:perfectlyindsitinguishable}
\end{align}
These states span the set of states that behave bosonically under visible unitaries, see App.~\ref{sec:commentsonentanglement} for further details.
We write $\phi_{\bm{i}} = \phi_{\bm{i}, 1}$, where $\ket{1}_{H}$ is the first label, and refer to $\phi_{\bm{i}}$ as the perfectly indistinguishable state, for short.
The perfectly indistinguishable state describes bosons that all have the same label, and in particular, is permutation invariant.
Observe that the definition of the perfectly indistinguishable state does not depend on the order of the elements of $\bm{i}$.

We want to compare the generalized bunching probability for an arbitrary permutation-invariant state to that for the perfectly indistinguishable state.
Let $\rho\in \mathcal{D}(\mathrm{Sym}^n(\mathcal{H}_V\otimes \mathcal{H}_H))$ be a state, and let $U \in \mathrm{U}(\mathcal{H}_V)$ be a unitary.
We write $b(S|U, \rho)$ for the probability that the particles start in the state $\rho$, evolve under the linear-optical action determined by $U$, then all arrive in $S\subseteq [m]$.
This probability is known as the generalized bunching probability of $\rho$ into $S$ under $U$.
To express $b(S|U, \rho)$ explicitly in terms of the quantum state and the unitary, we introduce the following definition.
We define the set of occupations (Eq.~\ref{eq:occupations}) that are nonzero only on a subset of modes $S \subseteq S'$ as
\begin{align}
  \omega_{S'}^n(S) := \left\{ g \in \omega^n_{S'}\middle|\,x \notin S \implies g(x) = 0  \right\}.
  \label{eq:occupationsnonzero}
\end{align}
Then the generalized bunching probability is
\begin{align}
  b(S|U, \rho) &= \sum_{v \in \omega_{[m]}^n(S)} p(v|U, \rho)\\
  &= \sum_{g\in \omega_{[m] \times [L]}^n(S\times [L])}\Tr\left( \left((U \otimes \mathds{1}_{\mathcal{H}_H})\cdot \rho \cdot (U^\dagger \otimes \mathds{1}_{\mathcal{H}_H})\right)\ketbra{g}{g}\right).
  \label{eq:genbunchfirstexpression}
\end{align}
We are now prepared to state our conjecture.
\begin{conjecture}[Weak Generalized Bunching]
  Let $\rho$ be a permutation-invariant state that singly occupies the $n$ distinct sites $\bm{i}$ (Def.~\ref{def:permutationinvariantstates}).
  Let $\phi_{\bm{i}}$ be the perfectly indistinguishable state defined in Eq.~\ref{eq:perfectlyindsitinguishable}.
  Let $U \in \mathrm{U}(\mathcal{H}_V)$, and let $S \subseteq [m]$.
  Then
  \begin{align}
    b(S|U, \rho) \le b(S|U, \phi_{\bm{i}}).
  \end{align}
  \label{conj:weakgenbunch}
\end{conjecture}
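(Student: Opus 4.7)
The plan is to prove Conjecture~\ref{conj:weakgenbunch} conditional on Lieb's permanental-dominance conjecture (Conj.~\ref{conj:liebpermanentaldominance}) by expressing $b(S|U,\rho)$ as a nonnegative combination of normalized immanants of a single positive-semidefinite matrix. I first assemble this matrix: let $\Pi_S = \sum_{k \in S}\ketbra{k}{k}$ act on $\cH_V$, and form the $n \times n$ Gram matrix $M_{xy} = \braket{i_x|U^\dagger \Pi_S U|i_y}$. Since $M = A^\dagger A$ for $A_{k,x} = \braket{k|U|i_x}$ with $k\in S$, the matrix $M$ is positive semidefinite, and thus a valid input to Lieb's conjecture; moreover, by varying $U$, $S$, and $\bm{i}$ one can realize every positive-semidefinite $M$ of appropriate rank, which is what makes the reduction a true equivalence.

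The computational heart of the proof is to evaluate $b(S|U,\rho)$ from Eq.~\ref{eq:genbunchfirstexpression} for any state singly occupying $\bm{i}$ (Eq.~\ref{eq:singlyoccupiesdef}). Using the standard bosonic vacuum expectation formula $\braket{0|a_{\alpha_1}\cdots a_{\alpha_n} a^\dagger_{\beta_1}\cdots a^\dagger_{\beta_n}|0} = \sum_{\sigma\in\cS_n}\prod_x\braket{\alpha_x|\beta_{\sigma(x)}}$, one obtains
\begin{align}
    b(S|U,\rho) = \sum_{\sigma\in\cS_n} G_\rho(\sigma) \prod_{x=1}^n M_{x,\sigma(x)},\qquad G_\rho(\sigma) = \sum_{\bm{j},\bm{j}'}\rho_{\bm{j},\bm{j}'}\prod_x\braket{j'_x|j_{\sigma(x)}}.
\end{align}
Permutation invariance of $\rho$ under $F_{\bm{i}}$ (Def.~\ref{def:permutationinvariantstates}) forces $G_\rho$ to be a class function on $\cS_n$, and positivity of $\rho$ implies that the character expansion $G_\rho(\sigma) = \sum_{\lambda\vdash n}(p_\lambda/\dim V_\lambda)\chi_\lambda(\sigma)$ has $p_\lambda\ge 0$, with $\sum_\lambda p_\lambda = G_\rho(e) = \Tr(\rho) = 1$. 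Substituting and identifying $d_\lambda(M) = \sum_\sigma \chi_\lambda(\sigma)\prod_x M_{x,\sigma(x)}$ as the immanant yields the decomposition
\begin{align}
    b(S|U,\rho) = \sum_{\lambda\vdash n}\frac{p_\lambda}{\dim V_\lambda}\,d_\lambda(M).
\end{align}
For $\rho = \phi_{\bm{i}}$ only the trivial partition contributes (since $G_{\phi_{\bm{i}}}\equiv 1$), so $b(S|U,\phi_{\bm{i}}) = \mathrm{per}(M)$.

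Applying Lieb's conjecture $d_\lambda(M)/\dim V_\lambda \le \mathrm{per}(M)$ termwise and using $\sum_\lambda p_\lambda = 1$ completes the argument. The main obstacle is that Lieb's conjecture is itself a longstanding open problem, so any unconditional proof of Conjecture~\ref{conj:weakgenbunch} would resolve it; the best one can expect through this route is the equivalence stated in Thm.~\ref{thm:weakgenbunchequiv}. At the technical level, the delicate steps are verifying nonnegativity of the $p_\lambda$ (essentially the isotypic decomposition of the positive permutation-invariant operator on $\cH_H^{\otimes n}$ obtained by stripping the visible degrees of freedom from $\rho$) and reconciling the normalization conventions that relate immanants, characters, and the permanent.
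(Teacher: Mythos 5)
Your proposal is correct as a conditional argument and follows essentially the same route as the paper: the paper's Lemma~\ref{lem:perminvgbisimm} establishes exactly your decomposition $b(S|U,\rho)=\sum_{\lambda}q_{\lambda}\overline{\operatorname{Imm}}_{\lambda}(G(S|U,\bm{i}))$ with nonnegative weights coming from the isotypic decomposition of the auxiliary state $h(\rho)$ on $\mathcal{H}_H^{\otimes n}$, Prop.~\ref{prop:perfectlyindistinguishablegenbunchprob} gives $b(S|U,\phi_{\bm{i}})=\overline{\operatorname{Imm}}_{(n)}(G)$, and the forward direction of Thm.~\ref{thm:weakgenbunchequiv} then applies Lieb's conjecture termwise just as you do. You also correctly recognize that the statement is a conjecture whose unconditional proof would resolve Lieb's conjecture, so a conditional proof is the most this approach can deliver.
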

We show in Thm.~\ref{thm:weakgenbunchequiv} that the weak generalized-bunching conjecture is equivalent to an existing mathematical conjecture, known as Lieb's permanental-dominance conjecture for immanants.
To state Lieb's conjecture, we need a few more definitions.
Let $M_{n \times n}(\mathbb{C})$ be the set of $n$ by $n$ complex matrices.
If $\lambda\vdash n$, the irreducible character $\chi_\lambda$ of $\mathcal{S}_n$ is defined to be the character of the Specht module of shape $\lambda$, as defined in Thm.~7.18.5 of Ref.~\citenum{Stanley_Fomin_1999}, and Eq.~\ref{eq:irrechdef} 
(see also Refs.~\citenum{fultonRepresentationTheory2004,fultonRepresentationTheory2004,vershikNewApproachRepresentation2005,goodmanSymmetryRepresentationsInvariants2009,Etingof2011,Procesi2006-on,kondorGroupTheoreticalMethods2008,harrowApplicationsCoherentClassical2005,wright2016learn}).
Let $e\in \mathcal{S}_n$ be the identity element.
\begin{definition}
  Let $\lambda \vdash n$, and let $\chi_\lambda$ be the corresponding irreducible character of $\mathcal{S}_n$.
  The normalized immanant of shape $\lambda$ is defined to be
  \begin{align}
    \overline{\operatorname{Imm}}_\lambda:M_{n \times n}(\mathbb{C})\rightarrow \mathbb{C},\qquad   \overline{\operatorname{Imm}}_\lambda(A) &= \frac{1}{\chi_\lambda(e)}\sum_\sigma\chi_\lambda(\sigma)\prod_{i=1}^n A_{i, \sigma(i)}.
  \end{align}
  \label{def:immanants}
\end{definition}

When $\lambda = (1^n)$,  $\overline{\operatorname{Imm}}_{(1^n)}$ is equal to the determinant.
When $\lambda = (n)$, $\overline{\operatorname{Imm}}_{(n)}$ is equal to the permanent.
Thus, the normalized immanants interpolate between the determinant and permanent.
When the argument $A$ is positive semidefinite, the normalized immanant $\overline{\operatorname{Imm}}_\lambda(A)$ is nonnegative~\cite{schur1918endliche}.
It is known that, for a positive-semidefinite argument, the determinant is the smallest among the normalized immanants~\cite{schur1918endliche}.
Lieb's conjecture~\cite{liebProofsConjecturesPermanents2002} is that the permanent is the largest.
\begin{conjecture}[Lieb's Permanental-Dominance Conjecture for Immanants]
  Let $A\in \mathcal{B}_+(\mathbb{C}^n)$ be an $n \times n$ positive-semidefinite matrix.
  Let $\lambda \vdash n$. Then
  \begin{align}
    \overline{\operatorname{Imm}}_\lambda(A) \le \overline{\operatorname{Imm}}_{(n)}(A)\,.
  \end{align}
    \label{conj:liebpermanentaldominance}
\end{conjecture}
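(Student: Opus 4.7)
The plan is to translate the inequality into a comparison of squared norms of projections in tensor space. Since $A \in \mathcal{B}_+(\mathbb{C}^n)$, I would first factor $A = B^\dagger B$ for some $B \in M_{n \times n}(\mathbb{C})$ and form the product vector $u = Be_1 \otimes \cdots \otimes Be_n \in (\mathbb{C}^n)^{\otimes n}$. Let $\mathcal{S}_n$ act on $(\mathbb{C}^n)^{\otimes n}$ by permuting tensor factors, via a unitary representation $\sigma \mapsto P_\sigma$. Using $A_{ij} = \langle Be_i, Be_j\rangle$, a direct computation gives $\prod_i A_{i,\sigma(i)} = \langle u, P_{\sigma^{-1}} u\rangle$, and hence, after relabelling and using $\chi_\lambda(\sigma^{-1}) = \chi_\lambda(\sigma)$,
\begin{align}
\overline{\operatorname{Imm}}_\lambda(A) = \frac{n!}{\chi_\lambda(e)^2}\,\|\Pi_\lambda u\|^2,
\end{align}
where $\Pi_\lambda = \tfrac{\chi_\lambda(e)}{n!}\sum_\sigma \chi_\lambda(\sigma) P_\sigma$ is the orthogonal projector onto the $\lambda$-isotypic component of $(\mathbb{C}^n)^{\otimes n}$. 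In this form, Lieb's conjecture becomes the statement that, over all partitions $\lambda \vdash n$, the ratio $\|\Pi_\lambda u\|^2/\chi_\lambda(e)^2$ is maximised at $\lambda = (n)$ whenever $u$ has the above product shape.

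Given this reformulation, I would attempt an inductive argument along saturated chains in the dominance (or refinement) partial order on partitions, starting at $(n)$ and descending one box at a time. For an adjacent pair $\lambda \succ \lambda'$, one can try to bound $\|\Pi_{\lambda'} u\|^2$ in terms of $\|\Pi_\lambda u\|^2$ using branching identities in $\mathbb{C}\mathcal{S}_n$ together with a Cauchy--Schwarz or convexity estimate. Crucially, the constrained product structure of $u$ rules out the naive counterexample of placing all of $u$ in a single non-trivial isotypic component, and one could try to leverage this by expanding $u$ in a basis adapted to the Schur--Weyl decomposition (for instance a Gelfand--Tsetlin or semistandard tableau basis) and tracking the overlap with the symmetric projection explicitly. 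Natural intermediate targets are two-part partitions $(n-k,k)$ and hook shapes $(n-k, 1^k)$, where partial results of Pate, Grone, and others provide both heuristic guidance and concrete lemmas that can be slotted into an inductive scheme.

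The main obstacle --- and the reason Lieb's conjecture has remained open for over fifty years --- is that the characters $\chi_\lambda$ are not sign-definite on $\mathcal{S}_n$, so the sum defining $\Pi_\lambda$ mixes positive and negative contributions. There is no uniformly positive upper bound on $\|\Pi_\lambda u\|^2/\chi_\lambda(e)^2$ analogous to the Gram-matrix identity that gives the permanent for $\lambda=(n)$, and so the standard positivity arguments break down as soon as one moves off the extremal partition. I would therefore expect the inductive strategy above to yield only partial progress on short chains or restricted families of $\lambda$; a full proof appears to require a genuinely new input --- for instance, a combinatorial model for $\Pi_\lambda u$ in which the sign cancellation across $\sigma$ becomes manifest and can be compared term by term against the symmetric projection, or a convexity structure on the map $\lambda \mapsto \|\Pi_\lambda u\|^2/\chi_\lambda(e)^2$ that is compatible with the product form of $u$.
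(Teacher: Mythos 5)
First, a point of scope: the statement you were asked to prove is a \emph{conjecture}, not a theorem, and the paper offers no proof of it. Lieb's permanental-dominance conjecture has been open for over fifty years; the paper's contribution here is Thm.~\ref{thm:weakgenbunchequiv}, which shows that the conjecture is \emph{equivalent} to the weak generalized-bunching conjecture (Conj.~\ref{conj:weakgenbunch}), not that either statement holds. So there is no proof in the paper to compare yours against, and any complete ``proof'' would be a major result far beyond the paper's claims.

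Your opening reformulation is correct: writing $A=B^{\dagger}B$, $u=Be_{1}\otimes\cdots\otimes Be_{n}$, and $\Pi_{\lambda}=\frac{\chi_{\lambda}(e)}{n!}\sum_{\sigma}\chi_{\lambda}(\sigma)P_{\sigma}$, one indeed gets $\overline{\operatorname{Imm}}_{\lambda}(A)=\frac{n!}{\chi_{\lambda}(e)^{2}}\,\lVert\Pi_{\lambda}u\rVert^{2}$; this is the standard route to Schur's nonnegativity result cited in the paper, and it is essentially the Gram-matrix/isotypic-projector picture used in App.~\ref{sec:expressingintermsofimmanants} (cf.\ Lem.~\ref{lem:perminvgbisimm} and Cor.~\ref{cor:philambdagenbunchimm}). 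After that, however, the proposal ceases to be a proof. The entire content of the conjecture is concentrated in the inductive step you describe --- bounding $\lVert\Pi_{\lambda'}u\rVert^{2}/\chi_{\lambda'}(e)^{2}$ by $\lVert\Pi_{\lambda}u\rVert^{2}/\chi_{\lambda}(e)^{2}$ along covers in the dominance order --- and no argument is supplied for it. Worse, monotonicity along dominance chains is strictly stronger than Lieb's conjecture (which asserts only a maximum at $(n)$), so you are proposing to prove something harder than the target; the descent results of Pate that you invoke hold only for restricted families of adjacent pairs and do not assemble into a general induction. You candidly acknowledge that a full proof ``appears to require a genuinely new input,'' and that self-assessment is accurate: as written, this is a correct change of variables followed by a research program, not a proof, and it should not be presented as establishing the statement.
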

Lieb's conjecture has been open for over 50 years, despite many attempts to resolve it.
Here we relate it to the physical scenario at hand:
\begin{restatable}{theorem}{weakgenequivlieb}
  The weak generalized-bunching conjecture (Conj.~\ref{conj:weakgenbunch}) is equivalent to Lieb's permanental-dominance conjecture for immanants (Conj.~\ref{conj:liebpermanentaldominance}).
  \label{thm:weakgenbunchequiv}
\end{restatable}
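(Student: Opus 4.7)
The plan is to derive, for every permutation-invariant $\rho$ singly occupying $\bm{i}$, the identity
\begin{equation*}
b(S|U,\rho) \;=\; \sum_{\lambda \vdash n} p_\lambda\,\overline{\operatorname{Imm}}_\lambda(G),\qquad p_\lambda\ge 0,\ \sum_\lambda p_\lambda = 1,
\end{equation*}
where $G := (U^{\dagger}P_S U)_{\bm{i},\bm{i}}$ is positive semidefinite (as a principal submatrix of the positive operator $U^{\dagger}P_S U$), together with the evaluation $b(S|U,\phi_{\bm{i}}) = \overline{\operatorname{Imm}}_{(n)}(G)$. From this representation the equivalence is routine: Lieb $\Rightarrow$ WGB is immediate because each term in the convex combination is at most $\overline{\operatorname{Imm}}_{(n)}(G)$; conversely, to derive Lieb from WGB I will realize every pair $(\lambda, A)$ with $A \in \mathcal{B}_+(\mathbb{C}^n)$ and $\lambda\vdash n$ as $(\lambda, G)$ for some $(\rho, U, \bm{i}, S)$ in which $p_\lambda = 1$, so that WGB reads exactly $\overline{\operatorname{Imm}}_\lambda(A) \le \overline{\operatorname{Imm}}_{(n)}(A)$.

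First, I would introduce the isometry $K_{\bm{i}}\cong \mathcal{H}_H^{\otimes n}$ given by $a^\dagger(\bm{i},\bm{j})\ket{0}\mapsto \ket{j_1}\otimes\cdots\otimes\ket{j_n}$; distinctness of the $i_a$ makes this map an isometry, and under it the linear-optical action of $F_{\bm{i}}(\sigma)$ corresponds to the standard permutation of tensor factors on $\mathcal{H}_H^{\otimes n}$. Writing the projector onto the ``all in $S\times[L]$'' sector of $\mathrm{Sym}^n(\mathcal{H}_1)$ as the restriction of $(P_S\otimes \mathds{1}_{\mathcal{H}_H})^{\otimes n}$, expanding $\rho$ in terms of the $K_{\bm{i}}$ basis, and collapsing one of the two symmetrizations using the distinctness of $\bm{i}$, a direct computation yields
\begin{equation*}
b(S|U,\rho) \;=\; \sum_{\tau\in \mathcal{S}_n}\Tr\bigl(P(\tau)\tilde\rho\bigr)\prod_{a=1}^n G_{a,\tau(a)},
\end{equation*}
where $\tilde\rho$ is the image of $\rho$ in $\mathcal{H}_H^{\otimes n}$ and $P(\tau)$ is the standard permutation operator on $\mathcal{H}_H^{\otimes n}$.

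Next, I would apply Schur-Weyl duality. Since $\tilde\rho$ commutes with $P(\sigma)$ for every $\sigma\in \mathcal{S}_n$, it has the form $\tilde\rho = \bigoplus_{\lambda\vdash n}\mathds{1}_{V_\lambda}\otimes \tau_\lambda$, where $V_\lambda$ is the Specht module, each $\tau_\lambda$ is positive semidefinite, and $\sum_\lambda (\dim V_\lambda)\Tr(\tau_\lambda) = 1$. Hence $\Tr(P(\tau)\tilde\rho) = \sum_\lambda \chi_\lambda(\tau)\Tr(\tau_\lambda)$, and combining this with $\sum_\tau \chi_\lambda(\tau)\prod_a G_{a,\tau(a)} = \chi_\lambda(e)\,\overline{\operatorname{Imm}}_\lambda(G) = (\dim V_\lambda)\,\overline{\operatorname{Imm}}_\lambda(G)$ yields the convex decomposition above with $p_\lambda = (\dim V_\lambda)\Tr(\tau_\lambda)$. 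For $\rho = \phi_{\bm{i}}$, one has $\tilde\rho = \ketbra{1,\ldots,1}{1,\ldots,1}$, so $\Tr(P(\tau)\tilde\rho) = 1$ for all $\tau$, giving $p_{(n)} = 1$ and $b(S|U,\phi_{\bm{i}}) = \overline{\operatorname{Imm}}_{(n)}(G)$.

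For the realizability step in the converse direction, given $A\in \mathcal{B}_+(\mathbb{C}^n)$ and $\lambda\vdash n$, I would: (i) use degree-$n$ homogeneity of each normalized immanant to rescale to $\|A\|\le 1$; (ii) form the $2n\times n$ isometry $\bigl(\begin{smallmatrix}A^{1/2}\\(I-A)^{1/2}\end{smallmatrix}\bigr)$ and extend it to a unitary $U$ on an enlarged $\mathcal{H}_V$, with $S=\{1,\ldots,n\}$ and $\bm{i}=(n{+}1,\ldots,2n)$, so that $(U^\dagger P_S U)_{\bm{i},\bm{i}} = A$; (iii) take $L\ge n$ so that every $W_\lambda$ in the Schur-Weyl decomposition of $\mathcal{H}_H^{\otimes n}$ is nonzero; and (iv) take $\rho$ to correspond under the $K_{\bm{i}}\cong \mathcal{H}_H^{\otimes n}$ isomorphism to the normalized projector $\Pi_\lambda/\dim(\Pi_\lambda)$ onto the $\lambda$-isotypic component, which is manifestly permutation invariant and satisfies $p_\lambda = 1$. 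Applying WGB to this realization delivers $\overline{\operatorname{Imm}}_\lambda(A)\le \overline{\operatorname{Imm}}_{(n)}(A)$. The main technical hurdle is the derivation of the character formula for $b(S|U,\rho)$ in the first step---keeping the symmetric-tensor normalization straight and verifying that the linear-optical action of $F_{\bm{i}}$ really does induce the standard $\mathcal{S}_n$-action on tensor factors---after which the Schur-Weyl step and the realizability construction are routine.
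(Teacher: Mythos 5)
Your proposal is correct and follows essentially the same route as the paper: the convex decomposition $b(S|U,\rho)=\sum_\lambda p_\lambda\,\overline{\operatorname{Imm}}_\lambda(G)$ via the auxiliary state on $\mathcal{H}_H^{\otimes n}$ and Schur--Weyl duality is the paper's Lem.~\ref{lem:perminvgbisimm}, and your converse construction (rescale by homogeneity, dilate $A^{1/2}$ to a unitary, and take the state corresponding to the normalized $\lambda$-isotypic projector so that $p_\lambda=1$) is exactly the paper's use of $\phi_{\bm{i}}^\lambda$ in App.~\ref{sec:weakgenbunchproof}. The only differences are cosmetic bookkeeping choices such as where the isometry columns sit and the indexing of $\bm{i}$.
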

This theorem, proven in App.~\ref{sec:weakgenbunchproof}, provides strong evidence supporting the weak generalized-bunching conjecture.
It also implies that any partial progress towards Lieb's conjecture has a direct corollary for the bunching behavior of bosons.

While our conjecture states that the perfectly indistinguishable state achieves the largest generalized bunching probability among permutation-invariant states, it does not make any direct statements about the relationship between different values of the generalized bunching probability.
Specifically, we would like a statement that captures the intuition that bosons that are more distinguishable have a lower generalized bunching probability.
We show that, assuming the weak generalized-bunching conjecture, there is a partial order on partially labelled states (defined below) with respect to which the generalized bunching probabilities are monotonic.

For $\mu \vdash n$, we write $\overline{\mu}$ for the nondecreasing list of $n$ numbers containing $\mu_i$ many $i$s.
 For a list of indices $\bm{j} \in [d]^{n}$ and a permutation $\sigma \in \mathcal{S}_n$, define the action of $\sigma$ on $\bm{j}$ by
\begin{align}
  \sigma \cdot (j_1, \ldots, j_n) &= (j_{\sigma^{-1}(1)}, \ldots, j_{\sigma^{-1}(n)}).
\end{align}
We are now prepared to define partially labelled states.
\begin{definition}[Partially labelled state]
  Let $\bm{i}$ be a list of $n$ distinct indices from $[m]$, and let $\mu \vdash n$ be a partition of $n$ such that $\mathrm{len}(\mu) \le L = \mathrm{dim}(\mathcal{H}_H)$. 
  Then the partially labelled state with label pattern \(\mu\) occupying \(\bm{i}\) is defined to be
  \begin{align}
    \rho_{\bm{i}, \mu} = \frac{1}{n!}\sum_{\sigma\in \mathcal{S}_n} a^\dagger(\sigma\cdot\bm{i}, \overline{\mu})\ketbra{0}{0}a(\sigma\cdot\bm{i}, \overline{\mu}).
  \end{align}
  \label{def:partiallylabelledstate}
\end{definition}
Partially labelled states have a definite number of particles in each label, and the assignment of labels to particles is symmetrized.
In particular, partially labelled states are permutation invariant.
They are a complete set of permutation-invariant states in the following sense.
In Cor.~\ref{cor:perminvpdistlinind}, we show that if $\rho$ is permutation invariant then there exist real coefficients $\left\{ c_\mu \right\}_{\mu\vdash n}$  such that, for all $U\in \mathrm{U}(\mathcal{H}_V)$ and for all $v \in \omega_{[m]}^n$, the probability of the visible occupation $v$ is given by $p(v|U, \rho) = \sum_{\mu \vdash n}c_\mu p(v|U, \rho_{\bm{i}, \mu})$.

We can order partially labelled states occupying \(\bm{i}\) according to the refinement order of their label patterns, which we now define.
Loosely speaking, a partition $\mu \vdash n$ is a refinement of $\lambda\vdash n$ if it can be obtained by partitioning the parts of $\lambda$, as depicted in Fig.~\ref{fig:refinement}.
\begin{figure}[h]
  \centering
  \includegraphics[width=.5\columnwidth]{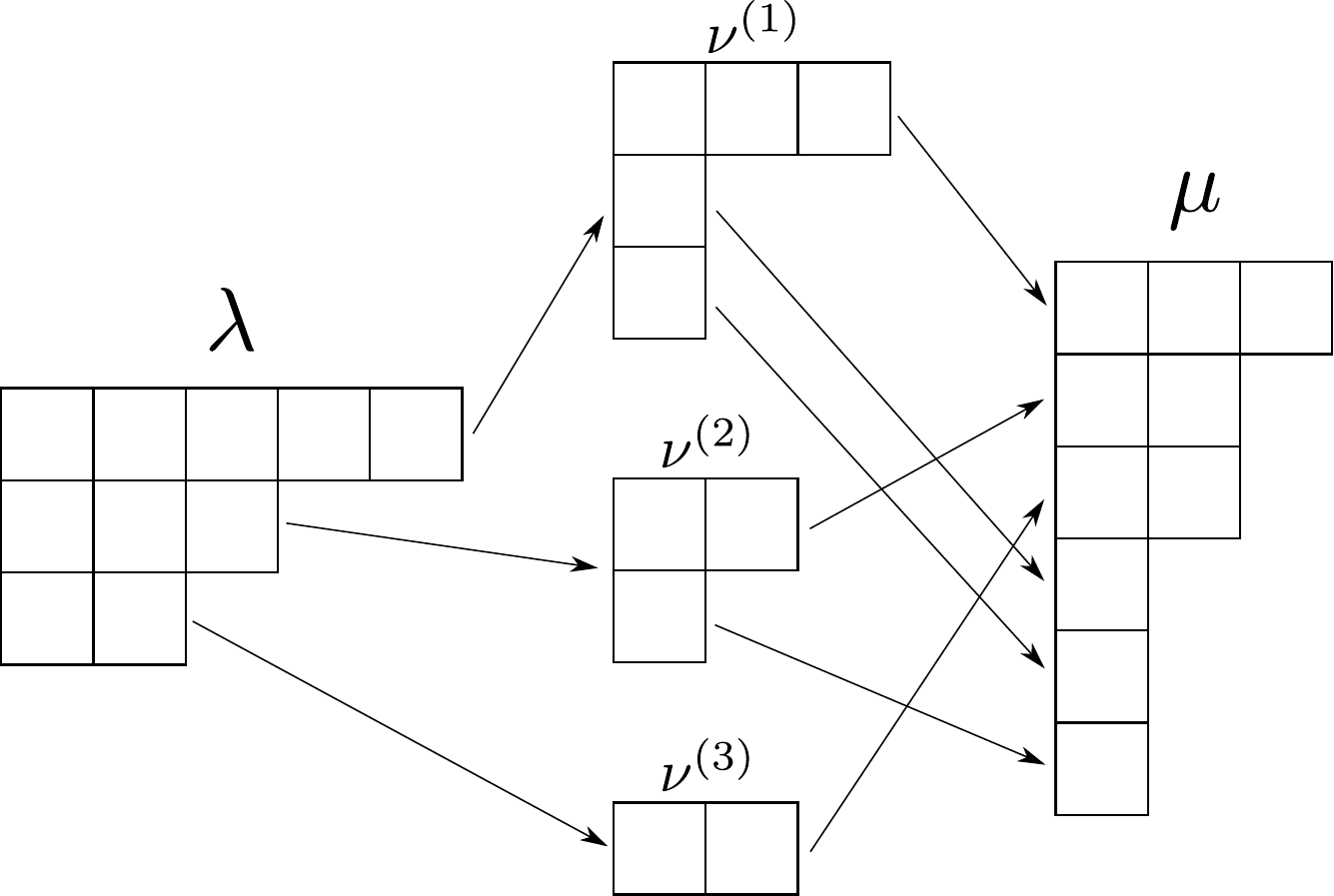}
  \caption{A schematic description of refinement of partitions. $\mu$ is obtained by partitioning the parts of $\lambda$.}
  \label{fig:refinement}
\end{figure}
To define refinement formally, we need the following definition.
Let $\kappa\vdash n$ and $\pi\vdash n'$.
Let $k = \max(n, n')$.
We can write $\kappa = (k^{m_k}\ldots 1^{m_1})$, $\pi = (k^{m'_k}\ldots 1^{m'_1})$, where for all $j \in [k]$, $m_j, m'_j$ are nonnegative integers. 
Then define the sum of partitions to be
\begin{align}
  \kappa+\pi &= (k^{m_k+m'_k}\ldots 1^{m_1+m'_1}).
\end{align}
\begin{definition}[Refinement partial order]
  Let $\lambda, \mu \vdash n$ be partitions. We say that $\mu$ is a refinement of $\lambda$, and write $\lambda \trianglerighteq \mu$, if $\mu$ can be expressed as 
  \begin{align}
    \mu &= \sum_{i \in [n]}\nu^{(i)},
  \end{align}
  for some partitions $\nu^{(i)}$ such that $\nu^{(i)} \vdash \lambda_i$.
  \label{def:refinementofpartitions}
\end{definition}
With these definitions in hand, we now state a monotonicity relation between the generalized bunching probabilities of partially labelled states.
\begin{restatable}{theorem}{refinementmonotonicity}
  Denote $m = \mathrm{dim}(\mathcal{H}_V), L = \mathrm{dim}(\mathcal{H}_H)$, let $\bm{i}$ be a list of $n$ distinct indices from $[m]$, and let $\lambda, \mu \vdash n$ such that $\mathrm{len}(\lambda) \le L$ and $\mathrm{len}(\mu) \le L$. Let $U \in \mathrm{U}(\mathcal{H}_V)$, and let $S \subseteq [m]$.
  Suppose that Conj.~\ref{conj:weakgenbunch} holds. 
  Then,
  \begin{align}
    \lambda \trianglerighteq \mu \implies b(S|U, \rho_{\bm{i}, \lambda}) \ge b(S|U, \rho_{\bm{i}, \mu}).
  \end{align}
  \label{thm:refinementmonotonicity}
\end{restatable}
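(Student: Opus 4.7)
The plan is to realize each partially labelled state as a uniform mixture over set partitions of $[n]$ with prescribed block sizes, factorize the bunching probability across label sectors, and then invoke Conj.~\ref{conj:weakgenbunch} on a single block. By transitivity of the refinement order, it suffices to treat a single-step refinement in which one part $\lambda_k$ of $\lambda$ is split into a partition $\nu \vdash \lambda_k$. For a set partition $\Pi$ of $[n]$, let $\rho_{\bm{i}, \Pi}$ denote the state obtained by assigning a distinct hidden label to each block of $\Pi$ and placing, for each block $B \in \Pi$, perfectly indistinguishable bosons on the sites $\bm{i}_B = (i_k)_{k \in B}$. Inspecting Def.~\ref{def:partiallylabelledstate} and using the transitivity of the $\mathcal{S}_n$ action on set partitions with fixed block sizes gives
\begin{align}
  \rho_{\bm{i}, \mu} = \frac{1}{|P(\mu)|}\sum_{\Pi \in P(\mu)}\rho_{\bm{i}, \Pi},\qquad \rho_{\bm{i}, \lambda} = \frac{1}{|P(\lambda)|}\sum_{\Pi \in P(\lambda)}\rho_{\bm{i}, \Pi},
\end{align}
where $P(\nu)$ denotes the set of set partitions of $[n]$ with block sizes $\nu$.

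The next step is to establish a factorization of the bunching probability. Since $U \otimes \mathds{1}_{\mathcal{H}_H}$ preserves each hidden-label sector and the state $\rho_{\bm{i}, \Pi}$ is a tensor product of perfectly indistinguishable states across label sectors, the event that all $n$ particles land in $S$ decouples into a product over blocks:
\begin{align}
  b(S|U, \rho_{\bm{i}, \Pi}) = \prod_{B \in \Pi} b(S|U, \phi_{\bm{i}_B}).
\end{align}
Each factor is the bunching probability for $|B|$ perfectly indistinguishable bosons occupying the sites $\bm{i}_B$, and this decoupling is the key structural fact that allows me to localize the argument to a single block.

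Finally, I would use double counting on pairs $(\Pi, \Pi') \in P(\lambda) \times P(\mu)$ with $\Pi'$ refining $\Pi$ at its $\lambda_k$-block $B_k(\Pi)$. Together with the factorization, this yields
\begin{align}
  b(S|U, \rho_{\bm{i}, \mu}) = \frac{1}{|P(\lambda)|} \sum_{\Pi \in P(\lambda)} \left[\prod_{j \ne k} b(S|U, \phi_{\bm{i}_{B_j(\Pi)}})\right] b(S|U, \rho_{\bm{i}_{B_k(\Pi)}, \nu}),
\end{align}
since averaging over the refinements of $B_k(\Pi)$ reconstructs the partially labelled state $\rho_{\bm{i}_{B_k(\Pi)}, \nu}$ on the $\lambda_k$ sites $\bm{i}_{B_k(\Pi)}$. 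The state $\rho_{\bm{i}_{B_k(\Pi)}, \nu}$ is permutation invariant on those distinct sites, so Conj.~\ref{conj:weakgenbunch} gives $b(S|U, \rho_{\bm{i}_{B_k(\Pi)}, \nu}) \le b(S|U, \phi_{\bm{i}_{B_k(\Pi)}})$. Substituting and using the factorization in reverse gives $b(S|U, \rho_{\bm{i}, \mu}) \le \tfrac{1}{|P(\lambda)|} \sum_\Pi b(S|U, \rho_{\bm{i}, \Pi}) = b(S|U, \rho_{\bm{i}, \lambda})$, as desired.

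The main obstacle is the combinatorial bookkeeping in the double-counting step: when a part of $\nu$ coincides in value with a part of $\lambda$ other than $\lambda_k$, a given $\Pi' \in P(\mu)$ may be a refinement of several $\Pi \in P(\lambda)$, so there is no canonical map $\Pi' \mapsto \Pi$. I would handle this by counting the fibers of both projections from the set of valid pairs $(\Pi, \Pi')$ and noting, by $\mathcal{S}_n$-symmetry, that both fibers are constant. The identity $|P(\lambda)| R = |P(\mu)| M$ (with $R$ the number of one-step refinements of a $\lambda$-partition and $M$ the number of valid coarsenings of a $\mu$-partition) then makes the averaging identity precise and independent of the specific correspondence chosen between $P(\lambda)$ and $P(\mu)$.
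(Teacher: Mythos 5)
Your proposal is correct and follows essentially the same route as the paper: decompose the partially labelled state as a uniform mixture of products of perfectly indistinguishable states over the blocks of a set partition, factorize the bunching probability over blocks, apply Conj.~\ref{conj:weakgenbunch} to the refined block, and recombine by a counting argument---this is precisely Prop.~\ref{prop:buchpartiallylabelled} together with the paper's proof, differing only in bookkeeping (unordered set partitions with fiber-counting by $\mathcal{S}_n$-equivariance, and one-step refinements chained by transitivity, versus the paper's ordered partitions, explicit multinomial normalizations, and simultaneous refinement of all parts). The one place where your sketch defers real work is the block factorization $b(S|U,\rho_{\bm{i},\Pi})=\prod_{B\in\Pi}b(S|U,\phi_{\bm{i}_B})$, which is the substantive content of Prop.~\ref{prop:buchpartiallylabelled} and in the paper requires the first-quantized expression of Lem.~\ref{lem:genbunchfirstquantized} and a stabilizer computation to establish.
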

This theorem is proven in App.~\ref{sec:proofofpartiallylabelled}.
It is the statement that, if we take a partially labelled state, then add more labels to it, the generalized bunching probability cannot increase.
Refinement is only one of many ways to ``add distinguishability'' to a state.
In the following section, we study a different partial order of  distinguishability.

\section{Schur convexity of mean generalized bunching}
\label{sec:schurconvexityofmeangenbunch}

In the previous section, we analyzed the problem of characterizing the distinguishability of particles when random permutations of the input modes are an available experimental control.
In this section, we consider the case that we can perform Haar random linear optical unitaries on the visible DOF.
Let $U\sim \mathrm{Haar}(\mathrm{U}(\mathcal{H}_V))$ denote that the unitary $U$ is distributed according to the Haar measure over $\mathrm{U}(\mathcal{H}_V)$.
Let $k \in [m]$, and let $S \subseteq [m]$ be a set of size $k$.
Then define the mean generalized bunching probability into a subset of size $k$,
\begin{align}
  w(k|\rho) = \mathbb{E}_{U \sim \mathrm{Haar}(\mathrm{U}(\mathcal{H}_V))}\left(b(S|U, \rho)\right).
  \label{eq:meangenbunchdef}
\end{align}
Due to the Haar averaging, the left-hand side depends only on $k$, and not on $S$.
In App.~\ref{sec:schurconvexity}, we prove Thm.~\ref{thm:meangenbunchsuremajorization}, which states that the mean generalized bunching probability is monotonic with respect to a certain preorder (Def.~\ref{def:suremajorization}) on the set of all $n$-particle states $\mathcal{D}(\mathrm{Sym}^n(\mathcal{H}_V \otimes \mathcal{H}_H))$.
Because the definition of this preorder is technical, we choose to discuss consequences of Thm.~\ref{thm:meangenbunchsuremajorization} in this section, and refer the interested reader to App.~\ref{sec:schurconvexity} for further detail.
As a part of the proof of Thm.~\ref{thm:meangenbunchsuremajorization}, we derive an explicit expression for the mean generalized bunching probability, which is given in Lem.~\ref{lem:meangenbunchgeneral}.

First we show that a natural generalized bunching conjecture is true on average over visible unitaries. 
\begin{restatable}{corollary}{avglieb}
  Let $\bm{i}$ be a list of $n$ distinct sites from $[m]$, and let $\phi_{\bm{i}}$ be the perfectly indistinguishable state defined in Eq.~\ref{eq:perfectlyindsitinguishable}.
  Let $k \in [m]$, and let $\rho \in \mathcal{D}(\mathrm{Sym}^n(\mathcal{H}_V \otimes \mathcal{H}_H))$ be any $n$-particle state.
  Then,
  \begin{align}
    w(k|\phi_{\bm{i}}) \ge w(k|\rho).
  \end{align}
  \label{cor:avglieb}
\end{restatable}
The proof of this corollary is given in App.~\ref{sec:schurconvexity}, immediately following Thm.~\ref{thm:meangenbunchsuremajorization}.

We now show two more consequences (Cor.~\ref{cor:schurconvexity}, Cor.~\ref{cor:thermcor}) of Thm.~\ref{thm:meangenbunchsuremajorization}, which provide a method to perform thermometry of cold-atom systems such as in Ref.~\citenum{youngAtomicBosonSampler2024}, where the atoms are in thermal equilibrium.
Such systems are described by singly occupied states where each particle's hidden density matrix is the same; we call these uniform states, as defined next.
\begin{definition}[Uniform states]
  Let $\rho^{(1)} \in \mathcal{D}(\mathcal{H}_H)$ be a single-particle hidden density matrix.
  Without loss of generality, write $\rho^{(1)} = \sum_k \alpha_k \ketbra{k}{k}$ for some basis $\left\{ \ket{k} \right\}_{k=1}^L$ and some probability distribution $\alpha \in \Delta^{L-1}$.
Define $\alpha(\bm{j}) = \alpha_{j_1}\cdots \alpha_{j_n}$.  
Let $\bm{i}$ be a list of $n$ distinct indices from $[m]$. 
Then the uniform state on $\bm{i}$ with single-particle density matrix $\rho^{(1)}$ is
\begin{align}
  \rho(\alpha) &= \sum_{\bm{j}}\alpha(\bm{j})a^{\dagger}(\bm{i}, \bm{j})\ketbra{0}{0}a(\bm{i}, \bm{j}).
  \label{eq:alphastate}
\end{align}
\label{def:uniformstate}
\end{definition}
The uniform state is the state where, for each $x \in [n]$, we independently populate the site $i_x$ with a boson of label $j$ with probability $\alpha_j$.
For uniform states, majorization provides a way to compare distinguishability.
\begin{definition}[Majorization]
For a distribution $\alpha \in \Delta^{N-1}$, write $\alpha^{\downarrow}$ for the distribution sorted in nonincreasing order, so $\alpha^\downarrow_{\; 1} \ge \cdots \ge \alpha^\downarrow_{\; N}$.
For distributions $\alpha, \alpha' \in \Delta^{N-1}$, we write $\alpha \succ \alpha'$ if for all $k \in [N]$,
\begin{align}
  \sum_{i=1}^k \alpha_{\;i}^{\downarrow} \ge \sum_{i=1}^k (\alpha')^\downarrow_{\;i},
\end{align}
and say that $\alpha$ majorizes $\alpha'$.
\label{def:majorization}
\end{definition}

We note that $\alpha \succ\alpha'$ implies that the entropy of $\alpha'$ is larger than that of $\alpha$.
Consequently, for uniform states, the probability that particles have the same label decreases as we go down in this order.
A function $f:\Delta^{N-1}\rightarrow \mathbb{R}$, that obeys $\alpha \succ \alpha' \implies f(\alpha) \ge f(\alpha')$, is said to be Schur convex.
  One might conjecture that, for all $U\in \mathrm{U}(\mathcal{H}_V)$ and all $S\subseteq [m]$, the generalized bunching probability of uniform states is Schur convex, so $\alpha\succ\alpha' \implies b(S|U,\rho(\alpha)) \ge b(S|U, \rho(\alpha'))$.
We were not able to prove this, even assuming the weak generalized-bunching conjecture.
Instead, we show that this implication holds on average, unconditionally.
\begin{restatable}{corollary}{schurconvexity}
  Let $n \le m = \mathrm{dim}(\mathcal{H}_V)$ and $n \le L = \mathrm{dim}(\mathcal{H}_H)$. Let $k \leq m$.  Let $\alpha, \alpha' \in \Delta^{L-1}$ be such that $\alpha \succ \alpha'$. Then 
  \begin{align}
    w(k|\rho(\alpha)) \ge w(k|\rho(\alpha')).
  \end{align}
  \label{cor:schurconvexity}
\end{restatable}
This corollary of Thm.~\ref{thm:meangenbunchsuremajorization} is proven in App.~\ref{sec:schurconvexity}, following Lem.~\ref{lem:uniformstateirrepdistribution}.
As an application of this corollary, we show that when the single-particle state is a Gibbs state, the mean generalized bunching probability is decreasing as a function of temperature.
\begin{restatable}{corollary}{thermcor}
  We take the same hypotheses as in Cor.~\ref{cor:schurconvexity}.
  Let $0 = \epsilon_0 \le \cdots \le \epsilon_{L-1}$ be chosen so that at least two of the $\epsilon_i$ differ.
  Let $\alpha_\beta$ be the Gibbs distribution,
  \begin{align}
    (\alpha_\beta)_k = \frac{e^{-\epsilon_k\beta}}{\sum_{j=0}^{L-1}e^{-\epsilon_j\beta}}.
  \end{align}
  Then $w(k|\rho(\alpha_\beta))$ is strictly monotonically increasing as a function of $\beta$.
  \label{cor:thermcor}
\end{restatable}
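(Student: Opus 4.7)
The plan is to combine Thm.~\ref{thm:schurconvexity} with the classical fact that Gibbs distributions strictly majorize as $\beta$ grows, and to upgrade the resulting weak monotonicity of $f(\beta) := \mathbb{E}_{U \sim \mathrm{Haar}(\mathrm{U}(\mathcal{H}_V))}[b(S|U,\rho(\alpha_\beta))]$ to strictness using the explicit formula from Lem.~\ref{lem:meangenbunchexplicit}.

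First I would establish strict majorization: for any $\beta_2 > \beta_1$, $\alpha_{\beta_2} \succ \alpha_{\beta_1}$ with at least one strict inequality among the cumulative sums. Since $\epsilon_0 \le \cdots \le \epsilon_{L-1}$, the Gibbs weights $(\alpha_\beta)_k = e^{-\epsilon_k\beta}/Z(\beta)$, with $Z(\beta) := \sum_j e^{-\epsilon_j\beta}$, are already non-increasing in $k$, so $\alpha_\beta^\downarrow = \alpha_\beta$ without reordering. A short manipulation gives, for each $K \in \{0,\ldots,L-2\}$,
\[
Z(\beta_1)Z(\beta_2)\left(\sum_{k=0}^K (\alpha_{\beta_2})_k - \sum_{k=0}^K (\alpha_{\beta_1})_k\right) = \sum_{k \le K < j} e^{-(\epsilon_k+\epsilon_j)\beta_1}\left(e^{-\epsilon_k(\beta_2-\beta_1)} - e^{-\epsilon_j(\beta_2-\beta_1)}\right),
\]
which is termwise nonnegative since $\epsilon_k \le \epsilon_j$ for $k \le K < j$, giving $\alpha_{\beta_2} \succ \alpha_{\beta_1}$. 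Because the $\epsilon_i$ are not all equal, some $K$ satisfies $\epsilon_K < \epsilon_{K+1}$, making the $(k,j)=(K,K+1)$ summand strictly positive.

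Theorem~\ref{thm:schurconvexity} now gives $f(\beta_2) \ge f(\beta_1)$, so it remains to upgrade this to strict inequality. To do so I would differentiate in $\beta$. Using Lem.~\ref{lem:meangenbunchexplicit}, write $f(\alpha) = \sum_{\bm{j} \in [L]^n} \alpha(\bm{j}) g(\bm{j})$ for a nonnegative function $g$ depending only on the multiplicity type of $\bm{j}$. Combined with $\partial_\beta \alpha_\beta(\bm{j}) = \alpha_\beta(\bm{j})\bigl(n\langle\epsilon\rangle_\beta - E(\bm{j})\bigr)$, where $E(\bm{j}) := \sum_x \epsilon_{j_x}$ and $\langle\epsilon\rangle_\beta := \sum_k \epsilon_k (\alpha_\beta)_k$, this yields
\[
f'(\beta) = -\opn{Cov}_{\bm{J}\sim\alpha_\beta^{\otimes n}}\!\bigl(g(\bm{J}),\, E(\bm{J})\bigr).
\]
Because $g$ depends only on the type $\tau(\bm{J})$, the tower property reduces this further to $f'(\beta) = -\opn{Cov}_\tau\!\bigl(g(\tau),\mathbb{E}[E(\bm{J})\mid\tau]\bigr)$.

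The main obstacle is to show this residual covariance is strictly negative at every $\beta$; Schur convexity of $f$ alone only forces it to be nonpositive. To close the gap, I would use the explicit form of $g$ from Lem.~\ref{lem:meangenbunchexplicit} to show that $g(\tau)$ is strictly larger on more concentrated types (those with fewer distinct labels), and compute directly that under $\alpha_\beta^{\otimes n}$ more concentrated types have strictly smaller conditional expected energy $\mathbb{E}[E(\bm{J})\mid\tau]$, because they preferentially cluster on the highest-probability (hence lowest-energy) labels. Pinning down both strict monotonicities using only the assumption that two of the $\epsilon_i$ differ, rather than the weak inequalities furnished directly by Thm.~\ref{thm:schurconvexity}, is where the delicate part of the proof lies.
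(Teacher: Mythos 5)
Your argument is incomplete at precisely the point that makes this corollary nontrivial: the upgrade from nondecreasing to strictly increasing. The first half (Gibbs majorization plus Thm.~\ref{thm:schurconvexity} giving $f(\beta_2)\ge f(\beta_1)$) matches the paper's route via Lem.~\ref{lem:gibbsmaj}, and your covariance identity $f'(\beta)=-\operatorname{Cov}_{\bm{J}\sim\alpha_\beta^{\otimes n}}(g(\bm{J}),E(\bm{J}))$ is correctly derived. But the strictness step is only a plan, and the plan has real obstacles. First, the claim that $g(\tau)$ is strictly larger on ``more concentrated'' types is itself a nontrivial strict monotonicity statement about Haar-averaged bunching probabilities of fixed label patterns that you would have to prove (note that the non-averaged analogue, Thm.~\ref{thm:refinementmonotonicity}, is only established in the paper conditionally on Lieb's conjecture). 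Second, even granting that $g$ is increasing and $\mathbb{E}[E\mid\tau]$ is decreasing in concentration, concentration of types is a \emph{partial} order on partitions of $n$, so opposite monotonicity of two functions does not by itself force their covariance to be negative; you would need a total order, a correlation inequality of FKG type for the induced distribution on types, or some other additional structure. Neither strict monotonicity nor the correlation step is carried out, so the proof does not close.

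The paper avoids all of this with a softer argument worth knowing: it proves only the weak monotonicity $y(\beta_2)\ge y(\beta_1)$ from Lem.~\ref{lem:gibbsmaj} and Thm.~\ref{thm:schurconvexity}, then shows a single strict inequality between the two endpoints, $y(0)<\lim_{\beta\to\infty}y(\beta)$ (Lem.~\ref{lem:strictinequality}, using that $\mathrm{SW}^n(\alpha_\infty)$ is a point mass at $(n)$ and that $\lambda\mapsto k^{\uparrow\lambda}/m^{\uparrow\lambda}$ is strictly maximized at $(n)$ by Lem.~\ref{lem:fschurconvex}), and finally observes that $y(\beta)$ is a polynomial in the entries of $\alpha_\beta$, hence real-analytic in $\beta$, so a nonconstant nondecreasing analytic function cannot be constant on any open interval and is therefore strictly increasing. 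If you want to salvage your approach, you could instead borrow this analyticity trick: your computation already gives $f'\le 0$ termwise is not needed, only that $f$ is nondecreasing, analytic, and nonconstant, which your setup plus the endpoint comparison would deliver without ever controlling the sign of the covariance pointwise.
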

This corollary is proven in App.~\ref{sec:schurconvexity}, following Lem.~\ref{lem:gibbsmaj}.
Therefore, if we measure the mean generalized bunching probability, we can infer $\beta$ by inverting the relationship derived in Lem.~\ref{lem:meangenbunchgeneral}.
In this sense, the mean generalized bunching probability serves as a thermometer for uniform states with a Gibbs distribution.

\section{Discussion}
\label{sec:discussion}

We have shown various monotonicity properties of the generalized bunching probability.
Through these monotonicity properties, we argue that the generalized bunching probability provides useful partial information about the indistinguishability of many bosons.
We leave open the problem of whether the generalized bunching probability for uniform states is Schur convex for a given visible unitary, rather than averaged over all visible unitaries.
We also leave open the following problem: Let $k \le m$, let $\bm{i}$ be a list of $n$ distinct indices from $[m]$, let $\lambda, \mu \vdash n$, and let $L=\mathrm{dim}(\mathcal{H}_H)$ be such that $L \ge \mathrm{len}(\lambda), \mathrm{len}(\mu)$.
Does $\lambda \trianglerighteq \mu \implies w(k|\rho_{\bm{i}, \lambda})\ge w(k|\rho_{\bm{i}, \mu})$ (Def.~\ref{def:refinementofpartitions}, Def.~\ref{def:partiallylabelledstate}, Eq.~\ref{eq:meangenbunchdef}) follow from Thm.~\ref{thm:meangenbunchsuremajorization}?

In cold-atom systems, such as in Ref.~\cite{youngAtomicBosonSampler2024}, it is reasonable to assume that the state is a uniform state (Def.~\ref{def:uniformstate}), because the atoms have thermalized with a bath.
In this scenario, Cor.~\ref{cor:thermcor} provides a method of performing thermometry for these very low temperature systems, where it may be difficult to infer the temperature in other ways.
The experiment we propose is to (1) prepare a uniform state of thermal atoms in distinct sites, (2) perform a Haar-random linear optical unitary, and (3) measure the generalized bunching probability for a subset $S$ of sites that has size $k$.
Lem.~\ref{lem:meangenbunchgeneral} and Lem.~\ref{lem:uniformstateirrepdistribution} then provide a conversion of this measured generalized bunching probability to a measurement of temperature.
There are a few obstructions to implementing this directly.
The first is that in Ref.~\cite{youngAtomicBosonSampler2024}, the measurements are of parity of the number of atoms on the site, rather than of the atom number.
Secondly, the bosons in Ref.~\cite{youngAtomicBosonSampler2024} are slightly interacting, which may corrupt this measurement of temperature.
Lastly, measuring the mean generalized bunching probability involves sampling over the Haar measure of visible unitaries, which may be difficult in practice.
A possible extension of this work is to calculate how many samples from the Haar measure are necessary to obtain a good estimate of the mean generalized bunching probability.
It is possible that this number of samples is equal to one, using a method we now describe.

For a state $\rho$ that singly occupies the distinct sites $\bm{i}\in [m]^n$, a visible unitary $U\in \mathrm{U}(\mathcal{H}_V)$, and a number $k \in [m]$ of visible sites, define $\overline{b}(k|U, \rho)$ to be the uniform average of the generalized bunching probability over all subsets of $[m]$ that have $k$ elements.
By a simple combinatorial argument, given in Ref.~\citenum{youngAtomicBosonSampler2024} Eq.~17, there is a simple and efficient estimator for $\overline{b}(k|U, \rho)$ that can be calculated from samples of the visible occupation distribution.
The mean of $\overline{b}(k|U, \rho)$ over the Haar measure of visible unitaries is equal to the mean generalized bunching probability $w(k|\rho)$.
Then by the log-Sobolev inequality and Herbst's argument (Ref.~\citenum{aubrun2017alice}, Thm.~5.39), $\overline{b}(k|U, \rho)$ approximates $w(k|\rho)$ with high probability over the Haar measure of visible unitaries.
The estimate of the mean generalized bunching probability obtained from this method of averaging over subsets is subject to approximation error introduced from sampling the Haar measure. 
The bound on the approximation error obtained by direct application of the log-Sobolev inequality appears to be unfavorable when compared to the range of the mean generalized bunching probability over input states.
It is possible that the bounds can be significantly improved.
Alternatively, one can seek to determine a lower bound on the number of samples of $\overline{b}(k|U,\rho)$ over the Haar distribution of visible unitaries that are necessary to obtain a good estimate of the mean generalized bunching probability.
For a typical unitary $U$, the bunching probability \(b(S|U,\rho)\) is not constant as a function over subsets \(S\) of size \(k\).
This implies that the distribution over \(U\) of \(\overline{b}(k|U,\rho)\) is narrower than the distribution over \(U\) of \(b(S|U, \rho)\) for fixed \(S\). 
As a result it is possible that tighter concentration inequalities can be obtained for $\overline{b}(k|U, \rho)$ than those given in Ref.~\citenum{shchesnovich_distinguishing_2021} for $b(S|U, \rho)$.

\section*{Acknowledgements}
  This work includes contributions of the National Institute of
  Standards and Technology, which are not subject to U.S. copyright.
  The use of trade, product and software names is for informational
  purposes only and does not imply endorsement or recommendation by
  the U.S. government.
  S.G. acknowledges support from the Professional Research Experience Program (PREP) operated
jointly by NIST and the University of Colorado.
Parts of this work appear in the Ph.D. thesis of S.G.

We thank John Wilson, Aaron Friedman, Aaron Young, and Michael Grayson for comments on the manuscript.
We thank Scott Glancy for discussions about boson bunching.
We thank Aaron Young and Adam Kaufman for the motivation for this work.

\bibliography{thermometry}
\newpage
\begin{table}[h!]
  \caption{Glossary of symbols}
  \begin{tabular}{cc}
    Symbol & Definition\\
    \hline
    $b(S|U, \rho)$ & Generalized bunching probability (Eq.~\ref{eq:genbunchfirstexpression})\\
    $n$ & Number of particles \\
    $\mathcal{H}_V$ & Visible Hilbert space\\
    $\mathcal{H}_H$ & Hidden Hilbert space\\
    $m$ & $\mathrm{dim}(\mathcal{H}_V)$, number of visible modes\\
    $L$ & $\mathrm{dim}(\mathcal{H}_H)$, number of hidden modes\\
    $\mathrm{Part}_{n, d}$ & Partitions of $n$ with length at most $d$ (Sec.~\ref{sec:appendixprelim}.\ref{sec:partitionsyoungdia})\\
    $P(\sigma)$ & Permutation representation (Eq.~\ref{eq:permutationtensorfactoraction})\\
    $(\pi_\lambda, \mathrm{V}_\lambda^d)$, $(\pi_\lambda, \mathbb{S}_\lambda(\mathcal{H}))$ & $\lambda$-irrep of the unitary group $\mathrm{U}(\mathcal{H})$ (Eq.~\ref{eq:pilambdadef})\\
    $\mathrm{dim}(\mathrm{V}_\lambda^d)$ & Number of semistandard tableaux (Eq.~\ref{eq:nsst})\\
    $(\kappa_\lambda, \mathrm{Sp}_\lambda)$ & $\lambda$-irrep of the symmetric group\\
    $\mathrm{dim}(\lambda)$ & Number of standard tableaux (Def.~\ref{def:hooklength})\\
    $\omega_S^n$ & Set of occupations (Eq.~\ref{eq:occupations})\\
    $\omega_S^n(T)$ & Occupations that are nonzero on $T$ (Eq.~\ref{eq:occupationsnonzero})\\
    $K_{\bm{i}}$ & Singly occupied subspace (Eq.~\ref{eq:singlyoccupiesdef})\\
    $f_1$ & First quantization function (Lem.~\ref{lem:firstquantization})\\
    $h(\rho)$ & Auxiliary state (Def.~\ref{def:auxiliarystate})\\
    $F_{\bm{i}}(\sigma)$ & Permutation matrix (Def.~\ref{def:frepdef})\\
    $q_\lambda$ & Auxiliary irrep distribution (Cor.~\ref{cor:permutationinvariantauxiliarystates})\\
    $G(S|U, \bm{i})$ & Gram matrix (Eq.~\ref{eq:grammatrixdef})\\
      $c(\square)$ & Content of a box (Def.~\ref{def:contentofabox})\\
      $\chi_\lambda(\sigma)$ & Irreducible character of a permutation (Eq.~\ref{eq:irrechdef})\\
      $\overline{\mathrm{Imm}}_\lambda$ & Normalized $\lambda$-immanant (Def.~\ref{def:immanants})\\
      $\phi_{\bm{i}}$ & Perfectly indistinguishable state (Eq.~\ref{eq:perfectlyindsitinguishable})\\
      $\phi_{\bm{i}}^\lambda$ & Permutation-invariant state supported only on $\mathcal{H}_\lambda$ (Prop.~\ref{prop:deltastate})\\
      $\Tr_H(\rho)$ & Visible state (Def.~\ref{eq:visiblepart})\\
      $\rho_{\bm{i}, \mu}$ & Partially labelled state (Def.~\ref{def:partiallylabelledstate})\\
      $s_\lambda(\alpha)$ & Schur polynomial (Def.~\ref{def:schurpolynomial})\\
      $p_\lambda(\alpha)$ & Power-sum symmetric polynomial (Def.~\ref{def:powersum})\\
      $[\lambda]$ & Set of boxes of $\lambda$ (Sec.~\ref{sec:appendixprelim}.\ref{sec:partitionsyoungdia})\\
      $\mathrm{len}(\lambda)$ & Length of a partition (Sec.~\ref{sec:appendixprelim}.\ref{sec:partitionsyoungdia})\\
      $k^{\uparrow \lambda}$ & Rising factorial (Eq.~\ref{eq:risingfactorial})\\
      $\mathrm{SW}^n(\alpha)(\lambda)$ & Schur-Weyl distribution (Def.~\ref{def:swdistn})\\
      $\mathbb{S}_\lambda(\mathcal{H})_{\mathrm{wt} = w}$ & Weight space (Def.~\ref{def:weightspace})\\
      $\xi$ & Conversion from locations to occupations (Eq.~\ref{eq:xidef})\\
      $\zeta$ & Conversion from occupations to locations (Eq.~\ref{eq:zetadef})\\
      $\alpha\succ \alpha'$, $\lambda\succ\mu$ & Majorization (Def.~\ref{def:majorization}, Def.~\ref{def:majorizationforpartitions})\\
      $\lambda\trianglerighteq \mu$ & Refinement (Def.~\ref{def:refinementofpartitions})\\
  \end{tabular}
  \label{tab:glossary}
\end{table}
\renewcommand{\thesubsection}{\thesection.\arabic{subsection}}
\renewcommand{\thesubsubsection}{\thesubsection.\arabic{subsubsection}}
\renewcommand{\theequation}{\thesection.\arabic{equation}}
\section*{Appendixes}

A glossary of symbols is provided in Tab.~\ref{tab:glossary}.
The appendix is structured as follows.
In App.~\ref{sec:appendixprelim} we present preliminaries.
In App.~\ref{sec:derivationoffirstquantized} we present a derivation of the lemma that allows us to calculate the generalized bunching probability for different scenarios.
In App.~\ref{sec:expressingintermsofimmanants} we show that, for permutation-invariant states, the generalized bunching probability is equal to a convex combination of immanants of a certain Gram matrix.
In App.~\ref{sec:weakgenbunchproof} we prove Thm.~\ref{thm:weakgenbunchequiv}.
In App.~\ref{sec:proofofpartiallylabelled} we prove Thm.~\ref{thm:refinementmonotonicity}.
In App.~\ref{sec:moreprelims} we introduce further preliminaries necessary for proving Cor.~\ref{cor:schurconvexity}.
In App.~\ref{sec:derivationintermsofvisstate} we express the generalized bunching probability in terms of the visible state.
In App.~\ref{sec:derivationforperminv} we derive the visible state for permutation-invariant states.
In App.~\ref{sec:schurconvexity} we prove Cor.~\ref{cor:schurconvexity} and Cor.~\ref{cor:thermcor}.
In App.~\ref{sec:completeness} we show that the visible occupation distributions for partially labelled states spans the set of visible occupation distributions for all permutation-invariant states.
In App.~\ref{sec:commentsonentanglement} we offer some comments on the auxiliary state space, and the role of entanglement in our work.

\appendix
\section{Preliminaries}
\label{sec:appendixprelim}
\subsection{Conventions for occupations}
\begin{definition}
  For all $i \in [d]$, define $\xi_i:[d]^{n} \rightarrow \left\{ 0, \ldots, n \right\}$ by
  \begin{align}
    \xi_i(\bm{j}) &= \sum_{l} \delta_{i, j_l},
    \label{eq:xidef}
  \end{align} 
  \(\xi_{i}(\bm{j})\) counts the number of times that $i$ appears in $\bm{j}$.
The function \(i\mapsto \xi_i(\bm{j})\) is an occupation (Eq.~\ref{eq:occupations}). We define the function \(\xi\) from \([d]^{n}\) to the set of occupations \(\omega^{n}_{[d]}\) by
  \begin{align}
    \xi(\bm{j})(i) &= \xi_i(\bm{j}).
  \end{align}
\end{definition}
For an occupation $g \in \omega_S^n$, we write $g! = \prod_{s \in S}g(s)!$.

We need the following function $\zeta$ that switches from occupations (Eq.~\ref{eq:occupations}) to a list of locations,
\begin{align}
  \zeta:\omega^n_{[d]}\rightarrow [d]^{n}, \qquad  \zeta(g) = (\underbrace{1, \ldots, 1}_{g(1) \text{ times}}, \ldots, \underbrace{d, \ldots, d}_{g(d) \text{ times}}).
  \label{eq:zetadef}
\end{align} 
$\zeta$ is defined so that
\begin{align}
  \xi(\zeta(g)) = g.
\end{align}

\subsection{Probability theory}
If $\alpha$ is a probability measure, we write $X\sim \alpha$ to denote that the random variable $X$ is distributed according to $\alpha$.
We write $\mathbb{E}_{X \sim \alpha}(f(X))$ or $\mathbb{E}_{X}(f(X))$ to denote the expected value of the function $f$.

\subsection{Linear algebra}
Let $V, W$ be finite dimensional complex Hilbert spaces.
\begin{itemize}
  \item The algebra of linear maps on $V$ is written $\operatorname{End}(V)$.
  \item The trace of a linear operator $T$ on $V$ is written $\Tr(T)$.
  \item If $M\in \mathrm{End}(V \otimes W)$ is an operator on the tensor product of the Hilbert spaces $V$ and $W$, its partial trace over $V$ is denoted $\Tr_V(\rho)$.
  \item The identity operator on $V$ is denoted $\mathds{1}_V$.
  \item The support of a linear operator $T:V \rightarrow W$, written $\mathrm{supp}(T)$, is the orthogonal complement to its null space.
  \end{itemize}

\subsection{Partitions, Young diagrams, and Young tableaux}
\label{sec:partitionsyoungdia}
We mostly follow the conventions in~\cite{wright2016learn} and \cite{fultonYoungTableauxApplications1996}.
Given a partition $\lambda$, we write $|\lambda|$ for the sum of its entries, so that $\lambda \vdash |\lambda|$.
We write $\lambda! = \prod_{i \in [\mathrm{len}(\lambda)]}\lambda_i!$.
The set of partitions of $n$ is written $\mathrm{Part}_n$, and the set of partitions with length at most equal to $d$ is written $\mathrm{Part}_{n, d}$.

We make use of Young diagrams and tableaux.
For $\lambda$ a partition of $n$, a Young diagram of shape $\lambda$ is a left-justified array of $n$ boxes, with $\lambda_i$ boxes in the $i$th row.
For example, the Young diagram of shape $(4, 3, 1, 1)$ is
\begin{align}
  \ydiagram{4, 3, 1, 1}
\end{align}
\begin{itemize}
  \item We associate a partition to its Young diagram, and use the same symbol for the Young diagram as its partition. 
  \item For a partition $\lambda \vdash n$, we write $[\lambda]$ to mean the set of ``boxes'' of the Young diagram associated to $\lambda$. 
A box can be filled with a positive integer, or empty.
A generic box $\square$ is uniquely identified by its coordinates $\operatorname{coord}(\square) = (i, j)$, where $i$ is the row that $\square$ appears in $\lambda$, counting the rows from top to bottom, and $j$ is the column that it appears in $\lambda$, counting columns from left to right.
\end{itemize}
\begin{definition}
  The content of a box $\square \in [\lambda]$ with coordinates $\mathrm{coord}(\square) = (i, j)$ is defined to be $c(\square) = j - i$.
\label{def:contentofabox}
\end{definition}

A Young tableau, simply referred to as a tableau, is a Young diagram whose boxes are all filled with
positive integers. 
The plural of tableau is tableaux.  The Young diagram associated with a tableau is called its shape.  A semistandard tableau is a tableau such that, for each row, the entries of the row are nondecreasing and for each column, the entries of the column are strictly increasing.
For example, the tableau
\begin{align}
  \begin{ytableau}
    1 & 1 & 3 & 4\\
    2 & 3 & 4 \\
    4 \\
    5
  \end{ytableau}
\end{align}
is semistandard.

A tableau of shape $\lambda \vdash n$ is called standard if it is semistandard with distinct entries from $[n]$.
For example, the tableau
\begin{align}
  \begin{ytableau}
    1 & 4 & 5 & 7\\
    2 & 6 & 8 \\
    3 \\
    9
  \end{ytableau}
\end{align}
is standard.

\subsection{Representation theory}
See~\cite{goodmanSymmetryRepresentationsInvariants2009,fultonRepresentationTheory2004} for textbooks on representation theory.
Let $G$ be a group.
A representation $(\rho, V)$ of $G$ is a complex inner product space $V$ and a group homomorphism $\rho:G \rightarrow \mathrm{U}(V)$.
We will refer to $\rho$ or $V$ as being the representation when the other component is clear from context.
When the representation being used is clear from context, we write $g \cdot v$ to mean $\rho(g) v$.
The word ``irrep'' means irreducible representation.

For a representation $(\rho, R)$ of a group $G$, we write $(\mathrm{Res}^{G}_H(\rho), \mathrm{Res}^{G}_H(R))$ for the restriction of the representation to the subgroup $H \subseteq G$. Note that $\mathrm{Res}^{G}_H(R)$ is equal to $R$ as a vector space.

If $(\rho, V), (\pi, W)$ are representations of a group $G$, their direct sum is written $(\rho\oplus \pi, V \oplus W)$.
Similarly, their tensor product is written $(\rho\otimes \pi, V \otimes W)$.
If $(\rho, V)$ is isomorphic to $(\pi, W)$, we write $V \cong W$.

We review Schur's lemma.
\begin{lemma}[Schur (\cite{goodmanSymmetryRepresentationsInvariants2009} Lem. 4.1.4)]
Let $G$ be a group, and let $(\rho, V), (\pi, W)$ be finite dimensional complex irreps of $G$.
\begin{enumerate}
  \item Let $T:V\rightarrow W$ be a $G$-intertwiner.
    Then either $V \cong W$ and $T$ is an isomorphism, or $T$ is the zero map.
  \item Let $T: V \rightarrow V$ be a $G$-intertwiner.
    Then $T = \lambda \mathds{1}_V$ for some $\lambda \in \C$.
\end{enumerate}
\end{lemma}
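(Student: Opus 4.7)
The plan is to prove the two parts in sequence using only the basic principle that kernels and images of $G$-intertwiners are $G$-invariant subspaces, together with finite-dimensionality of $V$ and algebraic closure of $\mathbb{C}$ for part~(2). Both parts are standard, so the goal is mainly to lay out the right order of observations.

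For part~(1), I would first check that $\ker(T)\subseteq V$ and $\mathrm{im}(T)\subseteq W$ are $G$-invariant. If $v\in\ker(T)$ and $g\in G$, then $T(\rho(g)v) = \pi(g)T(v) = 0$, so $\rho(g)v\in\ker(T)$; the argument for $\mathrm{im}(T)$ is analogous. Irreducibility of $V$ forces $\ker(T)\in\{0,V\}$ and irreducibility of $W$ forces $\mathrm{im}(T)\in\{0,W\}$. If $T$ is not the zero map then $\ker(T)\neq V$ and $\mathrm{im}(T)\neq 0$, so $\ker(T)=0$ and $\mathrm{im}(T)=W$; hence $T$ is a $G$-equivariant linear bijection, which certifies that $V\cong W$ and $T$ is an isomorphism.

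For part~(2), I would pick an eigenvalue $\lambda\in\mathbb{C}$ of $T$, which exists because $V$ is finite dimensional over the algebraically closed field $\mathbb{C}$ (apply the fundamental theorem of algebra to the characteristic polynomial of $T$). The operator $T-\lambda\mathds{1}_V$ is still a $G$-intertwiner from $V$ to itself, since $\mathds{1}_V$ commutes with the $G$-action and linear combinations of intertwiners are intertwiners. Its kernel contains any nonzero $\lambda$-eigenvector of $T$, so it is not an isomorphism. Applying part~(1) with $W=V$ then forces $T-\lambda\mathds{1}_V=0$, i.e., $T=\lambda\mathds{1}_V$.

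There is no real obstacle; this is a textbook argument. The only conceptual subtlety worth flagging is that part~(2) genuinely uses both finite-dimensionality (to guarantee the characteristic polynomial has a root) and algebraic closure of the ground field (so that the root is a scalar that can be subtracted as $\lambda\mathds{1}_V$); over $\mathbb{R}$ the conclusion can fail, as irreducible two-dimensional real rotation representations illustrate. Since both hypotheses are explicitly part of the statement, the proof above goes through without further care.
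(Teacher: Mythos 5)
Your proof is correct and is the standard argument: invariance of $\ker(T)$ and $\mathrm{im}(T)$ plus irreducibility for part~(1), and existence of an eigenvalue over $\mathbb{C}$ followed by part~(1) applied to $T-\lambda\mathds{1}_V$ for part~(2). The paper cites this lemma to Ref.~\citenum{goodmanSymmetryRepresentationsInvariants2009} without reproducing a proof, and the argument given there is essentially the one you wrote, so there is nothing to reconcile.
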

We also list a few standard results that we need.
\begin{corollary}
  Let $G$ be a group, and $(\rho, V)$ be a semisimple finite dimensional complex representation of $G$, such that any irrep of $G$ has multiplicity equal to zero or one in $V$.
  If $T:V \rightarrow V$ is a $G$-intertwiner, it is equal to
  \begin{align}
    T &= \bigoplus_{s}\lambda_s \mathds{1}_s
  \end{align}
  for some $\lambda_s \in \C$, and the direct sum is over the irreducible subspaces of $V$.
  \label{cor:schurcor}
\end{corollary}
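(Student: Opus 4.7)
The plan is to reduce the corollary directly to the two parts of Schur's lemma stated just above, using the multiplicity-free hypothesis to kill all cross terms between distinct isotypic components.

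First I would use semisimplicity to fix a decomposition $V = \bigoplus_s V_s$ into irreducible $G$-invariant subspaces, where $s$ indexes the summands. Let $\iota_s : V_s \hookrightarrow V$ denote the inclusion and $P_s : V \to V_s$ the $G$-equivariant projection associated with this decomposition. Then for any $G$-intertwiner $T : V \to V$, I can write
\begin{align}
T = \sum_{s, s'} \iota_{s'} \circ T_{s', s} \circ P_s, \qquad T_{s', s} := P_{s'} \circ T \circ \iota_s : V_s \to V_{s'}.
\end{align}
Each $T_{s', s}$ is a composition of $G$-intertwiners and hence a $G$-intertwiner between irreps.

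Next I would apply Schur's lemma to each block. For $s \neq s'$, the multiplicity-free hypothesis says $V_s$ and $V_{s'}$ are non-isomorphic irreps of $G$, so part 1 of Schur's lemma forces $T_{s', s} = 0$. For $s = s'$, part 2 of Schur's lemma forces $T_{s, s} = \lambda_s \mathds{1}_{V_s}$ for some scalar $\lambda_s \in \mathbb{C}$. Substituting back,
\begin{align}
T = \sum_s \iota_s \circ \lambda_s \mathds{1}_{V_s} \circ P_s = \bigoplus_s \lambda_s \mathds{1}_s,
\end{align}
which is the claimed form.

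The only point requiring care is justifying that the decomposition into irreducibles is well-defined enough to speak of ``the'' irreducible subspaces; this is where the multiplicity-free hypothesis does real work, since without it the decomposition is not canonical and only the isotypic decomposition is. In the multiplicity-free case each isotypic component is itself irreducible, so the $V_s$ are canonically determined as the isotypic components, and $\lambda_s$ is intrinsic to $T$. I do not anticipate a genuine obstacle: the proof is essentially bookkeeping around Schur's lemma, and the multiplicity-free hypothesis exists precisely to make the off-diagonal blocks vanish.
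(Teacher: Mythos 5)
Your proof is correct and is exactly the standard block-decomposition argument via Schur's lemma that the paper implicitly relies on (the paper states this corollary without proof as a standard fact). Nothing to add.
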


\begin{lemma}[\cite{fultonRepresentationTheory2004} Thm.~2.12]
  If $G$ is a finite group, and $\chi_1, \ldots, \chi_d$ are its distinct irreducible characters, then
\begin{align}
  \frac{1}{|G|}\sum_{g \in G}\chi_i(g)\chi_j(g)^* = \delta_{ij}.
\end{align}
\label{lem:irrechorthonormality}
\end{lemma}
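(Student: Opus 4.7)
The plan is to derive character orthonormality as a direct consequence of Schur's lemma, which is already available in the excerpt (stated immediately above this lemma). Since two irreps of a finite group have the same character if and only if they are isomorphic, ``distinct irreducible characters'' corresponds to a choice of one representative $(\rho_i, V_i)$ per isomorphism class of complex irreps, and (by the standard inner-product averaging trick that is valid for finite $G$) I may take each $\rho_i$ to be unitary. The argument then proceeds in three steps.

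First, I introduce the averaging trick. For any linear map $T: V_j \to V_i$, form
\begin{equation*}
  \tilde{T} \;:=\; \frac{1}{|G|}\sum_{g\in G}\rho_i(g)\,T\,\rho_j(g^{-1}).
\end{equation*}
A one-line reindexing using $\rho_i(hg) = \rho_i(h)\rho_i(g)$ and $\rho_j((hg)^{-1}) = \rho_j(g^{-1})\rho_j(h^{-1})$ shows that $\rho_i(h)\tilde{T} = \tilde{T}\rho_j(h)$ for every $h \in G$, so $\tilde{T}$ is a $G$-intertwiner from $V_j$ to $V_i$.

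Second, I apply Schur's lemma. If $i \neq j$ the irreps $V_i$ and $V_j$ are inequivalent, so $\tilde{T} = 0$. If $i = j$, then $\tilde{T} = \lambda(T)\,\one_{V_i}$, and taking the trace (with cyclicity) gives $\lambda(T) = \Tr(T)/\dim V_i$. Specializing $T$ to the matrix units $\ket{c}\!\bra{a}$ in fixed orthonormal bases $\{\ket{a}\}$ of $V_j$ and $\{\ket{b}\}$ of $V_i$, and reading off the $(b,d)$ matrix entry of $\tilde{T}$, yields the Schur orthogonality of matrix coefficients,
\begin{equation*}
  \frac{1}{|G|}\sum_{g\in G}(\rho_i(g))_{bc}\,(\rho_j(g^{-1}))_{ad} \;=\; \frac{\delta_{ij}\,\delta_{ac}\,\delta_{bd}}{\dim V_i}.
\end{equation*}

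Third, I collapse matrix coefficients down to characters. Unitarity of $\rho_j$ gives $\rho_j(g^{-1}) = \rho_j(g)^\dagger$ and hence $\chi_j(g^{-1}) = \chi_j(g)^*$. Setting $b = c$ and $a = d$ in the above and summing over $a$ and $b$ turns the left-hand side into $|G|^{-1}\sum_{g}\chi_i(g)\,\chi_j(g)^*$, while the right-hand side becomes $\delta_{ij}\sum_{a,b}\delta_{ab}/\dim V_i = \delta_{ij}\,(\dim V_i)/\dim V_i = \delta_{ij}$, as claimed.

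The main obstacle is essentially clerical rather than conceptual: with Schur's lemma already stated, the nontrivial bookkeeping is (i) verifying that $\tilde{T}$ intertwines, (ii) lifting the scalar identity $\tilde{T} = \Tr(T)/\dim V_i \cdot \one$ to an index-level statement via the matrix-unit choice, and (iii) correctly collapsing the resulting double sum. The only substantive side input is the unitarizability of finite-group representations, which is what allows $\chi_j(g^{-1})$ to be replaced by $\chi_j(g)^*$.
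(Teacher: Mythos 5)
Your proof is correct. The paper itself does not prove this lemma---it is cited directly from Fulton--Harris (Thm.~2.12)---so there is no in-paper argument to compare against; what you have written is the standard and complete derivation of character orthonormality from Schur's lemma via the orthogonality of matrix coefficients. All three steps check out: the averaging construction $\tilde{T}$ does intertwine, Schur's lemma correctly forces $\tilde{T}=0$ for $i\neq j$ and $\tilde{T}=\bigl(\Tr(T)/\dim V_i\bigr)\one$ for $i=j$, and the contraction $b=c$, $a=d$ summed over $a,b$ collapses cleanly to $\delta_{ij}$. You are also right to flag unitarizability as the one substantive side input needed to replace $\chi_j(g^{-1})$ by $\chi_j(g)^*$. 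For what it is worth, the cited source reaches the same identity by a mild variant: it computes $\dim\mathrm{Hom}_G(V_j,V_i)$ as the trace of the averaging projector on $\mathrm{Hom}(V_j,V_i)\cong V_j^*\otimes V_i$, whose character is $\chi_j^*\chi_i$, and invokes Schur's lemma to identify that dimension with $\delta_{ij}$; your route through matrix units proves the stronger Schur orthogonality of matrix coefficients along the way, at the cost of slightly more index bookkeeping. Either argument is acceptable here.
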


\begin{definition}[\cite{goodmanSymmetryRepresentationsInvariants2009} Sec. 4.1.6]
  Let $G$ be a finite group, and let $(\rho, V)$ be a representation of $G$.
  Let $\mu$ be an equivalence class of representations of $G$.
  The $\mu$-isotype of $G$ in $V$ is the sum of subspaces
  \begin{align}
    V_{(\mu)} = \sum_{U \subset V|[U] = \mu}U,
  \end{align}
  where $[U]$ is the equivalence class of $U$.
\end{definition}
\begin{lemma}[\cite{fultonRepresentationTheory2004} Eq. 2.32]
Let $G$ be a finite group, and let $(\rho, V)$ be a representation of $G$.
Let $\mu$ be an irrep of $G$, and let $\chi_\mu = \mathrm{Tr}\circ \mu$ be the corresponding irreducible character.
Then 
\begin{align}
  \Theta_\mu := \frac{\chi_\mu(e)}{|G|}\sum_{g\in G}\chi_\mu(g)\rho(g^{-1})
\end{align}
is a projector onto the $\mu$-isotype of $G$ in $V$.
\label{lem:isotypicprojection}
\end{lemma}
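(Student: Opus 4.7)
The plan is to identify $\Theta_\mu$ as the isotypic projector in three steps: (i) show that $\Theta_\mu$ is a $G$-intertwiner on $V$, (ii) invoke Schur's lemma to conclude that $\Theta_\mu$ restricts to a scalar multiple of the identity on each irreducible subspace, and (iii) evaluate that scalar via the character orthogonality relations.

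First I would verify that $\rho(h)\Theta_\mu \rho(h)^{-1} = \Theta_\mu$ for every $h \in G$. Conjugating the defining sum by $\rho(h)$ yields $\frac{\chi_\mu(e)}{|G|}\sum_g \chi_\mu(g)\rho(h g^{-1} h^{-1})$, and after reindexing by $g' = h g h^{-1}$ (a bijection of $G$) together with the fact that $\chi_\mu$ is a class function, the expression collapses back to $\Theta_\mu$. Hence $\Theta_\mu$ commutes with the $G$-action on $V$. By Schur's lemma, its restriction to any irreducible subspace $W \subset V$ is a scalar $c_W \mathds{1}_W$, and since $c_W = \mathrm{Tr}(\Theta_\mu|_W)/\dim(W)$, this scalar depends only on the isomorphism class of $W$.

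Second, I would compute the scalar for an irreducible subspace $W$ equivalent to an irrep $\alpha$. Taking the trace on $W$ gives $\mathrm{Tr}(\Theta_\mu|_W) = \frac{\chi_\mu(e)}{|G|}\sum_g \chi_\mu(g)\chi_\alpha(g^{-1})$. Using unitarity of $\rho$ to write $\chi_\alpha(g^{-1}) = \overline{\chi_\alpha(g)}$ and then applying character orthonormality (Lem.~\ref{lem:irrechorthonormality}), the sum collapses to $\chi_\mu(e)\delta_{\mu,\alpha} = \dim(\alpha)\delta_{\mu,\alpha}$, whence $c_W = \delta_{\mu,\alpha}$.

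Third, since $\Theta_\mu$ acts as the identity on every irreducible subspace in the class of $\mu$ and as zero on every irreducible subspace in any other class, and since $V_{(\mu)}$ is by definition the sum of all irreducible subspaces equivalent to $\mu$, the operator $\Theta_\mu$ is exactly the projection onto $V_{(\mu)}$. There is no substantial obstacle in this argument; the only care needed is with the substitution $g \mapsto g^{-1}$ in the character sum and the identity $\chi_\mu(e) = \dim(\mu)$, which is precisely the normalization that makes the scalar on the $\mu$-isotype equal to $1$ rather than an arbitrary nonzero constant.
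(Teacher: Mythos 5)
Your proof is correct and is the standard textbook argument: $\Theta_\mu$ is a $G$-intertwiner because $\chi_\mu$ is a class function, Schur's lemma (via Cor.~\ref{cor:schurcor}) makes it a scalar on each irreducible subspace, and character orthonormality (Lem.~\ref{lem:irrechorthonormality}, together with $\chi_\alpha(g^{-1})=\chi_\alpha(g)^*$ for unitary representations) evaluates that scalar to $\delta_{\mu,\alpha}$. The paper does not prove this lemma at all---it cites it directly from Ref.~\citenum{fultonRepresentationTheory2004}, Eq.~2.32---so there is no in-paper proof to compare against; your argument is exactly the one found in that reference.
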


\subsection{Representation theory of the symmetric group}

See~\cite{fultonRepresentationTheory2004,fultonRepresentationTheory2004,vershikNewApproachRepresentation2005,goodmanSymmetryRepresentationsInvariants2009,Etingof2011,Procesi2006-on,kondorGroupTheoreticalMethods2008,harrowApplicationsCoherentClassical2005,wright2016learn} for descriptions of the representation theory of the symmetric group.
The group of permutations of $n$ letters, also called the symmetric group on $n$ letters, is written $\mathcal{S}_n$.
Every permutation $\sigma \in \mathcal{S}_n$ can be written as a product of disjoint cycles, and furthermore this decomposition is unique up to reordering of the cycles.
The lengths of the cycles of $\sigma$ form a partition of $n$, known as the cycle type of $\sigma$.
Since each cycle is conjugate to every other of the same length, the cycle types enumerate the conjugacy classes of $\mathcal{S}_n$.
Since the number of conjugacy classes is equal to the number of irreps of a finite group, the irreps are enumerated by partitions as well.

The irreps of $\mathcal{S}_n$ are called the Specht modules, and they are indexed by partitions of $n$.
We use the Young orthogonal form of the Specht modules, which comes equipped with the Young-Yamanouchi basis, defined, for example, in Ref.~\citenum{vershikNewApproachRepresentation2005}.
This basis is an orthonormal basis, and the representation matrices are orthogonal.
We denote the irrep of shape $\lambda \vdash n$ by $(\kappa_\lambda, \mathrm{Sp}_\lambda)$.
The character of the irrep $\mathrm{Sp}_\lambda$ is written $\chi_\lambda:\mathcal{S}_n \rightarrow \C$, so
\begin{align}
  \chi_\lambda(\sigma) = \Tr(\kappa_\lambda(\sigma)).
  \label{eq:irrechdef}
\end{align}
The dimension of the Specht module $\mathrm{Sp}_\lambda$ is the number of standard tableaux of shape $\lambda$.
To state a formula for this number, we make use of the following definition.
\begin{definition}
  Let $\lambda \vdash n$, and let $\square \in [\lambda]$.
  Let $(i, j) = \mathrm{coord}(\square)$.
The hook length $h(\square)$ is defined to be $\lambda_i - j + (\lambda'_j - i) + 1$, where $\lambda'$ is the transpose of the Young diagram $\lambda$.
\end{definition}
\begin{definition}
  We denote the number of standard tableaux of shape $\lambda \vdash n$ by $\dim(\lambda)$. 
  It can be calculated from the hook length formula, which is
\begin{align}
  \dim(\lambda) &= \frac{n!}{\prod_{\square \in [\lambda]}h(\square)}.
\end{align}
\label{def:hooklength}
\end{definition}
The dimension of $\mathrm{Sp}_\lambda$ is $\mathrm{dim}(\lambda)$.

\subsection{Schur-Weyl duality}
Let $\mathcal{K} = \left( \C^d \right)^{\otimes n}$ be the $n$-fold tensor product of $d$ dimensional systems.
The unitary group $\mathrm{U}(d)$ acts on this Hilbert space under the tensor power action $\otimes^n:\mathrm{U}(d) \rightarrow \mathrm{U}(\mathcal{K})$
\begin{align}
  \otimes^n(U) \ket{\psi} &= U^{\otimes n}\ket{\psi}.
  \label{eq:tensorpoweraction}
\end{align}
The symmetric group $\mathcal{S}_n$ acts under the action $P:\mathcal{S}_n \rightarrow \mathrm{U}(\mathcal{K})$ that exchanges the tensor factors. 
That is, for computational basis states,
\begin{align}
  P(\sigma) \ket{i_1, \ldots, i_n} &= \ket{i_{\sigma^{-1}(1)}, \ldots, i_{\sigma^{-1}(n)}}.
  \label{eq:permutationtensorfactoraction}
\end{align}
The actions by unitaries commute with those by the permutations. Furthermore~\cite{Etingof2011}, writing $\langle S \rangle$ for the algebra generated by $S \subseteq \mathrm{End}(\mathcal{K})$, the algebras $\langle \{\otimes^n(U)| U \in \mathrm{U}(d)\}\rangle$ and $\langle \{P(\sigma) |\sigma \in \mathcal{S}_n \}\rangle$ centralize each other in $\mathrm{End}(\mathcal{K})$.
Accordingly, the space
\begin{align}
  \mathrm{V}_\lambda^d = \mathrm{Hom}_{\mathcal{S}_n}(\mathrm{Sp}_\lambda, (\mathbb{C}^d)^{\otimes n}) &= \left\{ f:\mathrm{Sp}_\lambda\rightarrow (\mathbb{C}^d)^{\otimes n}~\mathrm{linear}\;\middle| f(\sigma\cdot v) = \sigma \cdot f(v) \;\forall \sigma \in \mathcal{S}_n\right\}\
  \label{eq:pilambdadef}
\end{align}
is an irrep of $\mathrm{U}(d)$, where the action is defined as $(U\cdot f)(v) = U \cdot f(v)$.
We write $(\pi_\lambda, \mathrm{V}_\lambda^d)$ for this irrep of $\mathrm{U}(d)$.
A consequence of these results is Schur-Weyl duality (\cite{Etingof2011}, Cor. 4.59), which gives the following decomposition of $(\mathbb{C}^d)^{\otimes n}$ as an $\mathcal{S}_n \times \mathrm{U}(d)$ representation,
\begin{align}
  (\mathbb{C}^d)^{\otimes n} \cong \bigoplus_{\lambda\in\mathrm{Part}_{n, d}}\mathrm{Sp}_\lambda\otimes \mathrm{V}_\lambda^d,
  \label{eq:swdualitymodules}
\end{align}
where we recall the definition $\mathrm{Part}_{n, d} = \left\{ \lambda\vdash n|\mathrm{len}(\lambda) \le d \right\}$.
In this decomposition, the matrices representing permutations and unitaries are given by
\begin{align}
  P(\sigma) &= \bigoplus_{\lambda \in \mathrm{Part}_{n, d}}\kappa_\lambda(\sigma)\otimes \mathds{1}_{\mathrm{V}_\lambda^d},
  \label{eq:swpermdecomp}
\end{align}
and
\begin{align}
  U^{\otimes n} &= \bigoplus_{\lambda \in \mathrm{Part}_{n, d}}\mathds{1}_{\mathrm{Sp}_\lambda}\otimes \pi_\lambda(U).
\end{align}

For a complex inner product space $W$ of dimension $d$, we write $\mathbb{S}_\lambda(W)$ for the irrep of $\mathrm{U}(W)$ isomorphic to $\mathrm{V}_\lambda^d$. $\mathbb{S}_\lambda$ is called the Schur functor of shape $\lambda$.

\subsection{Physics background}

Let $\psi_1, \ldots, \psi_k, \phi_1, \ldots, \phi_k \in \mathcal{H}$ be states.
Let $P(\sigma)$ be as defined in Eq.~\ref{eq:permutationtensorfactoraction}.
Then we have
\begin{lemma}[Expectation of products of creation operators]
  \label{lem:productinnerproduct}
\begin{align}
  \bra{0}a_{\psi_1}\cdots a_{\psi_k}a_{\phi_k}^\dagger \cdots a_{\phi_1}^\dagger\ket{0} &= \braket{\psi_1 , \cdots , \psi_k|k!\Pi_{\mathrm{Sym}^k}|\phi_1 , \cdots , \phi_k},
  \label{eq:productinnerproduct}
\end{align}
where $\Pi_{\mathrm{Sym}^k}$ is the projector onto the symmetric subspace of $\mathcal{H}^{\otimes k}$, given by $\Pi_{\mathrm{Sym}^k} = \frac{1}{k!}\sum_{\sigma \in \mathcal{S}_k}P(\sigma)$.
\end{lemma}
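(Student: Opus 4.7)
The plan is to prove this standard identity by induction on $k$, using only the canonical commutation relations of Eq.~\ref{eq:wfncommutations} and the defining property $a_\psi\ket{0}=0$. First, I would rewrite the right-hand side using $\Pi_{\mathrm{Sym}^k}=\frac{1}{k!}\sum_{\sigma\in\mathcal{S}_k}P(\sigma)$ and the definition of $P(\sigma)$ in Eq.~\ref{eq:permutationtensorfactoraction} to obtain the concrete combinatorial form
\begin{align}
  \braket{\psi_1,\ldots,\psi_k|k!\,\Pi_{\mathrm{Sym}^k}|\phi_1,\ldots,\phi_k} = \sum_{\sigma\in\mathcal{S}_k}\prod_{i=1}^k \braket{\psi_i|\phi_{\sigma^{-1}(i)}}.
\end{align}
The base case $k=1$ is immediate: $\bra{0}a_{\psi_1}a^\dagger_{\phi_1}\ket{0}=\bra{0}[a_{\psi_1},a^\dagger_{\phi_1}]\ket{0}=\braket{\psi_1|\phi_1}$.

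For the inductive step, I would move the innermost annihilation operator $a_{\psi_k}$ rightward through the product of creation operators. The Leibniz-type identity $[A,BC]=[A,B]C+B[A,C]$ applied iteratively, together with Eq.~\ref{eq:wfncommutations}, gives
\begin{align}
  [a_{\psi_k},\, a^\dagger_{\phi_k}\cdots a^\dagger_{\phi_1}] = \sum_{j=1}^{k}\braket{\psi_k|\phi_j}\, a^\dagger_{\phi_k}\cdots \widehat{a^\dagger_{\phi_j}}\cdots a^\dagger_{\phi_1},
\end{align}
where the hat denotes omission. Since $a_{\psi_k}\ket{0}=0$, after passing $a_{\psi_k}$ to the right it annihilates the vacuum, and the full matrix element becomes
\begin{align}
  \bra{0}a_{\psi_1}\cdots a_{\psi_k}a^\dagger_{\phi_k}\cdots a^\dagger_{\phi_1}\ket{0} = \sum_{j=1}^k \braket{\psi_k|\phi_j}\, \bra{0}a_{\psi_1}\cdots a_{\psi_{k-1}}\, a^\dagger_{\phi_k}\cdots \widehat{a^\dagger_{\phi_j}}\cdots a^\dagger_{\phi_1}\ket{0}.
\end{align}

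Each term on the right is a $(k{-}1)$-particle matrix element of the same form, so I would apply the inductive hypothesis; because bosonic creation operators commute among themselves (Eq.~\ref{eq:wfncommutations} with two daggers), the precise order in which the remaining $a^\dagger$'s are listed is irrelevant. The inductive hypothesis then produces, for each $j$, a sum $\sum_{\tau}\prod_{i=1}^{k-1}\braket{\psi_i|\phi_{\tau(i)}}$ ranging over bijections $\tau:[k-1]\to [k]\setminus\{j\}$. Multiplying by the prefactor $\braket{\psi_k|\phi_j}$ and summing over $j$ reproduces $\sum_{\pi\in\mathcal{S}_k}\prod_{i=1}^k \braket{\psi_i|\phi_{\pi^{-1}(i)}}$, where one decomposes $\pi\in\mathcal{S}_k$ uniquely as the pair $(j,\tau)$ with $j=\pi^{-1}(k)$ and $\tau=\pi^{-1}|_{[k-1]}$. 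The only mild obstacle is this final combinatorial bookkeeping of the permutation sum; there are no sign issues because the setting is bosonic. (An alternative conceptual route identifies $a^\dagger_{\phi_k}\cdots a^\dagger_{\phi_1}\ket{0}$ with an unnormalized symmetrization of $\phi_1\otimes\cdots\otimes\phi_k$ inside $\mathrm{Sym}^k(\mathcal{H})$, reducing the claim to the standard inner-product formula for symmetrized tensors; I prefer the inductive route since it relies solely on the relations already stated in the preliminaries.)
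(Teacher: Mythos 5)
Your proof is correct. The paper does not prove this lemma itself; it only cites Lem.~2.9.1 of Ref.~\citenum{geller2024characterization}, so there is no in-paper argument to compare against. Your inductive Wick-style computation is the standard one and is complete: the base case, the Leibniz expansion of $[a_{\psi_k},\,a^\dagger_{\phi_k}\cdots a^\dagger_{\phi_1}]$, and the row-expansion bookkeeping identifying $\pi\leftrightarrow(j,\tau)$ all check out, and the right-hand side indeed reduces to the permanent $\sum_{\sigma}\prod_i\braket{\psi_i|\phi_{\sigma^{-1}(i)}}$. One cosmetic point: Eq.~\ref{eq:wfncommutations} as stated only gives the mixed commutator, not $[a^\dagger_\psi,a^\dagger_{\psi'}]=0$; but you do not actually need commutativity of the creation operators, since the inductive hypothesis applies to an arbitrary ordered list of vectors and the resulting permutation sum is order-independent, so this does not affect the validity of the argument.
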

See, for example,~\cite{geller2024characterization} Lem. 2.9.1 for a proof.
\begin{lemma}[First quantization]
  The map $f_1:\mathrm{Sym}^k(\mathcal{H})\rightarrow \mathcal{H}^{\otimes k}$
  \begin{align}
    f_1(a_{i_1}^\dagger\cdots a_{i_k}^\dagger\ket{0} ) &=  \sqrt{k!}\Pi_{\mathrm{Sym}^k}\ket{i_1, \ldots, i_k}\\
    &= \frac{1}{\sqrt{k!}}\sum_{\sigma \in \mathcal{S}_k}P(\sigma) \ket{i_1, \ldots, i_k}
    \label{eq:isometricintertwiner}
  \end{align}
  is an isometric intertwiner between the linear-optical action and the tensor power action.
  \label{lem:firstquantization}
\end{lemma}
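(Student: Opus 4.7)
The plan is to verify in sequence that the formula (i) defines a well-defined linear map on $\mathrm{Sym}^k(\mathcal{H})$, (ii) is an isometry, and (iii) intertwines the two actions. For (i), I first extend the definition from basis creation operators to arbitrary ones: for $\psi_1,\ldots,\psi_k\in\mathcal{H}$, set $f_1(a_{\psi_1}^\dagger\cdots a_{\psi_k}^\dagger\ket{0}) = \sqrt{k!}\,\Pi_{\mathrm{Sym}^k}\ket{\psi_1,\ldots,\psi_k}$. Because the creation operators commute (Eq.~\ref{eq:wfncommutations}), the domain vector is invariant under permutations of the $\psi_j$; because $\Pi_{\mathrm{Sym}^k}P(\sigma)=\Pi_{\mathrm{Sym}^k}$, the codomain vector is too. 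Combined with multilinearity of both sides in each $\psi_j$, this ensures the assignment descends consistently and extends uniquely by linearity to all of $\mathrm{Sym}^k(\mathcal{H})$, which is spanned by such symmetric products of creation operators applied to the vacuum.

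For (ii), I compare the two inner products directly. On the Fock side, Lem.~\ref{lem:productinnerproduct} gives
\begin{equation*}
\bra{0}a_{\psi_1}\cdots a_{\psi_k}\,a_{\phi_k}^\dagger\cdots a_{\phi_1}^\dagger\ket{0} = k!\,\braket{\psi_1,\ldots,\psi_k|\Pi_{\mathrm{Sym}^k}|\phi_1,\ldots,\phi_k}.
\end{equation*}
On the image side, using that $\Pi_{\mathrm{Sym}^k}$ is a self-adjoint idempotent,
\begin{equation*}
\braket{f_1(a_{\psi_1}^\dagger\cdots a_{\psi_k}^\dagger\ket{0})|f_1(a_{\phi_1}^\dagger\cdots a_{\phi_k}^\dagger\ket{0})} = k!\,\braket{\psi_1,\ldots,\psi_k|\Pi_{\mathrm{Sym}^k}^2|\phi_1,\ldots,\phi_k},
\end{equation*}
which equals the Fock expression. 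Since the two inner products agree on a spanning set, $f_1$ is isometric.

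For (iii), I use that $U^{\otimes k}$ commutes with every $P(\sigma)$ and therefore with $\Pi_{\mathrm{Sym}^k}$. By Def.~\ref{def:linearopticalaction}, the linear-optical action carries $a_{\psi_j}^\dagger$ to $a_{U\psi_j}^\dagger$, so
\begin{equation*}
f_1\bigl(U\cdot a_{\psi_1}^\dagger\cdots a_{\psi_k}^\dagger\ket{0}\bigr) = \sqrt{k!}\,\Pi_{\mathrm{Sym}^k}U^{\otimes k}\ket{\psi_1,\ldots,\psi_k} = U^{\otimes k}f_1\bigl(a_{\psi_1}^\dagger\cdots a_{\psi_k}^\dagger\ket{0}\bigr),
\end{equation*}
where the second equality uses $[\Pi_{\mathrm{Sym}^k},U^{\otimes k}]=0$. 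Linearity extends this to all of $\mathrm{Sym}^k(\mathcal{H})$. The only subtle point is the bookkeeping around well-definedness; once one observes that the $\sqrt{k!}$ prefactor on the codomain is exactly what pairs with the $k!$ in Lem.~\ref{lem:productinnerproduct} to make the inner products line up, the remaining computations are mechanical.
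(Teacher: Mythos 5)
Your proof is correct. Note that the paper does not actually prove this lemma itself; it delegates to an external reference (Ref.~\citenum{geller2024characterization}, Cor.~2.9.2), so there is no in-paper argument to compare against, and your self-contained three-step verification (well-definedness, isometry, intertwining) is a perfectly good substitute. The isometry step via Lem.~\ref{lem:productinnerproduct} together with $\Pi_{\mathrm{Sym}^k}^\dagger\Pi_{\mathrm{Sym}^k}=\Pi_{\mathrm{Sym}^k}$ is exactly right, and the intertwining step correctly reduces to $[\Pi_{\mathrm{Sym}^k},U^{\otimes k}]=0$ plus the fact that the linear-optical action sends $a^\dagger_\psi$ to $a^\dagger_{U\psi}$.

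One small remark on step (i): the phrase ``descends consistently and extends uniquely by linearity'' glosses over the fact that the vectors $a^\dagger_{\psi_1}\cdots a^\dagger_{\psi_k}\ket{0}$ form a highly overcomplete spanning set, so one must check that \emph{every} linear relation among them (not just those generated by permuting the $\psi_j$ and multilinearity) is respected by the proposed images. The cleanest repair is already implicit in your step (ii): since Lem.~\ref{lem:productinnerproduct} shows the Gram matrix of the spanning family $\left\{a^\dagger_{\psi_1}\cdots a^\dagger_{\psi_k}\ket{0}\right\}$ equals the Gram matrix of the proposed images $\left\{\sqrt{k!}\,\Pi_{\mathrm{Sym}^k}\ket{\psi_1,\ldots,\psi_k}\right\}$, the standard fact that two families with identical Gram matrices are related by a well-defined linear isometry of their spans gives you well-definedness and isometry in one stroke. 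Alternatively, define $f_1$ on the orthogonal basis of states of definite occupation $\ket{g}$ from Eq.~\ref{eq:ketg} and extend linearly, then verify the displayed formula on general products. Either way the gap is cosmetic, not substantive.
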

See, for example, \cite{geller2024characterization}~Cor.~2.9.2 for a proof.
We also extend the first quantization map to density matrices, so $f_1(\ket{i}\bra{j}) = f_1(\ket{i})\otimes (f_1(\ket{j}))^\dagger$.
Then $f_1$ is an isometric intertwiner between the adjoint action on $\mathcal{B}_+(\mathrm{Sym}^k(\mathcal{H}))$ and that on $\mathcal{B}_+(\mathcal{H}^{\otimes k})$, with respect to the trace inner product.

We consider a linear-optics experiment where the bosons have a hidden DOF, so the single-particle Hilbert space $\mathcal{H}_1$ is a tensor product, $\mathcal{H}_1 = \mathcal{H}_V \otimes \mathcal{H}_H$.
We use the notation $m = \mathrm{dim}(\mathcal{H}_V)$ and $L = \mathrm{dim}(\mathcal{H}_H)$ throughout the appendices.

\section{Derivation of first quantized expression of generalized bunching probability}
\label{sec:derivationoffirstquantized}

Let $\mathcal{H}$ be a Hilbert space of dimension $T$.
For $g\in \omega_{[T]}^n$, let $\ket{\zeta(g)} = \ket{\zeta(g)_1}\otimes\cdots\otimes\ket{\zeta(g)_n} \in \mathcal{H}^{\otimes n}$ (Eq.~\ref{eq:zetadef}).
Let $\Pi_{\mathrm{Sym}^n}\in\mathcal{B}_+(\mathcal{H}^{\otimes n})$ be the projector onto the symmetric subspace of $\mathcal{H}^{\otimes n}$.
We make use of the definition of occupations $\omega_{[T]}^n(S)$ that are nonzero on a subset $S \subseteq [T]$ from Eq.~\ref{eq:occupationsnonzero}.
Define for $S \subseteq [T]$, define $\Pi_S = \sum_{s\in S}\ketbra{s}{s}$.
\begin{lemma}
Let $\mathcal{H}$ be a Hilbert space of dimension $T$.
  Let $S \subseteq [T]$. Then
  \begin{align}
    \Pi_{\mathrm{Sym}^n}\Pi_S^{\otimes n}\Pi_{\mathrm{Sym}^n} &= n!\sum_{g\in \omega_{[T]}^n(S)}\Pi_{\mathrm{Sym}^n}\frac{\ketbra{\zeta(g)}}{g!}\Pi_{\mathrm{Sym}^n}.
  \end{align}
  \label{lem:sumoversubset}
\end{lemma}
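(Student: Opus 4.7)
The plan is to expand $\Pi_S^{\otimes n}$ in the computational basis and then group terms according to their occupation type. Concretely, first I would write
\begin{align}
  \Pi_S^{\otimes n} = \sum_{\bm{i} \in S^n} \ket{i_1, \ldots, i_n}\!\bra{i_1, \ldots, i_n},
\end{align}
and then partition the sum over $\bm{i}$ by the associated occupation $\xi(\bm{i}) \in \omega_{[T]}^n(S)$.

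The key observation is that for any $\bm{i} \in S^n$ with $\xi(\bm{i}) = g$, there exists $\sigma \in \mathcal{S}_n$ such that $\ket{i_1,\ldots,i_n} = P(\sigma)\ket{\zeta(g)}$, simply because both tensor product states have the same multiset of single-particle labels. Since $P(\sigma)$ commutes with $\Pi_{\mathrm{Sym}^n}$ and acts as the identity on the image of $\Pi_{\mathrm{Sym}^n}$, we conclude that
\begin{align}
  \Pi_{\mathrm{Sym}^n}\ket{i_1,\ldots,i_n}\!\bra{i_1,\ldots,i_n}\Pi_{\mathrm{Sym}^n} = \Pi_{\mathrm{Sym}^n}\ket{\zeta(g)}\!\bra{\zeta(g)}\Pi_{\mathrm{Sym}^n}
\end{align}
for every $\bm{i}$ with $\xi(\bm{i}) = g$. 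The remaining combinatorial input is that the number of tuples $\bm{i} \in S^n$ with $\xi(\bm{i}) = g$ equals the multinomial coefficient $n!/g! = n!/\prod_s g(s)!$.

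Combining these observations yields
\begin{align}
  \Pi_{\mathrm{Sym}^n}\Pi_S^{\otimes n}\Pi_{\mathrm{Sym}^n}
  &= \sum_{g \in \omega_{[T]}^n(S)} \sum_{\bm{i}:\xi(\bm{i}) = g}\Pi_{\mathrm{Sym}^n}\ket{\zeta(g)}\!\bra{\zeta(g)}\Pi_{\mathrm{Sym}^n} \nonumber \\
  &= n!\sum_{g \in \omega_{[T]}^n(S)} \Pi_{\mathrm{Sym}^n}\frac{\ket{\zeta(g)}\!\bra{\zeta(g)}}{g!}\Pi_{\mathrm{Sym}^n},
\end{align}
which is the claimed identity. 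There is no serious obstacle here: the only things to verify are the orbit description of $\xi^{-1}(g)$ under $\mathcal{S}_n$ and the multinomial count, both of which are standard. The statement is really just the familiar fact that, after symmetrization, summing over ordered tuples is the same as summing over occupations weighted by the inverse of the stabilizer $g!$.
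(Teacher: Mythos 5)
Your proof is correct and is essentially the paper's argument run in the opposite direction: the paper expands the right-hand side using $P(\sigma)\Pi_{\mathrm{Sym}^n}=\Pi_{\mathrm{Sym}^n}$ together with the orbit/stabilizer description of $S^n$ (stabilizer of $\zeta(g)$ has size $g!$), while you collapse the left-hand side using the same two facts (orbits of $S^n$ under $\mathcal{S}_n$ indexed by occupations, orbit size $n!/g!$). No substantive difference.
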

\begin{proof}
Observe that 
\begin{align}
P(\sigma)\Pi_{\mathrm{Sym}^n} =  P(\sigma)\frac{1}{n!}\sum_{\tau\in\mathcal{S}_n}P(\tau) =  \frac{1}{n!}\sum_{\tau\in\mathcal{S}_n}P(\sigma\tau) =  \frac{1}{n!}\sum_{\tau\in\mathcal{S}_n}P(\tau) =  \Pi_{\mathrm{Sym}^n}.
\end{align}
Therefore, 
\begin{align}
 n!\sum_{g\in \omega_{[T]}^n(S)}\Pi_{\mathrm{Sym}^n}\frac{\ketbra{\zeta(g)}}{g!}\Pi_{\mathrm{Sym}^n}&= 
 \sum_{\sigma\in\mathcal{S}_n}\sum_{g\in \omega_{[T]}^n(S)}\Pi_{\mathrm{Sym}^n}P(\sigma)\frac{\ketbra{\zeta(g)}}{g!}P(\sigma^{-1})\Pi_{\mathrm{Sym}^n}.
\end{align}
Now observe that, for any element $\sigma$ in the stabilizer of $\zeta(g)$, $P(\sigma)\ket{\zeta(g)} = \ket{\zeta(g)}$.
This stabilizer has cardinality $g!$.
Observe also that $S^{n}$ is the union of the orbits of the $\zeta(g)$ for $g\in \omega_{[T]}^n(S)$ under the permutation group.
Therefore,
\begin{align}
   n!\sum_{g\in \omega_{[T]}^n(S)}\Pi_{\mathrm{Sym}^n}\frac{\ketbra{\zeta(g)}}{g!}\Pi_{\mathrm{Sym}^n}&=\Pi_{\mathrm{Sym}^n}\Pi_S^{\otimes n}\Pi_{\mathrm{Sym}^n}.
\end{align}
\end{proof}

According to Def.~\ref{def:kisubspacedef}, the singly occupied pure states are of the form \(\sum_{\bm{j}\in[L]^{n}} \psi_{\bm{j}}a(\bm{i},\bm{j})^{\dagger}\ket{0}\), where the states
  \(a(\bm{i},\bm{j})^{\dagger}\ket{0}\) are orthonormal.
  This suggests an isometry between \(K_{\bm{i}}\) and \(\mathcal{H}_{H}^{\otimes n}\) that maps \(a(\bm{i},\bm{j})^{\dagger}\ket{0}\) to \(\ket{\bm{j}}\).
  This isometry is formalized by the next definition.

\begin{definition}
  Let $\bm{i}$ be a list of $n$ distinct indices from $[m]$.
  We define the function $h:K_{\bm{i}}\rightarrow \mathcal{H}_H^{\otimes n}$ (Def.~\ref{def:kisubspacedef}), by 
  \begin{align}
    h(\ket{\psi}) &=  \sum_{\bm{j}\in [L]^{n}}\ket{\bm{j}}\bra{0}a(\bm{i}, \bm{j})\ket{\psi}
  \end{align}
for \(\ket{\psi}\in K_{\bm{i}}\).
We also overload notation and extend the function \(h\) to density matrices.
For density matrices $\rho \in\mathcal{D}(K_{\bm{i}})$, $h(\rho)$ is given by
  \begin{align}
    h(\rho) = \sum_{\bm{j}, \bm{j}' \in [L]^{n}}\Tr(\rho a^\dagger(\bm{i}, \bm{j})\ketbra{0}a(\bm{i}, \bm{j}'))\ketbra{\bm{j}}{\bm{j}'}.
  \end{align}
  The state $h(\rho)$ is called the auxiliary state of $\rho$.
  \label{def:auxiliarystate}
\end{definition}
\begin{proposition}
  The function $h$ is a linear isomorphism between $K_{\bm{i}}$ and $\mathcal{H}_H^{\otimes n}$.
  \label{prop:hliniso}
\end{proposition}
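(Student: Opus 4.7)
The plan is to show that the set $\{a^{\dagger}(\bm{i}, \bm{j})\ket{0}\}_{\bm{j}\in[L]^{n}}$ is an orthonormal basis of $K_{\bm{i}}$, and that $h$ sends this basis bijectively onto the computational basis $\{\ket{\bm{j}}\}_{\bm{j}\in[L]^{n}}$ of $\mathcal{H}_{H}^{\otimes n}$. Since $h$ is manifestly linear in $\ket{\psi}$, this immediately implies that $h$ is a linear isomorphism (in fact a unitary).

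First I would establish orthonormality of the spanning set. By Lem.~\ref{lem:productinnerproduct},
\begin{align}
  \bra{0}a(\bm{i},\bm{j}')a^{\dagger}(\bm{i},\bm{j})\ket{0}
   &= \bra{(i_{1},j_{1}'),\ldots,(i_{n},j_{n}')}\,n!\,\Pi_{\mathrm{Sym}^{n}}\,\ket{(i_{1},j_{1}),\ldots,(i_{n},j_{n})}.
\end{align}
Expanding $\Pi_{\mathrm{Sym}^{n}} = \tfrac{1}{n!}\sum_{\sigma}P(\sigma)$ gives a sum of terms proportional to $\prod_{x}\braket{i_{x}|i_{\sigma(x)}}\braket{j_{x}'|j_{\sigma(x)}}$. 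Because the elements of $\bm{i}$ are distinct by hypothesis, the condition $i_{x}=i_{\sigma(x)}$ for all $x$ forces $\sigma$ to be the identity, and only that term survives. The surviving term equals $\prod_{x}\delta_{j_{x},j_{x}'}=\delta_{\bm{j},\bm{j}'}$. Hence the vectors $\{a^{\dagger}(\bm{i},\bm{j})\ket{0}\}_{\bm{j}\in[L]^{n}}$ are pairwise orthonormal, and since they span $K_{\bm{i}}$ by Def.~\ref{def:kisubspacedef}, they form an orthonormal basis. In particular $\dim(K_{\bm{i}})=L^{n}=\dim(\mathcal{H}_{H}^{\otimes n})$.

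Next I would evaluate $h$ on this basis. Plugging $\ket{\psi}=a^{\dagger}(\bm{i},\bm{j})\ket{0}$ into the definition and using the orthonormality just established,
\begin{align}
  h\bigl(a^{\dagger}(\bm{i},\bm{j})\ket{0}\bigr)
   = \sum_{\bm{j}'\in[L]^{n}}\ket{\bm{j}'}\,\bra{0}a(\bm{i},\bm{j}')a^{\dagger}(\bm{i},\bm{j})\ket{0}
   = \sum_{\bm{j}'\in[L]^{n}}\ket{\bm{j}'}\delta_{\bm{j},\bm{j}'}
   = \ket{\bm{j}}.
\end{align}
Thus $h$ maps the orthonormal basis of $K_{\bm{i}}$ bijectively to the computational basis of $\mathcal{H}_{H}^{\otimes n}$, and by linearity $h$ is a (unitary, hence) linear isomorphism.

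There is no real obstacle here; the only non-trivial input is the collapse of the symmetrizer sum to a single permutation, which relies crucially on the assumption that the sites $\bm{i}$ are distinct. If that hypothesis were dropped, the stabilizer subgroup of $\bm{i}$ would give rise to extra nonzero terms and the creation operator products $a^{\dagger}(\bm{i},\bm{j})\ket{0}$ would fail to be orthonormal (indeed, fail to be linearly independent), at which point $h$ would no longer be injective.
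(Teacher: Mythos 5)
Your proof is correct and follows essentially the same route as the paper's: both come down to the fact that $h$ sends $a^{\dagger}(\bm{i},\bm{j})\ket{0}\mapsto\ket{\bm{j}}$, the paper packaging this as an explicit two-sided inverse $h^{-1}(\ket{\phi})=\sum_{\bm{j}}a^{\dagger}(\bm{i},\bm{j})\ket{0}\braket{\bm{j}|\phi}$ while you phrase it as an orthonormal basis being carried to an orthonormal basis. The one substantive difference is that the paper simply writes $\bra{0}a(\bm{i},\bm{j}')a^{\dagger}(\bm{i},\bm{k})\ket{0}=\delta_{\bm{j}',\bm{k}}$ without justification, whereas you derive this orthonormality from Lem.~\ref{lem:productinnerproduct} and the distinctness of the sites $\bm{i}$ (noting that the symmetrizer sum collapses to the identity permutation); this makes your version slightly more self-contained, and your closing remark correctly identifies distinctness of $\bm{i}$ as the hypothesis on which injectivity hinges.
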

\begin{proof}
We claim that the inverse of $h$ is $h^{-1}:\mathcal{H}_H^{\otimes n}\rightarrow K_{\bm{i}}$, given by
  \begin{align}
    h^{-1}(\ket{\phi}) &= \sum_{\bm{j}}a^\dagger(\bm{i}, \bm{j})\ket{0}\braket{\bm{j}|\phi}.
  \end{align}
  To prove the claim, we calculate
  \begin{align}
    h^{-1}(h(a^{\dagger}(\bm{i}, \bm{k})\ket{0})) &= \sum_{\bm{j}}a^\dagger(\bm{i}, \bm{j})\ket{0}\bra{\bm{j}}
\sum_{\bm{j}'}\ket{\bm{j}'}\bra{0}a(\bm{i}, \bm{j}') a^{\dagger}(\bm{i}, \bm{k})\ket{0}\\
&= \sum_{\bm{j},\bm{j}'}\delta_{\bm{j},\bm{j}'}\delta_{\bm{j}',\bm{k}}a^{\dagger}(\bm{i},\bm{j})\ket{0}\\
&= a^\dagger(\bm{i},\bm{k})\ket{0},
  \end{align}
  so $h^{-1}$ is a left inverse.
  To show it is a right inverse, we calculate
  \begin{align}
    h(h^{-1}(\ket{\bm{k}})) &= \sum_{\bm{j}'}\ket{\bm{j}'}\bra{0}a(\bm{i},\bm{j}')\sum_{\bm{j}}a^{\dagger}(\bm{i},\bm{j})\ket{0}\braket{\bm{j}|\bm{k}}\\
    &= \sum_{\bm{j},\bm{j}'}\delta_{\bm{j},\bm{j}'}\delta_{\bm{j},\bm{k}}\ket{\bm{j}'}\\
    &= \ket{\bm{k}}.
  \end{align}
\end{proof}
The next lemma provides an expression for the generalized bunching probability \(b(S|U,\rho)\) (Eq.~\ref{eq:genbunchfirstexpression}) when the initial state is in \(\mathcal{D}(K_{\bm{i}})\).
\begin{lemma}
  Let $\rho$ be a state that singly occupies the $n$ distinct sites $\bm{i}$~(Eq.~\ref{eq:singlyoccupiesdef}). 
  Then, with $h(\rho)$ the auxiliary state of $\rho$,
  \begin{align}
    b(S|U, \rho) &= \Tr\left( n! \Pi_{\mathrm{Sym}^n} \left(\Pi_S^{\otimes n}U^{\otimes n}\ketbra{\bm{i}}(U^\dagger)^{\otimes n}\Pi_S^{\otimes n}\otimes h(\rho)\right) \right).
  \end{align}
  \label{lem:genbunchfirstquantized}
\end{lemma}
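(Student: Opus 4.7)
The plan is to transport the entire calculation from Fock space to the tensor power space $(\mathcal{H}_V\otimes\mathcal{H}_H)^{\otimes n}$ via the first quantization isometry $f_1$ of Lemma \ref{lem:firstquantization}, then use the natural bipartition $(\mathcal{H}_V\otimes \mathcal{H}_H)^{\otimes n}\cong \mathcal{H}_V^{\otimes n}\otimes \mathcal{H}_H^{\otimes n}$ to separate the visible and hidden factors. Because $f_1$ is an isometric intertwiner that preserves the trace inner product on operators, I would start by rewriting
\begin{align}
b(S|U,\rho) = \sum_{g\in \omega^n_{[m]\times[L]}(S\times[L])}\Tr\!\bigl(f_1((U\otimes \mathds{1})\cdot\rho\cdot(U^\dagger\otimes \mathds{1}))\,f_1(\ket{g}\bra{g})\bigr),
\end{align}
and then use the intertwining property to pull the unitary out as $f_1((U\otimes\mathds{1})\cdot\rho\cdot(U^\dagger\otimes\mathds{1}))=(U\otimes\mathds{1})^{\otimes n}\,f_1(\rho)\,((U^\dagger\otimes\mathds{1})^{\otimes n})$.

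Next, I would apply Lemma \ref{lem:sumoversubset} to the Hilbert space $\mathcal{H}_V\otimes \mathcal{H}_H$ with the subset $S\times[L]$, noting that $\Pi_{S\times[L]}=\Pi_S\otimes\mathds{1}_{\mathcal{H}_H}$, to collapse the sum over occupations into
\begin{align}
\sum_{g\in \omega^n_{[m]\times[L]}(S\times[L])}f_1(\ket{g}\bra{g}) = \Pi_{\mathrm{Sym}^n}(\Pi_S^{\otimes n}\otimes \mathds{1})\Pi_{\mathrm{Sym}^n},
\end{align}
using that $\ket{g}=(1/\sqrt{g!})\,a^{\dagger}_{\zeta(g)_1}\cdots a^{\dagger}_{\zeta(g)_n}\ket{0}$ so that $f_1(\ket{g}\bra{g})=(n!/g!)\,\Pi_{\mathrm{Sym}^n}\ket{\zeta(g)}\bra{\zeta(g)}\Pi_{\mathrm{Sym}^n}$.

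The crux of the argument is identifying $f_1(\rho)$ for a singly-occupied $\rho$. Writing $\rho$ in the form of Eq.~\ref{eq:singlyoccupiesdef} and applying $f_1$ termwise, each creation-operator string satisfies $f_1(a^{\dagger}(\bm{i},\bm{j})\ket{0})=\sqrt{n!}\,\Pi_{\mathrm{Sym}^n}\ket{(i_1,j_1),\ldots,(i_n,j_n)}$; under the canonical reshuffling $(\mathcal{H}_V\otimes\mathcal{H}_H)^{\otimes n}\cong\mathcal{H}_V^{\otimes n}\otimes\mathcal{H}_H^{\otimes n}$ this factorizes as $\sqrt{n!}\,\Pi_{\mathrm{Sym}^n}(\ket{\bm{i}}\otimes\ket{\bm{j}})$. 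Because $\ket{\bm{i}}$ is $\bm{j}$-independent, the sum over $\bm{j},\bm{j}'$ collapses onto the auxiliary state (Def.~\ref{def:auxiliarystate}), giving
\begin{align}
f_1(\rho) \;=\; n!\,\Pi_{\mathrm{Sym}^n}\bigl(\ket{\bm{i}}\bra{\bm{i}}\otimes h(\rho)\bigr)\Pi_{\mathrm{Sym}^n}.
\end{align}

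Finally, I would assemble the pieces and simplify. The operators $(U\otimes\mathds{1})^{\otimes n}$ and $\Pi_S^{\otimes n}\otimes \mathds{1}=(\Pi_S\otimes\mathds{1})^{\otimes n}$ both commute with $\Pi_{\mathrm{Sym}^n}$ because they are tensor powers of a single-particle operator. Combined with $\Pi_{\mathrm{Sym}^n}^2=\Pi_{\mathrm{Sym}^n}$, $\Pi_S^2=\Pi_S$, and cyclicity of the trace, all the redundant projectors collapse and the two $\Pi_S^{\otimes n}$ flanking $U^{\otimes n}\ket{\bm{i}}\bra{\bm{i}}(U^\dagger)^{\otimes n}$ in the target can be freely inserted or removed, yielding the stated identity. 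The main (mild) obstacle is step three: keeping the tensor-factor reshuffling between $(\mathcal{H}_V\otimes\mathcal{H}_H)^{\otimes n}$ and $\mathcal{H}_V^{\otimes n}\otimes\mathcal{H}_H^{\otimes n}$ bookkept correctly so that the visible factor really isolates as $\ket{\bm{i}}\bra{\bm{i}}$ while the hidden factor produces $h(\rho)$; the remaining manipulations are routine.
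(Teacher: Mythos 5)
Your proposal is correct and follows essentially the same route as the paper's proof: both pass to first quantization via $f_1$, invoke Lemma~\ref{lem:sumoversubset} on the mode set $S\times[L]$ with $\Pi_{S\times[L]}=\Pi_S\otimes\mathds{1}_{\mathcal{H}_H}$, establish the key identity $f_1(\rho)=n!\,\Pi_{\mathrm{Sym}^n}(\ketbra{\bm{i}}\otimes h(\rho))\Pi_{\mathrm{Sym}^n}$ by applying $f_1$ termwise to Eq.~\ref{eq:singlyoccupiesdef} and reshuffling tensor factors, and finish with cyclicity of the trace and the commutation of tensor-power operators with $\Pi_{\mathrm{Sym}^n}$. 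The only cosmetic difference is that the paper verifies $h(\rho)=\sum_{\bm{j},\bm{j}'}\rho_{\bm{j},\bm{j}'}\ketbra{\bm{j}}{\bm{j}'}$ by an explicit computation from Def.~\ref{def:auxiliarystate}, whereas you assert it directly; that step is routine given that $\bm{i}$ consists of distinct indices.
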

\begin{proof}
  We start with Eq.~\ref{eq:genbunchfirstexpression},
  \begin{align}
    b(S|U, \rho) &= \sum_{g\in \omega_{[m]\times[L]}^n(S\times [L])}\Tr\left( \left((U \otimes \mathds{1})\cdot \rho \cdot (U^\dagger \otimes \mathds{1})\right)\ketbra{g}{g}\right).
  \label{eq:beginninggenbunchcalc}
  \end{align}
  We first overload notation so that
  \begin{align}
    \zeta(g) &= \Bigg( \underbrace{(1, 1), \ldots, (1, 1)}_{g(1, 1) \text{ times }},\ldots, \underbrace{(m, L), \ldots, (m, L)}_{g(m, L) \text{ times }}\Bigg)
  \end{align}
  where the ordering of the elements of $[m]\times [L]$ is arbitrary but fixed.
  We also overload
  \begin{align}
    a^\dagger(k_1, \ldots, k_n) = a^\dagger_{k_1}\cdots a^\dagger_{k_n}.
  \end{align}
  Then, we can express Eq.~\ref{eq:beginninggenbunchcalc} as
  \begin{align}
    b(S|U, \rho) &= \sum_{g\in \omega_{[m]\times[L]}^n(S\times [L])}\frac{1}{g!}\Tr\left( \left((U \otimes \mathds{1})\cdot \rho \cdot (U^\dagger \otimes \mathds{1})\right)a^\dagger(\zeta(g))\ketbra{0}{0}a(\zeta(g))\right).
  \end{align}
  By Lem.~\ref{lem:firstquantization}, this is equal to
  \begin{align}
  b(S|U, \rho)  &= \sum_{ g\in \omega_{[m]\times[L]}^n(S\times [L])}\frac{1}{g!}\Tr(n! (U \otimes \mathds{1})^{\otimes n}f_1(\rho)(U^\dagger \otimes \mathds{1})^{\otimes n}\Pi_{\operatorname{Sym}^n}\ketbra{\zeta(g)}\Pi_{\operatorname{Sym}^n})
  \end{align}
  By Lem.~\ref{lem:sumoversubset} applied to the set of modes $S \times [L]$, this is
  \begin{align}
b(S|U, \rho) &= \Tr((U \otimes \mathds{1})^{\otimes n}f_1(\rho) (U^\dagger \otimes \mathds{1})^{\otimes n}\Pi_{\operatorname{Sym}^n}(\Pi_S\otimes \mathds{1})^{\otimes n}\Pi_{\operatorname{Sym}^n})
\label{eq:middleoffirstquantizedcalc}
  \end{align}
  Since $\rho$ singly occupies the sites $\bm{i}$ (Eq.~\ref{eq:singlyoccupiesdef}), $\rho$ can be expressed as 
  \begin{align}
  \rho = \sum_{\bm{j}, \bm{j}'\in [L]^{n}}\rho_{\bm{j}, \bm{j}'}a^{\dagger}(\bm{i}, \bm{j})\ket{0}\!\!\bra{0}a(\bm{i}, \bm{j}').
  \end{align}
Then we calculate
  \begin{align}
    f_1(\rho) &= \sum_{\bm{j}, \bm{j}'\in [L]^{n}}\rho_{\bm{j}, \bm{j}'}f_1(a^{\dagger}(\bm{i}, \bm{j})\ket{0}\!\!\bra{0}a(\bm{i}, \bm{j}'))\\
    &= \sum_{\bm{j}, \bm{j}'\in [L]^{n}}\rho_{\bm{j}, \bm{j}'}n!\Pi_{\operatorname{Sym}^n}\ketbra{\bm{i},\bm{j}}{\bm{i},\bm{j}'}\Pi_{\operatorname{Sym}^n}.
  \end{align}
  Now we calculate the auxiliary state $h(\rho)$ (Def.~\ref{def:auxiliarystate}),
  \begin{align}
    h(\rho) &=  \sum_{\bm{k}, \bm{k}'}\Tr(\rho a^\dagger(\bm{i}, \bm{k})\ketbra{0}a(\bm{i}, \bm{k}'))\ketbra{\bm{k}}{\bm{k}'}\\
    &=  \sum_{\bm{k}, \bm{k}', \bm{j}, \bm{j}'}\rho_{\bm{j},\bm{j}'}\Tr( a^\dagger(\bm{i}, \bm{j})\ketbra{0}a(\bm{i}, \bm{j}') a^\dagger(\bm{i}, \bm{k})\ketbra{0}a(\bm{i}, \bm{k}'))\ketbra{\bm{k}}{\bm{k}'}\\
    &=  \sum_{\bm{k}, \bm{k}', \bm{j}, \bm{j}'}\rho_{\bm{j},\bm{j}'}\delta_{\bm{j},\bm{k}}\delta_{\bm{j}',\bm{k}'}\ketbra{\bm{k}}{\bm{k}'},
    \end{align}
    since $\bm{i}$ is a list of distinct indices.
    Doing the sums on $\bm{k}, \bm{k}'$ then yields
    \begin{align}
 h(\rho) &= \sum_{\bm{j}, \bm{j}'}\rho_{\bm{j},\bm{j}'}\ketbra{\bm{j}}{\bm{j}'}.
  \end{align}
  Therefore, after appropriate reordering of tensor factors, we have 
  \begin{align}
    f_1(\rho) &=  n!\Pi_{\operatorname{Sym}^n} (\ketbra{\bm{i}}\otimes h(\rho))\Pi_{\operatorname{Sym}^n}.
  \end{align}
  Therefore, we can plug into Eq.~\ref{eq:middleoffirstquantizedcalc} and use the cyclicity of trace to obtain
  \begin{align}
b(S|U, \rho) &= \Tr((U \otimes \mathds{1})^{\otimes n}n!\Pi_{\operatorname{Sym}^n} (\ketbra{\bm{i}}\otimes h(\rho))\Pi_{\operatorname{Sym}^n} (U^\dagger \otimes \mathds{1})^{\otimes n}\Pi_{\operatorname{Sym}^n}(\Pi_S\otimes \mathds{1})^{\otimes n}\Pi_{\operatorname{Sym}^n})\\
&= \Tr\left( n! \Pi_{\mathrm{Sym}^n} \left(U^{\otimes n}\ketbra{\bm{i}}(U^\dagger)^{\otimes n}\otimes h(\rho)\right) (\Pi_S\otimes \mathds{1})^{\otimes n}\right)\\
   &= \Tr\left( n! \Pi_{\mathrm{Sym}^n} \left(\Pi_S^{\otimes n}U^{\otimes n}\ketbra{\bm{i}}(U^\dagger)^{\otimes n}\Pi_S^{\otimes n}\otimes h(\rho)\right) \right).
  \end{align}
\end{proof}

  The permutation action on the sites is given by the representation $F_{\bm{i}}$ of $\mathcal{S}_n$ (Def.~\ref{def:frepdef}).
Loosely speaking, the subspace $K_{\bm{i}}$ defined in Def.~\ref{def:kisubspacedef} is special because we can permute the entries of $\bm{j}$ by permuting the modes $\bm{i}$.
  This notion is made precise in the following proposition.
  \begin{proposition}
    Let $\bm{i}$ be a list of $n$ distinct indices from $[m]$.
  Let $\sigma\in \mathcal{S}_n.$
  Let $\bm{j} \in [L]^{n}$.
  Then\begin{align}
(F_{\bm{i}}(\sigma)\otimes \mathds{1}_{\mathcal{H}_H})\cdot a^\dagger(\bm{i}, \bm{j})\ket{0} &= a^\dagger(\bm{i}, \sigma\cdot\bm{j})\ket{0}.
  \end{align}
  \label{prop:fintertwines}
\end{proposition}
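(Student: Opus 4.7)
The plan is to unpack both sides in the creation-operator formalism, apply the linear-optical action of $F_{\bm{i}}(\sigma)\otimes\mathds{1}_{\mathcal{H}_H}$ one factor at a time, and then relabel the product using bosonic commutativity to recognize the result as $a^\dagger(\bm{i},\sigma\cdot\bm{j})\ket{0}$. The only nontrivial ingredient is careful bookkeeping of indices; no deep structural input is needed.

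First I would recall the basic fact about the linear-optical action on single-particle states: for any $V\in\mathrm{U}(\mathcal{H}_1)$ and $\ket{\psi}\in\mathcal{H}_1$ one has $V\cdot a^\dagger_\psi\ket{0}=a^\dagger_{V\psi}\ket{0}$, and more generally $V\cdot(a^\dagger_{\psi_1}\cdots a^\dagger_{\psi_n}\ket{0}) = a^\dagger_{V\psi_1}\cdots a^\dagger_{V\psi_n}\ket{0}$, as follows directly from Def.~\ref{def:linearopticalaction}. Applying this with $V=F_{\bm{i}}(\sigma)\otimes\mathds{1}_{\mathcal{H}_H}$ to each factor $a^\dagger_{i_x,j_x}$ in $a^\dagger(\bm{i},\bm{j})$ gives
\begin{align}
(F_{\bm{i}}(\sigma)\otimes\mathds{1}_{\mathcal{H}_H})\cdot a^\dagger(\bm{i},\bm{j})\ket{0}
= a^\dagger_{i_{\sigma(1)},j_1}\,a^\dagger_{i_{\sigma(2)},j_2}\cdots a^\dagger_{i_{\sigma(n)},j_n}\ket{0},
\end{align}
since by Def.~\ref{def:frepdef} the map $F_{\bm{i}}(\sigma)\otimes\mathds{1}_{\mathcal{H}_H}$ sends $\ket{i_x}_V\otimes\ket{j_x}_H$ to $\ket{i_{\sigma(x)}}_V\otimes\ket{j_x}_H$ (the hypothesis that $\bm{i}$ is a list of distinct indices is used here so that the action of $F_{\bm{i}}(\sigma)$ on each mode $i_x$ is unambiguous).

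Next I would reindex this product. The creation operators commute (Eq.~\ref{eq:wfncommutations} applied to disjoint modes, with the relevant commutator vanishing), so I may reorder the factors according to the bijection $x\mapsto k:=\sigma(x)$, giving
\begin{align}
a^\dagger_{i_{\sigma(1)},j_1}\cdots a^\dagger_{i_{\sigma(n)},j_n}\ket{0}
= \prod_{k=1}^{n} a^\dagger_{i_k,\,j_{\sigma^{-1}(k)}}\ket{0}.
\end{align}
By the definition of the action of $\mathcal{S}_n$ on index lists, $(\sigma\cdot\bm{j})_k = j_{\sigma^{-1}(k)}$, so the right-hand side is exactly $a^\dagger(\bm{i},\sigma\cdot\bm{j})\ket{0}$, proving the claim.

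There is no real obstacle here; the only thing one must be careful about is the distinction between $\sigma$ and $\sigma^{-1}$ in the index substitution, which is why the convention $(\sigma\cdot\bm{j})_k = j_{\sigma^{-1}(k)}$ was fixed before the statement of the proposition. The hypothesis that $\bm{i}$ has distinct entries is what lets us treat the action on each $i_x$ independently without worrying about collisions in $F_{\bm{i}}(\sigma)$.
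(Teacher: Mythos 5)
Your proof is correct and follows essentially the same route as the paper's: apply the linear-optical action factor by factor to get $a^\dagger_{i_{\sigma(1)},j_1}\cdots a^\dagger_{i_{\sigma(n)},j_n}\ket{0}$, then commute and reindex via $k=\sigma(x)$ to obtain $a^\dagger(\bm{i},\sigma\cdot\bm{j})\ket{0}$. The paper's version is just a terser rendering of the same three-line calculation.
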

\begin{proof}
The proposition is established by the following calculation:
  \begin{align}
    (F_{\bm{i}}(\sigma)\otimes \mathds{1}_{\mathcal{H}_H})\cdot a^\dagger(\bm{i}, \bm{j})\ket{0} &= a^\dagger_{i_{\sigma(1)}, j_1}\cdots a^\dagger_{i_{\sigma(n)}, j_n}\ket{0}\\
    &= a^\dagger_{i_1, j_{\sigma^{-1}(1)}}\cdots a^\dagger_{i_n, j_{\sigma^{-1}(n)}}\ket{0}\\
    &= a^\dagger(\bm{i}, \sigma\cdot\bm{j})\ket{0}.
  \end{align}
\end{proof}
  We define 
  \begin{align}
    \mathcal{S}_n(\bm{i}) = F_{\bm{i}}(\mathcal{S}_n)\subset \mathrm{U}(\mathcal{H}_V)
    \label{eq:snidef}
  \end{align}
   to be the image of the map $F_{\bm{i}}$ defined in Def.~\ref{def:frepdef}, which is the set of permutation matrices on $\mathcal{H}_V$ that act trivially on the orthogonal complement of the subspace spanned by $\left\{ \ket{i_x} \right\}_{x \in [n]}$.
  Observe that the restriction of the linear-optical action of $\mathrm{U}(\mathcal{H}_V \otimes \mathcal{H}_H)$ to the subgroup $\mathcal{S}_n(\bm{i})\times \mathrm{U}(\mathcal{H}_H)$ gives rise to an action of $\mathcal{S}_n\times \mathrm{U}(\mathcal{H}_H)$ on $K_{\bm{i}}$.
  We also have the action of $\mathcal{S}_n\times \mathrm{U}(\mathcal{H}_H)$ on $\mathcal{H}_H^{\otimes n}$ from Eqs.~\ref{eq:tensorpoweraction},\ref{eq:permutationtensorfactoraction}.
  With respect to these actions, we have
\begin{lemma}
  The function $h$ (Def.~\ref{def:auxiliarystate}) is an $\mathcal{S}_n\times \mathrm{U}(\mathcal{H}_H)$-isomorphism between $K_{\bm{i}}$ and $\mathcal{H}_H^{\otimes n}$.
  \label{lem:hintertwinesfandp}
\end{lemma}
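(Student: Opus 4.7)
The plan is to leverage what is already established in Prop.~\ref{prop:hliniso}, which gives that $h$ is a linear bijection between $K_{\bm{i}}$ and $\mathcal{H}_H^{\otimes n}$. All that remains is to verify that $h$ intertwines the two factors of the $\mathcal{S}_n \times \mathrm{U}(\mathcal{H}_H)$ action separately. Since the two factors commute and $h$ is linear, it suffices to check the intertwining property on a spanning set, and the natural choice is the basis $\left\{ a^\dagger(\bm{i}, \bm{j})\ket{0} \right\}_{\bm{j} \in [L]^n}$ of $K_{\bm{i}}$, whose image under $h$ is the computational basis $\left\{ \ket{\bm{j}} \right\}_{\bm{j} \in [L]^n}$ of $\mathcal{H}_H^{\otimes n}$ (by direct application of Def.~\ref{def:auxiliarystate}).

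For the symmetric group factor, I would use Prop.~\ref{prop:fintertwines} directly: $(F_{\bm{i}}(\sigma) \otimes \mathds{1}_{\mathcal{H}_H}) \cdot a^\dagger(\bm{i}, \bm{j})\ket{0} = a^\dagger(\bm{i}, \sigma \cdot \bm{j})\ket{0}$, so applying $h$ gives $\ket{\sigma \cdot \bm{j}}$. On the other side, by the definition of $P(\sigma)$ in Eq.~\ref{eq:permutationtensorfactoraction}, $P(\sigma) \ket{\bm{j}} = \ket{j_{\sigma^{-1}(1)}, \ldots, j_{\sigma^{-1}(n)}} = \ket{\sigma \cdot \bm{j}}$, matching exactly.

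For the $\mathrm{U}(\mathcal{H}_H)$ factor, I would expand $V \ket{j_k} = \sum_{l_k} V_{l_k, j_k}\ket{l_k}$ and use the (conjugate) linearity of creation operators in the underlying vector. By Def.~\ref{def:linearopticalaction}, the linear-optical action of $\mathds{1}_{\mathcal{H}_V} \otimes V$ sends $a^\dagger(\bm{i}, \bm{j})\ket{0}$ to $\sum_{\bm{l} \in [L]^n} V_{l_1, j_1} \cdots V_{l_n, j_n}\, a^\dagger(\bm{i}, \bm{l})\ket{0}$, and applying $h$ yields $\sum_{\bm{l}} V_{l_1, j_1} \cdots V_{l_n, j_n}\ket{\bm{l}} = V^{\otimes n}\ket{\bm{j}}$, which is exactly $V^{\otimes n} h(a^\dagger(\bm{i}, \bm{j})\ket{0})$.

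There is no real obstacle here; the lemma is essentially bookkeeping, encoded by the fact that the modes $\bm{i}$ are distinct, so the basis $\left\{ a^\dagger(\bm{i}, \bm{j})\ket{0} \right\}_{\bm{j}}$ behaves exactly like the tensor-product basis of $\mathcal{H}_H^{\otimes n}$, with site permutations on $\bm{i}$ inducing the natural tensor-factor permutations and the hidden-mode unitary acting diagonally on each factor. The main care needed is just to keep straight the convention $\sigma \cdot \bm{j} = (j_{\sigma^{-1}(1)}, \ldots, j_{\sigma^{-1}(n)})$ so that the identification with $P(\sigma)$ is direct.
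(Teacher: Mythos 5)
Your proposal is correct and follows essentially the same route as the paper: both rest on Prop.~\ref{prop:hliniso} for bijectivity, Prop.~\ref{prop:fintertwines} for the symmetric-group intertwining, and a direct expansion of the linear-optical action of $\mathds{1}_{\mathcal{H}_V}\otimes V$ for the unitary factor (the paper checks the permutation part on a general vector rather than on the basis, but that is an immaterial difference). One small nit: the creation map $\ket{\psi}\mapsto a^\dagger_\psi$ is linear, not conjugate-linear (it is the annihilation map that is conjugate-linear), and your displayed coefficients $V_{l_k,j_k}$ correctly reflect that.
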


\begin{proof}
Define the representation $G_{\bm{i}}:\mathcal{S}_n \rightarrow \mathrm{U}(K_{\bm{i}})$ given by
\begin{align}
  G_{\bm{i}}(\sigma)v &= (F_{\bm{i}}(\sigma)\otimes \mathds{1}_{\mathcal{H}_H})\cdot v,
\end{align}
where the action on the right-hand side is the linear-optical action.
  We claim that $h$ is an $\mathcal{S}_n$-intertwiner between the representations $G_{\bm{i}}$ and $P$ (Eq.~\ref{eq:permutationtensorfactoraction}).
  To verify the claim, let $\ket{\psi}\in K_{\bm{i}}$ and calculate 
  \begin{align}
    h( (F_{\bm{i}}(\sigma)\otimes \mathds{1}_{\mathcal{H}_H}) \cdot \ket{\psi}) &= \sum_{\bm{j}}\ket{\bm{j}}\bra{0}a(\bm{i}, \bm{j})(F_{\bm{i}}(\sigma)\otimes \mathds{1}_{\mathcal{H}_H})\cdot \ket{\psi}\\
    &= \sum_{\bm{j}}\ket{\bm{j}}\left((F_{\bm{i}}(\sigma)\otimes \mathds{1}_{\mathcal{H}_H})^\dagger\cdot a^\dagger(\bm{i}, \bm{j})\ket{0}\right)^\dagger \ket{\psi}\\
    &= \sum_{\bm{j}}\ket{\bm{j}}\bra{0}a(\bm{i}, \sigma^{-1}\cdot\bm{j})\ket{\psi}
    \end{align}
    by Prop.~\ref{prop:fintertwines}.  
    Continuing,
    \begin{align}
     h( (F_{\bm{i}}(\sigma)\otimes \mathds{1}_{\mathcal{H}_H}) \cdot \ket{\psi})   &= \sum_{\bm{j}}\ket{\sigma\cdot\bm{j}}\bra{0}a(\bm{i}, \bm{j})\ket{\psi}\\
    &= P(\sigma)\sum_{\bm{j}}\ket{\bm{j}}\bra{0}a(\bm{i}, \bm{j})\ket{\psi}\\
    &= P(\sigma)h(\ket{\psi}).
  \end{align}
  Therefore, $h$ intertwines $P$ and $G_{\bm{i}}$.

  Define the representation $J:\mathrm{U}(\mathcal{H}_H) \rightarrow \mathrm{U}(K_{\bm{i}})$ given by
\begin{align}
  J(U)v &= (\mathds{1}_{\mathcal{H}_V}\otimes U)\cdot v,
\end{align}
where the action on the right-hand side is the linear-optical action.
Define the representation $\otimes^n:\mathrm{U}(\mathcal{H}_H) \rightarrow \mathrm{U}(\mathcal{H}_H^{\otimes n})$ by $\otimes^n(U)v = U^{\otimes n}v$.
We claim that $h$ is a $\mathrm{U}(\mathcal{H}_H)$-intertwiner between the representations $J$ and $\otimes^n$.
  \label{prop:hintertwinesfandpunitary}
  Let $\bm{k}\in [L]^{n}$.
  Let $U \in \mathrm{U}(\mathcal{H}_H)$.
  If $\bm{k}' \in [L]^{n}$, let $U(\bm{k}'|\bm{k}) = U_{k'_1, k_1}\cdots U_{k'_n, k_n}$.
  Then, 
  \begin{align}
    h(J(U)a^{\dagger}(\bm{i}, \bm{k})\ket{0}) &= h((\mathds{1}_{\mathcal{H}_V}\otimes U)\cdot a^{\dagger}(\bm{i}, \bm{k})\ket{0})\\
    &= \sum_{\bm{j}}\ket{\bm{j}}\bra{0}a(\bm{i}, \bm{j}) (\mathds{1}_{\mathcal{H}_V}\otimes U)\cdot a^{\dagger}(\bm{i}, \bm{k})\ket{0}\\
    &= \sum_{\bm{j}}\ket{\bm{j}}\bra{0}a(\bm{i}, \bm{j}) \sum_{\bm{k}'}U(\bm{k}'|\bm{k})a^{\dagger}(\bm{i}, \bm{k}')\ket{0}\\
    &= \sum_{\bm{j}}\ket{\bm{j}} \sum_{\bm{k}'}U(\bm{k}'|\bm{k})\delta_{\bm{j},\bm{k}'}\\
    &= \sum_{\bm{j}}\ket{\bm{j}} U(\bm{j}|\bm{k})\\
    &= U^{\otimes n}\ket{\bm{k}}\\
    &= U^{\otimes n}h(a^{\dagger}(\bm{i}, \bm{k})\ket{0})
  \end{align}
  Now we construct the representation $G_{\bm{i}}\times J:\mathcal{S}_n\times \mathrm{U}(\mathcal{H}_H)\rightarrow U(K_{\bm{i}}), (G_{\bm{i}}\times J)(\sigma, U) = G_{\bm{i}}(\sigma)J(U)$.
  The actions defined by $G_{\bm{i}}$ and $J$ commute because the visible unitaries commute with the hidden unitaries.
  Define the representation $P \times \otimes^{n}:\mathcal{S}_n\times \mathrm{U}(\mathcal{H}_H)\rightarrow U(\mathcal{H}_H^{\otimes n})$ similarly.
  Therefore, $h$ is an $\mathcal{S}_n \times \mathrm{U}(\mathcal{H}_H)$-intertwiner between $G_{\bm{i}}\times J$ and $P \times \otimes^{n}$.
  By Prop.~\ref{prop:hliniso}, $h$ is a linear isomorphism, and therefore it is an isomorphism of representations.
\end{proof}
By applying Schur-Weyl duality (Eq.~\ref{eq:swdualitymodules}) we have
\begin{corollary}[\cite{goodmanSymmetryRepresentationsInvariants2009}, Thm.~9.2.8]
As an ${\mathcal{S}_n \times \mathrm{U}(\mathcal{H}_H)}$ representation,
\begin{align}
  K_{\bm{i}}
  &\cong \bigoplus_{\lambda\in\mathrm{Part}_{n, d}}\mathrm{Sp}_\lambda\otimes \mathbb{S}_\lambda(\mathcal{H}_H).
\end{align}
\label{cor:kiswdecomp}
\end{corollary}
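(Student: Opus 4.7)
The plan is to read the corollary as an immediate combination of the two inputs it cites: the representation-theoretic identification of $K_{\bm{i}}$ with $\mathcal{H}_H^{\otimes n}$ from Lem.~\ref{lem:hintertwinesfandp}, and Schur--Weyl duality (Eq.~\ref{eq:swdualitymodules}) applied to $\mathcal{H}_H^{\otimes n}$. First I would invoke Lem.~\ref{lem:hintertwinesfandp} to conclude that the map $h$ of Def.~\ref{def:auxiliarystate} is a linear isomorphism $K_{\bm{i}} \to \mathcal{H}_H^{\otimes n}$ which intertwines the product action $G_{\bm{i}}\times J$ on $K_{\bm{i}}$ (given by the site-permutation action $F_{\bm{i}}$ tensored with $\mathds{1}_{\mathcal{H}_H}$, together with $\mathds{1}_{\mathcal{H}_V}\otimes U$ for $U\in\mathrm{U}(\mathcal{H}_H)$) with the standard action $P\times \otimes^n$ on $\mathcal{H}_H^{\otimes n}$. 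This identifies $K_{\bm{i}}$ and $\mathcal{H}_H^{\otimes n}$ as $\mathcal{S}_n\times \mathrm{U}(\mathcal{H}_H)$ representations.

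Second, I would apply Schur--Weyl duality for the $L$-dimensional space $\mathcal{H}_H$: as an $\mathcal{S}_n\times\mathrm{U}(\mathcal{H}_H)$ representation,
\begin{align}
\mathcal{H}_H^{\otimes n}\;\cong\;\bigoplus_{\lambda\in\mathrm{Part}_{n,L}}\mathrm{Sp}_\lambda\otimes \mathbb{S}_\lambda(\mathcal{H}_H),
\end{align}
where the summation index is truncated at $L=\mathrm{dim}(\mathcal{H}_H)$ because $\mathbb{S}_\lambda(\mathcal{H}_H)=0$ whenever $\mathrm{len}(\lambda)>L$. Composing with the isomorphism $h$ transports the decomposition to $K_{\bm{i}}$, yielding exactly the claim (with the understanding that the index set in the statement, written as $\mathrm{Part}_{n,d}$, should be read as $\mathrm{Part}_{n,L}$).

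There is essentially no obstacle: once Lem.~\ref{lem:hintertwinesfandp} is in hand, the corollary is a one-line consequence of Schur--Weyl duality. The only mild care required is to ensure that the $\mathcal{S}_n$ factor on $K_{\bm{i}}$ acts via the site-permutation representation $F_{\bm{i}}$, and not via some other realization of the symmetric group that happens to live in $\mathrm{U}(\mathcal{H}_V\otimes\mathcal{H}_H)$; this is already pinned down by the product-action description used in the proof of Lem.~\ref{lem:hintertwinesfandp}, so no additional work is needed.
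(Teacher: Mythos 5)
Your proposal matches the paper's own derivation: the corollary is stated immediately after Lem.~\ref{lem:hintertwinesfandp} with the remark ``By applying Schur-Weyl duality (Eq.~\ref{eq:swdualitymodules}) we have,'' which is exactly your two-step argument of transporting the Schur--Weyl decomposition of $\mathcal{H}_H^{\otimes n}$ through the isomorphism $h$. Your observation that the index set written as $\mathrm{Part}_{n,d}$ should be read as $\mathrm{Part}_{n,L}$ is a correct and worthwhile clarification of the paper's notation.
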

\begin{corollary}
  Suppose $\rho$ singly occupies the $n$ distinct sites $\bm{i} \in [m]^n$, and is permutation invariant (Eq.~\ref{eq:perminvstatedecomposition}).
  Then there exists a probability distribution $q_\lambda$ over partitions of $n$, and states $\tau_\lambda \in \mathcal{D}(\mathbb{S}_\lambda(\mathcal{H}_H))$ such that
  \begin{align}
    h(\rho) &= \bigoplus_{\lambda\in\mathrm{Part}_{n, d}}q_\lambda \frac{\mathds{1}_{\mathrm{Sp}_\lambda}}{\mathrm{dim}(\lambda)}\otimes \tau_\lambda
    \label{eq:perminvstatedecomposition}
  \end{align}
  where the decomposition is with respect to the one induced by the $\mathcal{S}_n \times \mathrm{U}(\mathcal{H}_H)$-action on $\mathcal{H}_H^{\otimes n}$.
  The distribution $q_\lambda$ is called the auxiliary irrep distribution of $\rho$.
  \label{cor:permutationinvariantauxiliarystates}
\end{corollary}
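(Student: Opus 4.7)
The plan is to transport the permutation-invariance assumption from $\rho$ over to $h(\rho)$ via Lemma~\ref{lem:hintertwinesfandp}, and then apply Schur-Weyl duality together with Schur's lemma to pin down the commutant structure.

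First, I would verify that if $\rho$ is permutation invariant in the sense of Def.~\ref{def:permutationinvariantstates}, then $h(\rho)$ commutes with $P(\sigma)$ for every $\sigma\in\mathcal{S}_n$. This is immediate from Lemma~\ref{lem:hintertwinesfandp}: since $h$ is an $\mathcal{S}_n$-equivariant linear isomorphism (with $\mathcal{S}_n$ acting on $K_{\bm{i}}$ through $F_{\bm{i}}(\sigma)\otimes\mathds{1}_{\mathcal{H}_H}$), the identity $(F_{\bm{i}}(\sigma)\otimes\mathds{1})\cdot\rho\cdot(F_{\bm{i}}(\sigma)\otimes\mathds{1})^{-1}=\rho$ is equivalent to $P(\sigma)h(\rho)P(\sigma)^{-1}=h(\rho)$ for all $\sigma$. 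Moreover, since $h$ respects the trace inner product on density matrices (by the positivity-preserving and trace-preserving nature of the isomorphism extended to operators as in the analogous construction for $f_1$), $h(\rho)\in\mathcal{D}(\mathcal{H}_H^{\otimes n})$.

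Next, I would invoke the Schur-Weyl decomposition $\mathcal{H}_H^{\otimes n}\cong\bigoplus_{\lambda\in\mathrm{Part}_{n,L}}\mathrm{Sp}_\lambda\otimes\mathbb{S}_\lambda(\mathcal{H}_H)$ (Corollary~\ref{cor:kiswdecomp}, via Eq.~\ref{eq:swdualitymodules} and Eq.~\ref{eq:swpermdecomp}), under which $P(\sigma)=\bigoplus_\lambda \kappa_\lambda(\sigma)\otimes\mathds{1}_{\mathbb{S}_\lambda(\mathcal{H}_H)}$. Because the $\kappa_\lambda$ are pairwise inequivalent irreps of $\mathcal{S}_n$, Schur's lemma (the version recalled in Cor.~\ref{cor:schurcor}, extended to the multiplicity-space setting) forces every operator commuting with all $P(\sigma)$ to have the block form
\begin{align}
h(\rho)=\bigoplus_{\lambda\in\mathrm{Part}_{n,L}}\mathds{1}_{\mathrm{Sp}_\lambda}\otimes M_\lambda
\end{align}
for some operators $M_\lambda$ on $\mathbb{S}_\lambda(\mathcal{H}_H)$.

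Finally, I would convert this into the claimed probabilistic form. Positivity of $h(\rho)$ gives $M_\lambda\in\mathcal{B}_+(\mathbb{S}_\lambda(\mathcal{H}_H))$ for each $\lambda$, and the trace condition $\mathrm{Tr}(h(\rho))=1$ together with $\mathrm{Tr}(\mathds{1}_{\mathrm{Sp}_\lambda})=\dim(\lambda)$ yields $\sum_\lambda \dim(\lambda)\,\mathrm{Tr}(M_\lambda)=1$. Setting $q_\lambda=\dim(\lambda)\,\mathrm{Tr}(M_\lambda)$ and, whenever $q_\lambda>0$, $\tau_\lambda=M_\lambda/\mathrm{Tr}(M_\lambda)\in\mathcal{D}(\mathbb{S}_\lambda(\mathcal{H}_H))$ (and choosing $\tau_\lambda$ arbitrarily when $q_\lambda=0$), we obtain $M_\lambda=(q_\lambda/\dim(\lambda))\,\tau_\lambda$, hence the desired expression. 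There is no serious obstacle here: the only thing to be careful about is bookkeeping the isomorphism between $K_{\bm{i}}$ and $\mathcal{H}_H^{\otimes n}$ so that Schur-Weyl can be applied cleanly on the auxiliary side, and ensuring that the multiplicity-freeness condition of Cor.~\ref{cor:schurcor} is actually satisfied in the sense required (each $\mathrm{Sp}_\lambda$ appears with multiplicity $\dim\mathbb{S}_\lambda(\mathcal{H}_H)$, but the commutant decomposition still yields the stated block form).
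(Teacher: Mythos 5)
Your proposal is correct and follows exactly the route the paper takes: the paper's proof simply cites Lem.~\ref{lem:hintertwinesfandp} and Schur's lemma, and you have filled in precisely those steps (transporting permutation invariance through the isomorphism $h$, identifying the commutant of $\{P(\sigma)\}$ under Schur--Weyl as $\bigoplus_\lambda \mathds{1}_{\mathrm{Sp}_\lambda}\otimes M_\lambda$, and normalizing to extract $q_\lambda$ and $\tau_\lambda$). Your caveat about the multiplicity-freeness hypothesis of Cor.~\ref{cor:schurcor} is well taken, and your block-wise commutant argument correctly handles it.
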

\begin{proof}
  This is immediate from Lem.~\ref{lem:hintertwinesfandp} and Schur's lemma.
\end{proof}

\section{Expressing the generalized bunching probability in terms of immanants}
\label{sec:expressingintermsofimmanants}

Let $S \subseteq [m]$.
The Gram matrix associated with $S \subseteq [m]$, the unitary $U \in \mathrm{U}(\mathcal{H}_V)$ and the $n$ distinct sites $\bm{i}$ from $[m]$  is defined by
  \begin{align}
  G(S|U, \bm{i})_{xy} &= \sum_{s\in S}U_{s,i_x}^* U_{s,i_y}.
  \label{eq:grammatrixdef}
  \end{align}
  for $x, y \in [n]$.
\begin{lemma}
  Let $\rho$ be a permutation-invariant state that singly occupies the $n$ distinct sites $\bm{i} \in [m]^n$ (Def.~\ref{def:permutationinvariantstates}).
  Then, with $q_\lambda$ given by the auxiliary irrep distribution of $\rho$ (Cor.~\ref{cor:permutationinvariantauxiliarystates}),
  \begin{align}
    b(S|U, \rho) &= \sum_{\lambda\in\mathrm{Part}_{n, d}}q_\lambda \overline{\operatorname{Imm}}_\lambda(G(S|U, \bm{i})).
  \end{align}
  \label{lem:perminvgbisimm}
\end{lemma}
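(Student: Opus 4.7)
The plan is to start from the first-quantized expression in Lem.~\ref{lem:genbunchfirstquantized}, expand the symmetrizer $n!\Pi_{\mathrm{Sym}^n} = \sum_{\sigma \in \mathcal{S}_n} P(\sigma)$, and use the fact that the permutation action on $\mathcal{H}_V^{\otimes n}\otimes\mathcal{H}_H^{\otimes n}$ factorizes as $P(\sigma)_V\otimes P(\sigma)_H$. Since the operand of the trace also factorizes across visible and hidden tensor factors, this yields
\begin{align}
b(S|U,\rho) = \sum_{\sigma\in\mathcal{S}_n}\mathrm{Tr}_V\!\left(P(\sigma)_V\,\Pi_S^{\otimes n}U^{\otimes n}\ketbra{\bm{i}}{\bm{i}}(U^\dagger)^{\otimes n}\Pi_S^{\otimes n}\right)\mathrm{Tr}_H\!\left(P(\sigma)_H\,h(\rho)\right).
\end{align}
The proof then reduces to evaluating each factor separately and recognizing the immanant structure in the combined sum.

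First I would evaluate the visible trace. Expanding $\Pi_S^{\otimes n}$ in the computational basis and using $\bra{s_1,\ldots,s_n}P(\sigma) = \bra{s_{\sigma(1)},\ldots,s_{\sigma(n)}}$ (which follows from the definition in Eq.~\ref{eq:permutationtensorfactoraction}), the visible factor becomes a product of sums $\sum_{s\in S}U^*_{s,i_{\sigma(k)}}U_{s,i_k}$ after a relabeling of summation variables. Comparing with the definition of the Gram matrix (Eq.~\ref{eq:grammatrixdef}), this factor equals $\prod_{k=1}^n G(S|U,\bm{i})_{k,\sigma^{-1}(k)}$, where the inverse appears via a final change of dummy variable $k\mapsto \sigma(k)$ in the product.

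Next I would compute the hidden factor using Cor.~\ref{cor:permutationinvariantauxiliarystates}, which gives $h(\rho) = \bigoplus_\lambda q_\lambda \frac{\mathds{1}_{\mathrm{Sp}_\lambda}}{\dim(\lambda)}\otimes\tau_\lambda$. By Eq.~\ref{eq:swpermdecomp}, $P(\sigma)_H$ acts on this decomposition as $\bigoplus_\lambda \kappa_\lambda(\sigma)\otimes\mathds{1}_{\mathbb{S}_\lambda(\mathcal{H}_H)}$, so using $\mathrm{Tr}(\tau_\lambda)=1$ and $\mathrm{Tr}(\kappa_\lambda(\sigma)) = \chi_\lambda(\sigma)$, the hidden factor reduces to $\sum_\lambda q_\lambda\, \chi_\lambda(\sigma)/\dim(\lambda)$.

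Finally I would assemble the two factors, swap the order of summation, and use $\dim(\lambda)=\chi_\lambda(e)$ together with the fact that every element of $\mathcal{S}_n$ is conjugate to its inverse so $\chi_\lambda(\sigma^{-1})=\chi_\lambda(\sigma)$. Reindexing $\sigma\mapsto\sigma^{-1}$ in the sum $\sum_\sigma \chi_\lambda(\sigma)\prod_k G_{k,\sigma^{-1}(k)}$ recasts it as $\sum_\sigma \chi_\lambda(\sigma)\prod_k G_{k,\sigma(k)} = \chi_\lambda(e)\,\overline{\operatorname{Imm}}_\lambda(G(S|U,\bm{i}))$ by Def.~\ref{def:immanants}, and the $\chi_\lambda(e)$ cancels the $\dim(\lambda)^{-1}$, yielding the claim. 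The main obstacle I anticipate is the index bookkeeping in the visible factor: keeping straight whether $\sigma$, $\sigma^{-1}$, row-index, or column-index appears where, so that the final expression matches the immanant convention of Def.~\ref{def:immanants} rather than a transposed or inverted version of it.
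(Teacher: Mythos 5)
Your proposal is correct and follows essentially the same route as the paper's proof: both expand $n!\Pi_{\mathrm{Sym}^n}$ into a sum of permutations factorizing over the visible and hidden tensor factors, identify the visible factor with $\prod_k G(S|U,\bm{i})_{k,\sigma^{-1}(k)}$, and use Cor.~\ref{cor:permutationinvariantauxiliarystates} together with the Schur--Weyl decomposition of $P(\sigma)$ on $\mathcal{H}_H^{\otimes n}$ to reduce the hidden factor to $\sum_\lambda q_\lambda\chi_\lambda(\sigma)/\dim(\lambda)$. Your explicit handling of the $\sigma$ versus $\sigma^{-1}$ reindexing (via realness of symmetric-group characters) is a detail the paper leaves implicit, but the argument is the same.
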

\begin{proof}
  We define the representation $R:\mathcal{S}_n \to \mathrm{U}(\mathcal{H}_H^{\otimes n})$ in analogy with the definition of $P$ (Eq.~\ref{eq:permutationtensorfactoraction}) by its action on computational basis vectors,
  \begin{align}
    R(\sigma)\ket{k_1, \ldots, k_n} &= \ket{k_{\sigma^{-1}(1)}, \ldots, k_{\sigma^{-1}(n)}}
  \end{align}
  By Lem.~\ref{lem:genbunchfirstquantized}, the generalized bunching probability is
  \begin{align}
    b(S|U, \rho) &= \Tr\left( n! \Pi_{\mathrm{Sym}^n} (\Pi_S^{\otimes n}U^{\otimes n}\ketbra{\bm{i}}(U^\dagger)^{\otimes n}\Pi_S^{\otimes n}\otimes h(\rho)) \right) \\
    &= \sum_{\sigma \in \mathcal{S}_n}\Tr\left(  (\Pi_S^{\otimes n}U^{\otimes n}P(\sigma)\ketbra{\bm{i}}(U^\dagger)^{\otimes n}\Pi_S^{\otimes n}\otimes R(\sigma) h(\rho)) \right)\\
    &= \sum_{\sigma \in \mathcal{S}_n}\bra{\bm{i}}(U^{\dagger})^{\otimes n}\Pi_S^{\otimes n}U^{\otimes n}P(\sigma)\ket{\bm{i}}\Tr(R(\sigma) h(\rho)).
    \end{align}
    By Schur-Weyl duality (Eq.~\ref{eq:swpermdecomp}), we have $R(\sigma) = \bigoplus_{\lambda\in\mathrm{Part}_{n, d}}\kappa_\lambda(\sigma)\otimes \mathds{1}_{\mathbb{S}_\lambda(\mathcal{H}_H)}$, so
    \begin{align}
      b(S|U, \rho)&= \sum_{\sigma \in \mathcal{S}_n}\bra{\bm{i}}(U^{\dagger})^{\otimes n}\Pi_S^{\otimes n}U^{\otimes n}P(\sigma)\ket{\bm{i}}\Tr(\bigoplus_{\lambda\in\mathrm{Part}_{n, d}}\kappa_\lambda(\sigma)\otimes \mathds{1}_{\mathbb{S}_\lambda(\mathcal{H}_H)} h(\rho)).
    \end{align}
    By Cor.~\ref{cor:permutationinvariantauxiliarystates}, this is
    \begin{align}
      b(S|U, \rho) &= \sum_{\sigma \in \mathcal{S}_n}\bra{\bm{i}}(U^{\dagger})^{\otimes n}\Pi_S^{\otimes n}U^{\otimes n}P(\sigma)\ket{\bm{i}}\Tr\left(\bigoplus_{\lambda\in\mathrm{Part}_{n, d}}\left(\kappa_\lambda(\sigma)\otimes \mathds{1}_{\mathbb{S}_\lambda(\mathcal{H}_H)}\right)\left( q_\lambda\frac{\mathds{1}_{\mathrm{Sp}_\lambda}}{\mathrm{dim}(\lambda)}\otimes \tau_\lambda\right)
 \right)\\
 &= \sum_{\sigma \in \mathcal{S}_n}\bra{\bm{i}}(U^{\dagger})^{\otimes n}\Pi_S^{\otimes n}U^{\otimes n}P(\sigma)\ket{\bm{i}}\sum_{\lambda\in\mathrm{Part}_{n, d}} q_\lambda \frac{\chi_\lambda(\sigma)}{\operatorname{dim}(\lambda)}\\
 &= \sum_{\lambda\in\mathrm{Part}_{n, d}}q_\lambda \overline{\operatorname{Imm}}_\lambda(G(S|U, \bm{i})),
  \end{align}
where \(G\) is the Gram matrix of Eq.~\ref{eq:grammatrixdef}.
\end{proof}

\begin{proposition}
Let $\bm{i}$ be a list of $n$ distinct indices from $[m]$, and $\lambda \vdash n$.
Assume $L \ge n$.
Then the state
\begin{align}
  \phi^{\lambda}_{\bm{i}} = h^{-1}\left(\frac{\mathds{1}_{\mathrm{Sp}_{\lambda}}}{\mathrm{dim}({\lambda})}\otimes \frac{\mathds{1}_{\mathbb{S}_{\lambda}(\mathcal{H}_H)}}{\mathrm{dim}(\mathbb{S}_{\lambda}(\mathcal{H}_H))}\right)
\end{align}
is a permutation-invariant state singly occupying the sites $\bm{i}$ whose auxiliary irrep distribution (Cor.~\ref{cor:permutationinvariantauxiliarystates}) is $q_\mu = \delta_{\mu\lambda}$.
\label{prop:deltastate}
\end{proposition}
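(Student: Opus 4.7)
The plan is to verify the three claims—(a) that $\phi^\lambda_{\bm{i}}$ is a density matrix in $\mathcal{D}(K_{\bm{i}})$ so that it singly occupies $\bm{i}$, (b) that it is permutation invariant, and (c) that its auxiliary irrep distribution is $\delta_{\mu\lambda}$—by transporting each statement through the $\mathcal{S}_n \times \mathrm{U}(\mathcal{H}_H)$-isomorphism $h:K_{\bm{i}} \to \mathcal{H}_H^{\otimes n}$ of Lem.~\ref{lem:hintertwinesfandp} and reading off the structure via Schur-Weyl duality.

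For (a), I would first observe that $h$ maps the orthonormal family $\{a^\dagger(\bm{i},\bm{j})\ket{0}\}_{\bm{j}\in [L]^n}$ bijectively onto the computational basis of $\mathcal{H}_H^{\otimes n}$, so $h$ is unitary and the induced action on operators is trace- and positivity-preserving. The hypothesis $L \ge n$ ensures $\mathbb{S}_\lambda(\mathcal{H}_H)$ has positive dimension for every $\lambda \vdash n$, so the argument inside $h^{-1}$ is a bona fide density matrix on the $\lambda$-isotype. Hence $\phi^\lambda_{\bm{i}} \in \mathcal{D}(K_{\bm{i}})$, and by Def.~\ref{def:kisubspacedef} it singly occupies $\bm{i}$.

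For (b), I would use the intertwining property to push the $F_{\bm{i}}(\sigma)\otimes\mathds{1}_{\mathcal{H}_H}$-action through $h$ to the action of $P(\sigma)$ on $\mathcal{H}_H^{\otimes n}$. By Eq.~\ref{eq:swpermdecomp}, $P(\sigma) = \bigoplus_\mu \kappa_\mu(\sigma) \otimes \mathds{1}_{\mathbb{S}_\mu(\mathcal{H}_H)}$, so conjugating $h(\phi^\lambda_{\bm{i}}) = \frac{\mathds{1}_{\mathrm{Sp}_\lambda}}{\dim(\lambda)} \otimes \frac{\mathds{1}_{\mathbb{S}_\lambda(\mathcal{H}_H)}}{\dim(\mathbb{S}_\lambda(\mathcal{H}_H))}$ by $P(\sigma)$ reduces to conjugating the maximally mixed state on $\mathrm{Sp}_\lambda$ by $\kappa_\lambda(\sigma)$, which is trivial. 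Applying $h^{-1}$ back yields $(F_{\bm{i}}(\sigma)\otimes\mathds{1})\cdot\phi^\lambda_{\bm{i}}\cdot(F_{\bm{i}}(\sigma)\otimes\mathds{1})^{-1} = \phi^\lambda_{\bm{i}}$, which is precisely permutation invariance in the sense of Def.~\ref{def:permutationinvariantstates}.

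For (c), the identification is immediate: by construction $h(\phi^\lambda_{\bm{i}})$ is supported only on the $\lambda$ summand of the Schur-Weyl decomposition of $\mathcal{H}_H^{\otimes n}$, so comparing with the general form $h(\rho) = \bigoplus_\mu q_\mu \frac{\mathds{1}_{\mathrm{Sp}_\mu}}{\dim(\mu)} \otimes \tau_\mu$ provided by Cor.~\ref{cor:permutationinvariantauxiliarystates} forces $q_\mu = \delta_{\mu\lambda}$. There is essentially no serious obstacle: the entire representation-theoretic scaffolding has already been built in Lem.~\ref{lem:hintertwinesfandp} and Cor.~\ref{cor:permutationinvariantauxiliarystates}, so the only care needed is to check that $h$ is genuinely unitary (so the extension to density matrices preserves trace and positivity) and that the hypothesis $L \ge n$ is invoked to make $\mathbb{S}_\lambda(\mathcal{H}_H)$ nonzero.
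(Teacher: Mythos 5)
Your proposal is correct and follows essentially the same route as the paper: transport the claim through the $\mathcal{S}_n\times\mathrm{U}(\mathcal{H}_H)$-isomorphism $h$, observe that the argument of $h^{-1}$ is a density matrix supported on a single Schur--Weyl summand, and invoke the intertwining property together with Cor.~\ref{cor:permutationinvariantauxiliarystates} to read off $q_\mu=\delta_{\mu\lambda}$. The paper's proof is just a terser version of the same argument; your added checks (unitarity of $h$ on the orthonormal family $\{a^\dagger(\bm{i},\bm{j})\ket{0}\}$, and the role of $L\ge n$ in making $\mathbb{S}_\lambda(\mathcal{H}_H)$ nonzero) are details the paper leaves implicit.
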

\begin{proof}
  Since $h$ (Def.~\ref{def:auxiliarystate}) is an isomorphism, $\phi_{\bm{i}}^\lambda$ is indeed a permutation-invariant state singly occupying $\bm{i}$. 
  Since $h(\phi^\lambda_{\bm{i}})$ is supported only on the $\lambda$-isotype and $h^{-1}$ is an intertwiner, $\phi_{\bm{i}}^\lambda$ is supported only on the $\lambda$-isotype by Schur's lemma.
  This concludes the proof.
\end{proof}
\begin{corollary}
  We take the same hypotheses as in Prop.~\ref{prop:deltastate}.
  Let $S \subseteq [m]$.
  Then,
  \begin{align}
    b(S|U, \phi_{\bm{i}}^\lambda) = \overline{\mathrm{Imm}}_{\lambda}(G(S|U, \bm{i})).
  \end{align}
  \label{cor:philambdagenbunchimm}
\end{corollary}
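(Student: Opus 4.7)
The plan is to combine the two preceding results directly: Proposition~\ref{prop:deltastate} identifies the auxiliary irrep distribution of $\phi_{\bm{i}}^\lambda$, and Lemma~\ref{lem:perminvgbisimm} expresses the generalized bunching probability of any permutation-invariant singly-occupied state as a convex combination of normalized immanants weighted by that distribution. The corollary should then drop out by collapsing the sum to a single term.

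More concretely, I would first check that $\phi_{\bm{i}}^\lambda$ satisfies the hypotheses of Lemma~\ref{lem:perminvgbisimm}: by Proposition~\ref{prop:deltastate}, it is permutation invariant and singly occupies $\bm{i}$, so the lemma applies. Next I would invoke Proposition~\ref{prop:deltastate} to assert that the auxiliary irrep distribution of $\phi_{\bm{i}}^\lambda$ is $q_\mu = \delta_{\mu\lambda}$. Substituting this into the expression
\begin{align}
b(S|U, \phi_{\bm{i}}^\lambda) = \sum_{\mu \in \mathrm{Part}_{n,d}} q_\mu\, \overline{\mathrm{Imm}}_\mu(G(S|U, \bm{i}))
\end{align}
from Lemma~\ref{lem:perminvgbisimm} collapses the sum to the single term $\overline{\mathrm{Imm}}_\lambda(G(S|U, \bm{i}))$, giving the claim.

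There is essentially no obstacle here: the hard work has already been done in the proofs of Lemma~\ref{lem:perminvgbisimm} (which unpacks the first-quantized expression for the bunching probability, applies Schur-Weyl duality to the permutation representation, and recognizes the resulting character sum as a normalized immanant of the Gram matrix $G(S|U,\bm{i})$) and Proposition~\ref{prop:deltastate} (which identifies a specific state whose auxiliary state is the maximally mixed state on a single isotype). The only thing worth double-checking is that the normalization in Proposition~\ref{prop:deltastate}, namely $\mathds{1}_{\mathrm{Sp}_\lambda}/\mathrm{dim}(\lambda) \otimes \mathds{1}_{\mathbb{S}_\lambda(\mathcal{H}_H)}/\mathrm{dim}(\mathbb{S}_\lambda(\mathcal{H}_H))$, indeed corresponds to $q_\lambda = 1$ in the convention of Corollary~\ref{cor:permutationinvariantauxiliarystates}, where the decomposition is written as $q_\lambda (\mathds{1}_{\mathrm{Sp}_\lambda}/\mathrm{dim}(\lambda)) \otimes \tau_\lambda$ with $\tau_\lambda$ a state; this is immediate since $\mathds{1}_{\mathbb{S}_\lambda(\mathcal{H}_H)}/\mathrm{dim}(\mathbb{S}_\lambda(\mathcal{H}_H))$ is a valid density matrix on the $\lambda$-isotype of $\mathcal{H}_H^{\otimes n}$. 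So the proof is a one-line substitution.
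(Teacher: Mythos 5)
Your proposal is correct and matches the paper's own proof, which is exactly the two-step application of Prop.~\ref{prop:deltastate} (to get $q_\mu=\delta_{\mu\lambda}$) followed by Lem.~\ref{lem:perminvgbisimm} (to collapse the immanant sum to the single $\lambda$ term). Your extra check of the normalization convention is a sensible sanity check but not a departure from the paper's argument.
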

\begin{proof}
  Apply Prop.~\ref{prop:deltastate} and Lem.~\ref{lem:perminvgbisimm}.
\end{proof}

We also need the generalized bunching probability of the perfectly indistinguishable state in our proof of Thm.~\ref{thm:weakgenbunchequiv}.
\begin{proposition}
  Let $\bm{i}$ be a list of $n$ distinct indices from $[m]$, and let $\phi_{\bm{i}}$ be the perfectly indistinguishable state, defined in Eq.~\ref{eq:perfectlyindsitinguishable}.
  The generalized bunching probability for $\phi_{\bm{i}}$ is
  \begin{align}
    b(S|U, \phi_{\bm{i}}) = \overline{\mathrm{Imm}}_{(n)}(G(S|U, \bm{i})).
  \end{align}
  \label{prop:perfectlyindistinguishablegenbunchprob}
\end{proposition}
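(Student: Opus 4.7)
My plan is to derive the claim as a direct corollary of the earlier machinery, specifically Lemma~\ref{lem:perminvgbisimm} applied to $\phi_{\bm{i}}$. The key observation is that $\phi_{\bm{i}}$ is permutation invariant, so the lemma applies and reduces the task to identifying its auxiliary irrep distribution $q_\lambda$ (Cor.~\ref{cor:permutationinvariantauxiliarystates}). Since $\overline{\mathrm{Imm}}_{(n)}$ is the ordinary permanent (because $\chi_{(n)} \equiv 1$ and $\chi_{(n)}(e) = 1$), the desired identity will follow as soon as I show $q_\mu = \delta_{\mu,(n)}$.

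The first step is to compute $h(\phi_{\bm{i}})$ using Def.~\ref{def:auxiliarystate}. From $\phi_{\bm{i}} = a^\dagger(\bm{i},\bm{1})\ket{0}\!\bra{0}a(\bm{i},\bm{1})$, where $\bm{1} = (1,\ldots,1)$, a direct evaluation of the commutators (Eq.~\ref{eq:wfncommutations}), using that $\bm{i}$ has distinct entries so that $\bra{0}a(\bm{i},\bm{j})a^\dagger(\bm{i},\bm{1})\ket{0} = \delta_{\bm{j},\bm{1}}$, gives
\begin{align}
h(\phi_{\bm{i}}) = \ket{1,\ldots,1}\!\bra{1,\ldots,1} \in \mathcal{H}_H^{\otimes n}.
\end{align}
This state is manifestly invariant under the permutation action $P(\sigma)$ of $\mathcal{S}_n$ on $\mathcal{H}_H^{\otimes n}$, confirming via Lem.~\ref{lem:hintertwinesfandp} that $\phi_{\bm{i}}$ is permutation invariant so Cor.~\ref{cor:permutationinvariantauxiliarystates} and Lem.~\ref{lem:perminvgbisimm} both apply.

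Next I identify the auxiliary irrep distribution. Since $\ket{1,\ldots,1}$ lies in the fully symmetric subspace $\mathrm{Sym}^n(\mathcal{H}_H)$, which under Schur--Weyl duality (Eq.~\ref{eq:swdualitymodules}) coincides with the $\lambda = (n)$ isotype (noting $\dim(\mathrm{Sp}_{(n)}) = 1$ and $\mathbb{S}_{(n)}(\mathcal{H}_H) = \mathrm{Sym}^n(\mathcal{H}_H)$), the decomposition of $h(\phi_{\bm{i}})$ in Cor.~\ref{cor:permutationinvariantauxiliarystates} has all of its weight on $\lambda = (n)$, giving $q_\mu = \delta_{\mu,(n)}$. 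Plugging this into Lem.~\ref{lem:perminvgbisimm} then yields
\begin{align}
b(S|U,\phi_{\bm{i}}) = \sum_{\lambda \in \mathrm{Part}_{n,d}} q_\lambda\, \overline{\mathrm{Imm}}_\lambda(G(S|U,\bm{i})) = \overline{\mathrm{Imm}}_{(n)}(G(S|U,\bm{i})),
\end{align}
as claimed.

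There is no serious obstacle; the only nonformal piece is verifying that $\ket{1,\ldots,1}\!\bra{1,\ldots,1}$ is supported in the $(n)$-isotype, which is immediate from its symmetry. As a sanity check, one can reach the same conclusion without invoking the isotypic decomposition: Lem.~\ref{lem:genbunchfirstquantized} with $h(\phi_{\bm{i}}) = \ket{1,\ldots,1}\!\bra{1,\ldots,1}$, together with the expansion $n!\,\Pi_{\mathrm{Sym}^n} = \sum_\sigma P(\sigma)$ and the fact that $\ket{1,\ldots,1}$ is fixed by $P(\sigma)$ on $\mathcal{H}_H^{\otimes n}$, reduces $b(S|U,\phi_{\bm{i}})$ to $\sum_\sigma \prod_{k=1}^n G(S|U,\bm{i})_{k,\sigma^{-1}(k)} = \mathrm{perm}(G(S|U,\bm{i}))$, which matches $\overline{\mathrm{Imm}}_{(n)}(G(S|U,\bm{i}))$.
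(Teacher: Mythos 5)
Your proposal is correct. Your primary route differs mildly from the paper's: after computing $h(\phi_{\bm{i}})=\ketbra{1^{n}}{1^{n}}$ (which matches the paper exactly), you identify the auxiliary irrep distribution as $q_{\mu}=\delta_{\mu,(n)}$ via the observation that $\ket{1^{n}}$ lies in the symmetric subspace, i.e.\ the $(n)$-isotype of Schur--Weyl duality, and then invoke Lem.~\ref{lem:perminvgbisimm}; the paper instead plugs $h(\phi_{\bm{i}})$ directly into Lem.~\ref{lem:genbunchfirstquantized} and uses $\Tr(R(\sigma)\ketbra{1^{n}}{1^{n}})=1=\chi_{(n)}(\sigma)$ to recognize the permanent, never needing the isotypic decomposition at this point. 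Your route buys conceptual uniformity --- it treats $\phi_{\bm{i}}$ on the same footing as the states $\phi_{\bm{i}}^{\lambda}$ of Prop.~\ref{prop:deltastate} and Cor.~\ref{cor:philambdagenbunchimm} --- at the cost of importing Cor.~\ref{cor:permutationinvariantauxiliarystates} (and implicitly its permutation-invariance hypothesis, which you correctly verify); the paper's computation is more self-contained and works with no constraint relating $L$ and $n$. Your closing ``sanity check'' is in fact precisely the paper's proof, so nothing is missing either way.
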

We note that $\overline{\mathrm{Imm}}_{(n)}$ is the permanent.
\begin{proof}
  We calculate the auxiliary state $h(\phi_{\bm{i}})$ using Def.~\ref{def:auxiliarystate}.
  Write $1^n$ for the string containing $n$ many $1$s.
  Then,
  \begin{align}
    h(\phi_{\bm{i}}) &= \sum_{\bm{j}, \bm{j}'}\Tr(\phi_{\bm{i}} a^\dagger(\bm{i},\bm{j})\ketbra{0}a(\bm{i}, \bm{j}'))\ketbra{\bm{j}}{\bm{j}'}\\
    &= \sum_{\bm{j}, \bm{j}'}\Tr(a^\dagger(\bm{i}, (1^n))\ketbra{0}a(\bm{i}, (1^n)) a^\dagger(\bm{i},\bm{j})\ketbra{0}a(\bm{i}, \bm{j}'))\ketbra{\bm{j}}{\bm{j}'}\\
    &= \ketbra{1^n}.
  \end{align}
  We apply Lem.~\ref{lem:genbunchfirstquantized} to obtain
  \begin{align}
    b(S|U, \rho) &= \sum_{\sigma \in \mathcal{S}_n}\bra{\bm{i}}(U^{\dagger})^{\otimes n}\Pi_S^{\otimes n}U^{\otimes n}P(\sigma)\ket{\bm{i}}\Tr(R(\sigma) h(\rho))\\
    &= \sum_{\sigma \in \mathcal{S}_n}\bra{\bm{i}}(U^{\dagger})^{\otimes n}\Pi_S^{\otimes n}U^{\otimes n}P(\sigma)\ket{\bm{i}}\\
&= \overline{\mathrm{Imm}}_{(n)}(G(S|U, \bm{i})),
  \end{align}
  since $\chi_{(n)}(\sigma) = 1$ for all $\sigma \in \mathcal{S}_n$.
\end{proof}

\section{Proof of equivalence of conjectures}
\label{sec:weakgenbunchproof}

We show in this section that the weak generalized-bunching conjecture is equivalent to Lieb's permanental-dominance conjecture for immanants.
\weakgenequivlieb*
\begin{proof}
  First, we show that Lieb's conjecture implies the weak generalized-bunching conjecture. Let $\bm{i}$ be a list of $n$ distinct indices from $[m]$, and let $\rho$ be a permutation-invariant state (Def.~\ref{def:permutationinvariantstates}) that singly occupies the sites $\bm{i}$.
  Let $S \subseteq [m]$.
  Let $U \in \mathrm{U}(\mathcal{H}_V)$.
By Lem.~\ref{lem:perminvgbisimm}, we can express the generalized bunching probability in terms of the Gram matrix $G$ (Eq.~\ref{eq:grammatrixdef}) and the auxiliary irrep distribution $\{q_\lambda\}_\lambda$ of $\rho$ according to
  \begin{align}
    b(S|U, \rho) &= \sum_{\lambda\in\mathrm{Part}_{n, d}}q_\lambda \overline{\operatorname{Imm}}_\lambda(G(S|U, \bm{i})).
  \end{align}
  By Prop.~\ref{prop:perfectlyindistinguishablegenbunchprob}, we have $\overline{\mathrm{Imm}}_{(n)}(G(S|U, \bm{i})) = b(S|U, \phi_{\bm{i}})$, where $\phi_{\bm{i}}$ is defined in Eq.~\ref{eq:perfectlyindsitinguishable}.
  Since it is a Gram matrix, $G(S|U, \bm{i})$ is positive semidefinite, and therefore, by Lieb's conjecture, we have
  \begin{align}
    b(S|U, \rho) \le \overline{\mathrm{Imm}}_{(n)}(G(S|U, \bm{i})) = b(S|U, \phi_{\bm{i}}).
  \end{align}

To prove the converse, let $M$ be a nonzero $n \times n$ positive-semidefinite matrix.
 Write $N = M / \norm{M}$, where $\norm{M}$ is the spectral norm of $M$.
 This ensures that \(\mathds{1}-N\) is positive semidefinite.
 Construct the isometry
 \begin{align}
   V = \begin{pmatrix} \sqrt{N} \\ \sqrt{\mathds{1}-N}\end{pmatrix},
 \end{align}
then append an orthonormal basis of the nullspace of $V$ as columns to get a unitary matrix $U$.
Since the choices of \(\mathcal{H}_{H}\) and $\mathcal{H}_V$ are arbitrary in the statement of the weak generalized-bunching conjecture, we can choose $\mathrm{dim}(\mathcal{H}_H) = n$ so that $\phi_{\bm{i}}^\lambda$ (Prop.~\ref{prop:deltastate}) is well-defined, and we can choose $\mathrm{dim}(\mathcal{H}_V) = 2n$, so the unitary $U\in \mathrm{U}(\mathcal{H}_V)$.
Now let $S = \left\{ 1, \ldots, n \right\}$, $\bm{i} = \left( 1, \ldots, n \right)$, and calculate the Gram matrix (Eq.~\ref{eq:grammatrixdef})
\begin{align}
  G(S|U, \bm{i})_{x, y} &= \sum_{s \in S}U^*_{s, i_x}U_{s, i_y}\\
  &= \sum_{s \in [n]}(\sqrt{N})^*_{s, x}(\sqrt{N})_{s, y}\\
  &= N_{x, y}
\end{align}
By Cor.~\ref{cor:philambdagenbunchimm}, we have that $b(S|U, \phi_{\bm{i}}^\lambda) = \overline{\mathrm{Imm}}_\lambda(G(S|U, \bm{i}))$.
Then by the weak generalized-bunching conjecture applied to $U$ and the subset $S$, 
\begin{align}
   \norm{M}^n \frac{1}{\chi_\lambda(e)}\sum_{\sigma \in \mathcal{S}_n}\chi_\lambda(\sigma)\prod_{x=1}^n N_{x, \sigma(x)} = 
   \norm{M}^nb(S|U, \phi^{\lambda}_{\bm{i}}) &\le \norm{M}^n b(S|U, \phi_{\bm{i}}) = \norm{M}^n \overline{\operatorname{Imm}}_{(n)}(N)\\
   \implies \frac{1}{\chi_\lambda(e)}\sum_{\sigma \in \mathcal{S}_n}\chi_\lambda(\sigma)\prod_{x=1}^n M_{x, \sigma(x)} &\le \overline{\operatorname{Imm}}_{(n)}(M).
 \end{align}
 \end{proof}

\section{Partial order on partially labelled states}
\label{sec:proofofpartiallylabelled}

We show that if the weak generalized-bunching conjecture (Conj.~\ref{conj:weakgenbunch}) holds, then the generalized bunching probability for partially labelled states (Def.~\ref{def:partiallylabelledstate}) is monotonic with respect to the refinement partial order (Def.~\ref{def:refinementofpartitions}).

Write $\mathcal{Q}_\mu(q)$ for the set of ordered partitions of the set $q$ into sets with sizes given by $\mu\vdash |q|$.
Writing $\mathrm{len}(\mu) = l$, elements of $\mathcal{Q}_\mu(q)$ are maps $R:[l]\rightarrow \mathcal{P}(q)$ to the powerset of $q$ such that $i \neq j \implies R(i)\cap R(j) = \emptyset$,  $\bigcup_{i \in [n]}R(i) = q$, and $|R(i)| = \mu_i$.
When we write $\mathcal{Q}_\mu(\bm{i})$ for a list of indices $\bm{i}$, we mean $\mathcal{Q}_\mu(\mathrm{Set}(\bm{i}))$, where $\mathrm{Set}(\bm{i}) = \left\{ i_x|x\in[n] \right\}$ is the set of elements of $\bm{i}$.
For an ordered partition $R \in \mathcal{Q}_\mu(q)$, we write $r \in R$ to mean $r$ is an element of the image of $R$.

Recall the definition of the perfectly indistinguishable state $\phi_{\bm{i}}$ from Eq.~\ref{eq:perfectlyindsitinguishable}.
Since $\phi_{\bm{i}}$ does not depend on the ordering of the elements of the list of indices $\bm{i}$, we also define it for when $\bm{i}$ is a set of indices, rather than a list.
\begin{proposition}
  Let $\bm{i}$ be a list of $n$ distinct indices from $[m]$, and  $\mu \vdash n$ be a partition of $n$ such that $\mathrm{len}(\mu) = l\le L$.
  Let $S \subseteq [m]$, and let $U\in \mathrm{U}(\mathcal{H}_V)$.
  Then the generalized bunching probability for the partially labelled state
  \(\rho_{\bm{i},\mu}\) (Def.~\ref{def:partiallylabelledstate}) is
  \begin{align}
    b(S|U, \rho_{\bm{i},\mu}) = \frac{\mu!}{n!}\sum_{R \in \mathcal{Q}_\mu(\bm{i})}\prod_{r \in R}b(S|U, \phi_{r})
  \end{align}
  \label{prop:buchpartiallylabelled}
\end{proposition}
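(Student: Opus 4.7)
The plan is to reindex the sum over \(\sigma\in\mathcal{S}_{n}\) that appears in Def.~\ref{def:partiallylabelledstate} as a sum over ordered partitions \(R\in\mathcal{Q}_{\mu}(\bm{i})\), and then to use the fact that bosons carrying orthogonal hidden labels do not interfere to factor the bunching probability of each summand as a product over the blocks of \(R\).

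For the reindexing, I would first observe that the creation operators in \(a^{\dagger}(\sigma\cdot\bm{i},\overline{\mu})\) commute among themselves (they occupy pairwise distinct mode-label pairs because \(\bm{i}\) has distinct entries), so the vector \(a^{\dagger}(\sigma\cdot\bm{i},\overline{\mu})\ket{0}\) depends only on the label-to-site assignment \(R_{\sigma}(j)=\{i_{\sigma(k)}:\overline{\mu}_{k}=j\}\). Exactly the elements of the Young subgroup \(\prod_{j}\mathcal{S}_{\mu_{j}}\), acting by right-multiplication on \(\sigma\), leave \(R_{\sigma}\) invariant, so each \(R\in\mathcal{Q}_{\mu}(\bm{i})\) arises from exactly \(\mu!\) permutations. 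Writing \(\ket{R}\) for the common vector, this would yield
\begin{equation*}
\rho_{\bm{i},\mu}=\frac{\mu!}{n!}\sum_{R\in\mathcal{Q}_{\mu}(\bm{i})}\ket{R}\bra{R},
\end{equation*}
so by linearity of \(b(S|U,\cdot)\) it suffices to prove the factorization \(b(S|U,\ket{R}\bra{R})=\prod_{r\in R}b(S|U,\phi_{r})\) for each \(R\).

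For this factorization, I would apply Lem.~\ref{lem:genbunchfirstquantized} to \(\rho=\ket{R}\bra{R}\). A direct computation from Def.~\ref{def:auxiliarystate} gives \(h(\ket{R}\bra{R})=\ket{\bm{j}_{R}}\bra{\bm{j}_{R}}\), where \(\bm{j}_{R}\) is any label list with \((\bm{j}_{R})_{k}=j\) whenever \(i_{k}\in R(j)\). Expanding \(\Pi_{\mathrm{Sym}^{n}}=\frac{1}{n!}\sum_{\sigma}P(\sigma)\) and splitting \(P(\sigma)\) as \(P_{V}(\sigma)\otimes P_{H}(\sigma)\) under the tensor factorization into site and label parts, the label-factor trace \(\bra{\bm{j}_{R}}P_{H}(\sigma)\ket{\bm{j}_{R}}\) becomes the indicator that \(\sigma\) stabilizes \(\bm{j}_{R}\); this stabilizer is precisely the Young subgroup \(\mathcal{S}_{\mu}(R)=\prod_{j}\mathcal{S}_{R(j)}\), whose factors act on disjoint sets of positions. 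Because \(\ket{\bm{i}}\) is a product state and \(M:=U^{\dagger}\Pi_{S}U\) satisfies \(M_{i_{x},i_{y}}=G(S|U,\bm{i})_{xy}\) (Eq.~\ref{eq:grammatrixdef}), the site-factor trace \(\bra{\bm{i}}(U^{\dagger})^{\otimes n}\Pi_{S}^{\otimes n}U^{\otimes n}P_{V}(\sigma)\ket{\bm{i}}\) equals \(\prod_{k}M_{i_{k},i_{\sigma^{-1}(k)}}\), which factorizes across the blocks of \(R\) whenever \(\sigma\in\mathcal{S}_{\mu}(R)\). Summing over \(\mathcal{S}_{\mu}(R)\) turns each block's factor into the permanent of the corresponding principal submatrix of \(G\), which by Prop.~\ref{prop:perfectlyindistinguishablegenbunchprob} is \(b(S|U,\phi_{r})\).

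The main obstacle is bookkeeping rather than ideas: one must verify carefully that the \(\mu!\) overcount arises from the correct Young subgroup, and that the tensor-factor reordering sending \(\mathcal{H}_{1}^{\otimes n}\) to \(\mathcal{H}_{V}^{\otimes n}\otimes\mathcal{H}_{H}^{\otimes n}\) correctly commutes with the symmetrizer and with the factored operator appearing in Lem.~\ref{lem:genbunchfirstquantized}. Once these are in hand, the physical content — that orthogonal labels decouple the dynamics and the measurement — makes the computation essentially automatic.
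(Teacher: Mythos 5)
Your proposal is correct and follows essentially the same route as the paper's proof: decompose $\rho_{\bm{i},\mu}$ via linearity, compute the auxiliary state as a projector onto a product label list, apply Lem.~\ref{lem:genbunchfirstquantized}, identify the surviving permutations as the Young subgroup stabilizing the label assignment, and factor the resulting sum over blocks into permanents of principal submatrices of $G(S|U,\bm{i})$, i.e.\ into $\prod_{r\in R} b(S|U,\phi_r)$. The only cosmetic difference is that you perform the $\mu!$ coset-counting at the level of the state decomposition up front, whereas the paper sums over all $n!$ permutations and collapses to a coset transversal of $\mathcal{S}_n/S(\overline{\mu})$ at the end.
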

\begin{proof} 
  First we define
   \begin{align}
     \rho_{\bm{i}, \mu, \sigma} = a^\dagger(\sigma^{-1}\cdot\bm{i}, \overline{\mu})\ketbra{0}{0}a(\sigma^{-1}\cdot\bm{i},\overline{\mu}),
   \end{align}
   so that $\rho_{\bm{i}, \mu} = \frac{1}{n!}\sum_{\sigma \in \mathcal{S}_n} \rho_{\bm{i}, \mu, \sigma}$.
From Lem.~\ref{lem:genbunchfirstquantized} we have
  \begin{align}
    b(S|U, \rho_{\bm{i}, \mu}) &= \Tr\left( n!\Pi_{\mathrm{Sym}^n}\left( \Pi_S^{\otimes n}U^{\otimes n}\ketbra{\bm{i}}{\bm{i}}(U^\dagger)^{\otimes n}\Pi_S^{\otimes n}\otimes h(\rho_{\bm{i}, \mu}) \right) \right)\\
    &= \frac{1}{n!}\sum_{\sigma\in\mathcal{S}_n}\Tr\left( n!\Pi_{\mathrm{Sym}^n}\left( \Pi_S^{\otimes n}U^{\otimes n}\ketbra{\bm{i}}{\bm{i}}(U^\dagger)^{\otimes n}\Pi_S^{\otimes n}\otimes h(\rho_{\bm{i}, \mu, \sigma}) \right) \right)\\
    &= \frac{1}{n!}\sum_{\sigma\in\mathcal{S}_n}b(S|U, \rho_{\bm{i}, \mu,\sigma}),
    \label{eq:partiallylabelledgenbunchfirst}
  \end{align}
  since $h$ (Def.~\ref{def:auxiliarystate}) is a linear isomorphism by Prop.~\ref{prop:hliniso}.
  Then,
  \begin{align}
    h(\rho_{\bm{i}, \mu,\sigma^{-1}}) &= \sum_{\bm{j}, \bm{j}'}\Tr(\rho_{\bm{i}, \mu,\sigma} a^\dagger(\bm{i},\bm{j})\ketbra{0}a(\bm{i}, \bm{j}'))\ketbra{\bm{j}}{\bm{j}'}\\
    &= \sum_{\bm{j}, \bm{j}'}\Tr(a^\dagger(\sigma\cdot\bm{i}, \overline{\mu})\ketbra{0}{0}a(\sigma\cdot\bm{i},\overline{\mu})a^\dagger(\bm{i},\bm{j})\ketbra{0}a(\bm{i}, \bm{j}'))\ketbra{\bm{j}}{\bm{j}'}.
  \end{align}
    Define $\delta(i, j) = \delta_{i,j}$ to make the arguments easier to read.
    Continuing the calculation and making use of Lem.~\ref{lem:productinnerproduct},
    \begin{align}
  \sum_{\bm{j}, \bm{j}'}&\Tr(a^\dagger(\sigma\cdot\bm{i}, \overline{\mu})\ketbra{0}{0}a(\sigma\cdot\bm{i},\overline{\mu})a^\dagger(\bm{i},\bm{j})\ketbra{0}a(\bm{i}, \bm{j}'))\ketbra{\bm{j}}{\bm{j}'}\\
  &=\sum_{\bm{j}, \bm{j}'}\sum_{\tau,\tau'\in \mathcal{S}_n}\delta(\sigma\cdot\bm{i},\tau\cdot\bm{i})\delta(\overline{\mu},\tau\cdot\bm{j})\delta(\sigma\cdot\bm{i},\tau'\cdot\bm{i})\delta(\overline{\mu},\tau'\cdot\bm{j})\ketbra{\bm{j}}{\bm{j}'}\\
      &= \sum_{\bm{j}, \bm{j}'}\sum_{\tau,\tau'\in \mathcal{S}_n}\delta(\sigma,\tau)\delta(\tau^{-1}\cdot\overline{\mu},\bm{j})\delta(\sigma,\tau')\delta((\tau')^{-1}\cdot\overline{\mu},\bm{j})\ketbra{\bm{j}}{\bm{j}'}\\
      &= \ketbra{\sigma^{-1}\cdot\overline{\mu}}{\sigma^{-1}\cdot\overline{\mu}}.
    \end{align}
    Therefore, by Lem.~\ref{lem:genbunchfirstquantized},
  \begin{align}
    b(S|U, \rho_{\bm{i}, \mu, \sigma}) 
    &= \Tr\left( n!\Pi_{\mathrm{Sym}^n}\left( \Pi_S^{\otimes n}U^{\otimes n}\ketbra{\bm{i}}{\bm{i}}(U^\dagger)^{\otimes n}\Pi_S^{\otimes n}\otimes \ketbra{\sigma\cdot\overline{\mu}}{\sigma\cdot\overline{\mu}} \right) \right)\\
    &= \sum_{\tau\in \mathcal{S}_n}\Tr\left(  \Pi_S^{\otimes n}U^{\otimes n}\ketbra{\tau\cdot\bm{i}}{\bm{i}}(U^\dagger)^{\otimes n}\Pi_S^{\otimes n}\otimes \ketbra{\tau\sigma\cdot\overline{\mu}}{\sigma\cdot\overline{\mu}} \right)\\
    &= \sum_{\tau\in \mathcal{S}_n} \bra{\bm{i}}(U^\dagger)^{\otimes n}\Pi_S^{\otimes n}U^{\otimes n}\ket{\tau\cdot\bm{i}} \delta(\tau\sigma\cdot\overline{\mu},\sigma\cdot\overline{\mu})\\
    &= \sum_{\tau\in S(\sigma\cdot\overline{\mu})} \bra{\bm{i}}(U^\dagger)^{\otimes n}\Pi_S^{\otimes n}U^{\otimes n}\ket{\tau\cdot\bm{i}},
    \label{eq:introductionofstabilizer}
  \end{align}
  where $S(\sigma\cdot\overline{\mu}) = \left\{ \tau\in \mathcal{S}_n|\tau\cdot(\sigma \cdot \overline{\mu}) = \sigma \cdot \overline{\mu} \right\}$.  
  Define the ordered partition $R_\sigma\in \mathcal{Q}_\mu(\bm{i})$ by $R_\sigma(x) = \left\{ i_v|v \in [n] \text{ s.t. }(\sigma\cdot \overline{\mu})_v = x \right\}$.
  Observe that the elements of the stabilizer $S(\sigma\cdot \overline{\mu})$ are permutations of $[n]$ that preserve the ordered partition $R_\sigma$.
 That is, if $\tau \in S(\sigma\cdot \overline{\mu})$, then $\tau$ can be expressed as $\tau_{R_\sigma(1)}\cdots\tau_{R_\sigma(l)}$, where each $\tau_{R_\sigma(x)}$ acts nontrivially only on $R_\sigma(x)$. 
 Now define $\ket{\bm{i}^{(x)}}$ to be the tensor product of the factors of $\ket{\bm{i}}$ corresponding to $R_\sigma(x)$, and define $\ket{1^k} = \ket{1}\otimes \cdots\otimes \ket{1}\in \mathcal{H}_H^{\otimes k}$. This shows that the right-hand side of Eq.~\ref{eq:introductionofstabilizer} can be written as
  \begin{align}
    b(S|U, \rho_{\bm{i}, \mu, \sigma})   
    &= \prod_{x\in[l]}\sum_{\tau_x\in \mathcal{S}_{\mu_x}}\bra{\bm{i}^{(x)}}(U^\dagger)^{\otimes \mu_x}\Pi_S^{\otimes \mu_x}U^{\otimes \mu_x}\ket{\tau_x\cdot\bm{i}^{(x)}}\\
 &=\prod_{x\in[l]}\sum_{\tau_x\in \mathcal{S}_{\mu_x}}\bra{\bm{i}^{(x)}}(U^\dagger)^{\otimes \mu_x}\Pi_S^{\otimes \mu_x}U^{\otimes \mu_x}\ket{\tau_x\cdot\bm{i}^{(x)}}\braket{1^{\mu_x}|\tau_x\cdot 1^{\mu_x}}\\
 &=\prod_{x\in[l]}\Tr\left( \mu_x!\Pi_{\mathrm{Sym}^{\mu_x}}\left( \Pi_S^{\otimes \mu_x}U^{\otimes \mu_x}\ketbra{\bm{i}^{(x)}}{\bm{i}^{(x)}}(U^\dagger)^{\otimes \mu_x}\Pi_S^{\otimes \mu_x}\otimes \ketbra{1^{\mu_x}}{1^{\mu_x}} \right) \right).
\end{align}
We have that $\ketbra{1^{\mu_x}}{1^{\mu_x}} = h(\phi_{R_\sigma(x)})$, so
\begin{align}
  b(S|U, \rho_{\bm{i}, \mu, \sigma})   
  &=\prod_{r \in R_\sigma}b(S|U, \phi_{r}).
\end{align}
Let $S(\overline{\mu})$ be the stabilizer of $\overline{\mu}$ under the action of $\mathcal{S}_n$, and let $A$ be a coset transversal of $\mathcal{S}_n/S(\overline{\mu})$.
If \(\sigma'\) and \(\sigma\) are in the same coset of \(S(\overline{\mu})\), then $\rho_{\bm{i},\mu,\sigma} = \rho_{\bm{i},\mu,\sigma'}$.
Therefore, returning to Eq.~\ref{eq:partiallylabelledgenbunchfirst},
\begin{align}
  b(S|U, \rho_{\bm{i}, \mu}) 
  &= \frac{1}{n!}\sum_{\sigma\in\mathcal{S}_n} b(S|U, \rho_{\bm{i}, \mu, \sigma})\\
  &= \frac{\mu!}{n!}\sum_{\pi\in A} b(S|U, \rho_{\bm{i}, \mu, \pi})\\
  &= \frac{\mu!}{n!}\sum_{\pi\in A} \prod_{r \in R_\pi}b(S|U, \phi_r)
\end{align}
We claim that the map $f:A\to \mathcal{Q}_\mu(\bm{i})$ given by $f(\pi) = R_\pi$ is a bijection between $A$ and $\mathcal{Q}_\mu(\bm{i})$.
First we show that $f$ is injective.
Suppose $\pi, \pi'\in A$ are such that $R_\pi = R_{\pi'}$.
Then $\pi\cdot\overline{\mu} = \pi'\cdot\overline{\mu}$, so since $A$ is a coset transversal, $\pi = \pi'$.
To show $f$ is surjective, let $R \in \mathcal{Q}_\mu(\bm{i})$.
Then let $c_R$ to be the list such that, for all $x \in [l]$, $v \in R(x) \implies (c_R)_v = x$.
Then $c_R$ is in the orbit of $\overline{\mu}$.
By the orbit-stabilizer theorem, the orbit of $\overline{\mu}$ is in bijection with $A$, and therefore there exists $\pi \in A$ such that $\pi\cdot \overline{\mu} = c_R$.
Therefore,
\begin{align}
  b(S|U, \rho_{\bm{i}, \mu}) &= \frac{\mu!}{n!}\sum_{R\in\mathcal{Q}_\mu(\bm{i})}\prod_{r\in R}b(S|U, \phi_{r}).
\end{align}
\end{proof}

We now state Thm.~\ref{thm:refinementmonotonicity} again and prove it.
\refinementmonotonicity*
\begin{proof}
  According to Prop.~\ref{prop:buchpartiallylabelled},
  \begin{align}
    b(S|U, \rho_{\bm{i},\lambda}) &=  \frac{\lambda!}{n!}\sum_{R \in \mathcal{Q}_\lambda(\bm{i})}\prod_{r \in R}b(S|U, \phi_r).
  \end{align}
  Let $N = \mathrm{len}(\lambda)$.
  Since $\lambda \trianglerighteq \mu$ (Def.~\ref{def:refinementofpartitions}), there exist partitions $\nu^{(x)}\vdash \lambda_x$ for $x \in [N]$ such that $\mu = \sum_{x \in [N]}\nu^{(x)}$.
  Since $\rho_{R(x), \nu^{(x)}}$ is a permutation-invariant state, by the weak generalized-bunching conjecture we have
  \begin{align}
    b(S|U, \phi_{R(x)}) &\ge b(S|U, \rho_{R(x), \nu^{(x)}}) \\
    &= \frac{\nu^{(x)}!}{|R(x)|!}\sum_{Q_x \in \mathcal{Q}_{\nu^{(x)}}(R(x))}\prod_{q_x \in Q_x}b(S|U, \phi_{q_x}).
  \end{align}
  Therefore,
  \begin{align}
    b(S|U, \rho_{\bm{i},\lambda})&\ge\underbrace{\frac{\lambda!}{n!}\sum_{R \in \mathcal{Q}_\lambda(\bm{i})}\prod_{x \in [N]}\frac{\nu^{(x)}!}{|R(x)|!}\sum_{Q_x \in \mathcal{Q}_{\nu^{(x)}}(R(x))}\prod_{q_x \in Q_x}b(S|U, \phi_{q_x})}_X
 \label{eq:refinementmonotonicity_ineq}
  \end{align}
  Write $X$ for the right-hand side of the above display.
  Now note that $\prod_{x\in [N]}|R(x)|! = \lambda!$, and $\prod_{x\in [N]}\nu^{(x)}! = \mu!$, so we have
  \begin{align}
    X &= \frac{\mu!}{n!}\sum_{R \in \mathcal{Q}_\lambda(\bm{i})}\underbrace{\prod_{x \in [N]}\sum_{Q_x \in \mathcal{Q}_{\nu^{(x)}}(R(x))}}\prod_{q_x \in Q_x}b(S|U, \phi_{q_x}).
    \label{eq:refinementmonotonicityproofmiddle}
  \end{align}
    We apply the distributive law to the underbraced pair of product and sum.
    To describe the resulting sum of product, let $\mathcal{F}$ be the set of functions \(f\) with domain \([N]\) for which there exists an $R \in \mathcal{Q}_\lambda(\bm{i})$ such that, for all \(x\in[N]\), \(f(x)\in\mathcal{Q}_{\nu^{(x)}}(R(x))\).
    For \(x\in[N]\), the ordered partition \(f(x)\) is indexed by \([\mathrm{len}(\nu^{(x)})]\) and the set \(f(x)(y)\) is of size \(\nu^{(x)}_{y}\).
    Let \(P\) be the set of pairs \((x,y)\) with \(x\in[N]\) and \(y\in[\mathrm{len}(\nu^{(x)})]\).
    We can now express \(X\) as
    \begin{align}
      X&= \frac{\mu!}{n!} \sum_{f\in\mathcal{F}}\prod_{x\in[N]}\prod_{q\in f(x)}b(S|U,\phi_{q}) \\
      &=\frac{\mu!}{n!} \sum_{f\in\mathcal{F}}\prod_{x\in[N]}\prod_{y\in[\mathrm{len}(\nu^{(x)})]}b(S|U,\phi_{f(x)(y)}) \\
       &=\frac{\mu!}{n!} \sum_{f\in\mathcal{F}}\prod_{(x,y)\in P}b(S|U,\phi_{f(x)(y)}) .
    \end{align}
    Because \(\mu=\sum_{x\in[N]}\nu^{(x)}\), \(|P|=\mathrm{len}(\mu)\).
    Let \(\{(x_{i},y_{i})\}_{i\in[\mathrm{len}(\mu)]}\) be an indexing of \(P\) such that \(i \mapsto \nu^{(x_{i})}_{y_{i}}\) is monotonically nonincreasing.
    By construction, for each \(f\in\mathcal{F}\), the map \(i\in[\mathrm{len}(\mu)] \mapsto f(x_{i})(y_{i})\) is a distinct ordered partition in \(\mathcal{Q}_{\mu}(\bm{i})\). 
    Conversely, for \(Q\in\mathcal{Q}_{\mu}(\bm{i})\), we can define \(f\) by \(f(x_{i})(y_{i})=Q(i)\) to obtain a member of \(\mathcal{F}\).
    It follows that the sum over \(\mathcal{F}\) can be replaced with a sum over \(\mathcal{Q}_{\mu}(\bm{i})\):
    \begin{align}
    X&= \frac{\mu!}{n!}\sum_{R \in \mathcal{Q}_\mu(\bm{i})}\prod_{r \in R}b(S|U, \phi_r)\\
    &= b(S|U, \rho_{\bm{i},\mu}).
    \end{align}
    The theorem follows by substituting \(X\) for the right-hand side of
    Eq.~\eqref{eq:refinementmonotonicity_ineq}.
\end{proof}

\section{More preliminaries}
\label{sec:moreprelims}
We need the following fact for the proof of Cor.~\ref{cor:schurconvexity}.
\begin{fact}
  Suppose $\lambda \vdash n$ has a box $\square$ with content $r < 0$.
Then for each $t \in \left\{ r, \ldots, 0 \right\}$, there exists some box $b_t$ in $\lambda$ such that $c(b_t) = t$. 
\label{fact:allcontents}
\end{fact}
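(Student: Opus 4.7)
The plan is to exhibit, for each required content value, an explicit box in \([\lambda]\) lying in the same column as the given box \(\square\). Write \(\mathrm{coord}(\square) = (i,j)\), so that \(c(\square) = j - i = r < 0\); in particular \(i > j\). Because \(\square\) is a box of the Young diagram \(\lambda\), we have \(\lambda_i \ge j\), and since the row lengths of a partition are nonincreasing, \(\lambda_k \ge j\) for every \(k \le i\).

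For each \(t \in \{r, r+1, \ldots, 0\}\), define \(k_t := j - t\). Since \(r \le t \le 0\), we have \(j \le k_t \le j - r = i\), so \(1 \le k_t \le i\). By the monotonicity observation above, \(\lambda_{k_t} \ge j\), so the box \(b_t\) with coordinates \((k_t, j)\) lies in \([\lambda]\). Its content is \(c(b_t) = j - k_t = j - (j - t) = t\), as required.

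The argument is essentially a one-line observation, so there is no real obstacle; the only thing to keep straight is the sign convention (content is \(\text{column} - \text{row}\), so a negative content corresponds to the column index being smaller than the row index, and one climbs up the column rather than along the row to sweep contents up to \(0\)).
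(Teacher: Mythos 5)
Your proof is correct and follows essentially the same route as the paper's: both arguments walk up column $j$ from the given box, observing that every box $(k,j)$ with $k\le i$ lies in $[\lambda]$ because the row lengths are nonincreasing, and that these boxes sweep out all contents from $r$ up to $j-1\ge 0$. The only difference is cosmetic --- you index the boxes by the target content $t$ rather than by the row index.
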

\begin{proof}
Suppose the coordinates of $\square$ are $(i, j)$.
The boxes with coordinates
  \( (i', j)\) where \(i'=1,\ldots, i\) are in \([\lambda]\), and their contents are
\(r=j-i,\ldots, r+i-1=j-1\geq 0\).
\end{proof}

\subsection{Representation theory of the unitary group}

We make use of the Gelfand-Tsetlin (GZ) basis of the irreps of the unitary group.
See Ref.~\citenum{wright2016learn}, Sec. 2.4 for a good review of the GZ basis.
See Ref.~\citenum{harrowApplicationsCoherentClassical2005} for applications of the GZ basis in quantum information theory.
See Ref.~\citenum{goodmanSymmetryRepresentationsInvariants2009} for a good textbook on the representation theory of the unitary group.
The GZ basis of the irrep $\mathrm{V}_\lambda^d$ is an orthonormal basis enumerated by semistandard tableaux of shape $\lambda$ with entries from $[d]$.
  If $T$ is a semistandard tableau of shape $\lambda$ with entries from $[d]$, we write $T \vDash_d \lambda$.
  For a tableau $T$ of shape $\lambda \vdash k$ with entries from $[d]$, the occupation (Eq.~\ref{eq:occupations}) $\mathrm{wt}(T)\in \omega_{[d]}^k$ such that $\mathrm{wt}(T)(i)$ is the number of times that $i$ appears in $T$ is called its weight.
For $\alpha \in \mathbb{C}^d$ and $w\in \omega_{[d]}^n$ an occupation (Eq.~\ref{eq:occupations}), define the monomial of weight $w$ to be
\begin{align}
\alpha^w = \prod_{i \in [d]}\alpha_i^{w(i)}.
\end{align}
Then define
\begin{align}
  \alpha^T &= \alpha^{\mathrm{wt}(T)}.
  \label{eq:alphatoatableau}
\end{align}
The GZ basis $\left\{ \ket{T} \right\}_{T \vDash_d \lambda}$ is an orthonormal basis of $\mathrm{V}_\lambda^d$ that obeys
\begin{align}
  \pi_\lambda(\operatorname{diag}(\alpha))\ket{T} &= \alpha^T\ket{T}
  \label{eq:gzweights}
\end{align}
for all $T \vDash_d \lambda$ and for all $\alpha \in \mathrm{U}(1)^d$, where $\operatorname{diag}(\alpha)$ is the diagonal matrix such that $\operatorname{diag}(\alpha)_{ii} = \alpha_i$.
Therefore, we can express the character $\Tr(\pi_\lambda(\mathrm{diag}(\alpha)))$ as a sum over weight monomials.
This motivates the following definition.
\begin{definition}
  For a complex vector $\alpha \in \mathbb{C}^d$ and a partition $\lambda\vdash n$, let $s_\lambda(\alpha)$ be the Schur polynomial, defined by
\begin{align}
  s_\lambda(\alpha) &= \sum_{T \vDash_d \lambda}\alpha^T.
\end{align}
\label{def:schurpolynomial}
\end{definition}

Since for all $U, K\in \mathrm{U}(d)$, we have $\Tr(\pi_\lambda(K)\pi_\lambda(U)\pi_\lambda(K^\dagger)) = \Tr(\pi_\lambda(U))$, the character of $\pi_\lambda$ is equal to the Schur polynomial of the eigenvalues of $U$.
This observation, in conjunction with Schur-Weyl duality, can be used to derive a relationship between the power sum symmetric polynomials, defined below, and the Schur polynomials.

\begin{definition}[Power-sum symmetric polynomials]
For a complex vector $\alpha \in \mathbb{C}^d$ and a partition $\lambda\vdash n$, the power-sum symmetric polynomial of shape $\lambda$ is defined to be
\begin{align}
  p_\lambda(\alpha) &= \prod_{l \in \lambda}\left(\sum_{i=1}^d \alpha_i^l\right).
\end{align}
For a permutation $\sigma \in \mathcal{S}_n$, we define $p_\sigma(\alpha)$ to be the power-sum symmetric polynomial of shape given by the cycle type of $\sigma$.
  \label{def:powersum}
\end{definition}
The Schur polynomials are a linear combination of the power-sum symmetric polynomials, with coefficients proportional to the irreducible characters of the symmetric group.
\begin{corollary}[\cite{Stanley_2011} Cor. 7.17.5]
\begin{align}
  s_\lambda(\alpha) = \frac{1}{n!}\sum_{\sigma \in \mathcal{S}_n}\chi_\lambda(\sigma)p_\sigma(\alpha).
  \label{eq:schurpowersum}
\end{align}
\end{corollary}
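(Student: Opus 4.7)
The plan is to prove the Schur–power-sum identity by computing the character of the tensor-power representation in two different ways via Schur–Weyl duality, then inverting the resulting linear system using orthogonality of symmetric-group characters.

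First I would fix $U = \mathrm{diag}(\alpha) \in \mathrm{U}(d)$ without loss of generality (both sides of the identity are invariant under permutation of the $\alpha_i$ and extend to all of $\mathrm{U}(d)$ by conjugation invariance of traces). The key object is the trace $\operatorname{Tr}(P(\sigma)\,U^{\otimes n})$ for $\sigma \in \mathcal{S}_n$, which I will evaluate two ways. On the computational basis of $(\mathbb{C}^d)^{\otimes n}$, I have
\begin{align}
\operatorname{Tr}(P(\sigma)\,U^{\otimes n}) = \sum_{i_1,\ldots,i_n} \alpha_{i_1}\cdots\alpha_{i_n}\,\prod_{k=1}^n \delta_{i_k,\,i_{\sigma^{-1}(k)}}.
\end{align}
The Kronecker deltas force the multi-index to be constant on each cycle of $\sigma$, so if the cycle type of $\sigma$ is $\mu$, the sum factorizes over cycles into $\prod_{l\in\mu}\sum_i \alpha_i^l = p_\mu(\alpha) = p_\sigma(\alpha)$.

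Next, I invoke Schur–Weyl duality in the form given by Eq.~\ref{eq:swpermdecomp} together with the analogous decomposition of $U^{\otimes n}$. Because $P(\sigma)$ and $U^{\otimes n}$ act on distinct tensor factors in the Schur–Weyl decomposition, their product is block-diagonal with blocks $\kappa_\lambda(\sigma)\otimes \pi_\lambda(U)$, and the trace factorizes:
\begin{align}
\operatorname{Tr}(P(\sigma)\,U^{\otimes n}) = \sum_{\lambda \in \mathrm{Part}_{n,d}} \chi_\lambda(\sigma)\,\operatorname{Tr}(\pi_\lambda(U)) = \sum_{\lambda \in \mathrm{Part}_{n,d}} \chi_\lambda(\sigma)\,s_\lambda(\alpha),
\end{align}
where I use that $\operatorname{Tr}(\pi_\lambda(\mathrm{diag}(\alpha))) = \sum_{T\vDash_d \lambda}\alpha^T = s_\lambda(\alpha)$ by the weight property of the Gelfand–Tsetlin basis (Eq.~\ref{eq:gzweights}). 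Equating the two expressions gives
\begin{align}
p_\sigma(\alpha) = \sum_{\lambda \in \mathrm{Part}_{n,d}} \chi_\lambda(\sigma)\,s_\lambda(\alpha).
\end{align}

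To finish, I invert this linear system using Lem.~\ref{lem:irrechorthonormality}. Multiply the previous identity by $\chi_\mu(\sigma)$ and sum over $\sigma\in\mathcal{S}_n$. Since symmetric-group characters are integer-valued, $\chi_\lambda(\sigma)^* = \chi_\lambda(\sigma)$, and orthonormality gives $\sum_\sigma \chi_\mu(\sigma)\chi_\lambda(\sigma) = n!\,\delta_{\lambda\mu}$, so the right-hand side collapses to $n!\,s_\mu(\alpha)$. Rearranging yields the claimed formula. The only mild subtlety is to note that, although the sum in Schur–Weyl is restricted to $\lambda\in\mathrm{Part}_{n,d}$, the identity in the corollary is stated for all $\lambda\vdash n$; this is fine because $s_\lambda(\alpha)=0$ whenever $\mathrm{len}(\lambda)>d$, so extending the index set to all partitions of $n$ does not change either side. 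No serious obstacle arises beyond keeping conventions consistent between $P(\sigma)$ and its trace giving the correct cycle-type polynomial.
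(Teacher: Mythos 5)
Your proof is correct, and it is precisely the argument the paper sketches in the sentence preceding the corollary (computing $\operatorname{Tr}(P(\sigma)U^{\otimes n})$ once directly to get $p_\sigma(\alpha)$ and once via the Schur--Weyl decomposition to get $\sum_\lambda \chi_\lambda(\sigma)s_\lambda(\alpha)$, then inverting with character orthogonality); the paper itself supplies no proof beyond the citation to Stanley. Your handling of the $\mathrm{len}(\lambda)>d$ case via $s_\lambda(\alpha)=0$ is also the right way to reconcile the restricted Schur--Weyl index set with the unrestricted statement.
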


The Schur polynomials also appear in the definition of a certain distribution that we use in the proof of Cor.~\ref{cor:schurconvexity}.
\begin{definition}[Schur-Weyl distribution (\cite{wright2016learn}, Def.~2.6.2)]
Let $\alpha \in \Delta^{d-1}$ be a distribution on $d$ letters, and let $n$ be a positive integer.
  If $d < n$, we append $\alpha$ with the appropriate number of zeros to interpret it as a distribution over $[n]$ instead. 
The Schur-Weyl distribution $\mathrm{SW}^n(\alpha)$ is a probability distribution over partitions $\lambda \vdash n$, where the probability of obtaining $\lambda$ is
\begin{align}
  \mathrm{SW}^n(\alpha)(\lambda) = \dim(\lambda)s_\lambda(\alpha).
\end{align}
\label{def:swdistn}
\end{definition}
It is shown in Ref.~\citenum{wright2016learn}, Eq. 2.16 that the Schur-Weyl distribution is in fact a probability distribution.

Define the rising factorial of a partition $\lambda \vdash n$,
\begin{align}
  d^{\uparrow \lambda} &= \prod_{\square \in [\lambda]}(d + c(\square)).
  \label{eq:risingfactorial}
\end{align}
By Fact.~\ref{fact:allcontents}, if some box of $\lambda$ has content less than or equal to $-d$, there must exist a box with content equal to $-d$, and therefore $d^{\uparrow \lambda} = 0$.
Therefore, the rising factorial is always nonnegative.
Then we have the following formula for the number of semistandard tableaux of shape $\lambda$ whose entries are elements of $[d]$~(\cite{wright2016learn} Def.~2.2.13 and \cite{Stanley_Fomin_1999} Cor.~7.21.4),
\begin{align}
  |\left\{ T|T\vDash_d \lambda \right\}| &=  \frac{\mathrm{dim}(\lambda)d^{\uparrow \lambda}}{|\lambda|!}\\
  &= \dim(\mathrm{V}^d_\lambda),
  \label{eq:nsst}
\end{align}
where the latter inequality is because GZ basis is a basis of $\mathrm{V}^d_\lambda$.

\begin{definition}[Weight space]
  Let $W$ be a Hilbert space with $\mathrm{dim}(W) = d$. For $w \in\omega_{[d]}^n$ (Eq.~\ref{eq:occupations}), the subspace spanned by $\left\{ \ket{T}\middle| T \vDash_d \lambda \text{ s.t. }\operatorname{wt}(T) = w \right\}$ is called the $w$-weight space of $\mathrm{V}_\lambda^d$.
The $w$-weight space of $\mathbb{S}_\lambda(W)$ is denoted $(\mathbb{S}_\lambda(W))_{\mathrm{wt} = w}$.
\label{def:weightspace}
\end{definition}
For $w \in \omega_{[d]}^n$ and $W$ a Hilbert space with $\mathrm{dim}(W) = d$, the dimension of the $w$-weight space $(\mathbb{S}_\lambda(W))_{\mathrm{wt} = w}$ is thus the number of semistandard tableaux of shape $\lambda$ with weight equal to $w$.
The weight spaces are simultaneous eigenspaces of all the diagonal unitaries, in the following sense.
\begin{proposition}
  Let $W$ be a $d$-dimensional Hilbert space, let $n$ be a positive integer, let $\lambda \vdash n$, and let $w \in \omega_{[d]}^n$. Define $J = \left\{ \ket{v}\in \mathbb{S}_\lambda(W)\middle|\pi_\lambda(\mathrm{diag}(\alpha))\ket{v} = \alpha^w \ket{v} \,\forall\alpha \in \mathrm{U}(1)^d \right\}$.
  Then $(\mathbb{S}_\lambda(W))_{\mathrm{wt} = w} = J$.
  \label{prop:weightspaces}
\end{proposition}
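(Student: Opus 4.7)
The plan is to prove both containments by expanding an arbitrary vector in the Gelfand--Tsetlin basis and using that the characters $\alpha \mapsto \alpha^w$ of the diagonal torus $\mathrm{U}(1)^d$ are linearly independent functions for distinct occupations $w$.

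First I would establish the easy containment $(\mathbb{S}_\lambda(W))_{\mathrm{wt}=w} \subseteq J$. By Def.~\ref{def:weightspace}, $(\mathbb{S}_\lambda(W))_{\mathrm{wt}=w}$ is spanned by the GZ basis vectors $\ket{T}$ with $\mathrm{wt}(T)=w$. For such $T$, Eq.~\ref{eq:gzweights} gives $\pi_\lambda(\mathrm{diag}(\alpha))\ket{T} = \alpha^T \ket{T} = \alpha^{\mathrm{wt}(T)} \ket{T} = \alpha^w \ket{T}$ for every $\alpha \in \mathrm{U}(1)^d$. Since $J$ is a linear subspace, the inclusion follows.

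For the reverse inclusion, I take an arbitrary $\ket{v} \in J$ and expand it in the full GZ basis of $\mathbb{S}_\lambda(W)$, writing $\ket{v} = \sum_{T \vDash_d \lambda} c_T \ket{T}$. Applying $\pi_\lambda(\mathrm{diag}(\alpha))$ and using Eq.~\ref{eq:gzweights} on each term yields $\pi_\lambda(\mathrm{diag}(\alpha))\ket{v} = \sum_T c_T\, \alpha^{\mathrm{wt}(T)} \ket{T}$. Since $\ket{v} \in J$, this must also equal $\alpha^w \ket{v} = \sum_T c_T\, \alpha^w \ket{T}$. Because the GZ vectors $\ket{T}$ are linearly independent, we get $c_T (\alpha^{\mathrm{wt}(T)} - \alpha^w) = 0$ for every $\alpha \in \mathrm{U}(1)^d$ and every $T$.

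The key step is then to conclude from $c_T \neq 0$ that $\mathrm{wt}(T) = w$. This follows because the map $\alpha \mapsto \alpha^u - \alpha^w$ on $\mathrm{U}(1)^d$ vanishes identically iff $u = w$: the characters $\alpha \mapsto \alpha^u$ of the compact abelian group $\mathrm{U}(1)^d$ are pairwise distinct and hence linearly independent (for example by Pontryagin duality, or directly because if $u \neq w$ there is a coordinate $i$ with $u(i) \neq w(i)$, and specializing $\alpha$ to vary only in that coordinate shows the difference is a nonzero Laurent polynomial). Therefore every $T$ with $c_T \neq 0$ satisfies $\mathrm{wt}(T) = w$, and $\ket{v} \in (\mathbb{S}_\lambda(W))_{\mathrm{wt}=w}$. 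I do not anticipate any real obstacle; the only subtle point is invoking the linear independence of the torus characters, which I would state explicitly so that the argument reads cleanly.
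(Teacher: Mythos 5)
Your proof is correct. Note, however, that the paper states Prop.~\ref{prop:weightspaces} without any proof at all---it is treated as a standard fact about the Gelfand--Tsetlin basis---so there is no argument of the paper's to compare against. Your two-containment argument is the natural one: the forward inclusion is immediate from Eq.~\ref{eq:gzweights}, Def.~\ref{def:weightspace}, and the fact that $J$ is a linear subspace, and the reverse inclusion follows by expanding in the GZ basis and observing that distinct monomials $\alpha\mapsto\alpha^{u}$ on $\mathrm{U}(1)^{d}$ are distinct functions (your one-coordinate specialization already suffices; full linear independence of the torus characters is more than you need). It is worth pointing out that the same character-separation device appears in the paper's proof of the closely related operator version, Prop.~\ref{prop:operatorweightspaces}, where the Haar integral $\int d\alpha\,\alpha^{T}(\alpha^{w})^{*}$ over $\mathrm{U}(1)^{d}$ plays exactly the role of your linear-independence step; so your argument is consistent in spirit with how the paper handles this kind of claim.
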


\subsection{Visible states}

The subalgebra $\left\langle \mathrm{U}(\mathcal{H}_V)\times \mathrm{U}(\mathcal{H}_H) \right\rangle \subseteq \mathrm{End}(\mathrm{Sym}^n(\mathcal{H}_V \otimes \mathcal{H}_H))$ is called the particle-type-preserving subalgebra.
Let $\mathrm{dim}(\mathcal{H}_V) = m, \mathrm{dim}(\mathcal{H}_H) = L$, and let $K = \mathrm{min}(m, L)$.
The $n$-particle Hilbert space $\mathrm{Sym}^n(\mathcal{H}_V \otimes \mathcal{H}_H)$ when restricted to the particle-type-preserving subalgebra has the decomposition
\begin{theorem}
\begin{align}
  \mathrm{Res}^{\mathrm{U}(\mathcal{H}_V \otimes \mathcal{H}_H)}_{\mathrm{U}(\mathcal{H}_V)\times \mathrm{U}(\mathcal{H}_H)}\left(\mathrm{Sym}^n(\mathcal{H}_V \otimes \mathcal{H}_H)\right) \cong \bigoplus_{\lambda\in\mathrm{Part}_{n, K}}\mathbb{S}_\lambda(\mathcal{H}_V) \otimes \mathbb{S}_\lambda(\mathcal{H}_H).
  \label{eq:hvtensdecomp}
\end{align}
\end{theorem}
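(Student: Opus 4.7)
The plan is to derive the decomposition by a double application of Schur--Weyl duality, followed by extracting the trivial subrepresentation of the diagonal symmetric group. First, I would identify $\mathrm{Sym}^n(\mathcal{H}_V\otimes\mathcal{H}_H)$ with the $\mathcal{S}_n$-invariant subspace of $(\mathcal{H}_V\otimes\mathcal{H}_H)^{\otimes n}$ (i.e., the image of $\Pi_{\mathrm{Sym}^n}$), and then use the canonical reordering isomorphism
\begin{align*}
(\mathcal{H}_V\otimes\mathcal{H}_H)^{\otimes n} \;\cong\; \mathcal{H}_V^{\otimes n}\otimes \mathcal{H}_H^{\otimes n},
\end{align*}
under which the $\mathcal{S}_n$-action on the left becomes the diagonal action on the right (simultaneous permutation of the $n$ visible and the $n$ hidden tensor factors). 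The $\mathrm{U}(\mathcal{H}_V)\times \mathrm{U}(\mathcal{H}_H)$-action becomes the tensor product of the two tensor-power actions, which commutes with the diagonal $\mathcal{S}_n$.

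Next, I would apply Schur--Weyl duality (Eq.~\ref{eq:swdualitymodules}) separately to each factor, obtaining
\begin{align*}
\mathcal{H}_V^{\otimes n}\otimes \mathcal{H}_H^{\otimes n} \;\cong\; \bigoplus_{\mu\in\mathrm{Part}_{n,m}}\bigoplus_{\nu\in\mathrm{Part}_{n,L}}\bigl(\mathrm{Sp}_\mu\otimes \mathrm{Sp}_\nu\bigr)\otimes \bigl(\mathbb{S}_\mu(\mathcal{H}_V)\otimes \mathbb{S}_\nu(\mathcal{H}_H)\bigr),
\end{align*}
as a representation of $(\mathcal{S}_n\times \mathcal{S}_n)\times (\mathrm{U}(\mathcal{H}_V)\times \mathrm{U}(\mathcal{H}_H))$. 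Restricting the first factor to the diagonal $\mathcal{S}_n\subset \mathcal{S}_n\times \mathcal{S}_n$ and taking invariants on each summand reduces, by Schur's lemma together with self-duality of complex Specht modules (which the paper's use of the orthogonal Young--Yamanouchi form already delivers), to the computation
\begin{align*}
\dim(\mathrm{Sp}_\mu\otimes \mathrm{Sp}_\nu)^{\mathcal{S}_n} \;=\; \dim\mathrm{Hom}_{\mathcal{S}_n}(\mathrm{Sp}_\mu,\mathrm{Sp}_\nu) \;=\; \delta_{\mu\nu}.
\end{align*}
Thus only the diagonal terms $\mu=\nu=:\lambda$ survive with multiplicity one, and the two length constraints $\mathrm{len}(\lambda)\le m$ and $\mathrm{len}(\lambda)\le L$ combine into $\lambda\in\mathrm{Part}_{n,K}$ with $K=\min(m,L)$. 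Since the $\mathrm{U}(\mathcal{H}_V)\times \mathrm{U}(\mathcal{H}_H)$-action commutes with all permutations, it acts on the surviving $\mathbb{S}_\lambda(\mathcal{H}_V)\otimes \mathbb{S}_\lambda(\mathcal{H}_H)$ factor exactly as advertised, with the invariant line in $\mathrm{Sp}_\lambda\otimes \mathrm{Sp}_\lambda$ playing the role of a one-dimensional multiplicity space.

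The main obstacle, and what merits the most careful handling, is the passage to diagonal $\mathcal{S}_n$-invariants: one must correctly identify $(\mathrm{Sp}_\mu\otimes \mathrm{Sp}_\nu)^{\mathcal{S}_n}$ with $\mathrm{Hom}_{\mathcal{S}_n}(\mathrm{Sp}_\mu^*,\mathrm{Sp}_\nu)$ and invoke self-duality, and then verify that the canonical choice of invariant vector (fixed up to a scalar by Schur's lemma) really is preserved by the $\mathrm{U}(\mathcal{H}_V)\times \mathrm{U}(\mathcal{H}_H)$-action, so that no nontrivial character twists the advertised factor $\mathbb{S}_\lambda(\mathcal{H}_V)\otimes \mathbb{S}_\lambda(\mathcal{H}_H)$. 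Once this bookkeeping is done, the decomposition is an immediate corollary of the two Schur--Weyl decompositions. As a sanity check, the resulting formula reproduces the standard dimension identity $\binom{mL+n-1}{n}=\sum_{\lambda\in\mathrm{Part}_{n,K}}\dim(\mathbb{S}_\lambda(\mathcal{H}_V))\dim(\mathbb{S}_\lambda(\mathcal{H}_H))$, which is the polynomial Cauchy identity evaluated on characters.
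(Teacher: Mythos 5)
Your argument is correct. Note, however, that the paper does not prove this theorem at all: it states the decomposition and cites it as Goodman--Wallach Thm.~5.6.7, Fulton--Harris Exercise~6.11(b), and Procesi's ``Cauchy formula.'' What you have supplied is a self-contained derivation via a double application of Schur--Weyl duality, and every step checks out: the identification of $\mathrm{Sym}^n(\mathcal{H}_V\otimes\mathcal{H}_H)$ with the image of $\Pi_{\mathrm{Sym}^n}$ is exactly the paper's first-quantization map (Lem.~\ref{lem:firstquantization}); the reordering isomorphism converts the single $\mathcal{S}_n$-action into the diagonal one; and the invariants computation $\dim\left(\mathrm{Sp}_\mu\otimes\mathrm{Sp}_\nu\right)^{\mathcal{S}_n}=\delta_{\mu\nu}$ follows from self-duality of Specht modules (their characters are real, and the Young orthogonal form used in the paper realizes them by real orthogonal matrices) together with Schur's lemma. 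One small simplification is available at the step you flag as the main obstacle: there is no possible character twist to rule out, because by Eq.~\ref{eq:swdualitymodules} the $\mathrm{U}(\mathcal{H}_V)\times\mathrm{U}(\mathcal{H}_H)$-action is the identity on the $\mathrm{Sp}_\mu\otimes\mathrm{Sp}_\nu$ multiplicity factor, so it automatically fixes the invariant line pointwise and acts on the surviving summand exactly as $\pi_\lambda\otimes\pi_\lambda$. The intersection of the two length constraints giving $\mathrm{Part}_{n,K}$ with $K=\min(m,L)$, and the closing Cauchy-identity dimension check, are both right. Your route is essentially the standard textbook proof (it is the approach suggested by the Fulton--Harris exercise the paper cites), and it would serve as a legitimate replacement for the paper's citation.
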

A proof of this fact appears as \cite{goodmanSymmetryRepresentationsInvariants2009}, Thm.~5.6.7.
It also appears as \cite{fultonRepresentationTheory2004}, Exercise 6.11 (b), and in \cite{roweDualPairingSymmetry2012}.
Eq.~\ref{eq:hvtensdecomp} is called Cauchy's formula in~\cite{Procesi2006-on}.
We denote the subspace of $\mathrm{Res}^{\mathrm{U}(\mathcal{H}_V \otimes \mathcal{H}_H)}_{\mathrm{U}(\mathcal{H}_V)\times \mathrm{U}(\mathcal{H}_H)}\left(\mathrm{Sym}^n(\mathcal{H}_V \otimes \mathcal{H}_H)\right)$ that is isomorphic to $\mathbb{S}_\lambda(\mathcal{H}_V) \otimes \mathbb{S}_\lambda(\mathcal{H}_H)$ by $\mathcal{H}_\lambda$.
Let $\Pi^\lambda$ be the projection onto the subspace $\mathcal{H}_\lambda$.
Then for an operator $M\in \mathrm{End}(\mathrm{Sym}^n(\mathcal{H}_V \otimes \mathcal{H}_H))$ we can restrict the domain and image of the map $\Pi^\lambda M \Pi^\lambda$ to $\mathcal{H}_\lambda$.
\begin{definition}
  Let $M \in \mathrm{End}(\mathrm{Sym}^n(\mathcal{H}_V \otimes \mathcal{H}_H))$.
Define $M_{(\lambda)}:\mathcal{H}_\lambda\rightarrow \mathcal{H}_\lambda$, given by $M_{(\lambda)} v = \Pi^\lambda M v$.
  We also define the projection of $M$ to the particle-type-preserving subalgebra by
  \begin{align}
    M^{\mathrm{PTP}} = \sum_{\lambda \in \mathrm{Part}_{n, K}}\Pi^\lambda M \Pi^\lambda = \bigoplus_{\lambda\in \mathrm{Part}_{n, K}}M_{(\lambda)}.
  \end{align}
  \label{def:dofsepprojection}
\end{definition}
For any $M \in \mathrm{End}(\mathrm{Sym}^n(\mathcal{H}_V \otimes \mathcal{H}_H))$, $M^{\mathrm{PTP}} \in \left\langle \mathrm{U}(\mathcal{H}_V)\times \mathrm{U}(\mathcal{H}_H) \right\rangle$.
Furthermore, for any $M_0 \in \left\langle \mathrm{U}(\mathcal{H}_V)\times \mathrm{U}(\mathcal{H}_H) \right\rangle$, $M^{\mathrm{PTP}}_0 = M_0$.
In our applications, we are interested in the scenario where we perform a linear optical on the visible DOF, then measure in the number basis, not resolving the hidden DOF.
Therefore, it is useful to have a notion of partial trace over the hidden DOF, thus motivating the following definition.
\begin{definition}[Visible part of an operator]
  Let $M \in \mathrm{End}(\mathrm{Sym}^n(\mathcal{H}_V \otimes \mathcal{H}_H))$.
  Let $K = \mathrm{min}(\mathrm{dim}(\mathcal{H}_V),\mathrm{dim}(\mathcal{H}_H))$.
  We write $\Tr_H(M_{(\lambda)})$ as shorthand for $\Tr_{\mathbb{S}_\lambda(\mathcal{H}_H)}(M_{(\lambda)})$.  
  We now define the visible part of $M$ by
\begin{align}
  \Tr_H(M) &= \bigoplus_{\lambda\in\mathrm{Part}_{n, K}}\Tr_H(M_{(\lambda)}) \in \bigoplus_{\lambda\in\mathrm{Part}_{n, K}}\mathrm{End}\left(\mathbb{S}_\lambda(\mathcal{H}_V)\right).
  \label{eq:visiblepart}
\end{align}
When $M = \rho$ is a state, the state $\Tr_H(\rho)$ is called the visible state of $\rho$.
\end{definition}
In Eq.~\ref{eq:perfectlyindsitinguishable}, we defined the set of perfectly indistinguishable states with single-particle hidden state \(\ket{\psi}\). 
Not all states behaving bosonically are of this form. 
The set of states behaving bosonically is the set of perfectly distinguishable states as defined next.
\begin{definition}[Perfectly indistinguishable state]
  If $\rho\in \mathcal{D}(\mathrm{Sym}^n(\mathcal{H}_V \otimes \mathcal{H}_H))$ is a state such that its visible state $\mathrm{Tr}_H(\rho)$ (Def.~\ref{eq:visiblepart}) is supported only on the irrep $\mathbb{S}_{(n)}(\mathcal{H}_V)$, it is called a perfectly indistinguishable state.
  \label{def:perfectlyindistinguishablestategeneral}
\end{definition}
This definition accords with the definitions given in Refs.~\citenum{PhysRevA.91.013844,tichySamplingPartiallyDistinguishable2015,PhysRevA.91.063842,PhysRevResearch.4.043101}.

\section{Derivation of generalized bunching probability in terms of visible states}
\label{sec:derivationintermsofvisstate}

\begin{proposition}
  Let $W$ be a $d$-dimensional Hilbert space, and $w \in \omega_{[d]}^n$ be a weight.
  Suppose $N \in \bigoplus_{\lambda\in\mathrm{Part}_{n, d}}\mathcal{B}_+(\mathbb{S}_\lambda(W))$ satisfies
  \begin{align}
    \Tr(\mathrm{diag}(\alpha)\cdot N) = \alpha^w \Tr(N)
    \label{eq:sufficientcondition}
  \end{align}
  for all $\alpha \in \mathrm{U}(1)^d$.
  Then $\mathrm{supp}(N)\subseteq \bigoplus_{\lambda\in\mathrm{Part}_{n, d}}(\mathbb{S}_\lambda(W))_{\mathrm{wt} = w}$.
  \label{prop:operatorweightspaces}
\end{proposition}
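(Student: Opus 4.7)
The plan is to decompose $N$ in a basis adapted to the weight-space decomposition and read off the constraint directly from the character identity. Concretely, for each $\lambda\in\mathrm{Part}_{n,d}$ let $N_\lambda\in\mathcal{B}_+(\mathbb{S}_\lambda(W))$ denote the block of $N$ on $\mathbb{S}_\lambda(W)$, and expand $N_\lambda$ in the GZ basis $\{\ket{T}\}_{T\vDash_d\lambda}$ as $N_\lambda=\sum_{T,T'}(N_\lambda)_{T,T'}\ket{T}\bra{T'}$. By Eq.~\ref{eq:gzweights}, $\pi_\lambda(\mathrm{diag}(\alpha))\ket{T}=\alpha^{\mathrm{wt}(T)}\ket{T}$, so
\begin{align}
\Tr\bigl(\mathrm{diag}(\alpha)\cdot N\bigr)=\sum_{\lambda}\sum_{T\vDash_d\lambda}(N_\lambda)_{T,T}\,\alpha^{\mathrm{wt}(T)}.
\end{align}

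Next I would combine this with the hypothesis $\Tr(\mathrm{diag}(\alpha)\cdot N)=\alpha^w\Tr(N)$. Both sides are Laurent polynomials (in fact monomials on the right) in $\alpha_1,\ldots,\alpha_d$, viewed as characters of $\mathrm{U}(1)^d$; distinct weight monomials $\alpha^{w'}$ are linearly independent as functions on $\mathrm{U}(1)^d$. Matching coefficients of $\alpha^{w'}$ therefore gives $\sum_{\lambda}\sum_{T:\mathrm{wt}(T)=w'}(N_\lambda)_{T,T}=\Tr(N)\,\delta_{w',w}$. For $w'\neq w$ this forces $(N_\lambda)_{T,T}=0$ whenever $\mathrm{wt}(T)=w'$, since each summand is nonnegative by positive semidefiniteness of $N_\lambda$.

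The final step is to promote these vanishing diagonal entries to vanishing of entire rows and columns. This is the standard fact that if $A\succeq 0$ and $A_{T,T}=0$ then $A_{T,T'}=A_{T',T}=0$ for every $T'$ (apply the Cauchy–Schwarz inequality $|A_{T,T'}|^2\le A_{T,T}A_{T',T'}$). Applying this to $N_\lambda$ shows $(N_\lambda)_{T,T'}=0$ whenever $\mathrm{wt}(T)\neq w$ or $\mathrm{wt}(T')\neq w$, so $N_\lambda$ is supported on the $w$-weight subspace $(\mathbb{S}_\lambda(W))_{\mathrm{wt}=w}$; this is exactly the characterization in Prop.~\ref{prop:weightspaces}. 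Taking the direct sum over $\lambda$ yields $\mathrm{supp}(N)\subseteq\bigoplus_\lambda(\mathbb{S}_\lambda(W))_{\mathrm{wt}=w}$, as claimed.

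I do not anticipate a serious obstacle: the only subtle point is verifying that matching coefficients is legitimate, which follows from the linear independence of the characters of the torus $\mathrm{U}(1)^d$ (equivalently, a Vandermonde-type argument). Positivity is used only at the end, to pass from zero diagonal entries to zero rows and columns.
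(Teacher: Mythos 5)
Your proposal is correct and follows essentially the same route as the paper: expand $N$ in the GZ basis, use Eq.~\ref{eq:gzweights} to write $\Tr(\mathrm{diag}(\alpha)\cdot N)$ as a sum of weight monomials, isolate the coefficient of $\alpha^w$ (the paper does this by integrating against the Haar measure on $\mathrm{U}(1)^d$, which is the same orthogonality of torus characters you invoke as linear independence), and then use positive semidefiniteness to kill the off-weight diagonal and hence the corresponding rows and columns. Your explicit Cauchy--Schwarz step is a point the paper leaves implicit, but the argument is the same.
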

\begin{proof}
  Express $N$ in the GZ basis according to
  \begin{align}
    N = \sum_{\substack{\lambda\in\mathrm{Part}_{n, d}\\T\vDash_d \lambda}}\sum_{\substack{\mu\in\mathrm{Part}_{n, d}\\T'\vDash_d \mu}}N_{T, T'}\ketbra{T}{T'}.
  \end{align}
  Calculate that
  \begin{align}
    \Tr(\mathrm{diag}(\alpha)\cdot N) &= \sum_{\substack{\lambda\in\mathrm{Part}_{n, d}\\T\vDash_d \lambda}}N_{T, T}\alpha^T.
  \end{align}
  Therefore, writing $\int d\alpha$ for the integral over the Haar measure of $\mathrm{U}(1)^d$,
  \begin{align}
    \sum_{\substack{\lambda\in\mathrm{Part}_{n, d}\\T\vDash_d \lambda}}N_{T, T}\alpha^T &= \alpha^w \Tr(N)\\
    \implies \int d\alpha    \sum_{\substack{\lambda\in\mathrm{Part}_{n, d}\\T\vDash_d \lambda}}N_{T, T}\alpha^T(\alpha^w)^* &=  \Tr(N)\\
      \iff\sum_{\substack{\lambda\in\mathrm{Part}_{n, d}\\T\vDash_d \lambda|\mathrm{wt}(T) = w}}N_{T, T} &= \Tr(N).
  \end{align}
  Therefore, since $N$ is positive semidefinite, $N$ must annihilate any GZ basis element $\ket{T}$ for which $\mathrm{wt}(T) \neq w$.
\end{proof}

\begin{lemma}
  Let $\bm{i} \in [m]^{n}$.
Let $M\in \mathcal{B}_+(\mathrm{Sym}^n(\mathcal{H}_V \otimes \mathcal{H}_H))$ be a positive-semidefinite matrix of the form
  \begin{align}
    \sum_{\bm{j}, \bm{j}' \in [L]^{n}} M_{\bm{j}, \bm{j}'} a^\dagger(\bm{i}, \bm{j})\ketbra{0}{0}a(\bm{i}, \bm{j}').
  \end{align}
  Then for all $\lambda \vdash n$ such that $\mathrm{len}(\lambda) \le K = \mathrm{min}(m, L)$, 
  \begin{align}
   \mathrm{supp}(\Tr_H(M_\lambda))\subseteq (\mathbb{S}_\lambda(\mathcal{H}_V))_{\mathrm{wt} = \xi(\bm{i})},
  \end{align}
   where $\Tr_H(M_\lambda)$ is defined in Eq.~\ref{eq:visiblepart}, and $\xi$ is defined in Eq.~\ref{eq:xidef}.
\label{lem:weightunderhiddentrace}
\end{lemma}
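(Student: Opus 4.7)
The plan is to verify the hypothesis of Prop.~\ref{prop:operatorweightspaces} for each block $N := \mathrm{Tr}_H(M_{(\lambda)}) \in \mathcal{B}_+(\mathbb{S}_\lambda(\mathcal{H}_V))$ with weight $w = \xi(\bm{i})$. Concretely, I need to show that for every $\alpha \in \mathrm{U}(1)^m$,
\begin{align}
\mathrm{Tr}\bigl(\pi_\lambda(\mathrm{diag}(\alpha))\,\mathrm{Tr}_H(M_{(\lambda)})\bigr) = \alpha^{\xi(\bm{i})}\,\mathrm{Tr}\bigl(\mathrm{Tr}_H(M_{(\lambda)})\bigr).
\end{align}
Positive semidefiniteness of $\mathrm{Tr}_H(M_{(\lambda)})$ follows from that of $M$ together with the fact that $M_{(\lambda)} = \Pi^\lambda M \Pi^\lambda$ is a compression of $M$, and the partial trace of a positive operator is positive. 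The support statement then drops out of Prop.~\ref{prop:operatorweightspaces}.

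The key computation is that the diagonal unitary $D := \mathrm{diag}(\alpha)\otimes \mathds{1}_{\mathcal{H}_H}$, acting by the linear-optical action (Def.~\ref{def:linearopticalaction}), scales every vector in $K_{\bm{i}}$ (Def.~\ref{def:kisubspacedef}) by the \emph{single} scalar $\alpha^{\xi(\bm{i})}$. For this I would observe that $D a^\dagger_{i,j} D^\dagger = a^\dagger_{D(i,j)} = \alpha_i a^\dagger_{i,j}$ and $D\ket{0} = \ket{0}$, so that commuting $D$ through a product of $n$ creation operators yields
\begin{align}
D\,a^\dagger(\bm{i},\bm{j})\ket{0} = \left(\prod_{x=1}^n \alpha_{i_x}\right) a^\dagger(\bm{i},\bm{j})\ket{0} = \alpha^{\xi(\bm{i})}\, a^\dagger(\bm{i},\bm{j})\ket{0}.
\end{align}
Since these vectors span $K_{\bm{i}}$ and the range of $M$ lies in $K_{\bm{i}}$, I conclude $DM = \alpha^{\xi(\bm{i})} M$ as operators on $\mathrm{Sym}^n(\mathcal{H}_V\otimes\mathcal{H}_H)$.

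To transport this identity to the Cauchy block $M_{(\lambda)}$, I use that $D \in \langle \mathrm{U}(\mathcal{H}_V)\times \mathrm{U}(\mathcal{H}_H)\rangle$, so $D$ commutes with each isotypic projector $\Pi^\lambda$ appearing in Eq.~\ref{eq:hvtensdecomp}. Hence $DM_{(\lambda)} = \Pi^\lambda DM\Pi^\lambda = \alpha^{\xi(\bm{i})} M_{(\lambda)}$. Taking the trace on $\mathcal{H}_\lambda$ and recognizing that $D$ acts there as $\pi_\lambda(\mathrm{diag}(\alpha))\otimes \mathds{1}_{\mathbb{S}_\lambda(\mathcal{H}_H)}$ by Eq.~\ref{eq:hvtensdecomp}, the trace factorizes through the partial trace over the hidden factor to give the displayed weight identity. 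Applying Prop.~\ref{prop:operatorweightspaces} (with the blocks at all $\mu \neq \lambda$ taken to be zero) then yields $\mathrm{supp}(\mathrm{Tr}_H(M_{(\lambda)})) \subseteq (\mathbb{S}_\lambda(\mathcal{H}_V))_{\mathrm{wt}=\xi(\bm{i})}$.

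The main conceptual step is the first one: recognizing that $K_{\bm{i}}$ is precisely a joint eigenspace of the diagonal visible phases, so that \emph{all} of the visible-weight information in $M$ collapses to the single monomial $\alpha^{\xi(\bm{i})}$. Everything else is bookkeeping on the Cauchy decomposition, and the only real care needed is to confirm that $\Pi^\lambda$ commutes with $D$ — which is immediate, since $\Pi^\lambda$ is by construction an intertwiner for the $\mathrm{U}(\mathcal{H}_V)\times \mathrm{U}(\mathcal{H}_H)$ action in which $D$ sits.
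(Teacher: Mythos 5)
Your proposal is correct and follows essentially the same route as the paper's proof: both verify the hypothesis of Prop.~\ref{prop:operatorweightspaces} by observing that the linear-optical action of $\mathrm{diag}(\alpha)\otimes\mathds{1}_{\mathcal{H}_H}$ is the single scalar $\alpha^{\xi(\bm{i})}$ on $K_{\bm{i}}$, and then transport this through the Cauchy decomposition using the fact that this operator lies in the DOF-separable subalgebra. The only differences are presentational — you work blockwise and prove the operator identity $DM=\alpha^{\xi(\bm{i})}M$ explicitly, whereas the paper works with the full direct sum and only the trace identity.
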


\begin{proof}
  We apply Prop.~\ref{prop:operatorweightspaces}.
  Let $\alpha\in \mathrm{U}(1)^{m}$.
  Recall the definition of $\xi$ from Eq.~\ref{eq:xidef}.
  Then by the definition of partial trace,
  \begin{align}
    \operatorname{Tr}\Big((\operatorname{diag}(\alpha))\cdot\Tr_H(M_{(\lambda)})\Big)
    &=\sum_{\lambda \in \mathrm{Part}_{n, K}}\Tr(\pi_\lambda(\operatorname{diag}(\alpha))\Tr_{H}(M_{(\lambda)}) )\\
    &= \sum_{\lambda \in \mathrm{Part}_{n, K}}\Tr(\pi_\lambda(\operatorname{diag}(\alpha)) \otimes \mathds{1}_{\mathbb{S}_\lambda(\mathcal{H}_H)} M_{(\lambda)} )\\
  &=\sum_{\lambda \in \mathrm{Part}_{n, K}} \Tr(\Pi^\lambda\mathrm{Sym}^n(\operatorname{diag}(\alpha) \otimes \mathds{1}_{\mathcal{H}_H})\Pi^\lambda M)\\
  &=\Tr(\mathrm{Sym}^n(\operatorname{diag}(\alpha) \otimes \mathds{1}_{\mathcal{H}_H})^{\mathrm{PTP}} M)
  \end{align}
  Now $\mathrm{Sym}^n(\mathrm{diag}(\alpha)\otimes \mathds{1}_{\mathcal{H}_H}) \in \left\langle \mathrm{U}(\mathcal{H}_V) \times \mathrm{U}(\mathcal{H}_H)\right\rangle$ so $\mathrm{Sym}^n(\mathrm{diag}(\alpha)\otimes \mathds{1}_{\mathcal{H}_H})^{\mathrm{PTP}} = \mathrm{Sym}^n(\mathrm{diag}(\alpha)\otimes \mathds{1}_{\mathcal{H}_H})$.
  Therefore,
  \begin{align}
    \operatorname{Tr}\Big((\operatorname{diag}(\alpha))\cdot\Tr_H(M_{(\lambda)})\Big)  &= \Tr((\operatorname{diag}(\alpha) \otimes \mathds{1}_{\mathcal{H}_H})\cdot M)\\
     &= \alpha^{\xi(\bm{i})}\Tr(M).
  \end{align}
  Therefore, by Prop.~\ref{prop:operatorweightspaces}, $\operatorname{supp}(\Tr_H(M)) \subseteq \bigoplus_{\lambda \in \mathrm{Part}_{n, K}}(\mathbb{S}_\lambda(\mathcal{H}_V))_{\mathrm{wt} = w}$.

  Now, because $\Tr_H(M)|_{\mathbb{S}_\lambda(\mathcal{H}_V)} = \Tr_H(M_{(\lambda)})$, we have 
  \begin{equation}
    \mathrm{supp}(\Tr_H(M_{(\lambda)})) = \mathrm{supp}(\Tr_H(M)|_{\mathbb{S}_\lambda(\mathcal{H}_V)}) \subseteq (\mathbb{S}_\lambda(\mathcal{H}_V))_{\mathrm{wt} = w}.
  \end{equation}
  Therefore, the lemma is proved.
\end{proof}

For an irrep $\mathbb{S}_\lambda(\mathcal{H})$, we write $\Pi^\lambda_{g}$ for the projection onto the $g$-weight space of $\mathbb{S}_\lambda(\mathcal{H})$. 
\begin{lemma}
Let $v \in \omega^n_{[m]}$ be an occupation (Eq.~\ref{eq:occupations}).
Let $N_v$ be as defined in Eq.~\ref{eq:nvdef}.
  Assume $m \ge n$ and $L \ge n$.
  Then 
  \begin{align}
    \sum_{g \in N_v}\ketbra{g}{g} &= \bigoplus_{\lambda \vdash n}\Pi^\lambda_v \otimes \mathds{1}_{\mathbb{S}_\lambda(\mathcal{H}_H)},
  \end{align}
  with $\ket{g}$ a state of definite occupation as defined in Eq.~\ref{eq:ketg}.
  \label{lem:projectorontoweightspace}
\end{lemma}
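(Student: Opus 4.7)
The plan is to characterize both sides of the identity as the projector onto the same subspace of $\mathrm{Sym}^n(\mathcal{H}_V\otimes\mathcal{H}_H)$, namely the joint eigenspace $E_v$ on which the one-parameter family of diagonal visible unitaries $\mathrm{Sym}^n(\mathrm{diag}(\alpha)\otimes\mathds{1}_{\mathcal{H}_H})$ acts with eigenvalue $\alpha^v$ for every $\alpha\in\mathrm{U}(1)^m$. Matching both sides to this same eigenprojector then gives the lemma in one stroke.

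For the left-hand side, I would first compute that each state of definite occupation $\ket{g}$ is an eigenvector of $\mathrm{Sym}^n(\mathrm{diag}(\alpha)\otimes\mathds{1}_{\mathcal{H}_H})$ with eigenvalue $\alpha^{g_V}$, where $g_V\in\omega^n_{[m]}$ is the visible marginal $g_V(i)=\sum_{j\in[L]}g(i,j)$. This follows directly from Def.~\ref{def:linearopticalaction} applied to Eq.~\ref{eq:ketg}: each creation operator $a^\dagger_{(i,j)}$ picks up a factor of $\alpha_i$ under the linear-optical action. Since $\{\ket{g}: g\in\omega^n_{[m]\times[L]}\}$ is an orthonormal basis of $\mathrm{Sym}^n(\mathcal{H}_V\otimes\mathcal{H}_H)$, the subspace on which the eigenvalue is $\alpha^v$ uniformly in $\alpha$ is spanned precisely by those $\ket{g}$ with $g_V=v$, which is exactly $N_v$. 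Hence the projector onto $E_v$ is $\sum_{g\in N_v}\ketbra{g}{g}$.

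For the right-hand side, I would invoke the DOF-separable decomposition of Eq.~\ref{eq:hvtensdecomp}, under which the diagonal visible unitary acts as $\bigoplus_\lambda \pi_\lambda(\mathrm{diag}(\alpha))\otimes\mathds{1}_{\mathbb{S}_\lambda(\mathcal{H}_H)}$; this is valid because $\mathrm{diag}(\alpha)\otimes\mathds{1}_{\mathcal{H}_H}\in\mathrm{U}(\mathcal{H}_V)\times\mathrm{U}(\mathcal{H}_H)$ and Eq.~\ref{eq:hvtensdecomp} is a decomposition of $\mathrm{U}(\mathcal{H}_V)\times\mathrm{U}(\mathcal{H}_H)$-representations. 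Applying Prop.~\ref{prop:weightspaces} to the visible factor identifies the simultaneous $\alpha^v$-eigenspace of the $\pi_\lambda(\mathrm{diag}(\alpha))$ with $(\mathbb{S}_\lambda(\mathcal{H}_V))_{\mathrm{wt}=v}$, giving $E_v = \bigoplus_\lambda (\mathbb{S}_\lambda(\mathcal{H}_V))_{\mathrm{wt}=v}\otimes \mathbb{S}_\lambda(\mathcal{H}_H)$ with projector $\bigoplus_\lambda \Pi^\lambda_v\otimes\mathds{1}_{\mathbb{S}_\lambda(\mathcal{H}_H)}$. The hypotheses $m\ge n$ and $L\ge n$ are used to ensure that every partition $\lambda\vdash n$ satisfies $\mathrm{len}(\lambda)\le\min(m,L)$, so the index set $\lambda\vdash n$ on the right-hand side is consistent with the range $\mathrm{Part}_{n,K}$ appearing in Eq.~\ref{eq:hvtensdecomp}.

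I do not expect a substantial obstacle; the argument is essentially bookkeeping that the weight grading of $\mathrm{Sym}^n(\mathcal{H}_V\otimes\mathcal{H}_H)$ induced by the visible torus agrees with two natural refinements — the occupation-number basis on the one hand, and the block decomposition Eq.~\ref{eq:hvtensdecomp} refined by weight spaces of each $\mathbb{S}_\lambda(\mathcal{H}_V)$ on the other. The one point that warrants care is checking that $\ket{g}$ is indeed an eigenvector with the claimed weight under the symmetric-power action of $\mathrm{diag}(\alpha)\otimes\mathds{1}_{\mathcal{H}_H}$; a direct calculation from Eq.~\ref{eq:ketg} using $(\mathrm{diag}(\alpha)\otimes\mathds{1})a^\dagger_{(i,j)}(\mathrm{diag}(\alpha)\otimes\mathds{1})^{-1}=\alpha_i a^\dagger_{(i,j)}$ suffices.
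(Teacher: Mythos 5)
Your proposal is correct, and it takes a genuinely different route from the paper's. The paper proves the identity in three stages: it first shows by an explicit creation-operator computation that $\Pi_v=\sum_{g\in N_v}\ketbra{g}{g}$ commutes with all hidden unitaries $\mathds{1}_{\mathcal{H}_V}\otimes U$, so that Schur's lemma forces the block form $\bigoplus_\lambda \Xi^\lambda_v\otimes\mathds{1}_{\mathbb{S}_\lambda(\mathcal{H}_H)}$; it then invokes the visible-torus argument (via Lem.~\ref{lem:weightunderhiddentrace}) only to get the \emph{containment} $\mathrm{supp}(\Xi^\lambda_v)\subseteq(\mathbb{S}_\lambda(\mathcal{H}_V))_{\mathrm{wt}=v}$; and finally it upgrades this containment to equality by a trace/dimension count, computing $|N_v|$ via the Robinson--Schensted--Knuth correspondence and matching it against $\sum_\lambda Y^\lambda_v\,\dim(\mathbb{S}_\lambda(\mathcal{H}_H))$. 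You instead diagonalize the commuting family $\{\mathrm{Sym}^n(\mathrm{diag}(\alpha)\otimes\mathds{1}_{\mathcal{H}_H})\}_{\alpha\in\mathrm{U}(1)^m}$ on both sides at once: on the occupation basis the $\alpha^v$-eigenspace is exactly $\mathrm{span}\{\ket{g}:g\in N_v\}$ (since distinct visible marginals give distinct characters of the torus), while under Eq.~\ref{eq:hvtensdecomp} and Prop.~\ref{prop:weightspaces} it is exactly $\bigoplus_\lambda(\mathbb{S}_\lambda(\mathcal{H}_V))_{\mathrm{wt}=v}\otimes\mathbb{S}_\lambda(\mathcal{H}_H)$, so both sides are the orthogonal projector onto the same subspace. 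This buys you a shorter argument that avoids both Schur's lemma for the hidden unitary group and the RSK dimension count, essentially because you characterize the left-hand side as the \emph{entire} eigenspace rather than merely as an operator supported inside the weight spaces; the paper's route, by contrast, reuses Lem.~\ref{lem:weightunderhiddentrace}, which it needs elsewhere anyway. Your one flagged point of care (that $\ket{g}$ is a weight vector of weight $\alpha^{g_V}$) is handled correctly by the conjugation relation you state, and your remark on why $m\ge n$, $L\ge n$ makes the index set $\lambda\vdash n$ consistent with $\mathrm{Part}_{n,K}$ is also right.
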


\begin{proof}
  Define $\Pi_v = \sum_{g \in N_v}\ketbra{g}{g}$.  We first show that
  $\Pi_v$ is invariant under the adjoint action of hidden unitaries.
  Let $\mathrm{zip}(\bm{i}, \bm{j}) = \left( (i_1, j_1), \ldots, (i_n, j_n) \right)$.  
  Recall the definition of $\xi$ from Eq.~\ref{eq:xidef} and $\zeta$ from Eq.~\ref{eq:zetadef}.
    We calculate
  \begin{align}
    \sum_{g \in N_v}\ketbra{g}{g} &= \sum_{l \in \omega^n_{[L]}}\frac{1}{\xi(\mathrm{zip}(\zeta(v),\zeta(l)))!}a^\dagger(\zeta(v),\zeta(l))\ketbra{0}{0}a(\zeta(v),\zeta(l))\\
    &= \sum_{\bm{j}\in [L]^{n}}\frac{\xi(\mathrm{zip}(\zeta(v),\bm{j}))!}{g!}\frac{1}{\xi(\mathrm{zip}(\zeta(v),\bm{j}))!}a^\dagger(\zeta(v),\bm{j})\ketbra{0}{0}a(\zeta(v),\bm{j})\\
    &= \frac{1}{v!}\sum_{\bm{j}\in [L]^{n}}a^\dagger(\zeta(v),\bm{j})\ketbra{0}{0}a(\zeta(v),\bm{j})
  \end{align}
  Let $U(\bm{k}|\bm{j}) = \prod_{x=1}^n U_{k_x,j_x}$.
  Then,
  \begin{align}
    (\mathds{1}_{\mathcal{H}_V}\otimes U)\cdot\Pi_v\cdot(\mathds{1}_{\mathcal{H}_V}\otimes U)^\dagger &= \frac{1}{v!}\sum_{\bm{j}\in [L]^{n}}a^\dagger(\zeta(v),U\cdot\bm{j})\ketbra{0}{0}a(\zeta(v),U\cdot\bm{j})\\
    &= \frac{1}{v!}\sum_{\bm{j},\bm{k},\bm{k}'\in [L]^{n}}U(\bm{k}|\bm{j})U^*(\bm{k}'|\bm{j})a^\dagger(\zeta(v),\bm{k})\ketbra{0}{0}a(\zeta(v),\bm{k}')\\
    &= \frac{1}{v!}\sum_{\bm{k},\bm{k}'\in [L]^{n}}\delta_{\bm{k},\bm{k}'}a^\dagger(\zeta(v),\bm{k})\ketbra{0}{0}a(\zeta(v),\bm{k}')\\
    &= \frac{1}{v!}\sum_{\bm{k}\in [L]^{n}}a^\dagger(\zeta(v),\bm{k})\ketbra{0}{0}a(\zeta(v),\bm{k})\\
    &= \Pi_v
  \end{align}
  Therefore, by Schur's lemma,
  \begin{align}
    \Pi_v &=  \bigoplus_{\lambda \vdash n}\Xi^\lambda_v\otimes \mathds{1}_{\mathbb{S}_\lambda(\mathcal{H}_H)}
  \end{align}
  for some projectors $\Xi^\lambda_v$ acting on $\mathbb{S}_\lambda(\mathcal{H}_V)$.
  By Lem.~\ref{lem:weightunderhiddentrace}, $\mathrm{supp}(\Xi_v^\lambda)\subset \left( \mathbb{S}_\lambda(\mathcal{H}_V) \right)_{\mathrm{wt} = v}$.

  We use dimension counting to show the reverse inclusion.
  It is clear that $\Tr(\Pi_v) = |N_v|$.
  Elements of $N_v$ are in bijection with matrices of nonnegative integers such that the $i$th row sums to $v(i)$.
  By the Robinson-Schensted-Knuth correspondence~(\cite{Stanley_Fomin_1999} Thm.~7.11.5), the set of such matrices is in bijection with pairs $(P, Q)$ of semistandard tableaux with $n$ boxes such that the shape of $P$ is equal to the shape of $Q$, and the weight of $P$ is equal to $v$. 
  Let $Y^\lambda_v$ be the number of semistandard tableaux of shape $\lambda$ with weight $v$.
  Then we have
  \begin{align}
    \Tr(\Pi_v) &= \sum_{\lambda \vdash n}Y^\lambda_v\sum_{l\in \omega^n_{[L]}}Y^\lambda_l\\
    &= \sum_{\lambda \vdash n}\mathrm{dim}\left(\left( \mathbb{S}_\lambda(\mathcal{H}_V) \right)_{\mathrm{wt} = v}\right)\sum_{l\in \omega^n_{[L]}}\mathrm{dim}\left(\left( \mathbb{S}_\lambda(\mathcal{H}_V) \right)_{\mathrm{wt} = l}\right)\\
    &= \sum_{\lambda \vdash n}\mathrm{dim}\left(\left( \mathbb{S}_\lambda(\mathcal{H}_V) \right)_{\mathrm{wt} = v}\right)\mathrm{dim}\left( \mathbb{S}_\lambda(\mathcal{H}_V)\right).
  \end{align}
  Therefore, the support of $\Pi_v$ must be the whole of $\bigoplus_{\lambda \vdash n}\left(\left( \mathbb{S}_\lambda(\mathcal{H}_V) \right)_{\mathrm{wt} = v}\right)\otimes\left( \mathbb{S}_\lambda(\mathcal{H}_V)\right)$, and therefore $\Xi_v^\lambda$ in fact is equal to $\Pi^\lambda_v$.
\end{proof}

\begin{lemma}
  Let $\rho \in \mathcal{D}(\mathrm{Sym}^n(\mathcal{H}_V \otimes \mathcal{H}_H))$.
  Assume $n \le m, n \le L$.
  Then the probability of observing an occupation of $v \in \omega_{[m]}^n$ after evolving by $U \in \mathrm{U}(\mathcal{H}_V)$ (Eq.~\ref{eq:visibleoccupationdistributions}) is given by
\begin{align}
  p(v|U, \rho) &= \sum_{\lambda \vdash n}\Tr(\pi_\lambda(U) \Tr_H(\rho_{(\lambda)}) \pi_\lambda(U)^\dagger \Pi_{v}^\lambda),
\end{align}
where $\rho_{(\lambda)}$ is defined in Def.~\ref{def:dofsepprojection}.
\label{lem:modelafterhiddentrace}
\end{lemma}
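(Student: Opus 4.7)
The plan is to start from the definition in Eq.~\ref{eq:visibleoccupationdistributions}, move the sum over $g\in N_v$ onto the measurement side as a single projector, and then use the block decomposition of Eq.~\ref{eq:hvtensdecomp} together with the isotypic projector identity of Lem.~\ref{lem:projectorontoweightspace} to collapse everything onto the visible factor $\mathbb{S}_\lambda(\mathcal{H}_V)$.

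First, I would rewrite
\begin{align}
p(v|U,\rho) = \Tr\!\Big((U\otimes \mathds{1}_{\mathcal{H}_H})\cdot\rho\cdot(U^\dagger\otimes\mathds{1}_{\mathcal{H}_H})\,\Pi_v\Big),
\end{align}
where $\Pi_v = \sum_{g\in N_v}\ketbra{g}{g}$. Since $n\le m$ and $n\le L$, Lem.~\ref{lem:projectorontoweightspace} applies, giving $\Pi_v = \bigoplus_{\lambda\vdash n}\Pi_v^\lambda \otimes \mathds{1}_{\mathbb{S}_\lambda(\mathcal{H}_H)}$, so $\Pi_v$ lies in the DOF-separable subalgebra. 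Next, the linear-optical action of $U\otimes\mathds{1}_{\mathcal{H}_H}$ also lies in this subalgebra, and in the Cauchy decomposition of Eq.~\ref{eq:hvtensdecomp} it acts as $\bigoplus_{\lambda\in\mathrm{Part}_{n,K}} \pi_\lambda(U)\otimes \mathds{1}_{\mathbb{S}_\lambda(\mathcal{H}_H)}$. Because both the evolution and the measurement operator live in the DOF-separable algebra, I may replace $\rho$ with its DOF-separable projection $\rho^{\mathrm{DOF\text{-}Sep}} = \bigoplus_\lambda \rho_{(\lambda)}$ (Def.~\ref{def:dofsepprojection}) without changing the trace.

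After these substitutions, the trace splits along the direct sum:
\begin{align}
p(v|U,\rho) = \sum_{\lambda\vdash n} \Tr\!\Big((\pi_\lambda(U)\otimes\mathds{1})\,\rho_{(\lambda)}\,(\pi_\lambda(U)^\dagger\otimes\mathds{1})\,(\Pi_v^\lambda\otimes\mathds{1})\Big),
\end{align}
where in each summand the identities and the partial trace below are on $\mathbb{S}_\lambda(\mathcal{H}_H)$. Using cyclicity and then tracing out the hidden factor yields
\begin{align}
p(v|U,\rho) = \sum_{\lambda\vdash n} \Tr\!\Big(\pi_\lambda(U)\,\Tr_H(\rho_{(\lambda)})\,\pi_\lambda(U)^\dagger\,\Pi_v^\lambda\Big),
\end{align}
which is the claim. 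The sum may be truncated to $\lambda\in\mathrm{Part}_{n,K}$ since $\rho_{(\lambda)} = 0$ outside that range.

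The only delicate step is the one in the middle: justifying that one can freely identify the linear-optical action on $\mathrm{Sym}^n(\mathcal{H}_V\otimes\mathcal{H}_H)$ with its block form $\bigoplus_\lambda \pi_\lambda(U)\otimes\mathds{1}$, and that the partial trace ``over $\mathcal{H}_H$'' appearing implicitly in the derivation coincides block-by-block with the partial trace over $\mathbb{S}_\lambda(\mathcal{H}_H)$ that defines $\Tr_H$ in Eq.~\ref{eq:visiblepart}. This is essentially bookkeeping once one notes that $\Pi_v$, $U\otimes\mathds{1}$, and $\rho^{\mathrm{DOF\text{-}Sep}}$ are all block-diagonal in the Cauchy decomposition and that $\mathds{1}_{\mathbb{S}_\lambda(\mathcal{H}_H)}$ is precisely what survives after restricting the identity on $\mathcal{H}_H$-degrees of freedom to the $\lambda$-block. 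Everything else reduces to two applications of Schur's lemma already packaged in the preceding results.
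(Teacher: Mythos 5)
Your proposal is correct and follows essentially the same route as the paper: both rewrite the sum over $g\in N_v$ as the single projector $\Pi_v$, invoke Lem.~\ref{lem:projectorontoweightspace} to express it as $\bigoplus_{\lambda}\Pi_v^{\lambda}\otimes\mathds{1}_{\mathbb{S}_\lambda(\mathcal{H}_H)}$, and then split the trace block-by-block and trace out $\mathbb{S}_\lambda(\mathcal{H}_H)$ to land on $\Tr_H(\rho_{(\lambda)})$. Your explicit remark that $\rho$ may be replaced by $\rho^{\mathrm{DOF\text{-}Sep}}$ because the evolution and measurement operators are block-diagonal is exactly the justification the paper leaves implicit in its final two lines.
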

\begin{proof}
  The probability of observing outcome $v$ is given by
  \begin{align}
    p(v|U,\rho) &= \sum_{g\in N_v}\Tr\left( (U\otimes \mathds{1})\cdot \rho \cdot (U^\dagger\otimes \mathds{1}) \ketbra{g}{g}\right)\\
    &= \Tr\left( (U\otimes \mathds{1})\cdot \rho \cdot (U^\dagger\otimes \mathds{1}) \bigoplus_{\lambda \vdash n}\Pi_v^\lambda\otimes \mathds{1}_{\mathbb{S}_\lambda(\mathcal{H}_H)} \right)
  \end{align}
  by Lem.~\ref{lem:projectorontoweightspace}.
  Then,
  \begin{align}
    p(v|U,\rho) &= \sum_{\lambda \vdash n}\Tr\left( \pi_\lambda(U)\otimes \mathds{1}_{\mathbb{S}_\lambda(\mathcal{H}_H)}\, \rho\, \pi_\lambda(U^\dagger)\otimes \mathds{1}_{\mathbb{S}_\lambda(\mathcal{H}_H)} \Pi_v^\lambda\otimes \mathds{1}_{\mathbb{S}_\lambda(\mathcal{H}_H)} \right)\\
    &= \sum_{\lambda \vdash n}\Tr(\pi_\lambda(U) \Tr_H(\rho_{(\lambda)}) \pi_\lambda(U)^\dagger \Pi_{v}^\lambda).
  \end{align}
\end{proof}

Define $\Pi_S^\lambda = \sum_{v\in \omega^n_{[m]}(S)}\Pi^\lambda_{v}$, $\Pi^\lambda_S$ is the orthogonal projector onto the subspace of $\mathbb{S}_\lambda(\mathcal{H}_V)$ whose weights are nonzero only on $S$.
  An immediate consequence of the previous lemma is
\begin{corollary}
  Let $\rho \in \mathcal{D}(\mathrm{Sym}^n(\mathcal{H}_V \otimes \mathcal{H}_H))$. Assume that $n \le m, n \le L$. Then the probability that all $n$ particles arrive in the subset $S$ after applying the linear optical $U \in \mathrm{U}(\mathcal{H}_V)$ is
\begin{align}
  b(S|U, \rho) &= \sum_{\lambda \vdash n}\Tr\left( \pi_\lambda(U) \Tr_H(\rho_{(\lambda)}) (\pi_\lambda(U))^{\dagger} \Pi_S^\lambda \right).
\end{align}
\label{cor:genbunchafterhiddentrace}
\end{corollary}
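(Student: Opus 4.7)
The plan is to derive this directly from Lemma~\ref{lem:modelafterhiddentrace} by summing over visible occupations supported on $S$. Recall from Eq.~\ref{eq:genbunchfirstexpression} that the generalized bunching probability admits the expression
\begin{align}
b(S|U,\rho) = \sum_{v \in \omega_{[m]}^n(S)} p(v|U,\rho).
\end{align}
Substituting the formula for $p(v|U,\rho)$ provided by Lemma~\ref{lem:modelafterhiddentrace}, I would interchange the finite sums over $v$ and over $\lambda$ and use linearity of the trace to pull the sum over $v$ past the operators that do not depend on $v$, leaving the only $v$-dependence inside the projector $\Pi_v^\lambda$:
\begin{align}
b(S|U,\rho) = \sum_{\lambda \vdash n} \Tr\!\left( \pi_\lambda(U)\, \Tr_H(\rho_{(\lambda)})\, \pi_\lambda(U)^\dagger \sum_{v \in \omega_{[m]}^n(S)}\Pi_v^\lambda \right).
\end{align}

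Then I would invoke the definition $\Pi_S^\lambda = \sum_{v \in \omega_{[m]}^n(S)} \Pi_v^\lambda$ stated just above the corollary to collapse the inner sum and obtain the claimed identity. The hypotheses $n \le m$ and $n \le L$ are simply inherited from Lemma~\ref{lem:modelafterhiddentrace}, where they were needed so that Lemma~\ref{lem:projectorontoweightspace} applies and the weight-space projectors $\Pi_v^\lambda$ are well-behaved. There is essentially no obstacle here: the entire content of the corollary is the recognition that ``all particles arrive in $S$'' corresponds exactly to the union of visible occupations supported on $S$, an event that has already been packaged into the single projector $\Pi_S^\lambda$ acting on each irrep block $\mathbb{S}_\lambda(\mathcal{H}_V)$.
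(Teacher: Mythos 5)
Your proposal is correct and is exactly the argument the paper intends: the paper states the corollary as an immediate consequence of Lem.~\ref{lem:modelafterhiddentrace} together with the definition $\Pi_S^\lambda = \sum_{v\in\omega^n_{[m]}(S)}\Pi^\lambda_v$, which is precisely the summation-and-linearity step you spell out.
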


\section{Derivation of the visible state for permutation-invariant states}
\label{sec:derivationforperminv}

In this section we calculate the visible state of permutation-invariant states.
Our analysis makes use of the decomposition of the irreps of the unitary group into irreps of the symmetric group, under restriction to the permutation matrices.
In general this problem is involved~\cite{heaton2021branching,orellana2021symmetric,harman2018representations}, but for singly occupied states the decomposition is relatively simple.

Recall the definition of $\xi$ from Eq.~\ref{eq:xidef}.
\begin{lemma}
  Let $\bm{i}$ be a list of $n$ distinct indices from $[m]$.
  Then for all $\lambda \vdash n$, the subspace $\mathbb{S}_\lambda(\mathcal{H}_V)_{\mathrm{wt} = \xi(\bm{i})} \subset \mathbb{S}_\lambda(\mathcal{H}_V)$ is closed under the action of $\mathcal{S}_n(\bm{i})\subset \mathrm{U}(\mathcal{H}_V)$ (Eq.~\ref{eq:snidef}).
  \label{lem:vissn}
\end{lemma}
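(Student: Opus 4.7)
The plan is to show that applying $\pi_\lambda(F_{\bm{i}}(\sigma))$ to a weight vector produces another weight vector of the same weight $\xi(\bm{i})$. Since $F_{\bm{i}}(\sigma)$ is a permutation matrix that only permutes the sites indexed by $\bm{i}$ and fixes all other sites, it conjugates any diagonal unitary $\mathrm{diag}(\alpha) \in \mathrm{U}(\mathcal{H}_V)$ to another diagonal unitary $\mathrm{diag}(\alpha')$, where $\alpha'$ is obtained from $\alpha$ by permuting the entries at positions in $\bm{i}$ according to $\sigma$ and leaving all other entries untouched.

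First I would invoke Prop.~\ref{prop:weightspaces} to characterize $\mathbb{S}_\lambda(\mathcal{H}_V)_{\mathrm{wt} = \xi(\bm{i})}$ as the simultaneous eigenspace
\begin{align}
\mathbb{S}_\lambda(\mathcal{H}_V)_{\mathrm{wt} = \xi(\bm{i})} = \left\{ \ket{v} \in \mathbb{S}_\lambda(\mathcal{H}_V) \,\middle|\, \pi_\lambda(\mathrm{diag}(\alpha))\ket{v} = \alpha^{\xi(\bm{i})}\ket{v}\ \forall \alpha \in \mathrm{U}(1)^m \right\}.
\end{align}
Then, for $\ket{v}$ in this weight space and $\sigma \in \mathcal{S}_n$, I would use the homomorphism property of $\pi_\lambda$ to write
\begin{align}
\pi_\lambda(\mathrm{diag}(\alpha)) \pi_\lambda(F_{\bm{i}}(\sigma)) \ket{v} = \pi_\lambda(F_{\bm{i}}(\sigma)) \pi_\lambda(F_{\bm{i}}(\sigma)^{-1} \mathrm{diag}(\alpha) F_{\bm{i}}(\sigma)) \ket{v} = \pi_\lambda(F_{\bm{i}}(\sigma)) \pi_\lambda(\mathrm{diag}(\alpha')) \ket{v},
\end{align}
where $\alpha'_{i_x} = \alpha_{i_{\sigma(x)}}$ for $x \in [n]$ and $\alpha'_j = \alpha_j$ for $j$ not appearing in $\bm{i}$.

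The key observation is that the weight $\xi(\bm{i})$ takes the value $1$ on each site in $\bm{i}$ and $0$ elsewhere, so the monomial $\alpha^{\xi(\bm{i})} = \prod_{x \in [n]} \alpha_{i_x}$ is invariant under any permutation of the entries of $\alpha$ indexed by $\bm{i}$. Consequently $(\alpha')^{\xi(\bm{i})} = \alpha^{\xi(\bm{i})}$, and applying the eigenvalue equation for $\ket{v}$ yields
\begin{align}
\pi_\lambda(\mathrm{diag}(\alpha)) \pi_\lambda(F_{\bm{i}}(\sigma)) \ket{v} = \alpha^{\xi(\bm{i})} \pi_\lambda(F_{\bm{i}}(\sigma)) \ket{v},
\end{align}
showing $\pi_\lambda(F_{\bm{i}}(\sigma))\ket{v}$ lies in the same weight space. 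I do not anticipate any real obstacle here; the proof reduces to the elementary commutation identity between permutation matrices and diagonal matrices combined with the fact that $\xi(\bm{i})$ is constant on the support $\bm{i}$.
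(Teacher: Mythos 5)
Your proof is correct and follows essentially the same approach as the paper: both arguments hinge on the conjugation identity $F_{\bm{i}}(\sigma)^{-1}\mathrm{diag}(\alpha)F_{\bm{i}}(\sigma)=\mathrm{diag}(\alpha')$ together with the invariance of $\alpha^{\xi(\bm{i})}$ under permutations of the entries indexed by $\bm{i}$, which holds because the weight is $1$ on each occupied site and $0$ elsewhere. If anything, your version is marginally more complete, since by invoking Prop.~\ref{prop:weightspaces} you verify the full eigenvalue equation for an arbitrary vector in the weight space, whereas the paper checks only the diagonal expectation value on GZ basis elements and leaves the final inference implicit.
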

\begin{proof}
  Let $\ket{T}$ be a GZ basis element for a standard tableau $T$ with entries given by $\bm{i}$.
  Then,
  \begin{align}
    \bra{T}F_{\bm{i}}(\sigma)^{\dagger}\operatorname{diag}(\alpha)F_{\bm{i}}(\sigma)\ket{T}
 &=    \bra{T}\operatorname{diag}(\sigma\cdot\alpha)\ket{T}\\
 &= (\sigma\cdot\alpha)^{T}\\
 &= \alpha^T,
  \end{align}
  where the last equality is because each nonzero weight in $T$ is equal to 1.
  Therefore, $F_{\bm{i}}(\sigma)\ket{T}$ is a weight vector with weight given by that of $\ket{T}$.
\end{proof}
Lemma~\ref{lem:vissn} implies that $\mathbb{S}_\lambda(\mathcal{H}_V)_{\mathrm{wt}=\xi(\bm{i})}$ is a representation of $\mathcal{S}_n$ under the action obtained from restricting the action of $\mathrm{U}(\mathcal{H}_V)$ to $\mathcal{S}_n(\bm{i})$.
\begin{lemma}[\cite{Procesi2006-on}, p.~272]
  Let $\bm{i}$ be a list of $n$ distinct indices from $[m]$.
  Then for all $\lambda \vdash n$, 
  \begin{align}
\mathbb{S}_\lambda(\mathcal{H}_V)_{\mathrm{wt} = \xi(\bm{i})} &\cong \mathrm{Sp}_\lambda
  \end{align}
as an $\mathcal{S}_n$-representation.
\label{lem:restrictionproblem}
\end{lemma}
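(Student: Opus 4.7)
The plan is to lift the weight space $\mathbb{S}_\lambda(\mathcal{H}_V)_{\mathrm{wt}=\xi(\bm{i})}$ to the $\xi(\bm{i})$-weight subspace of $\mathcal{H}_V^{\otimes n}$, which carries both the Schur-Weyl $\mathcal{S}_n$-action and the $\mathcal{S}_n(\bm{i})$-action inherited from $\mathrm{U}(\mathcal{H}_V)$, and then to identify that larger space with the two-sided regular representation of $\mathcal{S}_n$.

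First I would observe that because $\bm{i}$ consists of $n$ distinct indices, $\xi(\bm{i})$ is the $0/1$-occupation supported on $\mathrm{Set}(\bm{i})$, so the $\xi(\bm{i})$-weight subspace of $\mathcal{H}_V^{\otimes n}$ has the orthonormal basis $\{\ket{\bm{i}_\pi}:=\ket{i_{\pi(1)},\ldots,i_{\pi(n)}}\}_{\pi\in\mathcal{S}_n}$ and dimension $n!$. A direct computation from Eq.~\ref{eq:permutationtensorfactoraction} and Def.~\ref{def:frepdef} then yields
\begin{equation*}
F_{\bm{i}}(\sigma)^{\otimes n}\ket{\bm{i}_\pi}=\ket{\bm{i}_{\sigma\pi}},\qquad P(\tau)\ket{\bm{i}_\pi}=\ket{\bm{i}_{\pi\tau^{-1}}},
\end{equation*}
so under the identification $\ket{\bm{i}_\pi}\leftrightarrow e_\pi$ the $\mathcal{S}_n(\bm{i})$-action becomes the left regular representation on $\mathbb{C}[\mathcal{S}_n]$ and the Schur-Weyl action becomes the right regular representation.

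Next I would invoke the classical decomposition $\mathbb{C}[\mathcal{S}_n]\cong\bigoplus_{\mu\vdash n}\mathrm{Sp}_\mu\otimes\mathrm{Sp}_\mu$ as an $(\mathcal{S}_n\times\mathcal{S}_n)$-module (using self-duality of symmetric-group irreps over $\mathbb{C}$ to identify $\mathrm{Sp}_\mu^*\cong\mathrm{Sp}_\mu$). On the other hand, restricting the Schur-Weyl decomposition (Eq.~\ref{eq:swdualitymodules}) to the $\xi(\bm{i})$-weight subspace gives
\begin{equation*}
(\mathcal{H}_V^{\otimes n})_{\mathrm{wt}=\xi(\bm{i})}\cong\bigoplus_{\mu\vdash n}\mathrm{Sp}_\mu\otimes \mathbb{S}_\mu(\mathcal{H}_V)_{\mathrm{wt}=\xi(\bm{i})},
\end{equation*}
where Lem.~\ref{lem:vissn} guarantees that the second factor really is an $\mathcal{S}_n(\bm{i})$-subrepresentation and the first factor carries the Schur-Weyl action. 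Matching these two isotypic decompositions with respect to the Schur-Weyl action, uniqueness of multiplicity spaces then yields $\mathbb{S}_\lambda(\mathcal{H}_V)_{\mathrm{wt}=\xi(\bm{i})}\cong\mathrm{Sp}_\lambda$ as $\mathcal{S}_n(\bm{i})$-representations.

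The main obstacle will be bookkeeping: I need to check that the Schur-Weyl $\mathcal{S}_n$-factor in the two-sided regular-representation decomposition lines up with the Schur-Weyl $\mathcal{S}_n$-factor in the Schur-Weyl decomposition, so that ``the other factor'' matches on both sides of the identification under the canonical identification of $\mathcal{S}_n(\bm{i})$ with $\mathcal{S}_n$ via $F_{\bm{i}}$. As a sanity check, the dimensions agree, since the number of semistandard tableaux of shape $\lambda$ whose weight is the $0/1$-occupation $\xi(\bm{i})$ equals the number of standard Young tableaux of shape $\lambda$, which is $\dim(\mathrm{Sp}_\lambda)$.
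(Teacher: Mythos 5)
Your proof is correct, but it takes a genuinely different route from the paper's. The paper introduces an auxiliary hidden space $\mathcal{H}_H$ with $\dim(\mathcal{H}_H)\ge n$, invokes the $\mathcal{S}_n\times\mathrm{U}(\mathcal{H}_H)$ decomposition of $K_{\bm{i}}$ (Cor.~\ref{cor:kiswdecomp}), partial-traces the projector onto $(K_{\bm{i}})_\lambda$ over the hidden DOF, uses Lem.~\ref{lem:weightunderhiddentrace} to locate the result inside $(\mathbb{S}_\lambda(\mathcal{H}_V))_{\mathrm{wt}=\xi(\bm{i})}$, argues via Schur's lemma that it is proportional to a projector onto a copy of $\mathrm{Sp}_\lambda$, and then closes with the same dimension count you use as a sanity check (semistandard tableaux of weight $\xi(\bm{i})$ are in bijection with standard tableaux). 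Your argument instead stays entirely inside $\mathcal{H}_V^{\otimes n}$: the identification of the $\xi(\bm{i})$-weight subspace with $\mathbb{C}[\mathcal{S}_n]$ carrying the left regular action of $\mathcal{S}_n(\bm{i})$ and the right regular action of the Schur--Weyl permutations is correct (your two displayed formulas check out against Def.~\ref{def:frepdef} and Eq.~\ref{eq:permutationtensorfactoraction}), and comparing $\bigoplus_\mu\mathrm{Sp}_\mu\otimes\mathrm{Sp}_\mu^*$ with the weight-restricted Schur--Weyl decomposition does the job. The bookkeeping worry you flag is resolved canonically: the multiplicity space of $\mathrm{Sp}_\mu$ for the $P$-action is $\mathrm{Hom}_{\mathcal{S}_n}(\mathrm{Sp}_\mu,\cdot)$, which inherits the commuting $\mathcal{S}_n(\bm{i})$-action uniquely up to isomorphism, so the two decompositions force $(\mathbb{S}_\mu(\mathcal{H}_V))_{\mathrm{wt}=\xi(\bm{i})}\cong\mathrm{Sp}_\mu^*\cong\mathrm{Sp}_\mu$ (self-duality from the reality of symmetric-group characters, which the paper also uses elsewhere). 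What your route buys is self-containedness and economy --- it needs no auxiliary hidden space, no partial trace, and strictly speaking no dimension count, and it is essentially the classical argument in the cited source. What the paper's route buys is reuse of machinery (Cor.~\ref{cor:kiswdecomp}, Lem.~\ref{lem:weightunderhiddentrace}) that it needs anyway for the physics, keeping the proof inside the second-quantized framework of the rest of the appendix.
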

\begin{proof}
  We can introduce any $\mathcal{H}_H$ of our choice, so let $\mathrm{dim}(\mathcal{H}_H) \ge n$. 
  Without loss of generality, assume that $\bm{i} = \left( 1, \ldots, n \right)$.
  By Cor.~\ref{cor:kiswdecomp}, $K_{\bm{i}}$ (Eq.~\ref{eq:singlyoccupiesdef}) is isomorphic to $\bigoplus_{\lambda\vdash n}\mathrm{Sp}_\lambda\otimes \mathbb{S}_\lambda(\mathcal{H}_H)$ under the action of $\mathcal{S}_n \times \mathrm{U}(\mathcal{H}_H)$.
  Define $(K_{\bm{i}})_\lambda$ to be the subspace of $K_{\bm{i}}$ that is isomorphic to the irrep $\mathrm{Sp}_\lambda \otimes \mathbb{S}_{\lambda}(\mathcal{H}_H)$.
  Define $\Pi_{(K_{\bm{i}})_\lambda}:\operatorname{Sym}^n(\mathcal{H}_V \otimes \mathcal{H}_H) \rightarrow \operatorname{Sym}^n(\mathcal{H}_V \otimes \mathcal{H}_H)$ to be the orthogonal projection onto $(K_{\bm{i}})_\lambda$.
  Observe that $\Pi_{(K_{\bm{i}})_\lambda}$ is an element of the particle-type-preserving subalgebra $\left\langle \mathrm{U}(\mathcal{H}_V)\times \mathrm{U}(\mathcal{H}_H) \right\rangle$.

Let
\begin{align}
  R_\lambda = \frac{1}{\mathrm{dim}(\mathrm{V}^d_\lambda)}\Tr_{H}(\Pi_{(K_{\bm{i}})_\lambda}).
\end{align}
The partial trace operation is an $\mathcal{S}_n$-intertwiner, so since $\Pi_{(K_{\bm{i}})_\lambda}$ commutes with $\sigma$ for all $\sigma \in \mathcal{S}_n$, $R_\lambda$ does as well.  
  Therefore, $R_\lambda$ is the direct sum of operators that are proportional to projectors onto irreps of $\mathcal{S}_n$.
Since $\Pi_{(K_{\bm{i}})_\lambda}$ has support only on the $\lambda$ isotype of $\mathcal{S}_n$, this means that $R_\lambda$ must only be supported on the $\lambda$-irrep.  
Therefore, $R_\lambda$ is proportional to a projector onto the $\lambda$-irrep of $\mathcal{S}_n$.

By Lem.~\ref{lem:weightunderhiddentrace}, we have
\begin{align}
  \mathrm{supp}\left(R_\lambda\right) \subseteq (\mathbb{S}_\lambda(\mathcal{H}_V))_{\mathrm{wt} = \xi(\bm{i})}.
\end{align}
Since the sites $\bm{i}$ are distinct, any semistandard tableau with weight $\xi(\bm{i})$ has distinct entries. 
Therefore, it must be a standard tableau, and all standard tableaux are semistandard, so $\mathrm{dim}( (\mathbb{S}_\lambda(\mathcal{H}_V))_{\mathrm{wt} = \xi(\bm{i})}) = \mathrm{dim}(\lambda) = \mathrm{dim}(\mathrm{Sp}_\lambda)$.
We conclude that
\begin{align}
  (\mathbb{S}_\lambda(\mathcal{H}_V))_{\mathrm{wt} = \xi(\bm{i})} &\cong \mathrm{Sp}_\lambda.
\end{align}
\end{proof}

\begin{lemma}
  Let $\rho \in \mathrm{Sym}^n(\mathcal{H}_V \otimes \mathcal{H}_H)$ be a permutation-invariant state that singly occupies the $n$ distinct sites $\bm{i}\in [m]^n$ (Def.~\ref{def:permutationinvariantstates}).
  Then with $\Tr_H(\rho)$ as defined in Eq.~\ref{eq:visiblepart}, 
  \begin{align}
    \Tr_H(\rho) &= \bigoplus_{\lambda\in\mathrm{Part}_{n, L}}q_\lambda \frac{\mathds{1}_{(\mathbb{S}_\lambda(\mathcal{H}_V))_{\mathrm{wt} = \xi(\bm{i})}}}{\mathrm{dim}(\lambda)},
  \end{align}
  where $\{q_\lambda\}_{\lambda\in\mathrm{Part}_{n, L}}$ is a probability distribution over partitions of $n$.
  \label{lem:mixturemodel}
\end{lemma}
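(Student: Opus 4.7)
The plan is to combine three earlier results: Lem.~\ref{lem:weightunderhiddentrace} confines the support of $\Tr_H(\rho_{(\lambda)})$ to the $\xi(\bm{i})$-weight space of $\mathbb{S}_\lambda(\mathcal{H}_V)$; Lem.~\ref{lem:restrictionproblem} identifies that weight space, as an $\mathcal{S}_n$-representation under $F_{\bm{i}}$, with the irreducible Specht module $\mathrm{Sp}_\lambda$; and permutation invariance of $\rho$ plus the fact that $F_{\bm{i}}(\sigma)$ acts only on the visible tensor factor will let me apply Schur's lemma. The only real calculation is to observe that the partial trace over the hidden space is an intertwiner for the $\mathcal{S}_n(\bm{i})$-action, since that action is entirely on $\mathcal{H}_V$.

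First I would note that $\rho\in\mathcal{D}(K_{\bm{i}})$ by assumption, so $\rho$ has the form required by Lem.~\ref{lem:weightunderhiddentrace}. Applying that lemma, $\mathrm{supp}(\Tr_H(\rho_{(\lambda)}))\subseteq (\mathbb{S}_\lambda(\mathcal{H}_V))_{\mathrm{wt}=\xi(\bm{i})}$ for every $\lambda\in\mathrm{Part}_{n,L}$. Next, by Def.~\ref{def:permutationinvariantstates}, for each $\sigma\in\mathcal{S}_n$ we have $(F_{\bm{i}}(\sigma)\otimes\mathds{1}_{\mathcal{H}_H})\cdot\rho\cdot(F_{\bm{i}}(\sigma)\otimes\mathds{1}_{\mathcal{H}_H})^{-1}=\rho$. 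Because $F_{\bm{i}}(\sigma)$ acts only on the visible factor, tracing over $\mathcal{H}_H$ commutes with conjugation by $F_{\bm{i}}(\sigma)\otimes \mathds{1}_{\mathcal{H}_H}$; moreover, since $F_{\bm{i}}(\sigma)\otimes\mathds{1}_{\mathcal{H}_H}$ lies in the DOF-separable subalgebra, it preserves each isotypic block $\mathcal{H}_\lambda$, so it commutes with $\Pi^\lambda$ (Def.~\ref{def:dofsepprojection}). Therefore $\pi_\lambda(F_{\bm{i}}(\sigma))\cdot\Tr_H(\rho_{(\lambda)})\cdot\pi_\lambda(F_{\bm{i}}(\sigma))^{\dagger}=\Tr_H(\rho_{(\lambda)})$ for every $\sigma\in\mathcal{S}_n$.

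By Lem.~\ref{lem:vissn}, the weight space $(\mathbb{S}_\lambda(\mathcal{H}_V))_{\mathrm{wt}=\xi(\bm{i})}$ is invariant under the $\mathcal{S}_n(\bm{i})$-action, and by Lem.~\ref{lem:restrictionproblem} it is isomorphic to the irreducible Specht module $\mathrm{Sp}_\lambda$ as an $\mathcal{S}_n$-representation. Viewing $\Tr_H(\rho_{(\lambda)})$ as an operator on this weight space (extended by zero on its orthogonal complement in $\mathbb{S}_\lambda(\mathcal{H}_V)$, which is consistent with the support condition from Lem.~\ref{lem:weightunderhiddentrace}), it is an $\mathcal{S}_n$-intertwiner on an irreducible module. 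By Schur's lemma,
\begin{align}
\Tr_H(\rho_{(\lambda)}) = c_\lambda\,\mathds{1}_{(\mathbb{S}_\lambda(\mathcal{H}_V))_{\mathrm{wt}=\xi(\bm{i})}}
\end{align}
for some $c_\lambda\in\mathbb{C}$. Positive semidefiniteness of $\rho$, preserved by $\Tr_H$, forces $c_\lambda\ge 0$.

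Finally I would fix normalizations. The dimension of the $\xi(\bm{i})$-weight space of $\mathbb{S}_\lambda(\mathcal{H}_V)$ equals the number of standard tableaux of shape $\lambda$, which is $\mathrm{dim}(\lambda)$ (as noted in the proof of Lem.~\ref{lem:restrictionproblem}). Setting $q_\lambda = c_\lambda\,\mathrm{dim}(\lambda)$ yields $\Tr_H(\rho_{(\lambda)}) = q_\lambda\,\mathds{1}_{(\mathbb{S}_\lambda(\mathcal{H}_V))_{\mathrm{wt}=\xi(\bm{i})}}/\mathrm{dim}(\lambda)$, with $q_\lambda\ge 0$ and $\sum_\lambda q_\lambda = \sum_\lambda\Tr(\Tr_H(\rho_{(\lambda)}))=\sum_\lambda\Tr(\rho_{(\lambda)})=\Tr(\rho)=1$, so $\{q_\lambda\}$ is a probability distribution. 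Since $\rho$ is supported in $K_{\bm{i}}\subseteq\mathrm{Sym}^n(\mathcal{H}_V\otimes\mathcal{H}_H)$ and the Schur-Weyl decomposition of $K_{\bm{i}}$ in Cor.~\ref{cor:kiswdecomp} only involves $\lambda\in\mathrm{Part}_{n,L}$, the direct sum may be restricted to $\lambda\in\mathrm{Part}_{n,L}$ as stated. No step looks genuinely hard; the only thing to be careful about is verifying that the partial trace intertwines the permutation actions on the whole space and on $\mathcal{H}_\lambda$, which follows immediately from $F_{\bm{i}}(\sigma)\otimes\mathds{1}_{\mathcal{H}_H}\in\langle\mathrm{U}(\mathcal{H}_V)\times\mathrm{U}(\mathcal{H}_H)\rangle$.
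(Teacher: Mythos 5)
Your proposal is correct and follows essentially the same route as the paper: confine the support via Lem.~\ref{lem:weightunderhiddentrace}, identify the weight space with the irreducible Specht module via Lem.~\ref{lem:restrictionproblem}, and apply Schur's lemma to the $\mathcal{S}_n$-intertwiner obtained from permutation invariance. The paper invokes Cor.~\ref{cor:schurcor} on the multiplicity-free direct sum all at once where you argue block by block, but this is only a presentational difference; your extra care with the partial-trace intertwining and the normalization just fills in steps the paper leaves implicit.
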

\begin{proof}
  By Lem.~\ref{lem:weightunderhiddentrace}, $\mathrm{supp}(\Tr_H(\rho))\subseteq \bigoplus_{\lambda\in\mathrm{Part}_{n, L}}(\mathbb{S}_\lambda(\mathcal{H}_V))_{\mathrm{wt} = \xi(\bm{i})}$.
  Since $\rho$ is invariant under the visible permutation matrices that act nontrivially only on $\bm{i}$ and by Lem.~\ref{lem:restrictionproblem}
  $(\mathbb{S}_\lambda(\mathcal{H}_V))_{\mathrm{wt} = \xi(\bm{i})} \cong \mathrm{Sp}_\lambda$, we can apply Cor.~\ref{cor:schurcor} to conclude that
  \begin{align}
    \Tr_H(\rho) &= \bigoplus_{\lambda\in\mathrm{Part}_{n, L}}q_\lambda \frac{\mathds{1}_{(\mathbb{S}_\lambda(\mathcal{H}_V))_{\mathrm{wt} = \xi(\bm{i})}}}{\mathrm{dim}(\lambda)}.
  \end{align}
  Since $\Tr_H(\rho)$ is a density matrix, the $q_\lambda$ must form a probability distribution.
\end{proof}

\begin{lemma}
  Let $\alpha$ be a probability distribution on $[L]$, and let $\bm{i}$ be a list of $n$ distinct indices from $[m]$.
  Let $\alpha(\bm{j}) = \alpha_{j_1}\cdots\alpha_{j_n}$, and
\begin{align}
  \rho(\alpha) &= \sum_{\bm{j} \in [L]^{n}}\alpha(\bm{j})a^{\dagger}(\bm{i}, \bm{j})\ketbra{0}{0}a(\bm{i}, \bm{j}).
\end{align}
Then, with the definition of the Schur polynomial $s_\lambda(\alpha)$ from Def.~\ref{def:schurpolynomial},
  \begin{align}
    \Tr_H(\rho(\alpha)) &= \bigoplus_{\lambda\vdash n}\mathrm{dim}(\lambda)s_\lambda(\alpha)\frac{\mathds{1}_{(\mathbb{S}_\lambda(\mathcal{H}_V))_{\mathrm{wt} = \xi(\bm{i})}}}{\mathrm{dim}(\lambda)},
\end{align}
where if $\mathrm{len}(\lambda) > L$, then $\alpha_k$ is defined to be zero for $k > L$.
  \label{lem:modelderivation}
\end{lemma}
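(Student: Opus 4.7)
The state $\rho(\alpha)$ singly occupies the distinct sites $\bm{i}$. Because $\alpha(\bm{j})=\alpha_{j_{1}}\cdots\alpha_{j_{n}}$ is symmetric in the indices $j_{x}$, the change of variables $\bm{j}\mapsto \sigma\cdot\bm{j}$ combined with Prop.~\ref{prop:fintertwines} shows that $\rho(\alpha)$ is permutation invariant in the sense of Def.~\ref{def:permutationinvariantstates}. Lem.~\ref{lem:mixturemodel} then already gives the correct shape of the visible state,
\begin{align}
  \Tr_{H}(\rho(\alpha))=\bigoplus_{\lambda\in\mathrm{Part}_{n,L}}q_{\lambda}\frac{\mathds{1}_{(\mathbb{S}_{\lambda}(\mathcal{H}_{V}))_{\mathrm{wt}=\xi(\bm{i})}}}{\mathrm{dim}(\lambda)},
\end{align}
for some probability distribution $\{q_{\lambda}\}_{\lambda\vdash n}$; the entire task reduces to identifying $q_{\lambda}=\mathrm{dim}(\lambda)\,s_{\lambda}(\alpha)$.

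The distribution $\{q_{\lambda}\}$ appearing in this visible-state decomposition coincides with the auxiliary irrep distribution of $\rho(\alpha)$ from Cor.~\ref{cor:permutationinvariantauxiliarystates}, because both record the total probability carried by the $\lambda$-isotype of the $\mathcal{S}_{n}$ action, and the map $h$ from Lem.~\ref{lem:hintertwinesfandp} is an $\mathcal{S}_{n}$-isomorphism between $K_{\bm{i}}$ and $\mathcal{H}_{H}^{\otimes n}$. I would therefore compute $q_{\lambda}$ from the auxiliary state. A direct application of Def.~\ref{def:auxiliarystate}, using that $\bm{i}$ is a list of distinct indices so that $\bra{0}a(\bm{i},\bm{j}')a^{\dagger}(\bm{i},\bm{k})\ket{0}=\delta_{\bm{j}',\bm{k}}$, yields
\begin{align}
  h(\rho(\alpha))=\sum_{\bm{k}\in [L]^{n}}\alpha(\bm{k})\ketbra{\bm{k}}{\bm{k}}=\left(\rho^{(1)}\right)^{\otimes n},
\end{align}
where $\rho^{(1)}=\sum_{k}\alpha_{k}\ketbra{k}{k}$.

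What remains is to compute the mass that this product state places on each $\mathcal{S}_{n}$-isotype of $\mathcal{H}_{H}^{\otimes n}$. Letting $\Theta_{\lambda}$ denote the isotypic projector, Lem.~\ref{lem:isotypicprojection} expresses it as $\Theta_{\lambda}=(\mathrm{dim}(\lambda)/n!)\sum_{\sigma\in\mathcal{S}_{n}}\chi_{\lambda}(\sigma)R(\sigma^{-1})$, where $R$ is the permutation representation on $\mathcal{H}_{H}^{\otimes n}$ from the proof of Lem.~\ref{lem:perminvgbisimm}. A short cycle-type calculation gives $\Tr(R(\sigma^{-1})(\rho^{(1)})^{\otimes n})=p_{\sigma}(\alpha)$, since on each cycle of length $l$ the trace contracts to $\sum_{k}\alpha_{k}^{l}$. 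Combining with the power-sum expansion of Schur polynomials (Eq.~\ref{eq:schurpowersum}) then gives
\begin{align}
  q_{\lambda}=\Tr(\Theta_{\lambda}(\rho^{(1)})^{\otimes n})=\frac{\mathrm{dim}(\lambda)}{n!}\sum_{\sigma\in\mathcal{S}_{n}}\chi_{\lambda}(\sigma)p_{\sigma}(\alpha)=\mathrm{dim}(\lambda)\,s_{\lambda}(\alpha).
\end{align}

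The only delicate point is the identification of the $q_{\lambda}$ of Lem.~\ref{lem:mixturemodel}, which parametrizes the visible state, with the $q_{\lambda}$ of Cor.~\ref{cor:permutationinvariantauxiliarystates}, which parametrizes the auxiliary state; these agree because $h$ preserves isotypic mass, so reading off the $\lambda$-weight from either side produces the same number. Once that identification is in place, the rest is simply the familiar statement that i.i.d.\ product states on $\mathcal{H}_{H}^{\otimes n}$ produce the Schur-Weyl distribution (Def.~\ref{def:swdistn}).
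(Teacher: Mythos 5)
Your proof is correct, and it reaches the same two computational ingredients as the paper's proof --- the trace of a permutation against the state equals a power-sum symmetric polynomial, and the expansion $s_\lambda(\alpha)=\frac{1}{n!}\sum_\sigma\chi_\lambda(\sigma)p_\sigma(\alpha)$ --- but it routes through the auxiliary state rather than staying in Fock space. The paper computes $\Tr\big((F_{\bm{i}}(\sigma)\otimes\mathds{1}_{\mathcal{H}_H})\cdot\rho(\alpha)\big)=p_\sigma(\alpha)$ directly in second quantization, evaluates the same trace a second time via Lem.~\ref{lem:mixturemodel} and Lem.~\ref{lem:restrictionproblem} to get $\sum_\mu q_\mu\chi_\mu(\sigma)/\dim(\mu)$, and inverts with character orthogonality; you instead observe $h(\rho(\alpha))=(\rho^{(1)})^{\otimes n}$ and read off $q_\lambda$ as the isotypic mass of an i.i.d.\ product state, i.e.\ the Schur--Weyl distribution. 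Your version is arguably more conceptual, since it makes transparent why the answer is $\mathrm{SW}^n(\alpha)$. The one step you should make more careful is the identification of the $q_\lambda$ of Lem.~\ref{lem:mixturemodel} (mass of $\rho$ on the block $\mathcal{H}_\lambda$ in the $\mathrm{U}(\mathcal{H}_V)\times\mathrm{U}(\mathcal{H}_H)$ decomposition of $\mathrm{Sym}^n(\mathcal{H}_V\otimes\mathcal{H}_H)$) with the $q_\lambda$ of Cor.~\ref{cor:permutationinvariantauxiliarystates} (mass of $h(\rho)$ on the $\lambda$-block of Schur--Weyl duality in $\mathcal{H}_H^{\otimes n}$): saying ``$h$ preserves isotypic mass'' presumes that $h$ is an isometry and that $K_{\bm{i}}\cap\mathcal{H}_\lambda$ is exactly the $\lambda$-isotype of $K_{\bm{i}}$ for the relevant group action. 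Both facts are true and follow from results already in the paper --- $h$ sends the orthonormal basis $a^\dagger(\bm{i},\bm{j})\ket{0}$ to the orthonormal basis $\ket{\bm{j}}$, so it is unitary and not merely a linear isomorphism, and it intertwines the $\mathrm{U}(\mathcal{H}_H)$ actions (Lem.~\ref{lem:hintertwinesfandp}), so it maps the $\mathbb{S}_\lambda(\mathcal{H}_H)$-isotype of $K_{\bm{i}}$, which is $K_{\bm{i}}\cap\mathcal{H}_\lambda$ by Cauchy's formula, onto the $\lambda$-block of $\mathcal{H}_H^{\otimes n}$ --- but the argument should be spelled out, since the paper deliberately sidesteps it by evaluating the same permutation trace on both sides of the equality.
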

\begin{proof}
    By possibly extending $\alpha$ by the appropriate number of zeros, we can assume without loss of generality that $L \ge n$.
  For brevity, we write $\rho = \rho(\alpha)$ for this proof.
Because \(\alpha(\bm{j})\) is invariant under permutation of \(\bm{j}\), we apply Lem.~\ref{lem:mixturemodel}, to obtain $\Tr_H(\rho) = \bigoplus_{\lambda\vdash n}q_\lambda \frac{\mathds{1}_{(\mathbb{S}_\lambda(\mathcal{H}_V))_{\mathrm{wt} = \xi(\bm{i})}}}{\mathrm{dim}(\lambda)}$ for some probability distribution $q_\lambda$.
  It therefore suffices to calculate this distribution.

  We write $\delta(\bm{k}, \bm{l}) = \delta_{\bm{k}, \bm{l}}$ to make the arguments easier to read.
  Recall the definition of $F_{\bm{i}}$ from Def.~\ref{def:frepdef}.
  Then 
  \begin{align}
    \Tr( (F_{\bm{i}}(\sigma) \otimes \mathds{1}_{\mathcal{H}_H})\cdot\rho) &= \Tr((F_{\bm{i}}(\sigma) \otimes \mathds{1}_{\mathcal{H}_H})\cdot\sum_{\bm{j}}\alpha(\bm{j})a^{\dagger}(\bm{i}, \bm{j})\ketbra{0}{0}a(\bm{i}, \bm{j}))\\
&= \sum_{\bm{j}}\alpha(\bm{j})\Tr(a^{\dagger}(\bm{i}, \sigma\cdot\bm{j})\ketbra{0}{0}a(\bm{i}, \bm{j}))\\
    &= \sum_{\bm{j}}\alpha(\bm{j})\delta(\bm{j}, \sigma\cdot \bm{j}),
    \label{eq:middleofcalc}
  \end{align}
    by Prop.~\ref{prop:fintertwines}.
    Let $\{c_1, \ldots, c_t\}$ be the cycle decomposition of $\sigma$.
  Then the $\delta$ factor in the above expression is nonzero if and only if for all $a \in [t]$, for all $x, y \in c_a$, $j_x = j_y$.
  That is, the cycle decomposition of $\sigma$ ``matches'' the equality pattern of $\bm{j}$.
Recalling Def.~\ref{def:powersum},
  \begin{align}
    \Tr( (F_{\bm{i}}(\sigma) \otimes \mathds{1}_{\mathcal{H}_H})\cdot\rho) = \sum_{\bm{j}}\alpha(\bm{j})\delta(\bm{j}, \sigma\cdot \bm{j}) = 
    \sum_{j_1, \ldots, j_t}\alpha_{j_1}^{|c_1|}\cdots\alpha_{j_t}^{|c_t|} &= p_\sigma(\alpha).
    \label{eq:beforepluggingin}
  \end{align}
  
  On the other hand, by Lem.~\ref{lem:mixturemodel} applied to $\rho$, we have 
  \begin{align}
    \Tr( (F_{\bm{i}}(\sigma) \otimes \mathds{1}_{\mathcal{H}_H})\cdot \rho) &= \Tr( F_{\bm{i}}(\sigma) \Tr_H(\rho)) \\
    &= \Tr\left( \bigoplus_{\mu\vdash n}q_\mu \pi_\mu(F_{\bm{i}}(\sigma)) \frac{\mathds{1}_{(\mathbb{S}_\mu(\mathcal{H}_V))_{\mathrm{wt} = \xi(\bm{i})}}}{\mathrm{dim}(\mu)} \right)\\
    \end{align}
    By Lemma~\ref{lem:restrictionproblem}, $\Tr\left(\pi_\mu\left(F_{\bm{i}}\left(\sigma\right)\right)\right) = \chi_\mu(\sigma)$.
    Therefore, 
    \begin{align}
     \Tr( (F_{\bm{i}}(\sigma) \otimes \mathds{1}_{\mathcal{H}_H}) \rho)   &= \sum_{\mu \vdash n} q_\mu \frac{\chi_\mu(\sigma)}{\mathrm{dim}(\mu)}.
  \end{align}
Therefore, by Eq.~\ref{eq:beforepluggingin}, 
\begin{align}
    p_\sigma(\alpha) &= \sum_{\mu \vdash n} q_\mu \frac{\chi_\mu(\sigma)}{\mathrm{dim}(\mu)}.
\end{align}
By Lem.~\ref{lem:irrechorthonormality},
\begin{align}
\frac{1}{n!}\sum_{\sigma}p_{\sigma}(\alpha)\chi_{\lambda}(\sigma)^* 
    &= \frac{1}{n!}\sum_\sigma \sum_{\mu\vdash n}\chi_\lambda(\sigma)^*\chi_\mu(\sigma)\frac{q_\mu}{\mathrm{dim}(\mu)}\\
     \iff \frac{1}{n!}\sum_{\sigma}p_{\sigma}(\alpha)\chi_{\lambda}(\sigma)   &= \frac{q_\lambda}{\mathrm{dim}(\lambda)},
\end{align}
where we have used $\chi_{\lambda}(\sigma) = \chi_{\lambda}(\sigma)^*$~(\cite{fultonYoungTableauxApplications1996} $\mathsection$ 7.3, Cor. 4).
Recalling Def.~\ref{def:schurpolynomial}, and Eq.~\ref{eq:schurpowersum}, we recognize the left-hand side as a Schur polynomial, so
\begin{align}
    s_\lambda(\alpha) &= \frac{q_\lambda}{\mathrm{dim}(\lambda)},
  \end{align}
  and therefore $q_\lambda = s_\lambda(\alpha)\mathrm{dim}(\lambda)$, and we have proven the lemma.
\end{proof}

Next we show that a perfectly indistinguishable state with single-particle wavefunction $\psi \in \mathcal{H}_H$ as defined in Eq.~\ref{eq:perfectlyindsitinguishable} is in fact perfectly indistinguishable, as defined in Def.~\ref{def:perfectlyindistinguishablestategeneral}.
\begin{proposition}
  Let $\bm{i}$ be a list of $n$ distinct indices from $[m]$, let $\psi \in \mathcal{H}_H$, and let $\phi_{\bm{i}, \psi}$ be as defined in Eq.~\ref{eq:perfectlyindsitinguishable}.
  Then $\phi_{\bm{i}, \psi}$ is perfectly indistinguishable.
  \label{prop:pefectlyindistinguishableisperfectlyindistinguishable}
\end{proposition}
\begin{proof}
  We calculate that the auxiliary state (Def.~\ref{def:auxiliarystate}) is
  \begin{align}
    h(\phi_{\bm{i}, \psi}) = \ketbra{\psi}^{\otimes n}.
  \end{align}
  Therefore, for all $\sigma \in \mathcal{S}_n$, $P(\sigma)h(\phi_{\bm{i}, \psi}) = h(\phi_{\bm{i}, \psi})P(\sigma)^{-1} = h(\phi_{\bm{i}, \psi})$.
  Therefore, $h(\phi_{\bm{i}, \psi})$ is supported only on $\mathrm{Sp}_{(n)} \otimes \mathbb{S}_{(n)}(\mathcal{H}_H)$.
  So, by Lem.~\ref{lem:hintertwinesfandp}, $\phi_{\bm{i}, \psi}$ is supported only on $\mathbb{S}_{(n)}(\mathcal{H}_V)\otimes \mathbb{S}_{(n)}(\mathcal{H}_H)$, so $\phi_{\bm{i}, \psi}$ is perfectly indistinguishable.
\end{proof}

\section{Schur convexity of mean generalized bunching probability}
\label{sec:schurconvexity}
In this section we prove Thm.~\ref{thm:meangenbunchsuremajorization}, and its corollaries Cor.~\ref{cor:schurconvexity}, Cor.~\ref{cor:thermcor}.
Define $\Delta(\mathrm{Part}_n)$ be the set of probability distributions over partitions of $n$.
We have the following distribution over partitions, which is defined naturally for any $n$-particle state.
\begin{definition}
  Let $\rho \in \mathcal{D}(\mathrm{Sym}^n(\mathcal{H}_V \otimes \mathcal{H}_H))$ be a state. Recall the definition of $\rho_{(\lambda)}$ from Def.~\ref{def:dofsepprojection}. We define the irrep distribution of $\rho$, written $D_\rho \in \Delta(\mathrm{Part}_n)$, to be the distribution over partitions of $n$, given by
  \begin{align}
    D_\rho(\lambda) = \Tr(\rho_{(\lambda)}).
  \end{align}
  \label{def:irrepdistribution}
\end{definition}
\begin{lemma}
  Let $n \le m = \mathrm{dim}(\mathcal{H}_V)$ and $n \le L = \mathrm{dim}(\mathcal{H}_H)$.
  Let $\rho \in \mathcal{D}(\mathrm{Sym}^n(\mathcal{H}_V \otimes \mathcal{H}_H))$, and let $S\subseteq [m]$ be a subset of size $k$. 
Then 
  \begin{align}
    \mathbb{E}_{U \sim \mathrm{Haar}(\mathrm{U}(\mathcal{H}_V))}\left(b(S|U, \rho\right)) &= \mathbb{E}_{\lambda\sim D_\rho}\left(\frac{k^{\uparrow \lambda}}{m^{\uparrow \lambda}}\right).
  \end{align}
  \label{lem:meangenbunchgeneral}
\end{lemma}
\begin{proof}
By Cor.~\ref{cor:genbunchafterhiddentrace}, the probability that all $n$ particles arrive in the subset $S$ after applying the linear optical $U \in \mathrm{U}(\mathcal{H}_V)$ is
\begin{align}
  b(S|U, \rho) &= \sum_{\lambda \vdash n}\Tr\left( \pi_\lambda(U) \Tr_H(\rho_{(\lambda)}) (\pi_\lambda(U))^{\dagger} \Pi_S^\lambda \right),
  \label{eq:genbunchfirstintro}
\end{align}
where $\Pi^\lambda_S$ the orthogonal projector onto the subspace of $\mathbb{S}_\lambda(\mathcal{H}_V)$ whose weights are nonzero only on $S$.
The Haar average of the bunching probability is obtained by integrating this equation over unitaries according to
\begin{align}
  \mathbb{E}_{U \sim \text{Haar}(\mathrm{U}(\mathcal{H}_V))}\left(b(S|U, \rho\right)) &= \sum_{\lambda \vdash n}\int dU \Tr\left( \pi_\lambda(U) \Tr_H(\rho_{(\lambda)}) (\pi_\lambda(U))^{\dagger} \Pi_S^\lambda\right)\\
  &= \sum_{\lambda \vdash n}\Tr\left(  \Tr_H(\rho_{(\lambda)})\underbrace{ \int dU (\pi_\lambda(U))^{\dagger} \Pi_S^\lambda\pi_\lambda(U)}\right)
  \label{eq:gbgeneralmiddleofcalc}
\end{align}
where \(\int dU\) denotes integration over the Haar
  measure over unitaries on the visible Hilbert space.
Let $\Pi^\lambda$ be the orthogonal projector onto
$\mathbb{S}_\lambda(\mathcal{H}_V)$ in
$\bigoplus_{\mu\vdash n}\mathbb{S}_\mu(\mathcal{H}_V)$.
We calculate that the underbraced expression is
\begin{align}
  \int dU (\pi_\lambda(U))^{\dagger} \Pi_S^\lambda \pi_\lambda(U)  &= 
    \frac{\Tr(\Pi_S^\lambda)}{\mathrm{dim}(\mathrm{V}^m_\lambda)}\Pi^\lambda
\end{align}
by Schur's lemma. 
The subspace of $\mathbb{S}_\lambda(\mathcal{H}_V)$ with weights nonzero only on $S$ has dimension equal to the number of semistandard tableaux of shape $\lambda$ with entries from $[\left| S \right|] = [k]$.
Therefore, since $\Pi_S^\lambda$ is the orthogonal projector onto this subspace, $\Tr(\Pi_S^\lambda) = \mathrm{dim}(\mathrm{V}^k_\lambda)$.
Now we return to Eq.~\ref{eq:gbgeneralmiddleofcalc},
\begin{align}
  \mathbb{E}_{U \sim \text{Haar}(\mathrm{U}(\mathcal{H}_V))}\left(b(S|U, \rho\right)) &= \frac{\mathrm{dim}(\mathrm{V}^k_\lambda)}{\mathrm{dim}(\mathrm{V}^m_\lambda)}\sum_{\lambda \vdash n}\Tr\left(  \Tr_H(\rho_{(\lambda)}) \Pi^\lambda \right)\\
  \end{align}
  Now observe that
  \begin{align}
    \Tr\left(  \Tr_H(\rho_{(\lambda)}) \Pi^\lambda \right) = \Tr\left(  \Tr_H(\rho_{(\lambda)}) \right) = \Tr\left(\rho_{(\lambda)} \right) = D_\rho(\lambda).
  \end{align}
  Therefore,
  \begin{align}
    \mathbb{E}_{U \sim \text{Haar}(\mathrm{U}(\mathcal{H}_V))}\left(b(S|U, \rho\right)) &= \sum_{\lambda \vdash n}D_\rho(\lambda)\frac{\mathrm{dim}(\mathrm{V}^k_\lambda)}{\mathrm{dim}(\mathrm{V}^m_\lambda)}\\
    &= \sum_{\lambda \vdash n}D_\rho(\lambda)\frac{k^{\uparrow \lambda}}{m^{\uparrow \lambda}}\\
    &= \mathbb{E}_{\lambda \sim D_\rho}\left( \frac{k^{\uparrow \lambda}}{m^{\uparrow \lambda}} \right),
\end{align}
where the penultimate equality follows by Eq.~\ref{eq:nsst}.
\end{proof}
In Lem.~\ref{lem:fschurconvex}, we show that $f:\mathrm{Part}_n\to\mathbb{R}, f(\lambda) = \frac{k^{\uparrow \lambda}}{m^{\uparrow \lambda}}$ is monotonic with respect to the majorization partial order on partitions, which we define next.  
This definition is an extension of Def.~\ref{def:majorization} to partitions.
\begin{definition}[Majorization for partitions]
For $\lambda, \mu \vdash n$, we say that $\lambda$ majorizes $\mu$, written $\lambda \succ \mu$, if for all $k \in [n]$,
\begin{align}
  \sum_{i=1}^k \lambda_i \ge \sum_{i=1}^k \mu_i.
\end{align}
\label{def:majorizationforpartitions}
\end{definition}
\begin{lemma}
  Let $n, k \in [m]$. Define $f:\mathrm{Part}_n\rightarrow \mathbb{R}$ by $f(\kappa) = \frac{k^{\uparrow \kappa}}{m^{\uparrow \kappa}}$.
  Then $f$ is well-defined and $f\ge 0$ with equality exactly when $\kappa$ has a box with content equal to $-k$.
  Furthermore, if $\mu, \lambda \vdash n$ such that $\mu \neq \lambda$ and $f(\mu) \neq 0$, then
\begin{align}
\mu \succ \lambda \implies f(\mu)> f(\lambda).
  \label{eq:claim}
\end{align}
If $f(\mu) = 0$, then $\mu \succ \lambda \implies f(\mu) = f(\lambda) = 0$.
\label{lem:fschurconvex}
\end{lemma}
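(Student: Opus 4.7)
I would split the lemma into three pieces and handle them separately.

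\textit{Well-definedness and the zero set.} Every box of $\kappa\vdash n\le m$ has content at least $1-\mathrm{len}(\kappa)\ge 1-m$, so each factor $m+c(\square)\ge 1$ and $m^{\uparrow\kappa}>0$, making $f$ well-defined. Applying Fact~\ref{fact:allcontents} with $k$ in place of $d$ gives $k^{\uparrow\kappa}\ge 0$, with equality iff some box of $\kappa$ has content $-k$; such a box exists iff $\mathrm{len}(\kappa)\ge k+1$.

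\textit{The case $f(\mu)=0$.} Conjugation of partitions reverses dominance and sends $\mathrm{len}$ to the largest part, so $\mu\succ\lambda$ yields $\lambda'\succ\mu'$ and hence $\mathrm{len}(\lambda)=\lambda'_1\ge\mu'_1=\mathrm{len}(\mu)\ge k+1$, forcing $f(\lambda)=0$ as well.

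\textit{Strict monotonicity when $f(\mu)>0$.} If $\mathrm{len}(\lambda)>k$ then $f(\lambda)=0<f(\mu)$, so assume $\mathrm{len}(\lambda)\le k$ and work in the sub-poset $P_k=\{\kappa\vdash n:\mathrm{len}(\kappa)\le k\}$. On $P_k$ we have $\ln f(\kappa)=\sum_{\square\in[\kappa]}g(c(\square))$ with $g(c)=\ln(k+c)-\ln(m+c)$, and $g'(c)=(m-k)/((k+c)(m+c))>0$ for $c>-k$, assuming $k<m$ (when $k=m$, $f\equiv 1$ and the strict inequality cannot hold, so this case is implicitly excluded); hence $g$ is strictly increasing. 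I plan to decompose $\mu\succ\lambda$ inside $P_k$ into a chain of single-box Robin-Hood moves
\[
\mu=\mu^{(0)}\succ\mu^{(1)}\succ\cdots\succ\mu^{(T)}=\lambda,
\]
where $\mu^{(t+1)}$ is obtained from $\mu^{(t)}$ by picking the smallest index $i$ with $\mu^{(t)}_i>\lambda_i$ (refined down to the bottom of a block of equal parts so that $\mu^{(t)}_i>\mu^{(t)}_{i+1}$) and the smallest $j>i$ with $\mu^{(t)}_j<\lambda_j$, then setting $\mu^{(t+1)}_i=\mu^{(t)}_i-1$ and $\mu^{(t+1)}_j=\mu^{(t)}_j+1$; the resulting sequence is already a partition inside $P_k$. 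Such a move removes a box of content $\mu^{(t)}_i-i$ and inserts one of content $\mu^{(t)}_j+1-j$, so
\[
\ln f(\mu^{(t)})-\ln f(\mu^{(t+1)})=g\bigl(\mu^{(t)}_i-i\bigr)-g\bigl(\mu^{(t)}_j+1-j\bigr).
\]
The inequalities $\mu^{(t)}_i\ge\lambda_i+1\ge\lambda_j+1\ge\mu^{(t)}_j+2$ combined with $j-i\ge 1$ force $\mu^{(t)}_i-i-(\mu^{(t)}_j+1-j)\ge 2>0$, so strict monotonicity of $g$ yields a strict decrease at each step, and telescoping gives $f(\mu)>f(\lambda)$.

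The main obstacle will be the combinatorial bookkeeping to guarantee that the Robin-Hood decomposition really stays inside $P_k$ and that each prescribed move produces a valid partition without a re-sort; these are classical points but require the careful refinement of the choice of $i$ (and the dual refinement of $j$) to corners of the Young diagram sketched above. If a direct argument turns out to be fiddly, a backup plan is to invoke Brylawski's description of the covering relations in the dominance lattice to produce the chain.
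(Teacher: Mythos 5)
Your proposal is correct, and its core — reducing $\mu\succ\lambda$ to a chain of single-box robin-hood moves and using that the moved box's content strictly decreases while $g(c)=(k+c)/(m+c)$ is strictly increasing on $(-k,m)$ — is exactly the paper's argument (the paper phrases it multiplicatively as $f(\lambda)/f(\mu)=g(c_\lambda)/g(c_\mu)<1$ rather than via $\ln f$, but that is cosmetic). Two points where you genuinely diverge. First, for the case $f(\mu)=0$ you argue via conjugation: $\mu\succ\lambda$ gives $\lambda'\succ\mu'$, hence $\mathrm{len}(\lambda)\ge\mathrm{len}(\mu)\ge k+1$, and you have already identified the zero set of $f$ as $\{\kappa:\mathrm{len}(\kappa)\ge k+1\}$. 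The paper instead runs the robin-hood chain and tracks the decreasing content of the moved box through Fact~\ref{fact:allcontents}; your route is cleaner and decouples this case from the chain decomposition entirely. Second, you spell out an explicit greedy construction of the chain (and verify it stays inside $P_k$), whereas the paper simply asserts the classical fact that dominance is generated by robin-hood transfers; your extra bookkeeping is sound but, as you suspect, citing the standard description of covers in the dominance lattice would suffice. Finally, your observation that the strict inequality fails when $k=m$ (where $f\equiv 1$, and correspondingly $g'\equiv 0$) is a legitimate catch: the lemma as stated, and the paper's proof, tacitly assume $k<m$, since the paper's own computation $g'(x)=(m-k)/(m+x)^2>0$ requires it. This does not affect the downstream non-strict Schur-convexity claim, but it is worth flagging for the strict statements.
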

\begin{proof}
  With the definition of content from Def.~\ref{def:contentofabox}, we can write
\begin{align}
  f(\kappa) = \frac{k^{\uparrow \kappa}}{m^{\uparrow \kappa}} &= \prod_{\square \in [\kappa]}\frac{k+c(\square)}{m+c(\square)}.
\end{align}
Observe that since $n \le m$, we have $\forall\, \square \in [\kappa]$, $c(\square) > -m$, so we are never dividing by 0, and therefore $f$ is well-defined.
Since the rising factorial is always nonnegative, $f \ge 0$.

If \(\mu\succ\lambda\), then it is possible to obtain $\lambda$ from $\mu$ through a sequence of steps, where in each step a box is moved from the end of a longer row to the end of a shorter one.
Such a step is called a robin-hood transfer~\cite{arnoldMajorization2011}.
Therefore, it suffices to prove the statements of the lemma when $\lambda$ is obtained from $\mu$ by a robin-hood transfer.
The content of a box moved in a robin-hood transfer must decrease, because its column index must decrease, and its row index must increase.

Suppose that $f(\mu) = 0$.
Then
\begin{align}
  0 &=  \prod_{\square \in [\mu]}\frac{k + c(\square)}{m+c(\square)}\\
  \implies
  0 &=  \prod_{\square \in [\mu]}(k + c(\square))
\end{align}
and therefore there exists some box $\square \in [\mu]$ such that $c(\square) = -k$.
Since the content of the box being moved must decrease, by Fact~\ref{fact:allcontents}, there must be a box of $\lambda$ with content equal to $-k$, and we can conclude that $f(\mu) = 0 \implies f(\lambda) = 0$.

Next, suppose that $f(\mu) > 0$. 
If $f(\lambda) = 0$, we are done.
So assume $f(\lambda) \neq 0$.
Define $g(x) = \frac{k+x}{m+x}$, so $f(\kappa) = \prod_{\square \in [\kappa]}g(c(\square))$.  
Then because $f(\lambda) \neq 0$, the content of every box in $\lambda$ must be larger than $-k$, and therefore $g(c(\square)) > 0$ for all $\square \in [\lambda]$.
Suppose the box being moved has content $c_\mu$ in $\mu$ and $c_\lambda$ in $\lambda$, so that $c_\lambda < c_\mu$ because $\mu \neq \lambda$.
Then calculate that
\begin{align}
  \frac{f(\lambda)}{f(\mu)} &= \frac{g(c_\lambda)}{g(c_\mu)}
\end{align}
The above is positive and less than 1, since $g$ is positive and monotonically increasing on $(-k, m)$:
\begin{align}
  g'(x) &=  \frac{(m+x) - (k+x)}{(m+x)^2}\\
  &= \frac{m - k}{(m+x)^2} > 0.
\end{align}
Therefore, we have proved the claim Eq.~\ref{eq:claim}.
\end{proof}
We need the next two definitions to state Thm.~\ref{thm:meangenbunchsuremajorization}.
\begin{definition}[Coupling]
A coupling of two probability distributions $\alpha \in \Delta^{N-1}, \beta \in \Delta^{M-1}$ is a joint distribution $\gamma \in \Delta^{MN-1}$ such that the marginals are $\alpha$ and $\beta$.
That is, $\sum_x \gamma(x, y) = \beta(y)$ and $\sum_y \gamma(x, y) = \alpha(x)$.
If $X \sim \alpha$, $Y \sim \beta$, and $(X, Y) \sim \gamma$, we also say that $(X, Y)$ is a coupling of $X$ and $Y$.
\label{def:coupling}
\end{definition}
\begin{definition}[Sure Majorization]
  Let $p, q \in \Delta(\mathrm{Part}_n)$ be distributions over the set of partitions of $n$.
  If $(\bm{\mu},\bm{\lambda})$ (Def.~\ref{def:coupling}) is a coupling of $\bm{\mu}\sim p$ and $\bm{\lambda}\sim q$  for which $\bm{\mu}\succ \bm{\lambda}$ (Def.~\ref{def:majorizationforpartitions}) with probability \(1\), we say that $\bm{\lambda}$ is majorized by $\bm{\mu}$ surely in this coupling.
  When such a coupling exists, we write $p\gtrdot q$.
  Let $\rho, \tau \in \mathcal{D}(\mathrm{Sym}^n(\mathcal{H}_V \otimes \mathcal{H}_H))$.
  If $D_\rho \gtrdot D_\tau$ (Def.~\ref{def:irrepdistribution}), then we also write $\rho \gtrdot \tau$.
  \label{def:suremajorization}
\end{definition}
The next three propositions (Prop.~\ref{prop:suremajorizationisapartialorder}, Prop.~\ref{prop:suremajorizationgreatestandleast}, Prop~\ref{prop:suremajorizationpreorderonstates}) give basic properties about this relation.
\begin{proposition}
  The relation $\gtrdot$ is a partial order on $\Delta(\mathrm{Part}_n)$, the set of distributions over partitions of $n$.
  \label{prop:suremajorizationisapartialorder}
\end{proposition}
\begin{proof}
  First we show reflexivity.
  Let $p\in \Delta(\mathrm{Part}_n)$.
  Then we have the following coupling of $p$ with itself,
  \begin{align}
    q(\lambda, \lambda') &= \delta_{\lambda, \lambda'}p(\lambda).
  \end{align}
  This is indeed a coupling of $p$ with itself, since $\sum_{\lambda\vdash n}q(\lambda, \lambda') = \sum_{\lambda\vdash n}\delta_{\lambda, \lambda'}p(\lambda) = p(\lambda')$, and similarly in its other argument.
  Furthermore, for any sample from this distribution, we have $\lambda = \lambda'$, so in particular $\lambda\succ \lambda'$.

  We now show that $\gtrdot$ is transitive.
  Let $p_0, p_1, p_2 \in \Delta(\mathrm{Part}_n)$ be such that $p_0 \gtrdot p_1$ and $p_1 \gtrdot p_2$.
  Let $(\bm{\lambda}_0, \bm{\lambda}_1)\sim q_{0, 1}$ be a coupling of $\bm{\lambda}_0\sim p_0$ and $\bm{\lambda}_1 \sim p_1$ such that $\bm{\lambda}_0 \succ \bm{\lambda}_1$ with probability 1, and define $q_{1, 2}$ similarly.
  Then define the function
  \begin{align}
    q_{0, 1, 2}(\lambda_0, \lambda_1, \lambda_2) &= 
    \begin{cases}
      \frac{q_{0, 1}(\lambda_0, \lambda_1)q_{1, 2}(\lambda_1, \lambda_2)}{p_1(\lambda_1)} &\text{ if $p_1(\lambda_1) \neq 0$}\\
      0 & \text{otherwise}
    \end{cases}
  \end{align}
  Since $\sum_{\lambda_0\vdash n}q_{0,1, 2}(\lambda_0, \lambda_1, \lambda_2) = q_{1, 2}(\lambda_1, \lambda_2)$ and $\sum_{\lambda_2\vdash n}q_{0,1, 2}(\lambda_0, \lambda_1, \lambda_2) = q_{0, 1}(\lambda_0, \lambda_1)$, $q_{0, 1, 2}$ is a joint distribution such that its marginals are $p_0, p_1$ and $p_2$.
  Therefore, $q_{0, 2} := \sum_{\lambda_1 \vdash n}q_{0, 1, 2}(\lambda_0, \lambda_1, \lambda_2)$ is a coupling of $p_0$ and $p_2$.
  It remains to show that $(\bm{\lambda}_0, \bm{\lambda}_2)\sim q_{0, 2}$ obeys $\bm{\lambda}_0\succ \bm{\lambda}_2$ with probability 1.
  Observe that $q_{0, 1, 2}(\lambda_0, \lambda_2|\lambda_1) = q_{0, 1}(\lambda_0|\lambda_1)q_{1, 2}(\lambda_2|\lambda_1)$, so $\bm{\lambda}_0\to\bm{\lambda}_1\to\bm{\lambda}_2$ is a Markov chain.
  We now state the intuition for the calculation that follows.
  Conditional on a sample $\lambda_0$, it must be that $\lambda_1 \prec \lambda_0$, and conditional on $\lambda_1$, it must be that $\lambda_2 \prec \lambda_1$, and since majorization is transitive, we can conclude that $\lambda_0 \succ \lambda_2$.
  Formally, suppose that $(\lambda_0, \lambda_2)$ is in the support of $q_{0,2}$.
  Then
  \begin{align}
    q_{0, 2}(\lambda_0, \lambda_2) &= p_0(\lambda_0)\sum_{\lambda_1\vdash n}q_{1,2}(\lambda_2|\lambda_1)q_{0, 1}(\lambda_1|\lambda_0) \neq 0\\
    \implies \qquad&\exists \lambda_1\text{ s.t. }\left( q_{1,2}(\lambda_2|\lambda_1)\neq 0 \text{ and } q_{0,1}(\lambda_1|\lambda_0)\neq 0\right)\\
    \implies \qquad&\exists \lambda_1\text{ s.t. }\left( \lambda_2\prec\lambda_1 \text{ and } \lambda_1\prec\lambda_0\right)\\
    \implies \qquad & \lambda_2 \prec \lambda_0.
  \end{align}
  Therefore $\gtrdot$ is transitive.

  Finally, we show antisymmetry.
  Suppose $p_0, p_1 \in \Delta(\mathrm{Part}_n)$ such that $p_0 \gtrdot p_1$ and $p_1 \gtrdot p_0$.
  By Strassen's theorem~\cite{Roch_2024} Thm.~4.2.11, $p_0(A) = p_1(A)$ on every upward closed $A$.
  By the M\"obius inversion formula~\cite{Stanley2011-ld} Prop.~3.7.1, this implies that $p_0(\lambda) = p_1(\lambda)$ for all $\lambda \vdash n$.
\end{proof}
\begin{proposition}
  The distribution $k^{(n)} \in \Delta(\mathrm{Part}_n)$ on partitions of $n$, given by $k^{(n)}(\lambda) = \delta_{\lambda, (n)}$ is a greatest element of $\gtrdot$, and $k^{(1^n)}\in \Delta(\mathrm{Part}_n)$, given by $k^{(1^n)}(\lambda) = \delta_{\lambda, (1^n)}$ is a least element.
  \label{prop:suremajorizationgreatestandleast}
\end{proposition}
\begin{proof}
  Let $p \in \Delta(\mathrm{Part}_n)$. Then the coupling $q^{(n)}(\lambda, \mu) = k^{(n)}(\lambda)p(\mu)$ is supported only on the set $\lambda\succ \mu$.
  Similarly, the coupling $q^{(1^n)}(\lambda, \mu) = k^{(1^n)}(\lambda)p(\mu)$ is supported only on the set $\lambda\prec \mu$.
\end{proof}
\begin{proposition}
  The relation $\gtrdot$ is a preorder on $\mathcal{D}(\mathrm{Sym}^n(\mathcal{H}_V \otimes \mathcal{H}_H))$.
  \label{prop:suremajorizationpreorderonstates}
\end{proposition}
\begin{proof}
  This follows by reflexivity and transitivity of $\gtrdot$ on $\Delta(\mathrm{Part}_n)$.
\end{proof}
\begin{theorem}
  Let $\rho, \tau \in \mathcal{D}(\mathrm{Sym}^n(\mathcal{H}_V \otimes \mathcal{H}_H))$ such that $\rho \gtrdot \tau$ (Def.~\ref{def:suremajorization}).
  Let $S\subseteq [m]$.
  Then
  \begin{align}
    \mathbb{E}_{U \sim \mathrm{Haar}(\mathrm{U}(\mathcal{H}_V))}\left(b(S|U, \rho\right)) &\ge \mathbb{E}_{U \sim \mathrm{Haar}(\mathrm{U}(\mathcal{H}_V))}\left(b(S|U, \tau\right)).
  \end{align}
  \label{thm:meangenbunchsuremajorization}
\end{theorem}
\begin{proof}
  By definition, we obtain a coupling $(\bm{\mu}, \bm{\lambda})$ of $\bm{\mu}\sim D_\rho$ and $\bm{\lambda}\sim D_\tau$ such that $\bm{\mu} \succ \bm{\lambda}$ with probability 1. Therefore by Lem.~\ref{lem:fschurconvex},
\begin{align}
  \mathbb{E}_{(\bm{\mu}, \bm{\lambda})}\left(f(\mu) - f(\lambda)\right) &\ge 0\\
  \iff \mathbb{E}_{\bm{\mu} \sim D_\rho}\left(f(\mu)\right) &\ge \mathbb{E}_{\bm{\lambda} \sim D_\tau}\left(f(\lambda)\right)\\
  \iff \mathbb{E}_{U \sim \mathrm{Haar}(\mathrm{U}(\mathcal{H}_V))}\left(b(S|U, \rho\right)) &\ge \mathbb{E}_{U \sim \mathrm{Haar}(\mathrm{U}(\mathcal{H}_V))}\left(b(S|U, \tau\right)),
\end{align}
where the last equivalence is due to Lem.~\ref{lem:meangenbunchgeneral}.
\end{proof}
We now restate and prove the following corollary. Recall the definition of the perfectly indistinguishable state Eq.~\ref{eq:perfectlyindsitinguishable}.
\avglieb*
\begin{proof}
  Apply Thm.~\ref{thm:meangenbunchsuremajorization}, Prop.~\ref{prop:pefectlyindistinguishableisperfectlyindistinguishable}, and Prop.~\ref{prop:suremajorizationgreatestandleast}.
\end{proof}

We now turn to proving Cor.~\ref{cor:schurconvexity}, for which we need the following lemma.
\begin{lemma}
  Let $n \le m = \mathrm{dim}(\mathcal{H}_V)$ and $n \le L = \mathrm{dim}(\mathcal{H}_H)$.
  Let $\alpha\in \Delta^{L-1}$ be a distribution, and let $\rho(\alpha)$ be a uniform state (Eq.~\ref{eq:alphastate}).
  Then $D_{\rho(\alpha)} = \mathrm{SW}^n(\alpha)$.
  \label{lem:uniformstateirrepdistribution}
\end{lemma}
\begin{proof}
  Immediate from Lem.~\ref{lem:modelderivation}.
\end{proof}
Thm.~1.5.4 of~\cite{wright2016learn}, which we state next, provides a connection between the Schur-Weyl distribution $\mathrm{SW}^n(\alpha)$ and sure majorization (Def.~\ref{def:suremajorization}).
\begin{theorem}[Thm.~1.5.4 of \cite{wright2016learn}] Let
  $\alpha, \beta \in \Delta^{d-1}$ with $\beta \succ \alpha$.  Let
  $\bm{\mu}\sim \mathrm{SW}^n(\beta), \bm{\lambda} \sim
  \mathrm{SW}^n(\alpha)$.  Then for every 
  $n \in \mathbb{N}$, there is a coupling $(\bm{\mu}, \bm{\lambda})$
  of $\mathrm{SW}^n(\alpha)$ and $\mathrm{SW}^n(\beta)$ such that
  $\bm{\mu} \succ \bm{\lambda}$ with probability 1. 
  \label{thm:wright154}
\end{theorem}
By Lem.~\ref{lem:uniformstateirrepdistribution}, this implies that if $\alpha, \alpha' \in \Delta^{L-1}$ are such that $\alpha \succ \alpha'$, then $\rho(\alpha)\gtrdot \rho(\alpha')$.
We now restate and prove Cor.~\ref{cor:schurconvexity}.
\schurconvexity*
\begin{proof}
  By Thm.~\ref{thm:wright154}, we obtain a coupling $(\bm{\mu},\bm{\lambda})$ of $\bm{\mu}\sim \mathrm{SW}^n(\alpha)$ and $\bm{\lambda}\sim\mathrm{SW}^n(\alpha')$ for which $\bm{\mu}\succ \bm{\lambda}$ with probability \(1\).
  The corollary follows by Lem.~\ref{lem:meangenbunchgeneral}, Lem.~\ref{lem:uniformstateirrepdistribution}, and Thm.~\ref{thm:meangenbunchsuremajorization}.
\end{proof}

We now turn to proving Cor.~\ref{cor:thermcor}, for which we need the next two lemmas.
\begin{lemma}
  Let $\alpha_0 \in \Delta^{L-1}$ be a distribution such that no element occurs with probability 1, and let $\alpha_{\infty} = \left( 1, 0, \ldots, 0 \right) \in \Delta^{L-1}$.
  Let $S \subseteq [m]$.
  Then 
  \begin{align}
    \mathbb{E}_{U \sim \mathrm{Haar}(\mathrm{U}(\mathcal{H}_V))}\left(b(S|U, \rho(\alpha_{\infty})\right)) > \mathbb{E}_{U \sim \mathrm{Haar}(\mathrm{U}(\mathcal{H}_V))}\left(b(S|U, \rho(\alpha_0)\right)).
    \label{eq:strictinequality}
  \end{align}
  \label{lem:strictinequality}
\end{lemma}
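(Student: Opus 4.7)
The plan is to reduce the strict inequality to a statement about expectations under the Schur--Weyl distribution. By Lem.~\ref{lem:meangenbunchexplicit}, the Haar average of the generalized bunching probability equals $\mathbb{E}_{\lambda \sim \mathrm{SW}^n(\alpha)}\!\left[f(\lambda)\right]$ with $f(\lambda) = k^{\uparrow\lambda}/m^{\uparrow\lambda}$ and $k = |S|$, so it suffices to compare these expectations at $\alpha_\infty$ and $\alpha_0$.

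For $\alpha_\infty = (1, 0, \ldots, 0)$, the only semistandard tableau contributing a nonzero monomial to $s_\lambda(\alpha_\infty)$ is the one filled entirely with the label $1$, which forces $\lambda = (n)$. Since $s_{(n)}(\alpha_\infty) = 1$ and $\dim((n)) = 1$, the Schur--Weyl distribution $\mathrm{SW}^n(\alpha_\infty)$ is the point mass at the single-row partition $(n)$. Hence the right-hand side of Eq.~\ref{eq:strictinequality} equals $f((n)) = \prod_{j=0}^{n-1}(k+j)/(m+j)$.

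Next I would invoke Lem.~\ref{lem:fschurconvex} together with the fact that $(n)$ is the maximum element of the majorization order on $\mathrm{Part}_n$: the boxes of $(n)$ have contents $0, 1, \ldots, n-1$, all strictly greater than $-k$ provided $k \geq 1$ (for $k = 0$ the bunching probability vanishes identically and the lemma is vacuous), so $f((n)) > 0$ and therefore $f((n)) > f(\lambda)$ strictly for every $\lambda \vdash n$ with $\lambda \neq (n)$. Consequently $\mathbb{E}_{\lambda \sim \mathrm{SW}^n(\alpha_0)}\!\left[f(\lambda)\right] \leq f((n))$, with equality exactly when $\mathrm{SW}^n(\alpha_0)$ is concentrated at $(n)$.

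The main obstacle is ruling out that concentration from the hypothesis that no coordinate of $\alpha_0$ equals $1$, and the idea is to exhibit a partition $\lambda \neq (n)$ for which $s_\lambda(\alpha_0) > 0$. The hypothesis guarantees two labels $i_1 < i_2$ with $\alpha_0(i_1), \alpha_0(i_2) > 0$; for $n \geq 2$ the shape $(n-1, 1)$ admits a semistandard tableau with $n-1$ copies of $i_1$ in the first row and $i_2$ in the single box of the second row, contributing $\alpha_0(i_1)^{n-1}\alpha_0(i_2) > 0$ to $s_{(n-1, 1)}(\alpha_0)$. Thus $\mathrm{SW}^n(\alpha_0)((n-1, 1)) > 0$, the equality case is broken, and Eq.~\ref{eq:strictinequality} follows. (The edge case $n = 1$ reduces both sides to $k/m$, so the lemma is to be read for $n \geq 2$; no deeper difficulty arises.)
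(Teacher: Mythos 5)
Your proof is correct and follows essentially the same route as the paper's: reduce to the Schur--Weyl expectation via Lem.~\ref{lem:meangenbunchexplicit}, observe that $\mathrm{SW}^n(\alpha_\infty)$ is the point mass at $(n)$, and apply the strict monotonicity of $f(\lambda)=k^{\uparrow\lambda}/m^{\uparrow\lambda}$ under majorization from Lem.~\ref{lem:fschurconvex}. The only difference is that you explicitly justify the step the paper merely asserts---that $\mathrm{SW}^n(\alpha_0)$ puts positive mass off $(n)$, via the $(n-1,1)$ semistandard tableau---and you verify $f((n))>0$, which is indeed needed to invoke the strict case of Lem.~\ref{lem:fschurconvex}.
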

\begin{proof}
  We use $f$ as defined in Lem.~\ref{lem:fschurconvex}.
  Observe that $\mathrm{SW}^n(\alpha_{\infty})(\lambda) = \delta_{\lambda, (n)}$.
  On the other hand, $\mathrm{SW}^n(\alpha)$ must have support off of $(n)$.
  If $\lambda$ is a partition of $n$ such that $\lambda \neq (n)$, then $(n)\succ \lambda$, so by Lem.~\ref{lem:fschurconvex}, $f(\lambda) < f( (n))$.
  Therefore, by Lem.~\ref{lem:uniformstateirrepdistribution} and Lem.~\ref{lem:meangenbunchgeneral}, we conclude the inequality Eq.~\ref{eq:strictinequality}.
\end{proof}

\begin{lemma}
Let $0=\epsilon_0\le \cdots\le\epsilon_d$.
  For $\beta \ge 0$, define
  \begin{align}
    (\alpha_\beta)_i = \frac{e^{-\beta\epsilon_i}}{\sum_{j}e^{-\beta\epsilon_j}}.
  \end{align}
  Then,
  \begin{align}
    \beta \ge \beta' \implies \alpha_\beta \succ \alpha_{\beta'}.
  \end{align}
  \label{lem:gibbsmaj}
\end{lemma}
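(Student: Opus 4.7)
The plan is to show that for each prefix length $k$, the cumulative weight $F(\beta,k) = \sum_{i=0}^{k-1}(\alpha_\beta)_i$ is nondecreasing as a function of $\beta$. Because $\epsilon_0 \le \epsilon_1 \le \cdots \le \epsilon_d$, the probabilities $(\alpha_\beta)_i$ are nonincreasing in $i$, so $\alpha_\beta^\downarrow = \alpha_\beta$ (up to relabeling indices $0,\ldots,d$ as $1,\ldots,d+1$). Therefore the majorization condition from Def.~\ref{def:majorization} reduces exactly to the monotonicity of $F(\beta,k)$ in $\beta$ for each $k \in \{1,\ldots,d+1\}$.

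First, I would compute the derivative explicitly. Writing $Z(\beta) = \sum_j e^{-\beta\epsilon_j}$ and $\langle\epsilon\rangle_\beta = \sum_j \epsilon_j (\alpha_\beta)_j$, a direct computation gives
\begin{align}
\frac{\partial}{\partial\beta}(\alpha_\beta)_i = (\alpha_\beta)_i\bigl(\langle\epsilon\rangle_\beta - \epsilon_i\bigr).
\end{align}
Summing this over $i \in \{0,\ldots,k-1\}$ and factoring out $F(\beta,k)$ yields
\begin{align}
\frac{\partial F(\beta,k)}{\partial\beta} = F(\beta,k)\bigl(\langle\epsilon\rangle_\beta - \langle\epsilon\rangle_\beta^{(k)}\bigr),
\end{align}
where $\langle\epsilon\rangle_\beta^{(k)} = \sum_{i=0}^{k-1}\epsilon_i (\alpha_\beta)_i / F(\beta,k)$ is the conditional average energy restricted to the first $k$ levels.

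The key observation is then that $\langle\epsilon\rangle_\beta^{(k)} \le \langle\epsilon\rangle_\beta$, because $\langle\epsilon\rangle_\beta$ is a convex combination of $\langle\epsilon\rangle_\beta^{(k)}$ (the average over the low-energy part) and the analogous average over the high-energy tail $\{k,\ldots,d\}$, and the latter is at least $\epsilon_{k-1} \ge \langle\epsilon\rangle_\beta^{(k)}$ since the $\epsilon_i$ are sorted. Consequently $\partial_\beta F(\beta,k) \ge 0$, and integrating from $\beta'$ to $\beta$ gives $F(\beta,k) \ge F(\beta',k)$ for all $k$, which is precisely $\alpha_\beta \succ \alpha_{\beta'}$.

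No step here is a real obstacle; the main thing to watch is bookkeeping of the indexing shift between $\{0,\ldots,d\}$ and Def.~\ref{def:majorization}, and verifying that the $\alpha_\beta$ are already in nonincreasing order so the $\downarrow$ sort is trivial. The derivative identity $\partial_\beta(\alpha_\beta)_i = (\alpha_\beta)_i(\langle\epsilon\rangle_\beta - \epsilon_i)$ is standard for Gibbs ensembles and the monotonicity of the cumulative distribution in $\beta$ then follows immediately from the fact that conditioning on the low-energy subset can only decrease the mean energy.
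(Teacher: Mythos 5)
Your proof is correct and follows essentially the same route as the paper: both reduce the majorization claim to showing that each prefix sum $\sum_{i=0}^{k-1}(\alpha_\beta)_i$ is nondecreasing in $\beta$, and both identify the derivative's sign with the inequality $\langle\epsilon\rangle_\beta^{(k)}\le\langle\epsilon\rangle_\beta$ between the conditional mean energy of the low-energy prefix and the full mean energy. The only difference is in how that inequality is justified: the paper verifies it by an explicit antisymmetrized double sum over level pairs, whereas your convex-combination argument (splitting $\langle\epsilon\rangle_\beta$ into the prefix average and the tail average and using that the energies are sorted) is more direct; you also make explicit the point, left implicit in the paper, that the sortedness of the $\epsilon_i$ guarantees $\alpha_\beta^\downarrow=\alpha_\beta$ so the prefix sums are the correct quantities for Def.~\ref{def:majorization}.
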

\begin{proof}
  Let $k \in \left\{ 0, \ldots, d \right\}$.
  We want to show that
  \begin{align}
    f_k = \sum_{i=0}^k (\alpha_\beta)_i = \frac{\sum_{i=0}^ke^{-\beta\epsilon_i}}{\sum_{j=0}^d e^{-\beta\epsilon_j}}
  \end{align}
  is monotonically nondecreasing in \(\beta\).
  A sufficient condition is that the derivative of $\log(f_k)$ with respect to $\beta$ is nonnegative.
  Define the partition function (in the statistical mechanical sense, not the number theory sense) $Z_j = \sum_{i=0}^j e^{-\beta\epsilon_i}$ so that $\log f_k = \log Z_k - \log Z_d$.
  Then $Z_k$ is a partition function of a system $S_k$ with energy levels $\left\{ \epsilon_i \right\}_{i=1}^k$, and $Z_d$ is a partition function of a system $S_d$ with energy levels $\left\{ \epsilon_i \right\}_{i=1}^d$, which contains $S_k$ as a subsystem.
  Then
  \begin{align}
    \frac{\partial}{\partial \beta}\log f_k &=      \frac{\partial}{\partial \beta}\log Z_k - \frac{\partial}{\partial \beta}\log Z_d\\
    &= -\left\langle E \right\rangle_{S_k} + \left\langle E \right\rangle_{S_d},
    \label{eq:expectedenergy}
  \end{align}
  where $\left\langle E \right\rangle_S$ is the expected energy of the system $S$ at inverse temperature $\beta$.
  Since $S_d$ is just $S_k$ with some additional higher energy levels, and they are at the same temperature, it must have higher expected energy.
  Therefore, we should expect that Eq.~\ref{eq:expectedenergy} is nonnegative.
  Formally, we calculate
  \begin{align}
-\left\langle E \right\rangle_{S_k} + \left\langle E \right\rangle_{S_d}
&= \frac{\sum_{i=0}^k(-\epsilon_i)e^{-\beta\epsilon_i}}{(\sum_{i=0}^k e^{-\beta\epsilon_i})} - \frac{\sum_{j=0}^d(-\epsilon_j) e^{-\beta\epsilon_j}}{(\sum_{j=0}^d e^{-\beta\epsilon_j})}\\
&= \frac{(\sum_{j=0}^d e^{-\beta\epsilon_j})\sum_{i=0}^k(-\epsilon_i)e^{-\beta\epsilon_i} - (\sum_{i=0}^k e^{-\beta\epsilon_i})\sum_{j=0}^d(-\epsilon_j) e^{-\beta\epsilon_j}}{(\sum_{i=0}^k e^{-\beta\epsilon_i})(\sum_{j=0}^d e^{-\beta\epsilon_j})}
    \label{eq:majcalc0}
  \end{align}
  This is nonnegative exactly when its numerator is, so we calculate that the numerator of the right-hand side of Eq.~\ref{eq:majcalc0} is
  \begin{align}
    \sum_{j=0}^d \sum_{i=0}^k(-\epsilon_i + \epsilon_j)e^{-\beta(\epsilon_i+\epsilon_j)}
    \label{eq:majcalc2}
  \end{align}
Now we change summation variables to $a = i+j$ and $b = (i-j)$.
Define $\Delta_{a, b} = \epsilon_i - \epsilon_j$, and define $\tau_{a, b} = \epsilon_i + \epsilon_j$.
Then $\Delta_{a, b} = -\Delta_{a, -b}$ and $\tau_{a, b} = \tau_{a, -b}$.
Furthermore, $b \ge 0 \implies \Delta_{a, b}\ge 0$, and $b \le 0\implies \Delta_{a, b} \le 0$.
Then, the expression in Eq.~\ref{eq:majcalc2} is
\begin{align}
  \sum_{a = 0}^{d + k}\sum_{\substack{b = \mathrm{max}(-d, -a)\\ \text{step by 2}}}^{\mathrm{min}(k, a)}-\Delta_{a, b}e^{-\beta\tau_{a, b}}
  &= 
  \sum_{a = 0}^{d + k}\sum_{\substack{b = \mathrm{max}(-d, -a)\\ \text{step by 2}}}^{-\mathrm{min}(k, a)-2}-\Delta_{a, b}e^{-\beta\tau_{a, b}}
  \ge 0.
  \label{eq:majcalc3}
\end{align}
Therefore, $\frac{\partial}{\partial\beta}f_k \ge 0$, so $\log f_k$ is nondecreasing, and we have proven the lemma.
\end{proof}

We now restate and prove Cor.~\ref{cor:thermcor}.
\thermcor*
\begin{proof}
  Write $y(\beta) = \mathbb{E}_{U \sim \mathrm{Haar}(\mathrm{U}(\mathcal{H}_V))}\left(b(S|U, \rho(\alpha_\beta)\right))$ for the mean generalized bunching probability.
It is shown in Lem.~\ref{lem:gibbsmaj} that $\beta \ge \beta' \implies \alpha_\beta \succ \alpha_{\beta'}$.
Therefore, by Cor.~\ref{cor:schurconvexity}, the mean generalized bunching probability is monotonically nondecreasing as a function of inverse temperature $\beta$.
If not all the energies $\{\epsilon_i\}_i$ are equal, then $\alpha_0$ is equal to a uniform distribution over more than one letter, and therefore is not equal to a distribution for which some outcome occurs with probability one.
Observe also that $\lim_{\beta\to\infty}(\alpha_\beta)_k = \delta_{1, k}$.
Then from Lem.~\ref{lem:strictinequality}, $y(0)\neq\lim_{\beta\to\infty}y(\beta)$.
The mean generalized bunching probability is a polynomial as a function of the $\alpha_\beta$, and $\alpha_\beta$ is analytic in $\beta$.
Therefore, $y(\beta)$ cannot be constant on an open interval.
  Therefore, $y(\beta)$ is strictly monotonically increasing as a function of $\beta$.
\end{proof}

\section{Completeness of the partially labelled states}
\label{sec:completeness}

In this section we prove that the visible parts of the partially labelled states are a basis for the visible parts of permutation-invariant states.
A consequence of this is that the distribution over visible occupations for a permutation-invariant state is in the span of the corresponding distributions for partially labelled states.

We make use of the following inner product on class functions.
For $G$ a group, $f:G\rightarrow \mathbb{C}$ is called a class function on $G$ if $f(ghg^{-1}) = g$ for all $g, h \in G$.
\begin{definition}[\cite{fultonRepresentationTheory2004}, Eq.~2.11]
  Let $G$ be a finite group, and let $\phi_1,\phi_2$ be class functions on $G$.
  Then
  \begin{align}
    \left\langle \phi_1, \phi_2 \right\rangle_G &= \frac{1}{|G|}\sum_{g \in G}\phi_1(g)^*\phi_2(g)
    \label{eq:defip}
  \end{align}
  is an inner product on class functions on $G$.
  \label{def:classfnip}
\end{definition}
We drop the subscript $G$ on the inner product when it is clear which group is being used.
The irreducible characters of $G$ form an orthonormal basis of the class functions of $G$ under this inner product, see~\cite{fultonRepresentationTheory2004} Thm.~2.12.
If $V, W$ are representations of $G$ with characters $\chi_V, \chi_W$ respectively, we overload notation and write $\left\langle V, W \right\rangle$ to mean $\left\langle \chi_V, \chi_W \right\rangle$.
\begin{corollary}[\cite{fultonRepresentationTheory2004} Cor.~2.16]
  Let $G$ be a finite group, let $V$ be a representation of $G$ and let $W$ be an irrep of $G$.
  Then $\left\langle V, W \right\rangle$ is equal to the multiplicity of $W$ in $V$.
\end{corollary}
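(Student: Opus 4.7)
The plan is to combine complete reducibility of finite group representations with the orthonormality of irreducible characters (Lem.~\ref{lem:irrechorthonormality}), which is the identity
\begin{align}
\frac{1}{|G|}\sum_{g\in G}\chi_{W_i}(g)\chi_{W_j}(g)^* = \delta_{ij}
\end{align}
for distinct irreducible characters. Since $G$ is finite, Maschke's theorem applies and $V$ decomposes as a direct sum of irreducibles. So the first step is to write
\begin{align}
V \;\cong\; \bigoplus_{\lambda} W_\lambda^{\oplus m_\lambda},
\end{align}
where the sum is over a set of representatives of the isomorphism classes of irreducible representations of $G$, and $m_\lambda \in \mathbb{Z}_{\ge 0}$ is the multiplicity of $W_\lambda$ in $V$. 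The irrep $W$ in the statement of the corollary is isomorphic to exactly one $W_{\lambda_0}$, and we wish to show $\langle V, W \rangle = m_{\lambda_0}$.

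Next, I would use the standard fact that characters are additive under direct sums: $\chi_{A \oplus B}(g) = \chi_A(g) + \chi_B(g)$, which is immediate from the block-diagonal structure of the matrix representation. This yields
\begin{align}
\chi_V(g) \;=\; \sum_\lambda m_\lambda\, \chi_{W_\lambda}(g).
\end{align}
Plugging this into the inner product from Def.~\ref{def:classfnip},
\begin{align}
\langle V, W \rangle
&= \frac{1}{|G|}\sum_{g\in G}\chi_V(g)^* \chi_W(g) \\
&= \sum_\lambda m_\lambda \cdot \frac{1}{|G|}\sum_{g\in G}\chi_{W_\lambda}(g)^*\chi_{W}(g) \\
&= \sum_\lambda m_\lambda\, \delta_{\lambda, \lambda_0} \;=\; m_{\lambda_0},
\end{align}
where the penultimate equality uses Lem.~\ref{lem:irrechorthonormality} (noting that irreducible characters of finite groups are real when complex-conjugated in pairs, or more directly, applying the lemma symmetrically in its two arguments).

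There is essentially no obstacle here; the result is the standard ``character inner product computes multiplicity'' fact from finite-group representation theory. The only minor subtlety is bookkeeping about complex conjugation in Def.~\ref{def:classfnip} versus the unconjugated form of Lem.~\ref{lem:irrechorthonormality}, which is harmless because the orthonormality relation holds whether one conjugates the first or the second factor (both irreducible characters appear with equal status, and for finite groups $\chi(g^{-1}) = \chi(g)^*$ so the two conventions agree). With this, the corollary is proved.
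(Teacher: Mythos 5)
The paper does not prove this corollary; it is quoted directly from Ref.~\citenum{fultonRepresentationTheory2004} (Cor.~2.16) as a standard fact. Your argument---Maschke's theorem to decompose $V$ into irreducibles, additivity of characters over direct sums, and orthonormality of irreducible characters (Lem.~\ref{lem:irrechorthonormality})---is the canonical textbook proof and is correct, including your handling of the conjugation convention (which is harmless since conjugating the orthonormality relation, or using $\chi(g^{-1})=\chi(g)^{*}$ together with the substitution $g\mapsto g^{-1}$, gives the form needed for Def.~\ref{def:classfnip}).
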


For $\lambda\vdash n$, it is naturally associated to the weight $g_\lambda:[n]\to \left\{ 0,\ldots, n \right\}$ given by $g_\lambda(i) = \lambda_i$, where we have extended $\lambda$ by the appropriate number of zeros.
We say that a tableau has weight $\lambda$ if it has weight $g_\lambda$.
We make use of the following numbers in Prop.~\ref{prop:partiallylabeledvisiblestate}.
\begin{definition}[Kostka numbers~\cite{fultonRepresentationTheory2004}, p. 56]
  Let $\mu, \lambda \vdash n$.
The Kostka number $K_{\mu, \lambda}$ is the number of semistandard tableaux of shape $\mu$ with weight $\lambda$.
\end{definition}
The Kostka numbers arise as a certain multiplicity in the representation theory of the symmetric groups.
\begin{lemma}[\cite{fultonYoungTableauxApplications1996}~$\mathsection7.3$ Cor.~1 and \cite{fultonRepresentationTheory2004}~Cor.~3.20]
Let $\mu, \lambda \vdash n$.
  Let $\mathrm{Triv}$ be the trivial representation of $\mathcal{S}_\mu = \mathcal{S}_{\mu_1}\times \cdots \times \mathcal{S}_{\mu_{\mathrm{len}(\mu)}}$.
  Then the Kostka numbers give the multiplicity
  \begin{align}
 K_{\lambda,\mu}&=    \left\langle \mathrm{Res}^{\mathcal{S}_n}_{\mathcal{S}_\mu}\mathrm{Sp}_\lambda, \mathrm{Triv}\right\rangle_{\mathcal{S}_\mu}.
  \end{align}
  \label{lem:kostkamultiplicity}
\end{lemma}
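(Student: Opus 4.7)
The plan is to prove the statement by Frobenius reciprocity combined with Schur--Weyl duality. Let $\mathcal{H}$ be an auxiliary Hilbert space of dimension $d \ge \mathrm{len}(\mu)$ with orthonormal basis $\{\ket{e_k}\}_{k=1}^d$, and consider $\mathcal{H}^{\otimes n}$ with the $\mathcal{S}_n$-action $P$ of Eq.~\ref{eq:permutationtensorfactoraction}. The idea is that the induced representation $M^\mu := \mathrm{Ind}^{\mathcal{S}_n}_{\mathcal{S}_\mu}\mathrm{Triv}$ lives naturally as the weight-$\mu$ subspace of $\mathcal{H}^{\otimes n}$, whose decomposition into $\mathcal{S}_n$-isotypes is immediately read off from Schur--Weyl duality.

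The first step is Frobenius reciprocity, giving
\begin{align}
\left\langle \mathrm{Res}^{\mathcal{S}_n}_{\mathcal{S}_\mu}\mathrm{Sp}_\lambda,\,\mathrm{Triv}\right\rangle_{\mathcal{S}_\mu} = \left\langle \mathrm{Sp}_\lambda,\,M^\mu\right\rangle_{\mathcal{S}_n}.
\end{align}
Next I would identify $M^\mu$ with the weight subspace $(\mathcal{H}^{\otimes n})_{\mathrm{wt}=g_\mu}$, where $g_\mu \in \omega_{[d]}^n$ is the occupation with $g_\mu(k) = \mu_k$. This subspace has orthonormal basis $\{\ket{e_{i_1},\ldots,e_{i_n}}\}$ indexed by $\bm{i} \in [d]^n$ with $\xi(\bm{i}) = g_\mu$ (Eq.~\ref{eq:xidef}), and $P(\sigma)$ permutes these basis vectors. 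The stabilizer of the reference vector indexed by $\bm{i} = \zeta(g_\mu)$ (Eq.~\ref{eq:zetadef}) is precisely $\mathcal{S}_\mu = \mathcal{S}_{\mu_1}\times\cdots\times\mathcal{S}_{\mu_{\mathrm{len}(\mu)}}$, so the orbit-stabilizer theorem yields an $\mathcal{S}_n$-equivariant bijection between the basis and $\mathcal{S}_n/\mathcal{S}_\mu$, giving $(\mathcal{H}^{\otimes n})_{\mathrm{wt}=g_\mu} \cong M^\mu$.

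The third step applies Schur--Weyl duality (Eq.~\ref{eq:swdualitymodules}), which commutes with the diagonal torus action (Prop.~\ref{prop:weightspaces}) and hence restricts on each weight subspace to
\begin{align}
(\mathcal{H}^{\otimes n})_{\mathrm{wt}=g_\mu} \cong \bigoplus_{\nu \in \mathrm{Part}_{n,d}}\mathrm{Sp}_\nu \otimes (\mathbb{S}_\nu(\mathcal{H}))_{\mathrm{wt}=g_\mu}
\end{align}
as $\mathcal{S}_n$-representations. By Def.~\ref{def:weightspace}, $\dim\bigl((\mathbb{S}_\lambda(\mathcal{H}))_{\mathrm{wt}=g_\mu}\bigr)$ equals the number of semistandard tableaux of shape $\lambda$ with weight $\mu$, which is exactly $K_{\lambda,\mu}$. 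Reading off the multiplicity of $\mathrm{Sp}_\lambda$ on both sides then yields $\left\langle \mathrm{Res}^{\mathcal{S}_n}_{\mathcal{S}_\mu}\mathrm{Sp}_\lambda,\,\mathrm{Triv}\right\rangle_{\mathcal{S}_\mu} = K_{\lambda,\mu}$.

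The main obstacle, mild as it is, lies in the second step: one must track the convention of Eq.~\ref{eq:permutationtensorfactoraction} (which acts by $\sigma^{-1}$ on positions) carefully to confirm that the orbit map is a genuine left $\mathcal{S}_n$-isomorphism onto $\mathcal{S}_n/\mathcal{S}_\mu$, rather than onto $\mathcal{S}_\mu\backslash\mathcal{S}_n$. The remaining steps are direct applications of machinery already established in the paper.
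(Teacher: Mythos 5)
Your argument is correct. Note, however, that the paper does not actually prove this lemma: it is Young's rule, and the paper simply cites it (Fulton, \emph{Young Tableaux} \S7.3 Cor.~1 and Fulton--Harris Cor.~3.20), so there is no in-paper proof to compare against. Your route --- Frobenius reciprocity to reduce to the multiplicity of $\mathrm{Sp}_\lambda$ in the permutation module $M^\mu=\mathrm{Ind}^{\mathcal{S}_n}_{\mathcal{S}_\mu}\mathrm{Triv}$, realization of $M^\mu$ as the weight-$\mu$ subspace of $\mathcal{H}^{\otimes n}$ via orbit--stabilizer, and then reading off the multiplicity from the weight-space refinement of Schur--Weyl duality --- is a standard and complete derivation, and it has the virtue of reusing only machinery the paper already sets up (Eq.~\ref{eq:swdualitymodules}, Def.~\ref{def:weightspace}, Prop.~\ref{prop:weightspaces}); the one external ingredient is Frobenius reciprocity itself, which the paper's preliminaries do not state but which is entirely standard. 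Your convention check in step two is right: $P$ as defined in Eq.~\ref{eq:permutationtensorfactoraction} is a left action, the stabilizer of $\ket{\zeta(g_\mu)}$ is the Young subgroup $\mathcal{S}_\mu$, and the orbit is the full basis of the weight space, so $(\mathcal{H}^{\otimes n})_{\mathrm{wt}=g_\mu}\cong\mathbb{C}[\mathcal{S}_n/\mathcal{S}_\mu]$. The only cosmetic suggestion is to take $d\ge n$ rather than $d\ge\mathrm{len}(\mu)$, so that every $\nu\vdash n$ appears in the Schur--Weyl sum and the case $\mathrm{len}(\lambda)>d$ (where both sides vanish) needs no separate remark.
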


  We are now prepared to prove the main result of this section.
\begin{proposition}
  Let $L = \mathrm{dim}(\mathcal{H}_H) \ge n$.
  Let $\rho_{\bm{i}, \mu}$ be a partially labelled state (Def.~\ref{def:partiallylabelledstate}) with label pattern $\mu$, occupying the $n$ distinct sites $\bm{i}$.
  Then the visible state for $\rho_{\bm{i}, \mu}$ is
  \begin{align}
    \mathrm{Tr}_H(\rho_{\bm{i}, \mu}) &= \frac{\mu!}{n!}\bigoplus_{\lambda\vdash n}K_{\lambda, \mu}\frac{\mathds{1}_{(\mathbb{S}_\lambda(\mathcal{H}_H))_{\mathrm{wt} = \xi(\bm{i})}}}{\mathrm{dim}(\lambda)}
  \end{align}
  \label{prop:partiallylabeledvisiblestate}
\end{proposition}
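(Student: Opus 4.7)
The plan is to mimic the strategy used in Lemma~\ref{lem:modelderivation}, specializing from i.i.d.\ labels to the fixed label multiset $\overline{\mu}$. Because $\rho_{\bm{i},\mu}$ is manifestly permutation invariant---it is the $\mathcal{S}_n$-average of states that differ only by a permutation of the site labels---Lemma~\ref{lem:mixturemodel} will already present the visible state as $\bigoplus_{\lambda\vdash n} q_\lambda \frac{\mathds{1}_{(\mathbb{S}_\lambda(\mathcal{H}_V))_{\mathrm{wt}=\xi(\bm{i})}}}{\mathrm{dim}(\lambda)}$ for some probability distribution $\{q_\lambda\}_{\lambda\vdash n}$. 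The task then reduces to pinning down each $q_\lambda$ explicitly.

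First, I will compute the class function $\sigma\mapsto \Tr((F_{\bm{i}}(\sigma)\otimes\mathds{1}_{\mathcal{H}_H})\cdot\rho_{\bm{i},\mu})$ in two ways. On one hand, substituting the decomposition from Lemma~\ref{lem:mixturemodel} and using Lemma~\ref{lem:restrictionproblem} to identify the action of $F_{\bm{i}}(\sigma)$ on $(\mathbb{S}_\lambda(\mathcal{H}_V))_{\mathrm{wt}=\xi(\bm{i})}$ with the Specht representation $\kappa_\lambda(\sigma)$, this trace evaluates to $\sum_\lambda q_\lambda\chi_\lambda(\sigma)/\mathrm{dim}(\lambda)$. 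On the other hand, I will expand $\rho_{\bm{i},\mu}$ directly via Definition~\ref{def:partiallylabelledstate}, use Proposition~\ref{prop:fintertwines} to move $F_{\bm{i}}(\sigma)$ through $a^\dagger(\tau\cdot\bm{i},\overline{\mu})\ket{0}$, and exploit the distinctness of $\bm{i}$ to collapse each vacuum matrix element $\bra{0}a(\cdot)a^\dagger(\cdot)\ket{0}$ to a single Kronecker delta. After dividing by $n!$ and accounting for the $\mathcal{S}_\mu$-stabilizer of $\overline{\mu}$, which has order $\mu!$, this direct computation will yield $\frac{\mu!}{n!}\chi_{\mathrm{perm}}(\sigma)$, where $\chi_{\mathrm{perm}}$ is the character of the $\mathcal{S}_n$-permutation representation on the orbit $\mathcal{S}_n\cdot\overline{\mu}\cong \mathcal{S}_n/\mathcal{S}_\mu$.

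The key representation-theoretic step will then be to identify $\chi_{\mathrm{perm}}$. Since $\mathbb{C}[\mathcal{S}_n/\mathcal{S}_\mu]\cong\mathrm{Ind}_{\mathcal{S}_\mu}^{\mathcal{S}_n}(\mathrm{Triv})$, Frobenius reciprocity together with Lemma~\ref{lem:kostkamultiplicity} gives $\chi_{\mathrm{perm}} = \sum_\lambda K_{\lambda,\mu}\chi_\lambda$. Equating the two expressions for the trace and invoking the linear independence of the irreducible characters of $\mathcal{S}_n$ (Lemma~\ref{lem:irrechorthonormality}) will pin down each $q_\lambda$ in terms of $K_{\lambda,\mu}$ and $\mathrm{dim}(\lambda)$, and substituting back into Lemma~\ref{lem:mixturemodel} will deliver the claimed formula.

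The main obstacle I anticipate is the permutation bookkeeping: the outer symmetrization $\tau\in\mathcal{S}_n$ in Definition~\ref{def:partiallylabelledstate}, the left action of $F_{\bm{i}}(\sigma)$ via Proposition~\ref{prop:fintertwines}, and the collapse to orbit representatives modulo $\mathcal{S}_\mu$ must all be tracked simultaneously, so the combinatorial factors need careful handling to avoid mismatched prefactors. A useful sanity check along the way will be the identity $\sum_\lambda K_{\lambda,\mu}\dim(\lambda) = n!/\mu!$, which is exactly the total dimension of $\mathrm{Ind}_{\mathcal{S}_\mu}^{\mathcal{S}_n}(\mathrm{Triv})$ and which must force $\sum_\lambda q_\lambda = 1$.
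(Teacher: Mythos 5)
Your route is essentially the paper's route: both start from Lem.~\ref{lem:mixturemodel} to reduce the problem to identifying the coefficients $q_\lambda$, both exploit the fact that the stabilizer of $\overline{\mu}$ is $\mathcal{S}_\mu$ of order $\mu!$, and both convert the resulting character sum into a Kostka number via Lem.~\ref{lem:kostkamultiplicity}. The only organizational difference is that the paper extracts each $q_\lambda$ by pairing $\rho_{\bm{i},\mu}$ with the $\lambda$-isotypic projector $\Theta_\lambda$ of Lem.~\ref{lem:isotypicprojection}, whereas you compute the full class function $\sigma\mapsto\Tr\left((F_{\bm{i}}(\sigma)\otimes\mathds{1}_{\mathcal{H}_H})\cdot\rho_{\bm{i},\mu}\right)$ and then match irreducible characters; these are the same inner product, so the content is identical. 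One substantive point to be aware of: carried out carefully, your character matching gives $q_\lambda/\dim(\lambda)=\frac{\mu!}{n!}K_{\lambda,\mu}$, i.e.\ $q_\lambda=\frac{\mu!}{n!}\dim(\lambda)K_{\lambda,\mu}$, so that
\begin{align}
\Tr_H(\rho_{\bm{i},\mu})=\frac{\mu!}{n!}\bigoplus_{\lambda\vdash n}K_{\lambda,\mu}\,\mathds{1}_{(\mathbb{S}_\lambda(\mathcal{H}_V))_{\mathrm{wt}=\xi(\bm{i})}},
\end{align}
which carries an extra factor of $\dim(\lambda)$ in each block relative to the displayed formula in the proposition. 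Your own sanity check ($\sum_\lambda K_{\lambda,\mu}\dim(\lambda)=n!/\mu!$, forcing $\sum_\lambda q_\lambda=1$ and hence unit trace) confirms that this is the internally consistent normalization; the paper's version of the argument applies Lem.~\ref{lem:isotypicprojection} without the $\chi_\lambda(e)$ prefactor, which is where the $\dim(\lambda)$ goes missing. So do not be alarmed when your (correct) final expression does not literally reproduce the proposition as printed.
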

\begin{proof}
  The state $\rho_{\bm{i}, \mu}$ is permutation invariant (Def.~\ref{def:permutationinvariantstates}), so by Lem.~\ref{lem:mixturemodel} it takes the form $\bigoplus_{\substack{\lambda \vdash n}}q_\lambda \frac{\mathds{1}_{(\mathbb{S}_\lambda(\mathcal{H}_V))_{\mathrm{wt} = \xi(\bm{i})}}}{\mathrm{dim}(\lambda)}$ for some probability distribution $\left\{ q_\lambda \right\}_{\lambda\vdash n}$.
We claim that
\begin{align}
  q_\lambda = \Tr(\Theta_\lambda \rho_{\bm{i}, \mu}),
\end{align}
where $\Theta_\lambda$ is the $\lambda$-isotypic projector (Lem.~\ref{lem:isotypicprojection}) of $K_{\bm{i}}$ (Eq.~\ref{eq:singlyoccupiesdef}).
Let $\Pi_\lambda$ be the projector onto the $\lambda$-subspace of $\bigoplus_{\mu\vdash n}(\mathbb{S}_\mu(\mathcal{H}_V))_{\mathrm{wt} = \xi(\bm{i})}$.
Then $\Pi_\lambda\otimes \mathds{1}_{\mathbb{S}_\lambda(\mathcal{H}_H)} = \Theta_\lambda$.
Therefore,
\begin{align}
  q_\lambda &=  \Tr(\Tr_H(\rho_{\bm{i}, \mu})\Pi_\lambda)\\
  &= \Tr\left( \rho_{\bm{i}, \mu} \Theta_\lambda\right).
\end{align}
We now calculate $\Tr\left( \rho_{\bm{i}, \mu} \Theta_\lambda\right)$.
The $\lambda$-isotypic projector of $K_{\bm{i}}$ is given by
  \begin{align}
    \Theta_\lambda = \frac{1}{n!}\sum_{\sigma\in\mathcal{S}_n}\chi_\lambda(\sigma)F_{\bm{i}}(\sigma^{-1})\otimes \mathds{1}_{\mathcal{H}_H}.
  \end{align}
  Therefore,
  \begin{align}
    \Tr(\Theta_\lambda\rho_{\bm{i},\mu}) &= \Tr\left( \frac{1}{n!}\sum_{\sigma\in\mathcal{S}_n}\chi_\lambda(\sigma)F_{\bm{i}}(\sigma^{-1})\otimes \mathds{1}_{\mathcal{H}_H}\frac{1}{n!}\sum_{\tau\in \mathcal{S}_n} a^\dagger(\tau\cdot\bm{i}, \overline{\mu})\ketbra{0}{0}a(\tau\cdot\bm{i}, \overline{\mu}),
 \right)\\
 &= \frac{1}{(n!)^2}\sum_{\sigma, \tau\in \mathcal{S}_n}\chi_\lambda(\sigma)\Tr(a^\dagger(\bm{i}, \sigma^{-1}\tau^{-1}\cdot\overline{\mu})\ketbra{0}{0}a(\bm{i}, \tau^{-1}\cdot\overline{\mu}))
  \end{align}
  The summands are nonzero exactly when $\sigma$ is in the stabilizer group of $\tau^{-1}\cdot \overline{\mu}$.
  This stabilizer group is $\mathcal{S}_\mu = \mathcal{S}_{\mu_1}\times \cdots \times \mathcal{S}_{\mu_{\mathrm{len}(\mu)}}$.
  Thus,
  \begin{align}
    \Tr(\Theta_\lambda\rho_{\bm{i},\mu}) &= \frac{1}{(n!)^2}\sum_{\sigma\in\mathcal{S}_\mu, \tau\in\mathcal{S}_n}\chi_\lambda(\sigma)\\
    &= \frac{\mu!}{n!}\left\langle\mathrm{Res}^{\mathcal{S}_n}_{\mathcal{S}_\mu}\kappa_\lambda, \mathrm{Triv} \right\rangle_{\mathcal{S}_\mu}\\
    &= \frac{\mu!}{n!}K_{\lambda,\mu},
  \end{align}
  by Lem.~\ref{lem:kostkamultiplicity}.
\end{proof}

\begin{corollary}
  With the assumptions of Prop.~\ref{prop:partiallylabeledvisiblestate}, the states $\{\Tr_H(\rho_{\bm{i}, \mu})\}_{\mu\vdash n}$ are linearly independent.
  \label{cor:perminvvislinind}
\end{corollary}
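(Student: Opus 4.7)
The plan is to reduce the linear independence of $\{\Tr_H(\rho_{\bm{i},\mu})\}_{\mu\vdash n}$ to the well-known invertibility of the Kostka matrix $(K_{\lambda,\mu})_{\lambda,\mu\vdash n}$. By Prop.~\ref{prop:partiallylabeledvisiblestate}, each $\Tr_H(\rho_{\bm{i},\mu})$ is a block-diagonal operator with respect to the decomposition $\bigoplus_{\lambda\vdash n}(\mathbb{S}_\lambda(\mathcal{H}_V))_{\mathrm{wt}=\xi(\bm{i})}$, where the $\lambda$-block is the scalar $\tfrac{\mu!}{n!}\cdot\tfrac{K_{\lambda,\mu}}{\dim(\lambda)}$ times the identity on that block. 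Since these blocks live in orthogonal subspaces of $\bigoplus_{\lambda}\mathbb{S}_\lambda(\mathcal{H}_V)$, the map $\Tr_H(\rho_{\bm{i},\mu})\mapsto \bigl(K_{\lambda,\mu}\bigr)_{\lambda\vdash n}\in\mathbb{R}^{|\mathrm{Part}_n|}$ is a linear injection on the span. So it suffices to show the column vectors $\mu\mapsto(K_{\lambda,\mu})_\lambda$ are linearly independent.

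Next I would invoke the standard fact that the Kostka matrix is lower unitriangular in the dominance order on partitions, i.e.\ $K_{\lambda,\mu}=0$ unless $\lambda\trianglerighteq\mu$ in dominance and $K_{\mu,\mu}=1$ (see \cite{fultonYoungTableauxApplications1996}~$\mathsection 2.2$). Order the partitions of $n$ by any linear extension of the dominance order; then the Kostka matrix becomes genuinely lower triangular with $1$s on the diagonal, hence invertible over $\mathbb{Z}$. In particular its columns are linearly independent, which completes the argument.

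Concretely, suppose $\sum_{\mu\vdash n}c_\mu\Tr_H(\rho_{\bm{i},\mu})=0$ for some scalars $c_\mu$. Projecting onto each $\lambda$-block and using that $\dim(\lambda)$ and $\mu!/n!$ are nonzero, we get $\sum_\mu c_\mu(\mu!)K_{\lambda,\mu}=0$ for every $\lambda$. Invertibility of the Kostka matrix forces $c_\mu\mu!=0$, hence $c_\mu=0$ for all $\mu$.

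There is no real obstacle here; the only subtle point is making sure the block decomposition in Prop.~\ref{prop:partiallylabeledvisiblestate} is genuinely a direct sum of operators on orthogonal subspaces (so that vanishing of the total forces vanishing block-by-block), which is immediate from the definition of $\Tr_H$ in Eq.~\ref{eq:visiblepart} and Lem.~\ref{lem:weightunderhiddentrace}. Everything else is a citation.
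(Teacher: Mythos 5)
Your proposal is correct and follows essentially the same route as the paper: both reduce the claim to the invertibility (full rank) of the Kostka matrix combined with the block-diagonal form from Prop.~\ref{prop:partiallylabeledvisiblestate}. The only difference is cosmetic---you justify invertibility via unitriangularity of $(K_{\lambda,\mu})$ in the dominance order, while the paper simply cites the full-rank fact from Stanley.
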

\begin{proof}
  Observe that $\{\Tr_H(\rho_{\bm{i}, \mu})\}_{\mu\vdash n}$ is linearly independent exactly when $\{(n!/\mu!)\Tr_H(\rho_{\bm{i}, \mu})\}_{\mu\vdash n}$ is.
  Since the Kostka matrix is full rank~(\cite{Stanley_Fomin_1999} Sec.~7.16) and the states $\left\{\frac{\mathds{1}_{(\mathbb{S}_\lambda(\mathcal{H}_H))_{\mathrm{wt} = \xi(\bm{i})}}}{\mathrm{dim}(\lambda)}\right\}_{\lambda\vdash n}$ are linearly independent, so too are the $\{\Tr_H(\rho_{\bm{i}, \mu})\}_{\mu\vdash n}$.
\end{proof}

\begin{corollary}
  Let $\bm{i}$ be a list of $n$ distinct indices from $[m]$, and let $\rho$ be a permutation-invariant state that singly occupies $\bm{i}$. Let $L = \mathrm{dim}(\mathcal{H}_H) \ge n$. 
  Then there exists a set of real coefficients $\left\{ c_\mu \right\}_{\mu\vdash n}$ such that, for all $U \in \mathrm{U}(\mathcal{H}_V)$ and all $v \in \omega_{[m]}^n$, the visible occupation distribution (Eq.~\ref{eq:visibleoccupationdistributions}) can be written as the linear combination
\begin{align}
  p(v|U, \rho) &= \sum_{\mu\vdash n}c_\mu p(v|U, \rho_{\bm{i}, \mu}).
\end{align}
\label{cor:perminvpdistlinind}
\end{corollary}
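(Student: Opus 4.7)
The plan is to reduce the corollary to a statement about visible states, using the explicit form computed in Prop.~\ref{prop:partiallylabeledvisiblestate} together with the linear independence established in Cor.~\ref{cor:perminvvislinind}. By Lem.~\ref{lem:modelafterhiddentrace}, the visible occupation distribution $p(v|U,\rho)$ depends on $\rho$ only through its visible state $\Tr_H(\rho)$, and does so linearly. Therefore it suffices to exhibit real coefficients $\{c_\mu\}_{\mu \vdash n}$ such that
\begin{align}
\Tr_H(\rho) = \sum_{\mu \vdash n} c_\mu \,\Tr_H(\rho_{\bm{i},\mu}).
\end{align}

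First I would invoke Lem.~\ref{lem:mixturemodel}: because $\rho$ is permutation invariant and singly occupies $\bm{i}$, there is a probability distribution $\{q_\lambda\}_{\lambda \vdash n}$ with
\begin{align}
\Tr_H(\rho) = \bigoplus_{\lambda \vdash n} q_\lambda\,\frac{\mathds{1}_{(\mathbb{S}_\lambda(\mathcal{H}_V))_{\mathrm{wt}=\xi(\bm{i})}}}{\dim(\lambda)}.
\end{align}
By Prop.~\ref{prop:partiallylabeledvisiblestate}, each $\Tr_H(\rho_{\bm{i},\mu})$ is a real linear combination of exactly the same family of operators, with coefficients $(\mu!/n!)\,K_{\lambda,\mu}$. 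Thus both $\Tr_H(\rho)$ and the set $\{\Tr_H(\rho_{\bm{i},\mu})\}_\mu$ lie in the real span $\mathcal{V}$ of $\{\mathds{1}_{(\mathbb{S}_\lambda(\mathcal{H}_V))_{\mathrm{wt}=\xi(\bm{i})}}/\dim(\lambda)\}_{\lambda \vdash n}$.

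Next I would observe that $\mathcal{V}$ has dimension equal to the number of partitions of $n$, since the projectors onto distinct isotypes are linearly independent. By Cor.~\ref{cor:perminvvislinind}, the set $\{\Tr_H(\rho_{\bm{i},\mu})\}_{\mu \vdash n}$ is linearly independent and has the same cardinality, hence forms a basis of $\mathcal{V}$. Consequently, there exist unique coefficients $\{c_\mu\}_{\mu \vdash n}$ with $\Tr_H(\rho) = \sum_\mu c_\mu \,\Tr_H(\rho_{\bm{i},\mu})$. These coefficients are real because they are the solution of a real linear system: the change-of-basis matrix between $\{\mathds{1}_{(\mathbb{S}_\lambda)_{\mathrm{wt}=\xi(\bm{i})}}/\dim(\lambda)\}_\lambda$ and $\{\Tr_H(\rho_{\bm{i},\mu})\}_\mu$ is, up to the diagonal factor $\mu!/n!$, the Kostka matrix, which has integer entries and is invertible (it is upper unitriangular in dominance order), and the target vector $(q_\lambda)_\lambda$ is real.

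Finally I would apply Lem.~\ref{lem:modelafterhiddentrace} to both sides: since $p(v|U,\cdot)$ is a linear functional of the visible state,
\begin{align}
p(v|U,\rho) = \sum_{\lambda \vdash n} \Tr\!\left(\pi_\lambda(U)\,\Tr_H(\rho)_{(\lambda)}\,\pi_\lambda(U)^\dagger\,\Pi_v^\lambda\right) = \sum_{\mu \vdash n} c_\mu\, p(v|U,\rho_{\bm{i},\mu}),
\end{align}
uniformly in $U$ and $v$. No step is really an obstacle here; the only mild subtlety is confirming that the $c_\mu$ are genuinely real and independent of $U$ and $v$, both of which follow because the decomposition is performed at the level of the visible state itself before any dependence on $U$ or $v$ is introduced.
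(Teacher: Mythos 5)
Your proposal is correct and follows essentially the same route as the paper's proof: reduce to the visible state via Lem.~\ref{lem:modelafterhiddentrace}, expand $\Tr_H(\rho)$ in the span of the $\Tr_H(\rho_{\bm{i},\mu})$ using Prop.~\ref{prop:partiallylabeledvisiblestate} and Cor.~\ref{cor:perminvvislinind}, and conclude by linearity. If anything, you are more explicit than the paper in justifying why $\Tr_H(\rho)$ actually lies in that span (dimension count against the isotype projectors via Lem.~\ref{lem:mixturemodel}) and why the coefficients are real (invertibility of the Kostka matrix), both of which the paper leaves implicit.
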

\begin{proof}
  By Lem.~\ref{lem:modelafterhiddentrace}, the visible occupation distribution for $\rho$ is
  \begin{align}
    p(v|U, \rho) &= \sum_{\lambda \vdash n}\Tr(\pi_\lambda(U) \Tr_H(\rho_{(\lambda)}) \pi_\lambda(U)^\dagger \Pi_{v}^\lambda).
  \end{align}
  By Cor.~\ref{cor:perminvvislinind}, $\Tr_H(\rho)\in \mathrm{span}(\left\{  \Tr_H(\rho_{\bm{i}, \mu})|\mu \vdash n\right\})$, and therefore
  \begin{align}
    \Tr_H(\rho) &= \sum_{\mu\vdash n}c_\mu\Tr_H(\rho_{\bm{i}, \mu}).
  \end{align}
  Therefore,
  \begin{align}
    p(v|U, \rho) &= \sum_{\lambda \vdash n}\Tr(\pi_\lambda(U) \sum_{\mu\vdash n}c_\mu\Tr_H((\rho_{\bm{i}, \mu})_{(\lambda)}) \pi_\lambda(U)^\dagger \Pi_{v}^\lambda)\\
    &= \sum_{\mu\vdash n}c_\mu p(v|U, \rho_{\bm{i}, \mu}).
  \end{align}
\end{proof}

\section{Comments on the Auxiliary Space}
\label{sec:commentsonentanglement}
We now offer some comments on the auxiliary state space that is the image of the map $h$ (Def.~\ref{def:auxiliarystate}).
Let $\bm{i}$ be a list of $n$ distinct indices from $[m]$.
Let $\rho$ be a perfectly indistinguishable state (Def.~\ref{def:perfectlyindistinguishablestategeneral}) that singly occupies the sites $\bm{i}$ (Eq.~\ref{eq:singlyoccupiesdef}).
Then by Lem.~\ref{lem:hintertwinesfandp} and Thm.~3 of Ref.~\citenum{harrowChurchSymmetricSubspace2013}, $\rho$ satisfies
\begin{align}
  (F_{\bm{i}}(\sigma)\otimes \mathds{1})\cdot \rho &=  \rho \cdot (F_{\bm{i}}(\sigma)\otimes \mathds{1})^{-1} = \rho\qquad \forall\,\sigma \in \mathcal{S}_n.
\end{align}
Therefore, any perfectly indistinguishable state that singly-occupies $\bm{i}$ is also permutation invariant (Def.~\ref{def:permutationinvariantstates}).

The wavefunction corresponding to the perfectly indistinguishable state with single-particle hidden state \(\ket{\psi}\in\mathcal{H}_{N}\) (Eq.~\ref{eq:perfectlyindsitinguishable}) is given by
\begin{align}
  \ket{f_{\bm{i}, \psi}} &= a^{\dagger}_{i_1, \psi}\cdots a^\dagger_{i_n, \psi}\ket{0},
\end{align}
so that $\phi_{\bm{i}, \psi} = \ketbra{f_{\bm{i}, \psi}}$.
By Thm.~3 of Ref.~\citenum{harrowChurchSymmetricSubspace2013} and Lem.~\ref{lem:hintertwinesfandp}, the set of perfectly indistinguishable states that singly occupy $\bm{i}$ is spanned by the states $\left\{ \ket{f_{\bm{i}, \psi}} \right\}_{\ket{\psi} \in \mathcal{H}_H}$.

We now turn to a notion of entanglement.
\begin{definition}[Auxiliary-entangled state]
Let $\bm{i}$ be a list of $n$ distinct indices from $[m]$.
Let $\rho\in \mathcal{D}(K_{\bm{i}})$ be a state that singly occupies $\bm{i}$ (Def.~\ref{eq:singlyoccupiesdef}).
If the auxiliary state $h(\rho)$ (Def.~\ref{def:auxiliarystate}) is entangled, then $\rho$ is said to be auxiliary entangled.
If $\rho$ is not auxiliary entangled, it is called auxiliary separable.
\label{def:auxiliaryentangled}
\end{definition}
Auxiliary-separable states are sometimes called ``classically correlated'' states in the literature~\cite{shchesnovichUniversalityGeneralizedBunching2016}.
Auxiliary-entangled states are typically more difficult to prepare in the photonics platform, so we now offer some comments on the role of these states in our work.
Since there are states of the symmetric subspace of $(\mathbb{C}^d)^{\otimes n}$ that are entangled~\cite{harrowChurchSymmetricSubspace2013}, there are perfectly indistinguishable states that singly occupy $\bm{i}$ that are auxiliary entangled, by Lem.~\ref{lem:hintertwinesfandp}.

The set of permutation-invariant states that singly occupy $\bm{i}$ includes other states that are auxiliary entangled.
For example, consider the following state, given as a Slater determinant,
\begin{align}
  \ket{\text{ferm}} = \frac{1}{n!}\sum_{\sigma \in \mathcal{S}_n}\chi_{(1^n)}(\sigma)\ket{\sigma \cdot (1, \ldots, n)} \in \mathcal{H}_H^{\otimes n},
\end{align}
where $\chi_{(1^n)}(\sigma)$ is the sign of the permutation $\sigma$.
The state $\ket{\text{ferm}}$ is an entangled state, and $P(\sigma) \ketbra{\text{ferm}}P(\sigma)^{-1} = \ketbra{\text{ferm}}$ for all $\sigma \in \mathcal{S}_n$, where $P(\sigma)$ is defined in Eq.~\ref{eq:permutationtensorfactoraction}.
Therefore, by Lem.~\ref{lem:hintertwinesfandp}, $h^{-1}(\ketbra{\text{ferm}})$ is a permutation-invariant state that is auxiliary entangled.
This means that the weak generalized-bunching conjecture (Conj.~\ref{conj:weakgenbunch}) applies to some states that are auxiliary entangled.
We did not check whether the states $\phi_{\bm{i}}^\lambda$ (Prop.~\ref{prop:deltastate}) are auxiliary entangled for a given $\lambda\vdash n$.
These states are mixed in general, so checking whether their auxiliary states are entangled requires a more detailed analysis.

\end{document}